%% file: main.tex
\documentclass[11pt]{article} 
\usepackage{amsmath}
\usepackage{amsthm}
\usepackage{amsfonts}
\usepackage{booktabs}
\usepackage{graphicx} 
\usepackage{xcolor}
\usepackage[inline]{enumitem}
\usepackage[sort]{cite}
\usepackage{hyperref}
\usepackage{wrapfig}
\usepackage[T1]{fontenc}

\hypersetup{
    colorlinks=true,
    linktoc=all,
    allcolors=blue,
}
\usepackage{fullpage}
\usepackage{thm-restate}
\usepackage[linesnumbered, boxed, vlined]{algorithm2e}
\DontPrintSemicolon
\SetKwFor{RepTimes}{Repeat}{times:}{end}
\SetKwProg{function}{Function}{:}{}
\SetKwProg{parfor}{parallel for}{\:do}{}
\SetKwProg{while}{while}{\:do}{}
\SetKwProg{Elif}{else if}{\:do}{}

\makeatletter
\newcommand{\multiline}[1]{%
  \begin{tabularx}{\dimexpr\linewidth-\ALG@thistlm}[t]{@{}X@{}}
    #1
  \end{tabularx}
}
\makeatother

\numberwithin{equation}{section}
\newtheorem{thm}[equation]{Theorem}
\newtheorem{theorem}[equation]{Theorem}

\newtheorem{lem}[equation]{Lemma}
\newtheorem{lemma}[equation]{Lemma}

\newtheorem{definition}[equation]{Definition}

\input{macro}
\title{Faster Parallel Batch-Dynamic Algorithms \\ 
for Low Out-Degree Orientation
}

 \author{Guy Blelloch\footnotemark[1] \\ guyb@cs.cmu.edu \and Andrew Brady\footnotemark[1]  \\ acbrady2020@gmail.com \and  Laxman Dhulipala\footnotemark[2] \\ laxman@umd.edu \and Jeremy Fineman\footnotemark[3] \\ jf474@georgetown.edu \and   Kishen N Gowda\footnotemark[2] \\ kishen19@umd.edu \and  Chase Hutton\footnotemark[2] \\ chutton6@umd.edu}

\date{}

\begin{document}

\renewcommand{\thefootnote}{\fnsymbol{footnote}}

\maketitle


 \footnotetext[1]{Carnegie Mellon University, Pittsburgh, PA }
 \footnotetext[2]{University of Maryland, College Park, MD}
 \footnotetext[3]{Georgetown University, Washington, DC}

\setcounter{footnote}{0}
\renewcommand{\thefootnote}{\arabic{footnote}}

\input{0_abstract}
\thispagestyle{empty}
\newpage

\pagenumbering{roman} 
\setcounter{page}{1}
\setcounter{tocdepth}{2} 

\tableofcontents

\newpage

\pagenumbering{arabic}
\setcounter{page}{1}

\input{1_intro.tex}

\input{new_tech_overview}

\input{3_background}

\input{4_amortized_algo}
\input{5_algorithm}




\input{potentialsMain}

\input{appendix_batch_counter}

\input{actualAnalysis}

\input{we_worstcase}

\input{appendix_deterministicBag}
\input{appendix_skyline}

\input{10_conclusion}

\myparagraph{Acknowledgements} We thank the anonymous reviewer who informed us of Lemma \ref{lem:boostOrientation} \cite{chekuri2024adaptive}. This paper is based upon work performed while attending the AlgoPARC Workshop on Parallel Algorithms and Data Structures at the University of Hawaii at Manoa, in part supported by the National Science Foundation under Grant CCF2452276. 
This work was also supported by NSF grants CCF2403235, CNS231719, CCF2119352,  CCF1919223, CCF1918989, and CCF2106759.

\myparagraph{Acknowledgements: AI Disclosure} In addition to human review, AI (Claude, Gemini, ChatGPT) was used to check for correctness and confirm analysis, both during the research process and while writing the paper. AI was also used to assist in literature review. AI was not used for any substantive idea or content generation. The authors take full responsibility for the correctness, originality, and integrity of the work.

\bibliography{strings.bib, refs.bib}{}
\bibliographystyle{plain}
\input{appendix_prelims}
\input{appendix_potential}
\input{appendix_apps}


\end{document}

%% file: macro.tex
\ifx\nocomment\undefined
\newcommand{\guy}[1]{{\color{brown}{\bf Guy:} #1}}
\newcommand{\jeremy}[1]{{\color{violet}{\bf Jeremy:} #1}}
\newcommand{\ac}[1]{{\color{magenta}{\bf Andrew:} #1}}
\newcommand{\ch}[1]{{\color{cyan}{\bf Chase:} #1}}
\newcommand{\kishen}[1]{{\color{blue}{\bf Kishen:} #1}}
\newcommand{\laxman}[1]{{\color{purple}{\bf Laxman:} #1}}
\else
\newcommand{\guy}[1]{{}}
\newcommand{\jeremy}[1]{{}}
\newcommand{\ac}[1]{{}}
\newcommand{\ch}[1]{{}}
\newcommand{\kishen}[1]{{}}
\newcommand{\laxman}[1]{{}}
\fi

\newcommand{\whp}{whp}
\newcommand{\myparagraph}[1]{\vspace{.05in}\noindent {\bf #1.}}

\newcommand{\cp}{\left\lceil \frac{c}{\log n}\right\rceil}

\newcommand{\modelop}[1]{\texttt{#1}}
\newcommand{\forkins}{\modelop{fork}}

\newcommand{\thread}{thread}

\newcommand{\paradepth}{span}
\newcommand{\depth}{\paradepth}

\newcommand{\bIns}{\textsc{BatchInsert}\xspace}

\newcommand{\defnn}[1]{\emph{\textbf{#1}}}

\renewcommand{\paragraph}[1]{\medskip\noindent{\bf #1}\xspace}

\newcommand{\vlow}{V^\text{low}}
\newcommand{\vhigh}{V^\text{high}}

%% file: 0_abstract.tex
\begin{abstract}

A low out-degree orientation directs each edge of an undirected graph with the goal of minimizing the maximum out-degree of a vertex. 
In the parallel batch-dynamic setting, one can insert or delete batches of edges, and the goal is to process the entire batch in parallel with work per edge similar to that of a single sequential update and with span (or depth) for the entire batch that is polylogarithmic.  
In this paper we present work-efficient parallel batch-dynamic algorithms for maintaining a low out-degree orientation of an undirected graph, both in the amortized and worst-case settings.
All results herein achieve polylogarithmic \depth{}; the focus of this paper is on minimizing the work, which varies across results.  

In the amortized setting, we give a parallel batch-dynamic algorithm that maintains a $O(c)$-orientation in $O(\log n)$ work per update in expectation, where $c$ is a known upper bound on the arboricity over the update sequence. Our algorithm has optimal work in expectation. Note that $\Theta(c)$ is the minimum possible max out-degree over the entire sequence. 
This result is the parallelization of the classic dynamic orientation algorithm of Brodal and Fagerberg~[WADS~'99], and, in this setting, is a logarithmic factor faster than the best prior parallel amortized algorithm of Liu et al.~[SPAA '22]. 

In the worst-case setting, we give an $O(c+\log n)$-orientation with worst-case expected work per update $O(\log n)$. This is work-efficient, matching the best known sequential dynamic work of Berglin and Brodal ~[Algorithmica~'20], and implies the existence of an $O(c)$-orientation algorithm with $O(\log^2 n)$ worst-case expected work per update. Our algorithm significantly improves, in the setting where $c$ is a fixed upper bound on arboricity, upon the parallel algorithm of Ghaffari and Koo~[SPAA~'25], which maintains a $O(c)$-orientation with $O(\log^9 n)$ worst-case work per edge with high probability (whp). 

Both of our algorithms also have deterministic bounds with an additional logarithmic factor in the work. Similar to Berglin and Brodal, by adjusting the parameters of our worst-case algorithm, we achieve tradeoffs where the work is faster than $O(\log n)$ in expectation per update but the max out-degree is worse.

\end{abstract}

%% file: 1_intro.tex
\section{Introduction}

The {\em low out-degree orientation problem} is to assign a direction
to each edge of an undirected graph $G=(V,E)$ while minimizing the
maximum out-degree of the resulting directed graph $\overline{G}$. 
Formally, an orientation of an undirected graph is called an
$x$-orientation if every vertex has out-degree at most $x$.
This problem
is fundamentally linked to the notion of graph's {\em arboricity}, a widely-used measure of the sparsity of a graph. 
In particular, the optimal orientation is within one of the arboricity.
Here we are interested in approximate solutions.   In the sequential static setting,\footnote{By static, we mean the standard setting, where given a single graph, the algorithm must output an orientation. This is in contrast to the dynamic setting, where the edges of the graph are being inserted and deleted.} a 2-approximation can be found
in linear time~\cite{MatulaBeck1983}.
In the parallel static setting, a
$(2 + \epsilon)$-approximation can be found in expected linear work and polylogarthmic depth \cite{barenboim2008sublogarithmic}. 

{\em Dynamic low out-degree orientations}, which maintain an
orientation as the graph is modified over a series of edge insertions
and deletions were first studied in the seminal work of Brodal and
Fagerberg~\cite{BF99}. The motivation for their work was the
problem of constructing deterministic data structures for answering
adjacency queries on sparse graphs. In the subsequent years, dynamic orientation has
emerged as a fundamental building block for constructing dynamic
algorithms for many other graph problems, including dynamic vertex cover~\cite{peleg2016dynamic}, coloring~\cite{henzinger2020explicit}, independent sets~\cite{onak2020fully}, and  matchings~\cite{bernstein2015fully, bernstein2016faster, he2014orienting}.

Brodal and Fagerberg's algorithm~\cite{BF99} is extremely simple and intuitive,
and can be shown to maintain an $O(c)$-orientation with $O(\log n)$
amortized number of edge reorientations (flips) per update, where $c$ is
the maximum arboricity over the sequence of updates.\footnote{Note that for out-degree orientation, the number of flips is a measurement of recourse.}
Intriguingly, their algorithm is optimal in the sense that it is $O(1)$-competitive with the number of
reorientations made by {\em any} online or offline algorithm,
regardless of its running time. 

In the subsequent years, a rich body of work has significantly improved our
understanding of low out-degree orientations in the sequential dynamic
setting, and on the tradeoff between maximum out-degree, update time, and
amortized vs. worst-case bounds~\cite{he2014orienting, kowalik2007adjacency, christiansen2022fully, BB20, chekuri2024adaptive}.\footnote{When we say worst-case, we mean no amortization, but that randomization is permitted. Thus, one can have expected worst-case bounds, or worst-case bounds whp.}
Some of these worst-case algorithms function by finding a directed path from a high (or maximum) out-degree vertex to a low out-degree vertex, and flipping all edges along this path: this has the net effect of decreasing the high vertex's out-degree by 1, increasing the low vertex's out-degree by 1, and not changing the out-degree of any of the intermediate vertices on the path. 
However, this approach can lead to high runtime \cite{Kopelowitz2013Orientation, SW20, chekuri2024adaptive, ghaffari2025}.
Currently, for $O(c)$-orientation, the best amortized bounds 
are still achieved by Brodal and Fagerberg's original
algorithm, and the best worst-case bound comes from an
$O(c + \log n)$ orientation with $O(\log n)$ time per update~\cite{BB20}, which implies an $O(c)$ orientation with $O(\log^2 n)$ time per update~\cite{chekuri2024adaptive}.

There has been growing interest in the {\em parallel batch-dynamic setting},
where the updates arrive as {\em batches}, and the goal is to
maintain a good quality orientation in low {\em work} (the total
number of operations performed by the algorithm) and {\em \depth{}} (the
length of the longest chain of sequential dependencies, sometimes called depth).   Typically, the goal is to achieve work-efficient algorithms and \depth{} that is polylogarithmic.   A parallel batch-dynamic algorithm is considered \emph{work efficient} if it does only a constant factor more work per edge than the best sequential dynamic algorithm.
We note that the parallel batch-dynamic setting has received considerable interest in recent years, with progress being made on batch-dynamic algorithms for a number of fundamental graph problems~\cite{GT24, acar2019batchconnect, yesantharao2021parallel,acar2020changeprop, dhulipala2019parallel, ghaffari2025, BB25a, ZKG+25, wang2020closest}.
Batching operations eliminates the serial bottleneck imposed by traditional
sequential dynamic algorithms, and thus
allows applications to exploit more parallelism. Parallel batch-dynamic
algorithms have also emerged as a useful tool for constructing efficient
(static) parallel algorithms~\cite{ghaffari2023nearly,anderson2021parallel}.

In the parallel batch-dynamic setting, Liu et al.~\cite{liu2022parallel} present a $(4 + \epsilon)c$ orientation algorithm,  where each batch update does amortized $O(\log^2 n)$ work per edge (in the batch) with high probability (\whp{}), and polylogarithmic \depth{} \whp{}, and 
Ghaffari and Koo~\cite{ghaffari2025} gave an algorithm for $(2+\epsilon)c$ orientation, with $O(\log^9 n)$ worst-case work
per edge \whp{} and polylogarithmic \depth{} \whp{}. Note these works are solving in two respects a more difficult problem: their max out-degree bounds with respect to the arboricity of the current graph (i.e., adaptive arboricity) rather than against a fixed upper bound, and they also maintain a more general data structure that can give an approximate $k$-core decomposition in the same time bounds. However, neither result is work-efficient in the setting where $c$ is a fixed upper bound on the arboricity, which is of consistent theoretical interest \cite{onak2020fully,solomon20improved,Kopelowitz2013Orientation,dhulipala2021hierarchical, BF99,BB20,kaplan18dynamic}. 
Liu et al. is a $O(\log n)$ factor off of the amortized algorithm of Brodal and Fagerberg, and Ghaffari and Koo are many log factors different from Berglin and Brodal's worst-case result. 
These recent results leave open the question of whether it is possible to develop parallel work-efficient algorithms for maintaining an $O(c)$-orientation in the fixed upper bound setting. 

In this paper, we develop the first work-efficient amortized parallel batch-dynamic algorithm and the first work-efficient worst-case parallel batch-dynamic algorithm in this regime. We summarize prior and our amortized results in Table \ref{tab:resultsAmortized} and the worst-case results in Table \ref{tab:results}. We note that using reductions from prior work, our results lead to better maximal matching and coloring algorithms. We give these improved bounds in Section \ref{sec:apps}.

First, building on Brodal and Fagerberg's amortized algorithm \cite{BF99},
we present an algorithm that maintains an $O(c)$ orientation and supports batches of updates each of which requires $O(\log n)$ expected amortized work per edge. 
Our amortized algorithm is optimal, inheriting the $O(1)$-competitive guarantee of Brodal and Fagerberg's amortized algorithm~\cite{BF99}, and improves by a logarithmic factor relative to Liu et al.'s work bounds~\cite{liu2022parallel}. Our amortized algorithm uses the static orientation routine of Barenboim and Elkin on a certain subgraph to bound the maximum out-degree and adapts the potential framework of Brodal and Fagerberg to bound the work \cite{barenboim2008sublogarithmic,BF99}. 

Second, building on Berglin and Brodal \cite{BB20}, we design an $O(c+\log n)$-orientation algorithm with $O(\log n)$ expected work worst-case per edge update. 
Our worst-case algorithm is work-efficient, matching Berglin and Brodal's runtime bounds, and greatly reducing the exponent on the logarithm relative to Ghaffari and Koo's work bound~\cite{ghaffari2025}. 
Our worst-case algorithm introduces \emph{skylines} to efficiently flip many edges in parallel. 
To bound the maximum out-degree, building on Berglin and Brodal's analysis \cite{BB20}, we develop a \emph{batch counter game}, prove that the batch counter game has good behavior, and develop a general \emph{domination framework} to connect the counter game to the orientation problem.



As in the sequential algorithm of Berglin and Brodal~\cite{BB20}, we have a tradeoff curve, where we can reduce the number of flips by permitting a worse orientation quality.
When stating our main theorems in Section \ref{sec:overview}, we implicitly consider a specific choice of parameters for clarity; the general bounds are provided in Section \ref{sec:dom}.

All of our algorithms are polylogarithmic \depth{} per batch. The only randomized component of our algorithms is a linear-work semisort \cite{gu2015top}.  Thus, by replacing it with a deterministic parallel sort with $O(n \log n)$ work, our algorithms also achieve a deterministic work bound with a logarithmic overhead in work, and our depth bound is deterministic rather than \whp{}.

\begin{table}
    \centering
    \begin{tabular}{c|c|c|c|c}
        Max out-degree & Work & Parallel? & Adaptive? & Citation   \\
            \hline

        $(2+\epsilon)c$ & $O(b \log n)$ am det &  & & \cite{BF99} \\
        $(4+\epsilon)c$ & $O(b \log^2 n)$ am \whp{} & \checkmark & \checkmark & \cite{liu2022parallel} \\
        $(6+\epsilon)c$ & $O(b \log n)$ am exp, $O(b \log^2 n)$  am det & \checkmark & & Thm \ref{thm:amortized_main} \\

    \end{tabular}
    \caption{Summary of amortized results. Amortized is abbreviated as am, expected as exp, and deterministic as det.}
    \label{tab:resultsAmortized}
\end{table}
\begin{table}
    \centering
    \begin{tabular}{c|c|c|c|c}
         Max out-degree &Work & Parallel? &  Adaptive? & Citation   \\
            \hline

         $(1+\epsilon) c + 2$ & $O(b \epsilon^{-6} \log^3 n \log c)$ det & & \checkmark & \cite{chekuri2024adaptive}  \\

        $(2+\epsilon)c$ & $O(b \epsilon^{-22} \log^9 n)$ \whp{} & \checkmark & \checkmark &
        \cite{ghaffari2025} \\

        $O(c)$ & $O(b \log^2 n)$ det & & &  \cite{BB20} \\
        $O(c)$ & $O(b \log^2 n)$ exp, $O(b \log^3 n)$ det & \checkmark & & Thm \ref{thm:boundsOc} \\
        $O(c+\log n$) & $O(b \log^2 n \log c)$ det & & \checkmark & \cite{chekuri2024adaptive} \\
        $O(c+\log n$) & $O(b \log n)$ det & &  & \cite{BB20} \\
        $O(c+\log n$) & $O(b \log n)$ exp, $O(b \log^2 n)$ det & \checkmark &  & Thm \ref{thm:boundsOclogn} \\
        $O(c\log n$) & $O(b \sqrt{\log n})$ det  &  &  & \cite{BB20}\\
        $O(c\log n$) & $O(b \sqrt{\log n})$ exp, $O(b  \log^{1.5} n )$ det & \checkmark &  & Thm \ref{thm:boundsOcsqrtlogn}\\

       $O(c \log^2 n)$ & $O(b)$ & & & \cite{BB20} \\
       $O(c \log^2 n)$ & $O(b)$ exp, $O(b \log n)$ det & \checkmark & &  Thm \ref{thm:boundsclognsquared} \\ 
            \end{tabular}
    \caption{Results for our worst-case algorithms. Expected is abbreviated as exp, and deterministic as det. }
    \label{tab:results}
\end{table}

%% file: new_tech_overview.tex
\section{Technical Overview \label{sec:overview}}


In this paper, we revisit two of the fastest sequential algorithms, one classic and one recent, for maintaining 
low out-degree orientations and study how to parallelize them~\cite{BF99,BB20}. In the rest of this section, we 
describe the challenges of parallelizing these algorithms and our solutions. For our worst-case algorithm description, we summarize notation used in Table \ref{tab:techOverviewNotation}.

\subsection{Amortized Low Out-Degree Orientation}
\myparagraph{The Brodal-Fagerberg Algorithm}
In their seminal work, Brodal and Fagerberg~\cite{BF99} proposed a simple dynamic algorithm for maintaining an $O(c)$-orientation with optimal bounds, where $c$ is an upper-bound on the maximum arboricity of the graph over the sequence of updates. Their algorithm works as follows: 
For edge deletions, the algorithm simply deletes the edge. For edge insertions, it orients the new edge arbitrarily out of one of its endpoints. If this causes the vertex's out-degree to be excessive, i.e., exceed a $O(c)$ \emph{cutoff threshold}, the algorithm flips all of its outgoing edges and repeats this process for any other vertex, one by one, whose out-degree becomes excessive.

They analyze their algorithm by comparing the number of edge flips performed over a sequence of updates to the number of edge flips performed by an optimal offline algorithm that sees the entire sequence of operations in advance. Following their terminology, we call an edge \emph{good} if it agrees with the offline orientation and \emph{bad} otherwise. 
They define a potential function equal to the number of bad edges in the graph. Edge insertions and edge flips performed by the offline strategy can increase the potential. However, if a vertex has many out-edges, most of them must be bad (since there can be at most $c$ good edges incident to that vertex). Thus, by flipping all of them, the algorithm can only decrease the overall potential.
Hence, by a potential-based analysis, the number of flips performed by the algorithm over the entire sequence of updates is bounded by the number of flips performed by the offline strategy.
Finally, they show that there exists an offline strategy that maintains an $O(c)$-orientation with $O(\log n)$ flips.
Interestingly, they show that the amortized number of flips performed by this algorithm is $O(1)$-competitive with the number of flips made by any other sequence of flips that maintains a bounded out-degree, and thus the algorithm achieves optimal work up to constant factors, as a single flip takes constant work. 

\myparagraph{Parallelizing the Brodal-Fagerberg Algorithm: Challenges and Solution}
Unfortunately, parallelizing this algorithm appears difficult due to the chains of sequential dependencies inherent in its repeated-flipping approach.
For instance, consider Figure \ref{fig:spanbreakerBF}. In each step, we flip one vertex below $4c$, but these flips cause another vertex's out-degree to rise to $4c$.  Even if we flipped all vertices with high out-degree in each step, operating on Figure \ref{fig:spanbreakerBF} still has high \depth{}, because a vertex does not become high out-degree until its neighbor flips.

\begin{figure}
\centering
\includegraphics[width=.4\textwidth]{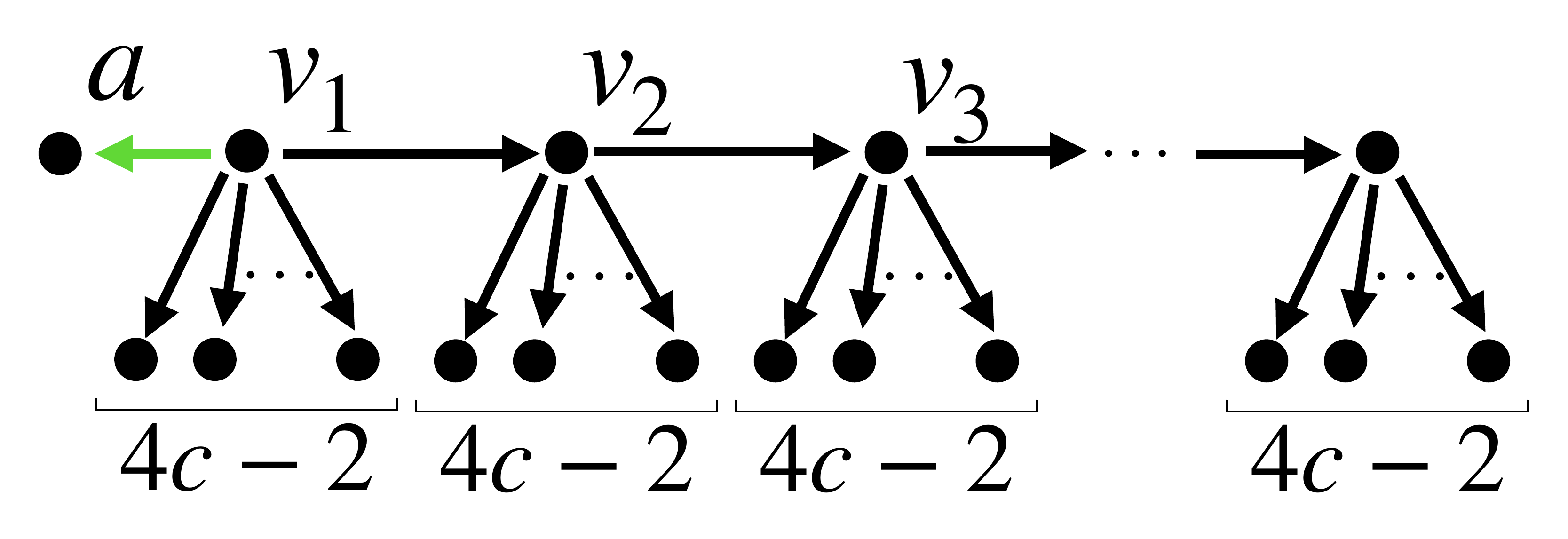}
\caption{Example graph with high \depth{} for Brodal-Fagerberg's Algorithm. Let the cutoff for flipping be $4c$. Note that the black edges can all be inserted in any order without the algorithm performing any flips because the out-degree bound is not violated. When the green edge $(v_1,a)$ is inserted, $v_1$ flips, then $v_2$ in the next step (now having $4c$), so on and so forth. Thus there will be $\Omega(t)$ iterations of the algorithm, and $t$ can be as high as $\Theta(\frac{n}{c})$. Note that similar examples arise even if the edges are not inserted in an arbitrary orientation direction.}
\label{fig:spanbreakerBF}
\end{figure}
Our solution is the following simple strategy: collect all of the outgoing edges of the vertices that currently exceed the \emph{cutoff threshold} (same as before), and run a parallel static orientation algorithm on the local subgraph induced by these edges to orient them. Parallel static orientation can be done in linear expected time and polylog depth using Barenboim and Elkin's algorithm  \cite{barenboim2008sublogarithmic}.\footnote{See Section \ref{sec:staticOrientationAppendix} for a formal lemma statement and proof adapted to the batch-dynamic setting.}

This guarantees that the out-degrees of these vertices are now bounded by the threshold of the static algorithm (i.e., \emph{static threshold}), whereas the rest of the vertices, in the worst case, have their out-degrees bounded by the sum of the cutoff threshold and the static threshold of~$O(c)$.

Our parallel algorithm only requires a \emph{single round} of static orientation on a local affected subgraph to fix the orientation.
The static orientation step effectively reduces the dependencies, but at the cost of a slightly worsened bound (by constant factors) on the out-degree. At the same time, our approach of using static orientation yields the same asymptotic potential reduction as Brodal and Fagerberg's strategy. 
If a high-degree vertex enters the static orientation, it will necessarily leave with far fewer out-edges than it started with, meaning that many flips occurred off of this vertex---enough for a constant amount of potential to be lost per flip. 

Our amortized algorithm result is stated in Theorem~\ref{thm:amortized_main} and proved in Section~\ref{sec:amortized}. Since the only randomized component in our algorithms is a semisort \cite{gu2015top}, we achieve deterministic work and depth bounds by switching from a randomized semisort to a deterministic sort after a fixed time budget (see Lemma \ref{lem:sort} for details). We also show that the algorithm is asymptotically optimal in terms of the number of flips performed, as stated in Theorem~\ref{thm:amortized_optimal}, since it satisfies the same property as the sequential algorithm; i.e., it performs asymptotically the same number of flips as any algorithm that maintains an $O(c)$ orientation.

\begin{restatable}{theorem}{amortizedmain}\label{thm:amortized_main}
    Given an arboricity $c$ preserving sequence of batch updates on a graph $G$ and $\epsilon \in (0,2]$, Algorithm~\ref{alg:amortized} maintains a $(6+\epsilon)c$-orientation, and, for an input batch of size $b$,
    \begin{itemize}
        \item performs amortized $O(\epsilon^{-1}b)$ edge flips for batch insertions, and amortized $O(\epsilon^{-2}b \log n)$ edge flips for batch deletions,
        \item supports batch insertions in amortized expected $O(\epsilon^{-2}b)$ work and $O(\epsilon^{-1}\log^2 n)$ \depth{}, and batch deletions in amortized expected $O(\epsilon^{-3}b\log n)$ work and $O(\log n)$ \depth{}.
        \item supports batch insertions in deterministic amortized $O(\epsilon^{-2}b \log n)$ work and $O(\epsilon^{-1}\log^2 n)$ \depth{}, and batch deletions in deterministic amortized $O(\epsilon^{-3}b \log^2 n)$ work and $O(\log n)$ \depth{}.

    \end{itemize}
\end{restatable}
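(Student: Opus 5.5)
We analyze Algorithm~\ref{alg:amortized} in three parts: the out-degree invariant, a Brodal--Fagerberg style potential bound on flips, and work/span accounting on top of the flip count. Throughout, set the cutoff threshold to $\tau=(4+\epsilon/2)c$. The static threshold of the Barenboim--Elkin routine (Section~\ref{sec:staticOrientationAppendix}) is $s=(2+\epsilon/2)c$, which is valid because any subgraph has arboricity at most $c$.

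\textbf{Out-degree invariant.} We show by induction over batches that every vertex has out-degree at most $\tau+s=(6+\epsilon)c$ after each batch. Deletions only lower out-degrees. For an insertion batch, new edges are first oriented arbitrarily; if several new edges point out of one vertex, we should orient them via the static routine rather than arbitrarily. We then collect the out-edges of all vertices whose out-degree exceeds $\tau$ and reorient exactly these edges with the static routine. A vertex $v$ that was over the cutoff loses all its out-edges and regains at most $s$, so it ends at most $s$. A vertex $v$ that was not over the cutoff keeps its at most $\tau$ untouched out-edges and gains at most $s$ from the static orientation. In both cases the out-degree is at most $\tau+s$.

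\textbf{Flip bound.} Fix an offline strategy that maintains a $c$-orientation. Such a strategy exists with amortized $O(\log n)$ flips per deletion and $O(1)$ per insertion by Brodal--Fagerberg; in the batch setting we apply it edge by edge. Call an edge \emph{bad} if our orientation disagrees with the offline one, and let the potential be $\Phi=\alpha\cdot\#\text{bad}$ with $\alpha=\Theta(1/\epsilon)$. Each inserted edge and each offline flip raises $\Phi$ by at most $\alpha$. This yields the amortized $O(\epsilon^{-1}b)$ flips for insertions and $O(\epsilon^{-2}b\log n)$ for deletions, after absorbing the offline flip count with the right $\epsilon$-dependence.

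The key step is to show that the static reorientation decreases $\Phi$ by at least the number of flips it performs. Let $H$ be the set of vertices over the cutoff. Each $v\in H$ has more than $\tau$ out-edges entering the static step, of which at most $c$ are good, so more than $\tau-c$ are bad. Since $v$ ends with at most $s$ out-edges, at least $\tau-s=(2)c$ of its former out-edges flip direction. Each edge touched is an out-edge of some $v\in H$. The difficulty is that an edge between two vertices of $H$ can become bad in the reorientation, and edges leaving $H$ can be reoriented inward. To handle this, bound the newly bad edges by the total final out-degree of $H$, which is at most $s|H|$. Bound the removed bad edges from below by the old bad out-edges counted once per edge: at least $(\tau-c)|H|$ minus the doubled count of edges between vertices of $H$, where each such edge is counted only as an out-edge of its single tail. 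The net decrease in bad edges is then at least $(\tau-c-s)|H|=(1)c|H|$, roughly. The number of flips is at most the number of edges touched, namely $\sum_{v\in H}\deg^+(v)$. We compare this total against $c|H|$ plus the bad-edge reduction, using the fact that $\deg^+(v)-c$ out-edges of each $v$ are bad and at most $s$ of them end as outgoing. This comparison is the main obstacle: the constants must make (bad edges removed) $-$ (bad edges created) $\ge \Omega(\epsilon)\cdot(\text{edges touched})$. I would try choosing $\tau$ slightly larger, which the $\epsilon$ slack allows, so that a constant fraction $\Theta(\epsilon)$ of each touched vertex's out-degree is pure net bad-edge removal. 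That fraction explains the $\epsilon^{-1}$ factor in the flip bounds.

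\textbf{Work and span.} The static routine costs expected work linear in the number of touched edges, which is $O(\epsilon^{-1})$ times the flips. Its span is $O(\epsilon^{-1}\log^2 n)$. Identifying $H$ and gathering out-edges uses a semisort, giving expected linear work and $O(\log n)$ span. Deletions involve no flips, so they cost only bookkeeping, and their amortized work is charged the potential they release. Multiplying the flip bounds by the $O(\epsilon^{-1})$ work per flip gives the stated expected bounds. Replacing the semisort with a deterministic sort via Lemma~\ref{lem:sort} adds the extra $\log n$ factor to obtain the deterministic bounds.
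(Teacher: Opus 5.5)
Your overall framework matches the paper's: a bad-edge potential against an offline strategy, an out-degree bound of $\tau+s$, work proportional to the number of edges sent to static orientation, and Lemma~\ref{lem:sort} for the deterministic bounds. But the central quantitative step is not established, and with your parameters it cannot be. That step is showing the repair removes $\Omega(\epsilon)\cdot|E'|$ bad edges. Two ingredients are wrong.

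First, the offline strategy. Brodal--Fagerberg's offline strategy (Lemma~\ref{lem:offline}) needs out-degree $\tau^*>c$ and pays $\lceil\log_{\tau^*/c}n\rceil$ flips per deletion. At $\tau^*=c$ this bound degenerates, so there is no $O(\log n)$-flip $c$-orientation strategy to cite. Taking $\tau^*=(1+\Theta(\epsilon))c$ costs $\Theta(\epsilon^{-1}\log n)$ offline flips per deletion. This, multiplied by the $O(\epsilon^{-1})$ ratio below, is exactly where the $\epsilon^{-2}\log n$ deletion bound comes from; it is not something to ``absorb''.

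Second, your counting is off in three places:
\begin{itemize}
\item There is no double counting, since each edge of $E'$ has exactly one tail in $H$.
\item ``Newly bad $\le s|H|$'' is false, because a flipped good edge can become an out-edge of a vertex outside $H$.
\item ``Removed bad $\ge(\tau-c)|H|$'' ignores that up to $s$ bad edges may stay out of $v$.
\end{itemize}
The correct per-vertex accounting is as follows. A vertex $v\in H$ has at most $\tau^*$ good out-edges, so at most $\tau^*$ edges become bad. At most $s$ of its out-edges survive, so at least $d^+(v)-\tau^*-s$ bad ones flip to good. Hence $\Phi^\downarrow(v)\ge d^+(v)-2\tau^*-s$, and summing with $d^+(v)\ge\tau+1$ gives $|E'|\le\frac{\tau+1}{\tau+1-2\tau^*-s}\,\Phi^\downarrow$.

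With $\tau=(4+\epsilon/2)c$ and $s=(2+\epsilon/2)c$, the denominator is $2c+1-2\tau^*$. This is at most $1$ for every admissible $\tau^*\ge c$, and nonpositive once $\tau^*\ge c+\tfrac12$. So at best you get a ratio of $\tau+1=\Theta(c)$, not $O(\epsilon^{-1})$. Raising $\tau$ alone does not help, because $\tau+s=(6+\epsilon)c$ already spends all the slack.

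The missing idea is to split the slack three ways, as the paper does with $\epsilon'=\epsilon/6$: take $\tau^*=(1+\epsilon')c$, take $s=(2+\epsilon')c$ (run the static routine with parameter $\epsilon'$), and take $\tau=2\tau^*+s+2\epsilon'c$. Then $\tau+s=(6+\epsilon)c$, the denominator is $2\epsilon'c+1$, and the ratio is $O(\epsilon^{-1})$.

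Smaller points:
\begin{itemize}
\item The $\epsilon^{-1}$ factor between edges touched and work comes from the static routine's $O(\epsilon^{-1}|E'|)$ work bound. It does not come from $|E'|$ exceeding the flip count: each $v\in H$ keeps at most $s$ of its more than $\tau$ out-edges, so a constant fraction of $E'$ is flipped.
\item Deletions are charged the potential injected by the offline flips, not potential they release.
\item Algorithm~\ref{alg:amortized} needs no change to the arbitrary orientation of inserted edges.
\end{itemize}
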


\begin{restatable}{theorem}{amortizedoptimal}\label{thm:amortized_optimal}
Algorithm \ref{alg:amortized} performs an optimal number of flips (up to constant factors) and runs in optimal expected work.
\end{restatable}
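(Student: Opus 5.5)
We need to prove Theorem \ref{thm:amortized_optimal}: Algorithm \ref{alg:amortized} performs an optimal number of flips up to constant factors, and its expected work is optimal. The plan is a competitive analysis against an arbitrary offline strategy, following Brodal and Fagerberg.

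\textbf{Setup and potential.} Fix any (offline or online) algorithm $\mathcal{A}$ that maintains a $\delta$-orientation with $\delta = O(c)$ over the same update sequence, performing $F_{\mathcal{A}}$ flips in total. The cutoff threshold of Algorithm \ref{alg:amortized} must be chosen so that it exceeds $\delta$ by a constant factor; write it as $(1+\Theta(\epsilon))\cdot 2\delta$ or similar. An edge is \emph{good} if its current direction agrees with $\mathcal{A}$, and \emph{bad} otherwise. Take as potential $\Phi = \alpha\cdot(\#\text{bad edges})$ for a suitable constant $\alpha$. We account for the three kinds of events as follows.
\begin{enumerate*}[label=(\roman*)]
\item Each insertion raises $\Phi$ by at most $\alpha$, and each deletion does not raise it.
\item Each flip by $\mathcal{A}$ raises $\Phi$ by at most $\alpha$.
\item Each round of static re-orientation in our algorithm must lower $\Phi$ by a constant times the number of flips it performs.
\end{enumerate*}
Summing these over the whole sequence bounds our total flips by $O(m_{\text{ins}} + F_{\mathcal{A}})$. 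Since every orientation algorithm must spend $\Omega(1)$ work per inserted edge, this is $O(\text{OPT})$.

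\textbf{Key step (iii), the main obstacle.} When a set $H$ of vertices exceeds the cutoff, the algorithm collects all their out-edges $E_H$ and re-orients them by the static routine of Barenboim and Elkin (Section \ref{sec:staticOrientationAppendix}). Afterwards every vertex has out-degree at most the static threshold $\tau$ within $E_H$. The argument proceeds in three parts.
\begin{itemize}
\item Each $v\in H$ had more than the cutoff $T$ out-edges before the round and at most $\tau$ in $E_H$ after it, so at least $T-\tau$ of its edges left it. We set $T \ge \tau + \Theta(\tau)$ so that this is a constant fraction of its edges.
\item Before the round, at most $\delta$ of $v$'s out-edges were good, so at least $|E_H| - \delta|H|$ edges of $E_H$ were bad.
\item After the round, the number of bad edges in $E_H$ is at most $\tau|V(E_H)|$. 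The danger is that vertices outside $H$ may also receive up to $\tau$ edges. To handle this, I would charge every edge to its tail after re-orientation, and use that a static orientation of $E_H$ has arboricity at most $c$, giving $|E_H|\le c\,|V(E_H)|$.
\end{itemize}
The cleaner route is edge-by-edge. The number of bad edges after the round is at most $|E_H|$ minus the good ones, and bounding it directly may fail. So instead I would compare $\Phi$ before and after via the inequality
\[
\#\text{bad}_{\text{after}}(E_H) \le |E_H| - \#\text{good}_{\text{after}}, \qquad \#\text{bad}_{\text{before}}(E_H) \ge |E_H| - \delta|H|,
\]
combined with $|E_H| > T|H|$. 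That last bound gives $\delta|H| < (\delta/T)|E_H|$, so a $(1-\delta/T)$ fraction of $E_H$ was bad before. Since the algorithm flips at most $|E_H|$ edges, it suffices to show that the bad count drops by $\Omega(|E_H|)$. This is where the $(6+\epsilon)c$ constants from Theorem \ref{thm:amortized_main} get used, and it is exactly the potential decrease already established in the proof of Theorem \ref{thm:amortized_main} in Section \ref{sec:amortized}. I would reuse that lemma verbatim, but with $\mathcal{A}$'s orientation in place of the specific $O(\log n)$-flip offline strategy. The only property of the reference strategy that lemma used was its out-degree bound, so it applies to any such $\mathcal{A}$.

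\textbf{Work.} Each round's work is linear in expectation in $|E_H|$ plus the batch size, by the Barenboim–Elkin routine and the linear-work semisort. Hence the expected work is $O(b + \#\text{flips})$, which is $O(\text{OPT})$ by the flip bound. This also matches the $\Omega(\log n)$ per-update lower bound implicit in Brodal–Fagerberg, which makes the $O(\log n)$ expected work of Theorem \ref{thm:amortized_main} optimal.
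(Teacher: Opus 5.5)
Your proposal is correct and is essentially the paper's argument: the paper's proof just invokes Lemma~\ref{lem:amor_work_bound} to bound the statically oriented edges by $O(t_{\text{ins}}+r^*)$ against any reference strategy (that lemma uses only the strategy's out-degree bound $\tau^*$ and its reorientation count $r^*$), and then notes that the expected work is linear in this quantity. The exploratory global counting in your step (iii) is unnecessary, because the decrease comes from the per-tail charge of Lemma~\ref{lem:amor_subgraph_size}: each $v\in V^{\text{high}}$ keeps at most $\tau'$ of its original out-edges, and at most $\tau^*$ of those out-edges were good beforehand. That charge needs the cutoff $\tau$ to exceed $2\tau^*+\tau'-1$ in the paper's notation ($\tau'$ the static threshold), not merely $(1+\Theta(\epsilon))\,2\delta$.
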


\myparagraph{Comparison with Kaplan and Solomon} Kaplan and Solomon adapt Brodal and Fagerberg's algorithm to the dynamic distributed setting \cite{kaplan18dynamic}. Both our and their algorithm are amortized, and find an affected subgraph then statically orient it. However, their choice of affected subgraph is very different, and  to extend their approach to the parallel batch-dynamic setting would be difficult if feasible, and would require new technical ideas.  

\subsection{Worst-Case Low Out-Degree Orientation}

\input{techaltalt}

\begin{table}[]
    \centering
    \begin{tabular}{c|c}
    Symbol & Meaning \\
    \hline
         $k$ & number of flips of $k$-Flips algorithm  \\
         $F_v$ & front of vertex out-edge queue \\
         $B_v$ & back of vertex out-edge queue \\
        $p(v)$ & potential of vertex $v$ \\
            P(1$\rightarrow$3) & properties of potential function \\

         $\epsilon$ & small additive error in potential function \\
         $\delta$ & max out-degree of offline orientation strategy \\
         $\sigma$ & worst-case number of flips of offline orientation strategy  \\
         $\beta$ & $6\delta \epsilon$ \\
         $\eta$ & $1 + 1/\epsilon + 2\sigma$ \\
         $T$ & threshold of skyline \\
         $\hat x$ & max counter in Berglin Brodal counter game \\
         $S$ & typically, a skyline \\
         $c'$ & rounding of skyline \\
         $Y$ & starting weight in batch counter game \\
         $H$ & cap in batch counter game \\
         R(1$\rightarrow$3) & rules of batch counter game \\
         $p$ & potential state before update \\
         $u$ & counter state before update \\
         $r$ & potential state after edge updates and offline flips \\
         $q$ & potential state after algorithm run \\
         $w$ & counter state we build to dominate $q$ that is reachable from $u$ \\
         J(1$\rightarrow 3$) & conditions of $(H,T)$-bounded \\
    \end{tabular}
    \caption{Notation used in worst-case algorithm section of technical overview.}
    \label{tab:techOverviewNotation}
\end{table}

%% file: techaltalt.tex

We now turn to maintaining an orientation with worst-case guarantees, building on the sequential algorithm of Berglin and Brodal, which we call the $k$-Flips algorithm~\cite{BB20}. We first recall their algorithm and the analysis bounding its maximum out-degree. 

\myparagraph{The \boldmath$k$-Flips Algorithm \cite{BB20}} The algorithm has a single parameter, $k$. On each insert or delete operation, it repeats the following $k$ times: flip an edge incident on a (current) max out-degree vertex. The runtime of this algorithm is $O(k)$. Depending on the choice of parameters, $k$ will be between constant and $O(\log n)$. 

To determine which edge is flipped, they store the out-edges in a queue, flipping edges from the front of the queue, and adding them to the back.
By using a queue, they avoid repeatedly flipping the same edge back-and-forth between two max-degree vertices. In the analysis (but not the implementation), the queue for vertex $v$ will be separated into a \emph{front} $F_v$ and a \emph{back} $B_v$: when the front is empty, the back is copied into the front.

\subsubsection{Analyzing the \boldmath$k$-Flips Algorithm}

Berglin and Brodal's analysis relies on two tools: a potential function and a counter game, which we describe in turn. The purpose of potentials is to measure that progress is being made when edges are being flipped, even when the max out-degree is not immediately decreasing. The purpose of the counter game is show that potential does not excessively accumulate on a single vertex.



\myparagraph{The potential function} First, fix an offline strategy $\kappa$ that
maintains a $\delta$-orientation and makes at most $\sigma$ flips per update across an online sequence of edge updates. In the Brodal Fagerberg amortized analysis, an offline strategy with $\delta=O(c)$ and $\sigma=O(\log n)$ worst-case was used; Berglin and Brodal's result achieves different runtime to max out-degree tradeoffs depending on the offline strategy chosen. Like before, an edge is said to be \emph{good} if it is oriented the same as in $\kappa$ and \emph{bad} otherwise. Note that since $\kappa$ is a $\delta$-orientation, each vertex has at most $\delta$ good out-edges.

The potential $p(v)$ of a vertex $v$ is the sum of the potentials of its out-edges, and the potential of an out-edge is dependent on whether it is in the front or back of a queue and whether it is good or bad compared to the offline strategy, as shown in Table \ref{tab:BBpotentials}. The parameter $\epsilon$ is set depending on the choice of offline algorithm; for Berglin and Brodal's main result (their $O(c+\log n)$-orientation), $\epsilon=\frac{1}{\log n}$. This potential function satisfies three important properties.

\begin{table}[]
    \centering
    \begin{tabular}{c|c|c}
         & Good & Bad \\
         \hline
        Front & $1+2\epsilon$ & $1$ \\
        Back & $1-\epsilon$ & $(1+\epsilon)$ for first $3\delta$, $1$ for rest \\
    \end{tabular}
    \caption{Potential values of an edge, as presented in \cite{BB20}. Recall that $\delta$ is the orientation quality of the offline algorithm, and $0 < \epsilon \le 1$.}
    \label{tab:BBpotentials}
\end{table}

\begin{enumerate}[topsep=3pt,itemsep=2pt,label=(\textbf{P\arabic*})]
    \item A vertex's potential should approximate a vertex's out-degree: $d^+(v) - \delta\epsilon \le p(v) \le d^+(v) + 5\delta\epsilon$
    for every $v$. Thus the maximum-potential and maximum-out-degree vertices differ by at
    most $\beta := 6\delta\epsilon$. 
    \item  Flipping an edge from a vertex of out-degree at least $4\delta$
    lowers the total potential in the graph by at least $\epsilon$. 
    \item  An edge update, together with $\kappa$'s flips, raises the total
    potential by at most $\eta\epsilon$, where $\eta := 1 + 1/\epsilon + 2\sigma$. 
\end{enumerate}

\myparagraph{Counter game} The game is played on $n$ counters, one per vertex. Let $u(v)$ be the weight of counter $v$. The weights are nonnegative and sum to a fixed total $X$. Let $\hat x$ denote the largest weight. A single move does the following:
\begin{enumerate}
\item selects a counter $v$ whose weight is within $\beta + 2=6\delta \epsilon+2$ of $\hat x$
\item removes up to $\hat x - (\beta+2)$ weight from $v$
\item adds the removed weight to other counters of its choosing, keeping $X$ fixed. 
\end{enumerate} 

 They prove that, from counters of starting
weight $O(\delta)$, no sequence of moves raises the largest weight above
$O(\delta + (\delta\epsilon + 1)\log n)$.

\myparagraph{Applying the counter game} Say that a set of counters \emph{dominates} an orientation if every counter weighs at least the
potential of its vertex. 
Then the max counter weight bounds the maximum potential and thus the maximum out-degree. 
Berglin and Brodal argue that the counters dominates the dynamic orientation given by the $k$-flips algorithm. 
Note that following an edge update, some vertices may have potential greater than their counter weight, and other vertices will have potential less than their counter weight. The domination argument shows that counter weight can be moved from places of surplus to places where it is needed. 

This leads to a maximum out-degree bound of $O(\delta + (\delta\epsilon + 1)\log n)$. 
Plugging in the offline strategy of Brodal-Fagerberg gives $\delta=O(c),\sigma=O(\log n),\epsilon=\frac{1}{\log n}$ yielding an $O(c+\log n)$-orientation with $O(\log n)$ worst-case time. Using a standard technique found in, for example, Chekuri et al. \cite{chekuri2024adaptive}, this yields an $O(c)$-orientation with $O(\log^2 n)$ worst-case runtime.\footnote{See Lemma \ref{lem:boostOrientation} for an explicit statement and proof}

An important property of the $k$-flips algorithm that enables this domination argument is that the max out-degree only increases by 1 during an update step. This is true because an edge is always flipped from a max out-degree vertex.

\subsubsection{Why is Adapting the \boldmath$k$-Flips Algorithm to the Batch Setting Hard?}

The sequential worst-case algorithm immediately gives a parallel dynamic algorithm (parallelizing a single update), because it only performs $k=O(\log n)$ flips, and thus runs in $O(\log n)$ \depth{}. 
However, in the parallel {\it batch-dynamic} setting, applying the sequential algorithm on a batch of size $b$ would yield $\Omega(b \log n)$ \depth{}. 
Therefore, we would like to find a way to perform many edge flips at once to lower the \depth{} for batch updates.
However, we run into four difficulties when flipping many edges simultaneously:
\begin{enumerate}[label=(\arabic*),topsep=3pt,itemsep=3pt,parsep=0pt,leftmargin=20pt]
    \item We must decide which edges to flip. We cannot choose arbitrary edges to flip, because we need a net reduction in potential.
    \item  In a batch update, we will need for many vertices to flip out-edges and transfer potential at once.
    We must either demonstrate that there exists a sequence of moves of Berglin and Brodal's sequential counter game that dominates a batch update, or develop a new counter game and domination argument more suitable to the batch setting.

     \item In the $k$-flips algorithm, the choice of later edge flips strongly depends on the earlier flips, since the maximum out-degree vertex changes with each flip. This dependence is important in the analysis because it bounds the max out-degree increase.
    However, because we must pick many edges to flip at once, we seemingly lose this bound.
        \item Data structures that are trivial in the sequential setting, if naively adapted, become expensive in the parallel batch-dynamic setting, contributing an additional $O(\log n)$ factor to the work. In particular, maintaining an out-edge queue is trivial in the sequential setting but naively adds a $O(\log n)$ to the work in the parallel batch-dynamic setting.

\end{enumerate}

\subsubsection{Our Algorithm}

Before presenting our algorithm, we must define a core primitive, skylines. Informally, we define a \defnn{skyline} to be a subset of out-edges of high out-degree vertices, where the number of out-edges contributed by each vertex increases with its out-degree. We show an example of a skyline in Figure \ref{fig:skyline}. In this language, the $k$-Flips algorithm flipped a skyline of size 1; by flipping larger skylines, we achieve greater parallelism. Informally, the \emph{threshold} of a skyline is the out-degree below which no out-edges are taken, and we say a skyline has \emph{sufficient height} if the threshold is at least $4\delta$.\footnote{For efficiency reasons in the data structures, the actual definition is slightly more complicated, see Section \ref{sec:skyline} for details.}

\begin{figure}
\hspace{.15\textwidth}
    \begin{minipage}{.35\textwidth}
    \includegraphics[width=\textwidth]{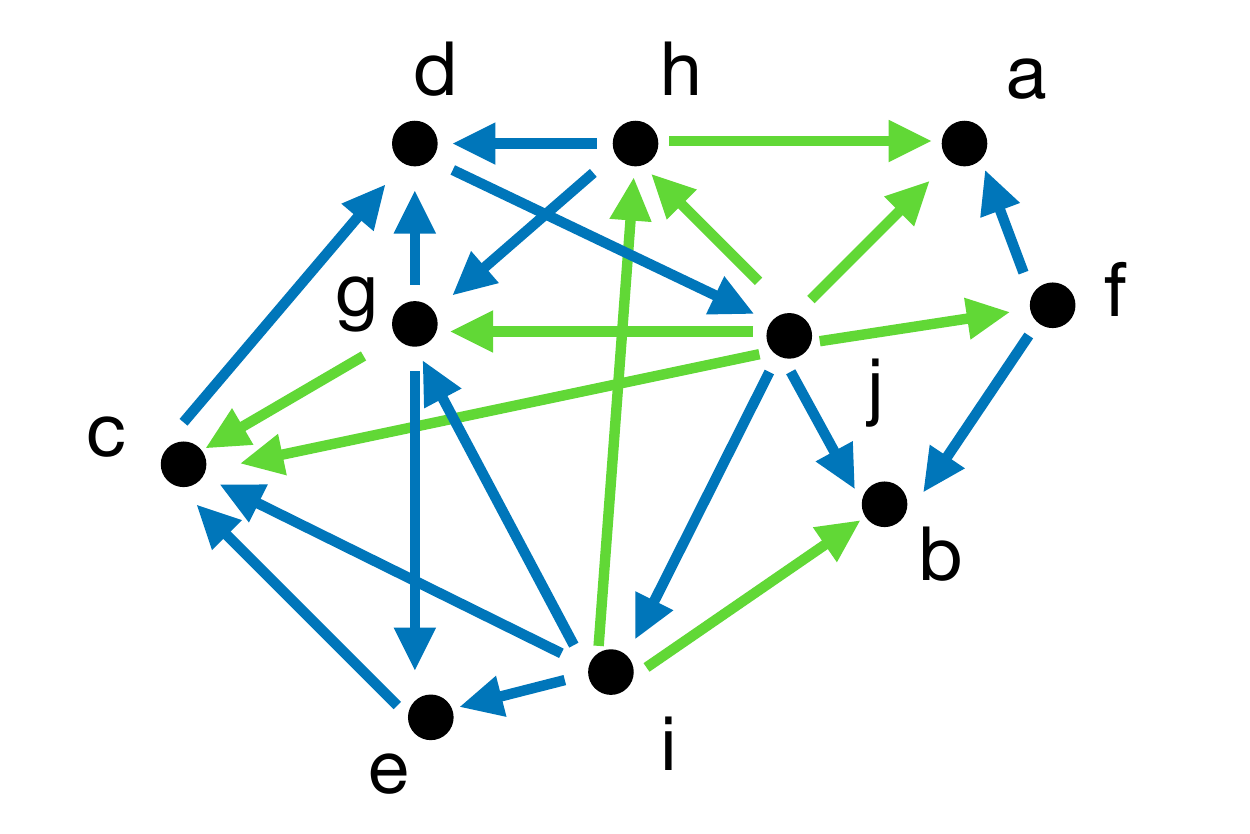}

   \end{minipage}
    \begin{minipage}{.45\textwidth}
    \includegraphics[width=\textwidth]{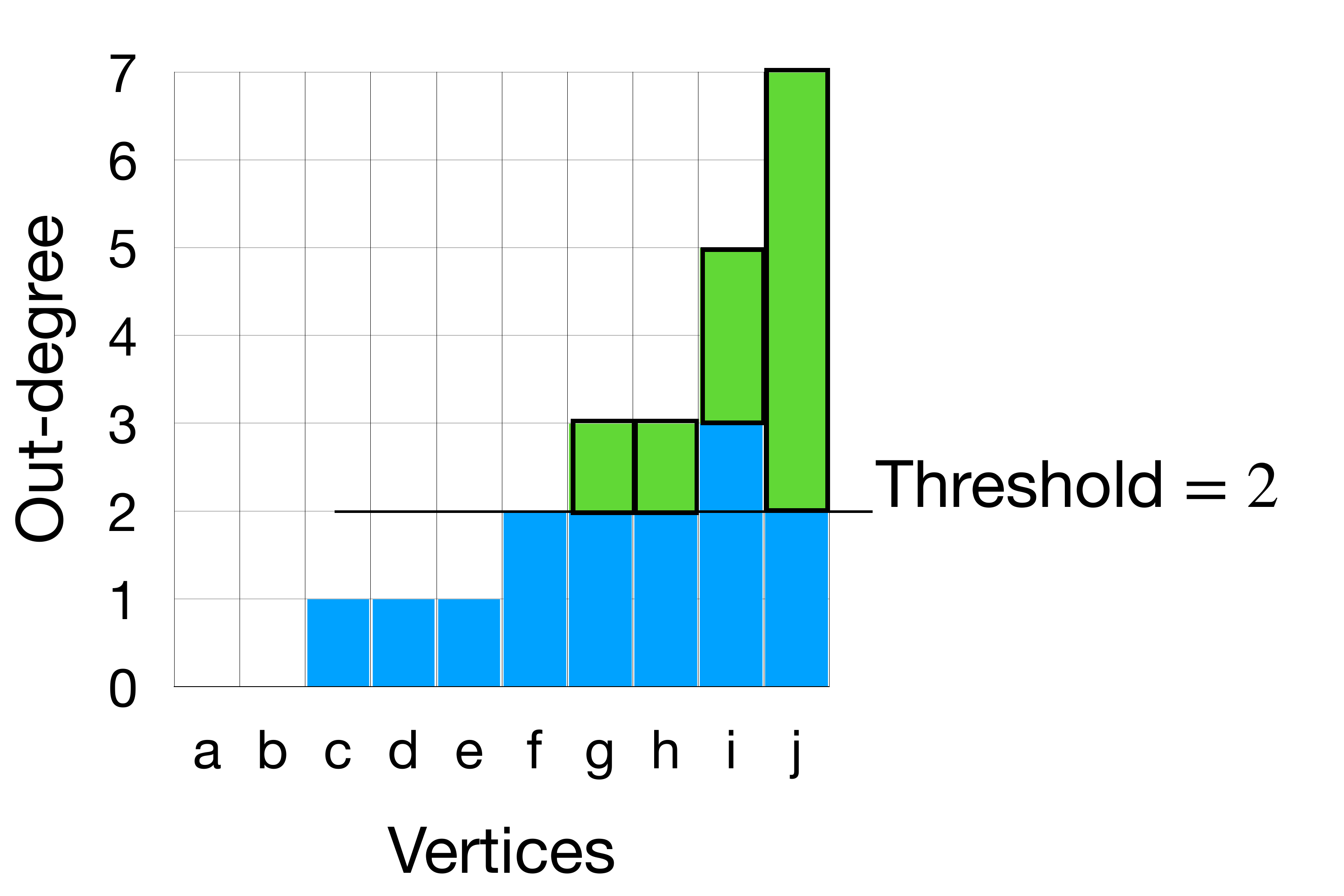}
    \end{minipage}
    
    \caption{Example graph orientation and skyline. A skyline is a subset of edges coming from high-degree vertices where the amount of edges taken depends on the vertex's out-degree. On the left, we show an orientation, where we have colored the edges in an example skyline of size 9 green. On the right we show a bar graph of the vertex out-degrees. The height of the green portion of each column equals the number of out-edges from this vertex placed in the skyline. Looking at the bar graph, note that the bottom of the green bars is (approximately) level. The threshold ($T$) of this skyline is 2.}
    \label{fig:skyline}
\end{figure}

Now we can give our algorithm. Our parallel algorithm will process an update batch $B$ as follows. First, the deletions in $B$ are removed from the graph. The insertions in $B$ are statically $3c$-oriented and inserted into the graph. 

The algorithm then runs $2\eta=2(1+1/\epsilon+2\sigma)$ iterations, each collecting a skyline of size $\frac{|B|}{2}$ and testing its height. If the skyline has sufficient height, the iteration flips it. Otherwise, the iteration performs a parallel static $3c$-orientation on that skyline and terminates early. If all $2\eta$ skylines had sufficient height, the procedure collects one more skyline $S$ of size $|B|/2$, removes $S$ from the graph and repeats on $S$ as a new insertion batch.

We achieve the following bounds.

\begin{restatable}{thm}{mainCpluslogn} \label{thm:boundsOclogn}
     There exists a parallel batch-dynamic algorithm that for a batch update of size $b$, does $O(b\log n)$ reorientations deterministically, $O(b \log n)$ expected work, $O(b \log^2 n)$ deterministic work, $O(\log^4 n)$ depth, and maintains a $O(c + \log n)$-orientation, where $c$ is a fixed upper bound on the arboricity of the graph over the update sequence, and the updates are given by an adaptive adversary.
\end{restatable}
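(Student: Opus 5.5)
We prove Theorem~\ref{thm:boundsOclogn} by instantiating the skyline-based batch algorithm with the Brodal--Fagerberg offline strategy $\kappa$. This strategy gives $\delta = O(c)$ and $\sigma = O(\log n)$ worst-case flips per update. We take $\epsilon = 1/\log n$, so that $\eta = 1 + 1/\epsilon + 2\sigma = O(\log n)$ and $\beta = 6\delta\epsilon = O(c/\log n)$. The proof then splits into three parts: bounding the number of flips and the work, bounding the span, and bounding the maximum out-degree.

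\textbf{Flips and work.} Each level of the procedure runs at most $2\eta$ iterations, and each iteration flips a skyline of size $|B|/2$, or statically orients one of that size. A level therefore performs $O(\eta |B|)$ flips. If all $2\eta$ skylines had sufficient height, the recursion is invoked on a new batch of size $|B|/2$. The sizes form a geometric series, so the total is $O(\eta b) = O(b\log n)$ reorientations, deterministically.

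For work, we charge each level to the skyline data structure of Section~\ref{sec:skyline} and to the static $3c$-orientation of Barenboim--Elkin. The skyline structure collects and flips a skyline of size $s$ in $O(s)$ expected work, using semisort for the batched out-edge queue updates. This is exactly where difficulty~(4) is resolved, by avoiding a per-edge $O(\log n)$ queue cost. The static orientation runs in linear expected work on the batch. Summing gives $O(b\log n)$ expected work. Replacing the semisort by a deterministic sort (Lemma~\ref{lem:sort}) adds a $\log n$ factor, giving $O(b\log^2 n)$ deterministic work.

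\textbf{Span.} There are $O(\log b)$ recursion levels and $O(\eta) = O(\log n)$ iterations per level. Each iteration costs $O(\log^2 n)$ span: static orientation takes polylog span, and skyline extraction uses prefix sums and sorting. The total is $O(\log^4 n)$.

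\textbf{Out-degree.} This is the main obstacle. We use the batch counter game and the domination framework. Using (P1)--(P3), we would show the following for every batch. Before the batch, a counter state $u$ dominates the potential state $p$. The edge updates together with $\kappa$'s flips raise total potential by at most $\eta\epsilon |B|$ (P3). The statically $3c$-oriented insertions raise no single vertex by more than $O(c)$.

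Each sufficient-height skyline flip removes at least $\epsilon$ potential per edge (P2), so $2\eta$ such flips, of size $|B|/2$ each, remove at least $\eta\epsilon|B|$. This cancels the increase. If an iteration finds a skyline of insufficient height, every vertex above the threshold has out-degree below $4\delta$ plus the skyline contribution. A static reorientation of that skyline then leaves all affected vertices at $O(\delta)$, so no counter needs to grow.

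The key step is to construct a counter state $w$ reachable from $u$ by batch-game moves (rules R1--R3) that dominates the post-algorithm state $q$. For this we must verify that the skyline structure plays the role of ``flip from a near-max vertex''. Flipped edges leave only vertices whose potential is within $\beta + O(1)$ of the threshold level, which mirrors the counter-game rule of removing weight only from near-maximal counters. The analysis uses the $(H,T)$-boundedness conditions J1--J3 for this. I would first attempt this by a per-vertex leveling argument: pour the flipped weight onto the heads of the flipped edges, and show the resulting state obeys the cap $H$.

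Given domination, the batch counter game lemma bounds the maximum counter by $O(\delta + (\delta\epsilon+1)\log n) = O(c + \log n)$. By (P1) the maximum out-degree is then $O(c+\log n)$. The recursive call on a removed skyline $S$ is handled by induction on batch size, since $S$ is treated as a fresh insertion batch.

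Adaptivity of the adversary is harmless. The orientation bound is deterministic, and randomness enters only through semisort running time, which does not affect the output. Hence the expected-work bound holds against an adaptive adversary.
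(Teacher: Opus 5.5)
Your parameter choices ($\delta=O(c)$, $\sigma=O(\log n)$, $\epsilon=1/\log n$, $\eta=O(\log n)$) and your flip, work and span accounting match the paper, which gets this theorem by instantiating Theorem~\ref{thm:wealg}. The out-degree part, however, has a real gap. First, you never fix the rounding $c'$ of the roughly sorted list and skylines, or its budget $M$. The bound you quote, $O(\delta+(\delta\epsilon+1)\log n)$, is Berglin--Brodal's sequential bound. The batch counter game gives $O(Y+H^*\log n)$ with $H^*=H+\beta$. Because skylines are only leveled up to $c'$, the cap one can certify is $H=2c'+2$, so the actual bound is $O(\delta+(c'+\delta\epsilon+1)\log n)$. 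Reaching $O(c+\log n)$ requires choosing $c'=\lceil c/\log n\rceil$ and $M=O(\log n)$. Then the bucket bitmask fits in a word, which is needed for $O(s)$-work skyline extraction, while $Mc'$ still exceeds the out-degree bound plus $3c$.

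Second, and more importantly, the cap itself (condition J3) is left as ``show the resulting state obeys the cap $H$''. Pouring flipped weight onto heads cannot deliver it: a single skyline flip can hand up to $|B|/2$ edges to one head, so no per-flip or per-edge accounting stays below $T+H$. The missing idea is to treat the whole call to \textsc{process} as a single $T$-move. The call must then \emph{remove} (not flip) one more skyline $S_{t+1}$ of the same size $|B|/2$, and recurse on it as an insertion batch. After that removal every vertex has at most $T(S_{t+1})+c'$ out-edges. Threshold stability for equal-size skylines (Lemma~\ref{lem:skyline-thresholds-ij}) gives $T(S_{t+1})\le\min_j T(S_j)+c'$, hence maximum out-degree at most $T+2c'+2$. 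In your sketch the recursion is only bookkeeping, not the mechanism that enforces the cap. Also, skylines do not take edges only from vertices within $\beta+O(1)$ of the threshold; high vertices give many edges, and the batch game needs only J1, that tails do not drop below $T$.

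Two smaller omissions remain. Skyline extraction is correct only while the high bag of the sorted list holds at most $x$ weight, which presupposes the out-degree bound. The paper closes this loop by induction (Lemma~\ref{lem:skylinesCorrectImpliesCorrect}), and this, not limiting per-vertex potential rise, is why inserts are statically $3c$-oriented. Also, an insufficient-height iteration can increase potential, so ``no counter needs to grow'' is wrong. You must instead declare the call trivial (out-degree at most $8\delta$) and restart the counter game in a new epoch, as in Theorem~\ref{thm:generic-outdegree}.
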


Upon seeing this algorithm, two natural questions arise. We briefly respond to each of them before discussing the analysis.

\begin{enumerate}
    \item {\it {\bf Why is this algorithm correct?} In particular, given that the initial skyline insertion can increase the max out-degree by $3c$, how is a counter game able to bound the maximum out-degree?} Our algorithm has the property that, if during a call of the algorithm $v$ loses out-edges and has $T$ left, then $v$ will not have significantly greater than $T$ out-edges by the end of the call. This weaker property is sufficient to apply the counter game to our algorithm.
    
    \item  {\it If skylines can be flipped in linear expected time, then the work of this algorithm is $O(|B| \eta)$ in expectation and the depth is polylogarithmic. But can skylines be extracted and flipped efficiently?} Flipping skylines in linear expected time is nontrivial. The formal definition of skylines is built so that there exist data structures that can implement them efficiently. 
\end{enumerate}

\subsubsection{Our Analysis}

\myparagraph{Skylines revisited} Skylines are a suitable choice for edges to flip because they satisfy the following two properties:

\begin{enumerate}[leftmargin=*]
    \item We show in Lemma \ref{lem:skylineFlipRelease} that flipping a skyline of size $x$ and sufficient height releases at least $x\epsilon$ potential. This is the parallel equivalent of a single edge flip releasing potential. 
\item We show in Lemma \ref{lem:skyline-thresholds-ij} that flipping several skylines of the same size leads to stable thresholds: if $S_1, \ldots, S_k$ are skylines of a common size $x$ with rounding $c'$, each collected after the previous one is flipped, and $T_i$ is the threshold of $S_i$, then
\begin{equation}\label{eqn:tech-stability}
        T_i \le \min_{j < i} T_j + c'
        \qquad\text{for all } i.
\end{equation}
This is a weaker parallel analog to the max out-degree only increasing by 1 per edge update in the $k$-flips algorithm. 
\end{enumerate}

\myparagraph{A batch counter game} 
As in Berglin and Brodal's analysis, we bound the maximum out-degree by tracking the
movement of potential with a counter game: we call our version the batch counter game. We find our counter game more natural to apply to the batch setting because it offers greater flexibility of counter movement.

The game involves $n$ counters, where each counter $x$ holds a non-negative real weight $u(x)$. The game is parameterized by a starting weight $Y$ (initially $u(x) = Y$ for all $x$) and a cap $H$. Given a current counter state (configuration) $u$, a move in the game is defined by a threshold $T$ and a new state $u'$, denoted as $u \to_T u'$, subject to the following rules:

\begin{enumerate}
    \item[(R1)] Counters cannot lose weight if their weight is below the threshold, and counters above the threshold cannot drop below it. Formally: $\forall x, u'(x) \ge \min(u(x), T).$    \item[(R2)] The total weight of the counters cannot increase: $\sum_x u'(x) \leq \sum_x u(x).$
    \item[(R3)] No counter may exceed the threshold by more than the cap: $\forall x, u'(x) \le T + H.$
\end{enumerate}

\begin{figure}
    \centering
    \includegraphics[width=.6\textwidth]{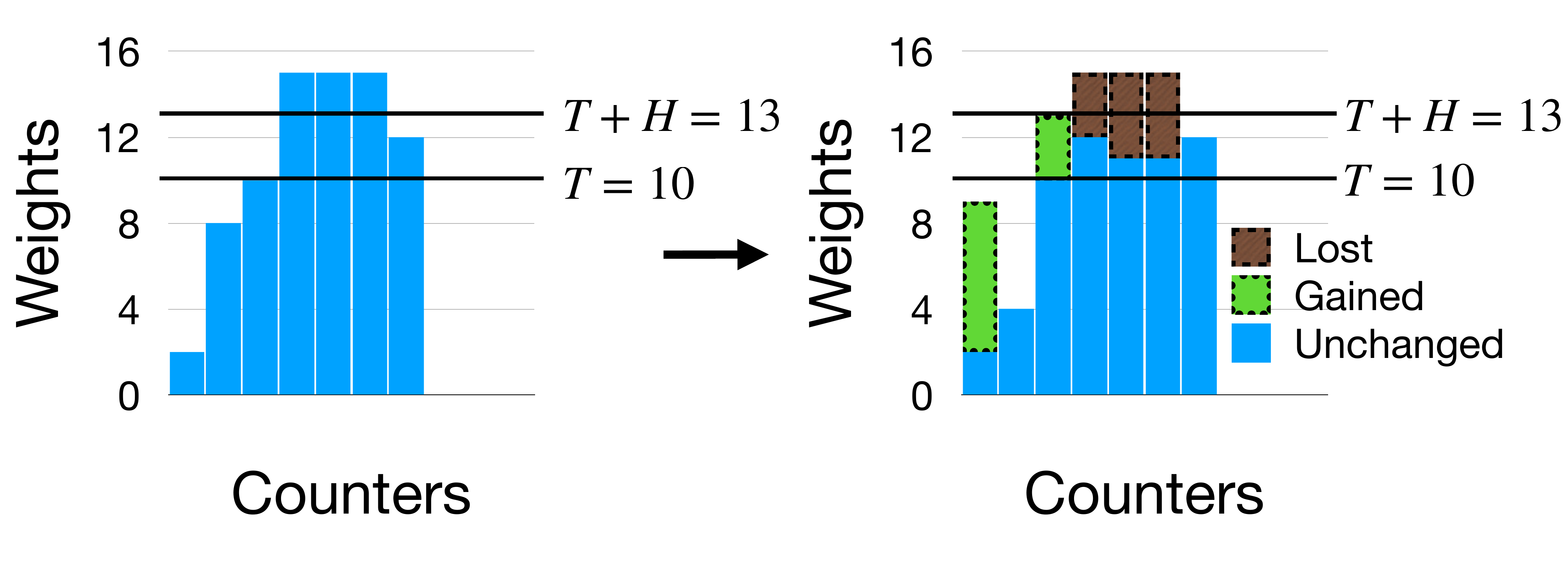}
    \caption{Example of counter movements. Here, $T=10$ and $H=3$. Note that the total amount taken from counters above $T$, which is 11, is less than the amount added to counters, which is 10. Note that all counters end up with weight under $T+H=13$, and that no counter below $T$ loses weight.}
    \label{fig:counterMoves}
\end{figure}

We illustrate a valid move in Figure \ref{fig:counterMoves}. We say that $(u_1,u_2,\dots,u_f)$ is a valid game sequence if for each step $t \in [f-1]$, there exists some $T_t$ such that $u_t \to_{T_t} u_{t+1}$. In Section \ref{sec:counter-game}, we prove the following theorem. 

\begin{restatable}{thm}{counterLimit}
Suppose that all counters start at weight $Y > H$. The max counter weight during any valid game sequence is $O(Y + H\log n)$.
\label{thm:counterLimit}
\end{restatable}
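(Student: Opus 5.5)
The plan is to track, instead of the maximum directly, the \emph{excess} above every level $t$,
\[
E_u(t) := \sum_x \bigl(u(x)-t\bigr)^+ ,
\]
and to show by induction over the game sequence that $E_u(t)$ decays geometrically in $t$ once $t$ exceeds $Y$ by a few multiples of $H$. Initially $E_{u_1}(t)=0$ for all $t\ge Y$. Total weight never exceeds $nY$ by (R2), so $E_u(t)\le nY$ always. If we establish a bound of the form $E_u(Y+jH')\le nY\,2^{-j}$ for $H'=\Theta(H)$, then taking $j=\lceil\log_2(nY/H)\rceil+O(1)$ makes the excess smaller than $H$. A single counter, however, can carry at most $H$ beyond the move's threshold by (R3). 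Combining these should give a maximum weight of $O(Y+H\log n)$, using $Y>H$ so that $\log(nY/H)$ can be charged to $\log n + Y/H$. (If the induction only yields $\log(nY/H)$, the extra $H\log(Y/H)\le Y$ term is absorbed.)

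First I would record three one-move facts for a move $u\to_T u'$.
\begin{enumerate}
\item For every $t\le T$, (R1) gives $\min(u'(x),t)\ge\min(u(x),t)$ for each $x$. Together with (R2) this yields $E_{u'}(t)\le E_u(t)$, so excess at or below the threshold never grows.
\item For $t\ge T+H$, (R3) gives $E_{u'}(t)=0$.
\item For $t=T+s$ with $0<s<H$, every counter above $t$ after the move contributes at least $s$ to $E_{u'}(T)\le E_u(T)$ and at most $H-s$ to $E_{u'}(t)$. Hence
\[
E_{u'}(t)\le \frac{H-s}{s}\,E_u(t-s),
\]
and in particular $E_{u'}(t)\le \tfrac14 E_u(t-\tfrac45H)$.
\end{enumerate}
Fact 3 says excess can only be pushed upward by paying a constant factor per $\Theta(H)$ of height. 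Intuitively, this is why only $O(\log n)$ "layers" of height $H$ can be stacked.

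The main obstacle is turning these facts into a clean invariant. The difficulty lies in the regime $0<s\ll H$, where Fact 3 is useless: a move with threshold just below $t$ can leave $E(t)$ up to $E(T)$, which may be comparable to $E(t)$. A pure inequality $E(t)\le F(t)$ with strictly decreasing $F$ therefore cannot be preserved by Fact 3 alone.

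The first thing I would try is a discretized invariant on levels $\ell_j=Y+2jH$:
\[
E_u(\ell_j)\le \frac{nY}{4^{\,j}} \quad\text{for every reachable } u .
\]
To verify it for a move with threshold $T$:
\begin{itemize}
\item If $T\ge \ell_j$, use Fact 1.
\item If $T\le\ell_j-H$, use Fact 2.
\item If $T\in(\ell_j-H,\ell_j)$, the threshold lies above $\ell_{j-1}+H$. Then every counter exceeding $\ell_j$ after the move must have been fed by donors above $T$. I would bound the new excess at $\ell_j$ by the old excess above $T$, and then bound that old excess by $E_u(\ell_{j-1})$ via an averaging step between $\ell_{j-1}$ and $T$, using Fact 3 applied to the \emph{previous} move that created that excess.
\end{itemize}
If this two-step charging does not close, the fallback is a potential $\Phi(u)=\sum_x \phi(u(x))$ with $\phi$ convex and growing like $2^{u/H}$ above $Y$. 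I would then show that no legal move increases $\Phi$, since mass only moves from counters above $T$ to positions at most $T+H$. The bound follows because $\Phi\le n\phi(Y)$ forces $\phi(\max u)\le n\phi(Y)$, i.e.\ $\max u\le Y+O(H\log n)$.
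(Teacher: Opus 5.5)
There is a genuine gap. Your three one-move facts are correct, but neither of your routes turns them into a proof.

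\textbf{The discretized invariant does not close.} The case $T\in(\ell_j-H,\ell_j)$ carries all the work. The invariant controls $E_u$ only at grid levels, so the best bound available is $E_{u'}(\ell_j)\le E_u(T)\le E_u(\ell_{j-1})\le nY/4^{j-1}$, which is a factor $4$ too weak. Fact~3 for the current move gives $\frac{H-s}{s}E_u(T)$, which is useless as $s=\ell_j-T\to 0$. Charging to ``the previous move that created that excess'' does not help either. The excess above $T$ may have been built up over arbitrarily many earlier moves whose thresholds rise and fall, and Fact~3 for any single one of them degenerates in the same way.

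\textbf{The fallback is false as stated.} A legal move may lower a counter from $T+\theta$ to $T$ and raise another from $T+H-\theta$ to $T+H$. This moves mass \emph{upward}, which strictly increases $\sum_x\phi(u(x))$ for any strictly convex $\phi$. Concretely, from the all-$Y$ start take $T=Y-H/n$, drop $n-1$ counters to $T$, and raise the last one to $T+H$. This satisfies (R1)--(R3), with equality in (R2). With $\phi(u)=2^{(u-Y)^+/H}$ the potential goes from $n$ to $n-1+2^{1-1/n}>n$. So no convex separable potential of the kind you describe is monotone under legal moves.

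\textbf{How the paper handles the history.} It first shows that every reachable state is reachable with strictly increasing thresholds: a dip $T_2\ge T_3$ is merged into the single legal move $u_2\to_{T_3}u_4$. With monotone thresholds the argument goes through as follows.
\begin{itemize}
\item Your Fact~1 chains across the whole history, so the weight above $T_i$ never increases after time $i$.
\item The maximum at time $j$ is at most $T_{j-1}+H<T_j+H$.
\item Hence when $T_j-T_i\ge 2H$, every counter above $T_j$ carries at most $H$ above $T_j$ but at least $2H$ above $T_i$, which halves the weight above threshold.
\item Consecutive thresholds differ by at most $H$, since otherwise the game is frozen.
\end{itemize}
This gives $O(\log n)$ jumps of size $O(H)$.

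\textbf{A local repair within your framework.} Your Fact~3 throws information away. Let $N$ count the counters above $T+s$ after the move. Combining $E_{u'}(T)\ge E_{u'}(T+s)+sN$ with $E_{u'}(T+s)\le (H-s)N$ gives $E_{u'}(T+s)\le (1-s/H)\,E_u(T)\le e^{-s/H}E_u(T)$, which does not degenerate as $s\to 0$. Now consider the continuous envelope $E_u(t)\le nH\,e^{(Y-H-t)/H}$ for all $t\ge 0$.
\begin{itemize}
\item It holds initially: it reduces to $z\le e^{z-1}$ with $z=(Y-t)/H$.
\item It is preserved by every legal move, whatever the order of thresholds. Use Fact~1 for $t\le T$, Fact~2 for $t\ge T+H$, and the sharpened bound in between.
\end{itemize}
Evaluating at $t=Y-H+H\ln n$ gives $E_u(t)\le H$, hence $\max_x u(x)\le Y+H\ln n$. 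Your remark that no strictly decreasing envelope can be preserved is therefore an artifact of the weak form of Fact~3.
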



\myparagraph{Domination} Following the structure of Berglin and Brodal, to complete the argument, we need for the counter game weights to be greater than (dominate) the vertex potentials. We design a general framework for domination, and then show our algorithm plugs into this framework. 

Suppose a configuration (set of counters) $u$ dominates the current potential $p$. The next batch $B$ is handled by a \emph{call} of the algorithm: applying $B$ and letting the offline strategy $\kappa$ perform its flips first carries the potential to a state we call $r$, and the call's response then carries it to a final potential named $q$. The step from $p$ to $r$ injects at most
$|B|\eta\epsilon$ of potential (Lemma~\ref{lem:potential-increase-per-update}), and $u$ may no longer dominate $r$. We cannot simply raise $u$ to cover the gaps without increasing the total weight and breaking rule (R2). The call must instead release potential
as it produces $q$, so that a new configuration $w$ can be built that dominates $q$ and is reachable from $u$ by a legal counter move. 

We develop the definition of \emph{$(H,T)$-bounded} to encapsulate the properties we need for an algorithm to have for the counter game to apply. An algorithm call is $(H,T)$ bounded if the following three conditions hold:

\begin{enumerate}[topsep=3pt,itemsep=2pt,label=(\textbf{J\arabic*})]
    \item $q(v) \ge \min\{r(v),\,T-\delta\epsilon\}$ for every $v$.
    \item $\sum_v q(v) - \sum_v r(v) \le -|B|\eta\epsilon$.
    \item $T \le \max_v d^+(v) \le T+H$ in the output orientation.
\end{enumerate}

Informally, our algorithm satisfies condition (J1) because we only flip skylines, (J2) is satisfied because we flip sufficiently many skylines, and (J3) because we fully remove the final skyline taken in each recursive call. We prove formally that our algorithm is $(H,T)$ bounded in Lemma \ref{lem:wcHT}.

\myparagraph{Bounding an $(H,T)$ bounded algorithm} In Theorem \ref{thm:generic-outdegree}, we show that an algorithm dominated by the counter game does achieve the desired max out-degree bound. Thus, it remains to show that any $(H,T)$ bounded algorithm is dominated by the counter game. 
Concretely, we show, in Lemma \ref{lem:domination-transfer}, that $u \to_{T-\delta\epsilon} w$ is a legal move with cap $H+\beta$. Roughly speaking, the condition (J1) keeps a vertex whose potential decreases from falling below
the threshold $T-\delta\epsilon$, fulfilling counter game rule (R1). Condition (J2) ensures the net potential change is nonpositive, satisfying the total-weight rule (R2). And (J3), with the additive
$\beta$ gap between out-degree and potential from (P1), caps the new counters at
$T+H+\beta$, which (R3).


\myparagraph{Efficient Data Structures} If we were willing to pay an additional $O(\log n)$ cost in the work, we could easily use binary search trees for all of our data structures. However, binary search trees are more powerful than we need: by designing data structures with weaker guarantees, we can achieve work-efficiency.

As a building block for our other data structures, we develop a deterministic parallel batch-dynamic bag data structure with constant work per update (see Section \ref{sec:bag}). This bag supports append, delete from pointers, peek (view $k$ elements), and size, but does not support contains queries. 

Instead of maintaining a queue of vertex out-edges, we observe that partial FIFO order is sufficient for the analysis, and so can maintain a weaker but faster data structure we call a pannier (see Sections \ref{sec:dataStructures} and \ref{sec:potential-setup-main} for more discussion). 

The most expensive part of finding a skyline is determining which vertices have high out-degree: naively this would require a sorted list. We design a roughly sorted list (RSL) that will return the vertices with highest out-degree with some additive imprecision: this error term we call the rounding $c'$ of the RSL (a fully sorted list would yield no error, $c'=1$). An RSL can be maintained with constant work per update in our setting. The imprecision in the RSL leads to imprecision in the skylines, but this can be handled by statically orienting the insert batch and by slightly increasing the cap $H$ used in the $(H,T)$-bounded proof. See Sections \ref{sec:dataStructures} and \ref{sec:appendix_skyline} for more details on the RSL.

%% file: 3_background.tex
\section{Preliminaries}\label{sec:prelims}

\subsection{Model and Primitives}
\myparagraph{Parallel Model}
We use the work-span (or work-depth) model for binary fork-join parallelism with atomics to analyze parallel algorithms~\cite{CLRS,BlellochF0020,GJS21}.
The model assumes a set of \thread{}s that share memory.
A \thread{} can \forkins{} two child \thread{s} that run in parallel.
When all children complete, the parent \thread{} continues.
The \defnn{work} $W$ of an algorithm is the total number of instructions and
the \defnn{span} (depth) $S$ is the length of the longest sequence of dependent instructions.
Computations can be executed using a randomized work-stealing
scheduler in practice in $W/P+O(S)$ time with high probability on $P$ processors~\cite{BL98,ABP01,gu2022analysis}. The atomic operations Test and Set (TS) and Compare and Swap (CAS) are permitted. 

There are several variants of work-span models, including the CRCW PRAM~\cite{KarpR90}.   Importantly, however, many of
these models are robust with respect to each other in terms of work.   In particular, it is possible to cross simulate
these models with the same asymptotic work up to constants, at least assuming randomization~\cite{BlellochF0020,GJS21}. 
The \depth{} among models, however, can differ by a logarithmic factor (e.g., simulating n-way forking of a PRAM with binary forking requires a logarithmic increase in \depth{}).   We therefore state precise asymptotic bounds
 for work, but focus just on polylogarithmic \depth{}.

\myparagraph{Parallel Batch-Dynamic Algorithms}
Our algorithms operate in the parallel batch-dynamic setting. In this setting, we have a sequence of \emph{batch} updates $\mathcal{B}=\{B_i\}$ (edge insertions/deletions) and a sequence of $t+1$ undirected graphs $G_0,G_1,\ldots,G_t$, where $G_0$ is empty, and each $G_i$ was generated from $G_{i-1}$ with a batch edge update $B_i=E_{i-1} \triangle E_i$. After each batch update on the graph, we must update the data structure we are maintaining (i.e. the orientation) to the new graph.
We let $b_i=|B_i|$. When clear, we refer to a particular batch update in the sequence as $B$, and let $b=|B|$. Note that the batches can have different sizes. We assume that a batch update from the user will not insert edges already in the graph, nor delete edges not present, and will have no duplicate edges. We assume that batch deletes give the pointer to the edge they are deleting.

 Similar to design goals for efficient 
parallel algorithms, the gold standard for parallel batch-dynamic algorithms is
to design algorithms that are {\em work-efficient} (i.e., asymptotically perform 
the same work as the best sequential dynamic algorithms that processes the updates 
in a batch one-at-a-time) and have low (ideally poly-logarithmic) \depth{} in the
worst case.


Our adaptive adversary is able to see the out-degree orientation we have chosen after each time step, and may pick the next batch updates based on this knowledge. However, our adaptive adversary may not see random bits that we have not yet requested. This permits us to use a randomized semisort as a subroutine.

We use the notation $[x]$ to mean $\{1,2,\ldots,x\}$.

\myparagraph{Parallel primitives}
The following parallel procedures are used throughout the paper.
\defnn{Scan} (prefix sum) takes as input an array $A$ of length $n$, an associative
binary operator $\oplus$, and an identity element $\bot$ such that
$\bot \oplus x = x$ for any $x$, and returns the array
$(\bot, \bot \oplus A[0], \bot \oplus A[0] \oplus A[1], \ldots, \bot \oplus_{i=0}^{n-2} A[i])$
as well as the overall sum, $\bot \oplus_{i=0}^{n-1} A[i]$.
Scan can be done in $O(n)$ work and $O(\log n)$ \depth{} (assuming $\oplus$
takes $O(1)$ work)~\cite{jaja1992parallel}.
\defnn{Reduce} takes an array $A$ and a binary associative function
$f$ and returns the sum of the elements in $A$ with respect to $f$.
\defnn{Filter} takes an array $A$ and a predicate $f$ and returns a new array
containing $a \in A$ for which $f(a)$ is true, in the same order as in $A$.
Reduce and filter can both be done in $O(n)$ work and $O(\log n)$ \depth{}
(assuming $f$ takes $O(1)$ work). Performing a \defnn{map} operation over the elements of an array, i.e., applying some function $f$ to each element, can be done in the same work and \depth{} bounds assuming $f$ takes $O(1)$ work.

Given a sequence $A$ of entries, each associated with a \emph{key}, a \defnn{semisort} reorders $A$ such that all
entries with the same key are consecutive. Note that the keys may remain out of order, so a semisort is weaker than a full sort.  There are classic algorithms for semisort with $O(n)$ expected
work and $O(\log n)$ \depth{} \whp{}~\cite{gu2015top, BlellochF0020}.\footnote{We say $g(n,c) \le O(f(n))$ \emph{with high probability in $n$} (\whp{}) if there exist constants $k,c_0$ such that for $c \ge c_0$ there exists $n_0$ such that for $n \ge n_0$, we have that $g(n,c) \le k c f(n)$ with probability at least $1-\frac{1}{n^c}$.} 
Using a semisort, we can also implement a \defnn{removeDuplicates} (remove all repeated elements in an array) and a \defnn{groupBy} (given a sequence of pairs, group them by their first element) in the same work and \depth{}.
Deterministically, sorting $n$ elements can be done via a parallel merge sort in $O(n\log n)$ work and $O(\log n)$ \depth{} in our model~\cite{goodrich2023optimal}. 

Note that we can run a randomized semisort with a work and depth budget in mind: if the semisort exceeds this budget (either exceeds a work budget of $O(n \log n)$ or a depth budget of $O(\log n)$), we then scrap the results and run the deterministic sort. We therefore get two bounds, a bound in expectation and a bound deterministically, for the same algorithm. This means that in expectation, the sort will do $O(n)$ work, but that the sort will never exceed $O(n \log n)$ work. Furthermore, the depth is deterministic, because of our depth budget.  

\begin{lemma} \label{lem:sort}
    We can semisort a list of $n$ keys in $O(n)$ work in expectation such that deterministically the work is $O(n \log n)$ in the worst scenario. The depth will be deterministic $O(\log n)$. 
\end{lemma}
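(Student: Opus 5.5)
The plan is a budgeted-restart argument. We run the randomized semisort of \cite{gu2015top} under a fixed work budget and a fixed depth budget. If it overruns either one, we abort and fall back on the deterministic parallel merge sort of \cite{goodrich2023optimal}. A full sort is in particular a semisort, so the output is always correct. What remains is to bound the work and the depth.

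\textbf{Setting up the budget.} The randomized semisort does $O(n)$ expected work, say at most $a n$ in expectation, and has $O(\log n)$ depth \whp{}, say at most $s\log n$ except with probability $n^{-c}$. We run it in lock-step with a counter. Concretely, we execute it in rounds and charge each round's work against a budget of $C n\log n$ and its round count against a budget of $C'\log n$. The counters are updated with a reduce per round, or with atomic fetch-and-add on a shared counter. Once a budget is exceeded, every thread checks a shared flag and stops at its next round boundary.

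\textbf{Deterministic bounds.} The overhead of budget checking is a constant factor in work. It adds at most an $O(\log n)$ factor to depth per check, which is fine if the checks are done at a coarse granularity. I expect this bookkeeping to be the main obstacle: the model is binary fork-join, so "stop when the budget is exceeded" has to be realized so that aborted work is still bounded by the budget plus $O(n)$. I would handle it by relying on the fact that the semisort of \cite{gu2015top} consists of $O(1)$ phases, each a bounded number of scans, maps and hashing steps. The work and depth of each phase can be measured (for instance bucket sizes via a reduce) before committing to the next phase. Hence we can abort between phases having spent at most the budget plus one phase's worth of work, which is $O(n\log n)$ in the worst case if the budget check is performed before launching any phase whose predicted cost would exceed it. Under this scheme the worst-case work is $O(n\log n)+O(n\log n)$, from the capped randomized attempt plus merge sort. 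The depth is $O(\log n)$ for the capped attempt plus $O(\log n)$ for the merge sort, and both of these are deterministic.

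\textbf{Expected work.} The expected work is at most $E[W_{\text{rand}}]+\Pr[\text{abort}]\cdot O(n\log n)$. By Markov's inequality, $\Pr[W_{\text{rand}}>Cn\log n]\le a/(C\log n)$. The depth overrun has probability at most $n^{-c}$. The second term is therefore at most $O\bigl(n\log n\cdot(1/\log n+n^{-c})\bigr)=O(n)$, giving $O(n)$ expected work overall.
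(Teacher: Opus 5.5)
Your proposal is correct and follows the paper's own argument: run the randomized semisort of \cite{gu2015top} under an $O(n\log n)$ work budget and an $O(\log n)$ depth budget, and on overrun discard the result and run the deterministic $O(n\log n)$-work merge sort of \cite{goodrich2023optimal}. Your Markov step, which shows the abort probability is $O(1/\log n)$ so the fallback adds only $O(n)$ expected work, spells out what the paper merely asserts. Your discussion of how to abort in the binary fork-join model elaborates on something the paper omits and does not change the argument.
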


Note that various operations must be semisorted (or sorted) before occurring, e.g., semisorting a batch of edge inserts before batch inserting in each vertex's bag. We do not explicitly write these underlying semisorts (or sorts if deterministic) in the pseudocode for clarity, but they do contribute to our runtime bounds.

\subsection{Parallel Data Structures \label{sec:dataStructures}}

We will use the following data structures in our algorithms. 
Note that we choose to 1-index (rather than zero-index) all data structures throughout this work, except for the roughly sorted list. All arrays will be indexed inclusive on both ends. For an array $A[i:j]$, if $i > j$, then this returns the empty set, not an error, by convention. 

\myparagraph{Bag} A bag is a container structure that must support \textsf{batchInsert}$(L)$, \textsf{batchDelete}$(L)$, \textsf{batchPop}$(b)$, \textsf{size}, and \textsf{peek}$(b)$ (output an arbitrary $b$ elements of the bag, if they exist) operations, where $L$ is an array of arbitrary length (the length can differ between different calls) and $b$ is an arbitrary value. Note that a bag does \emph{not} support a contains nor isMember query. Note that we require peek to be deterministic and to have no side effects, so two consecutive calls to peek will return the same values. However, a call to batchInsert, batchDelete, or batchPop is allowed to completely reorder the bag.

We will use bags to store out-edges in various places, including in the amortized algorithm and within the static orientation. In the pseudocode, we will implicitly convert arrays to bags and back when necessary. 

We can implement a bag with constant deterministic time (no amortization or randomness) per operation using a linked list with skew numbers \cite{MYERS83,okasaki1999purely}, see Section \ref{sec:bag} for the full data structure and proofs. Note that a hash table would be constant amortized expected time and is thus not sufficient for our purposes, as we want a data structure that is worst-case and deterministic.

\begin{restatable}{lem}{bag} \label{lem:bag} There exists a deterministic bag that for a batch of $b$ operations requires $O(b)$ work and $O(\log b)$ \depth{}. \end{restatable}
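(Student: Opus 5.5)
The construction follows the hint in the text: a singly linked list of fixed-size blocks whose lengths follow a skew-binary decomposition \cite{MYERS83,okasaki1999purely}. Concretely, the bag is a list of blocks. Each block is an array of size $2^j-1$ for some $j$, and the block sizes, read from the head, form a skew-binary representation of the current size. Each element keeps a back-pointer (block, index), so \textsf{batchDelete} can use the pointers given by the caller; this is what lets us avoid \textsf{contains}.

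\textsf{size} is a maintained counter. \textsf{peek}$(b)$ walks the head of the list, which is deterministic and has no side effects. The sizes at least double every $O(1)$ blocks, so collecting $b$ elements touches $O(\log b)$ blocks, and a parallel copy does $O(b)$ work in $O(\log b)$ depth. \textsf{batchPop}$(b)$ is a \textsf{peek} followed by deleting those elements.

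\textsf{batchInsert}$(L)$ with $|L|=m$. Decompose $m$ into its greedy skew-binary representation, allocate blocks of those sizes, fill them in parallel, and prepend them. This may break the skew invariant (at most two equal smallest blocks, strictly increasing afterwards). Restore it by merging equal-sized blocks. Merging two blocks of size $2^j-1$ plus one element gives $2^{j+1}-1$. If instead we merge just two blocks into a block of size $2(2^j-1)$ and split appropriately, only $O(\log m)$ block sizes change, and each merge costs work linear in the merged sizes. The rebuilt prefix has total size $O(m)$ because only blocks no larger than about $m$ are affected. So the work is $O(m)$, and since merges at each size level run in parallel, the depth is $O(\log m)$.

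\textsf{batchDelete}$(L)$. Replace each deleted element by an element taken from the head of the list (the smallest blocks), updating back-pointers. This keeps every block except the head region full. Remove the $m$ head elements via \textsf{batchPop}-style splitting of the head blocks into a skew decomposition. A deleted element may itself lie in the head region; handle this by first marking, then filtering the head prefix of size $O(m)$ and matching survivors to holes with a scan.

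The main obstacle is showing that both insertion and deletion only restructure a prefix of total size $O(m)$ while keeping the size-doubling invariant, deterministically and without amortization. I would try first to relax the invariant to ``block sizes are powers of two, each size appearing at most a constant number of times, nondecreasing from the head.'' Under this relaxation, carries and borrows triggered by $m$ operations stay within blocks of size $O(m)$. I would prove this by a direct case analysis on the binary representation of $m$, using scans for the parallel merges and splits to get $O(b)$ work and $O(\log b)$ depth.
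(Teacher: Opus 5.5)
There is a genuine gap in your \textsf{batchInsert}. Merging array blocks costs work linear in the sizes merged, and your claim that ``only blocks no larger than about $m$ are affected'' is false for skew-binary carries. Take a bag consisting of two blocks of size $2^j-1$ (so $n=2^{j+1}-2$, a valid skew representation) and insert a single element. Skew-binary representations are unique, and $2^{j+1}-1$ is itself a digit weight. The only legal resulting state is therefore one block of size $2^{j+1}-1$, so you must copy $\Omega(n)$ elements for a batch with $m=1$.

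Your relaxed invariant does not fix this. Allow at most $K$ blocks of each power-of-two size, and insert $n$ elements one at a time. If the largest block has size $2^t$, then $n\le K(1+2+\dots+2^t)<K2^{t+1}$, so some block has size more than $n/(2K)$. That block was produced during some batch of size $1$. Its constituent parts each have size at most $2^{t-1}$, so at least $2^{t-1}$ elements had to be copied, costing $\Omega(n/K)$ work in that batch. The array-of-blocks design therefore gives at best amortized $O(m)$ work, which the lemma (deterministic, no amortization) excludes. The ``main obstacle'' you flag is not a technicality; it is the step where the construction fails.

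The missing idea is to make carries and borrows cost $O(1)$ regardless of block size, as in Okasaki's skew-binary random-access lists. Store each digit of weight $2^{i+1}-1$ as a complete binary tree with a pointer to its leftmost leaf, not as an array. A carry hangs two trees of weight $2^i-1$ under one new element as root, and a borrow detaches a root to expose its two subtrees; both are pointer operations. The paper shows that adding or subtracting $x$ on a skew number (\textsc{skewAdd}, \textsc{skewSubtract}) takes only $O(\log x)$ digit operations. Linear work is then spent only on building fresh complete trees from the $m$ inserted elements. Deferring all tree construction to the end gives $O(\log m)$ depth.

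Your deletion strategy is essentially the paper's: pop $m$ elements from the front, then use the popped elements that were not meant to be deleted to fill the holes left by deleted elements still in the bag, via marking, filtering and a scan. Your \textsf{peek} also matches, except that with trees the partially used last digit is handled by walking up its left spine from the leftmost leaf.
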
 



\myparagraph{Pannier} A pannier is a pair of bags, a front bag and a back bag. The size of the pannier is the sum of the sizes of the bags. When peeking or popping from the pannier, the front bag is utilized completely before the back bag is accessed. 
Note that the pannier is, at a high level, similar to the batched-queue \cite{MH81, BURTON82}, though the access patterns are less restrictive for the pannier.
Although a pannier is just two bags, we give a pannier a special name because we use the pannier in very specific ways in our algorithms and analysis (essentially as a weakened queue). We further motivate the pannier in Section \ref{sec:potential-setup-main}. 



\begin{lemma} \label{lem:tsb}
Implemented with deterministic bags, for a batch $b$ of updates, a pannier takes $O(b)$ work and $O(\log b)$ \depth{}. 
\end{lemma}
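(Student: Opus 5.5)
The pannier is a pair of deterministic bags, a front bag $F$ and a back bag $K$, so the plan is to reduce every pannier operation to a constant number of bag operations and invoke Lemma~\ref{lem:bag}. I implement the size as $\textsf{size}(F)+\textsf{size}(K)$, which costs $O(1)$. Insertions (appends) go to the back bag as a single $\textsf{batchInsert}$. A $\textsf{batchDelete}(L)$ is given pointers; I assume each element records which bag it lives in, which can be maintained at insertion time. Then one filter, in $O(|L|)$ work and $O(\log |L|)$ \depth{}, splits $L$ into $L_F$ and $L_K$, followed by one $\textsf{batchDelete}$ on each bag.

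For $\textsf{peek}(b)$ and $\textsf{batchPop}(b)$, the plan has three steps.
\begin{enumerate*}[label=(\roman*)]
\item Compute $s=\textsf{size}(F)$.
\item Take $\min(b,s)$ elements from $F$.
\item If $b>s$, take the remaining $b-s$ elements from $K$.
\end{enumerate*}
Each step is one bag operation on at most $b$ elements, so it costs $O(b)$ work and $O(\log b)$ \depth{} by Lemma~\ref{lem:bag}. Concatenating the two results costs $O(b)$ work and $O(\log b)$ \depth{}. Peek remains deterministic and side-effect free because bag peeks are.

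If the usage also requires "copying the back into the front when the front empties", I realize this as a pointer swap of the two bag objects, which costs $O(1)$ rather than an actual copy. Relabeling the elements' bag tags can instead be avoided by storing the tag as a reference to a shared per-bag cell, so the swap is $O(1)$.

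Summing over a constant number of bag calls, each on a batch of size at most $b$, gives $O(b)$ work and $O(\log b)$ \depth{} for a batch of $b$ updates.

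The main obstacle is the bookkeeping of which bag an element belongs to during deletions and the front/back swap, without paying per-element work at the swap. I expect the shared-cell indirection to handle this in $O(1)$ per swap, and I would verify that it is compatible with the bag implementation of Section~\ref{sec:bag}.
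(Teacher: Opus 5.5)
Your proposal is correct and takes the same route as the paper. The paper gives no separate proof: it notes that a pannier is just two deterministic bags and gives \textsc{batchPop} pseudocode (pop the whole front, reassign the back as the front via $Q_F \gets Q_B$, then pop the remainder), so every pannier operation is $O(1)$ bag operations on batches of size at most $b$, and Lemma~\ref{lem:bag} applies. Your extra bookkeeping fills in details the paper leaves implicit: you tag each element with its physical bag so deletions can be routed, and you realize the back-to-front reassignment as an $O(1)$ pointer swap. That reassignment is in fact always performed, since Lemma~\ref{lem:skylineFlipRelease} relies on it.
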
 

\myparagraph{Roughly Sorted List} To store the relative ordering of vertices by out-degree, we define a roughly sorted list (RSL). An RSL stores $n$ key-value pairs $(v,d)$. In our use of the RSL, the keys will be vertices, and the value will be the vertex's out-degree. An RSL supports two (batch) operations, update and prefix. The update$(v,d')$ operation updates $v$'s value to $d'$. Informally, the prefix$(w)$ operation will return an array of the $w$ keys with the highest values, up to some additive error. 

Formally, let $c'$ be the rounding of the RSL and $M \le O(\log n)$ be the budget of the RSL. Define the operation $r(x,c')=\lfloor \frac{d^+(x)}{c'} \rfloor c'$ (for round down to a multiple of $c'$). On calling prefix, a roughly sorted list will return an array of keys $S$ of size $w$ such that for all keys $u,v$, 
\begin{multline} (r(\text{val}(u),c') > r(\text{val}(v),c') \land v \in S \implies u \in S \lor \text{val}(v) > Mc' \lor \\ (\text{val}(u)=r(\text{val}(u),c') \land r(\text{val}(u),c') = r(\text{val}(v),c') + 1) \end{multline}

That is, if $u$ has rounded value strictly higher than $v$, then if $v$ is included, $u$ must also be included. But this requirement does not hold if both $u$ and $v$ have values that are sufficiently high: if keys have very high value, then any keys can be returned. 

We achieve the following work and \depth{} bounds for this data structure.

\begin{lemma} \label{lem:rsl}
A roughly sorted list (RSL) with budget $M \in O(\log n)$ takes $O(w)$ work and $O(\log n)$ \depth{} to output a prefix of size $w$. To batch update the values of $w$ keys, the RSL takes $O(w)$ work in expectation and $O(w \log w)$ work deterministically. The RSL will take $O(\log w)$ \depth{}. 

\end{lemma}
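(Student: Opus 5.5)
The plan is a bucketed structure with one bucket per rounded value class, where each bucket is a deterministic bag (Lemma~\ref{lem:bag}). We keep $M+1$ buckets $Q_0,\dots,Q_M$. Key $v$ with value $d$ lives in $Q_{\lfloor d/c'\rfloor}$ if $\lfloor d/c'\rfloor < M$, and in the overflow bucket $Q_M$ otherwise. The overflow bucket then holds every key whose value is (essentially) above $Mc'$, where the prefix condition imposes no order. In addition we store three things:
\begin{itemize}
\item for each key, a pointer to its bucket and its handle inside that bag, so that deletion-by-pointer is possible;
\item a size counter for each bucket;
\item a bitmask of $M+1=O(\log n)$ bits, fitting in $O(1)$ machine words, whose $i$-th bit records whether $Q_i$ is nonempty.
\end{itemize}

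For \textsf{prefix}$(w)$, we first find the nonempty buckets in decreasing order of index by repeatedly extracting the highest set bit of the bitmask. This uses a constant-time most-significant-bit operation, or a lookup table on $\Theta(\log n)$-bit chunks. We stop at the first $k$ buckets whose sizes sum to at least $w$. Every bucket visited is nonempty and contributes at least one key, so $k\le \min(w,M+1)$. This step therefore costs $O(w)$ work and, being sequential over at most $O(\log n)$ buckets, $O(\log n)$ depth. A scan over these $k$ sizes tells us how many keys to take from each bucket: all of the first $k-1$, and a partial amount from the $k$-th. We then peek those amounts from the $k$ bags in parallel, at $O(w)$ work and $O(\log w)$ depth by Lemma~\ref{lem:bag}, and write the results into an output array of size $w$ using the scan offsets. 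For correctness, suppose $v$ is returned and $u$ has strictly larger rounded value. Then $u$ sits in a bucket of index at least that of $v$. If the index is strictly larger, that bucket was fully taken. If it is the same, that index can only be the overflow bucket, so $\text{val}(v)>Mc'$ and the exemption applies. The boundary clause in the definition only adds slack that absorbs the off-by-one at the overflow threshold.

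For a batch \textsf{update} of $w$ keys, we compute each key's old and new bucket and semisort the updates by old bucket and, separately, by new bucket. Lemma~\ref{lem:sort} gives $O(w)$ expected work, $O(w\log w)$ deterministic work, and $O(\log w)$ depth. For each affected bucket in parallel we then do two things: \textsf{batchDelete} the moving keys via their stored handles, and \textsf{batchInsert} the incoming keys, recording their new handles. The total cost is $O(w)$ work and $O(\log w)$ depth by Lemma~\ref{lem:bag}. Bucket sizes are updated from the group counts. The affected bits of the bitmask are recomputed with one reduce over the touched buckets, at $O(w)$ work and $O(\log w)$ depth.

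The main obstacle is keeping the prefix cost at $O(w)$ rather than $O(M)=O(\log n)$ when $w$ is small and many buckets are empty. The bitmask with constant-time highest-bit extraction is what resolves this. A secondary point to verify is that bag handles remain valid across the reorderings that \textsf{batchInsert}/\textsf{batchDelete} may perform. Since the bags support deletion from pointers, I would rely on the implementation in Section~\ref{sec:bag} exposing stable node pointers, and update our stored handles whenever an element is physically moved.
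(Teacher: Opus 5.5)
Your construction matches the paper's. It uses one bag per rounded-value class plus an unordered overflow bag, and a one-word nonemptiness mask scanned with most-significant-bit tricks. With these, \textsf{prefix} visits $O(\min(w,M))$ nonempty bags sequentially and then peeks in parallel. Updates are a semisort by bag, then batch delete and reinsert, then a refresh of the mask bits. Your cost accounting also agrees with the paper's: $O(w)$ work and $O(\log n)$ span for prefix, and $O(w)$ expected work, $O(w\log w)$ deterministic work and $O(\log w)$ span for updates. Your remark about keeping bag handles current when the bag moves elements during deletion is a point the paper leaves implicit, and your remedy costs only $O(w)$.

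There is one concrete slip in the correctness argument, at the overflow boundary. With $Q_M=\{v : \lfloor \text{val}(v)/c'\rfloor \ge M\}$, a key $v$ with $\text{val}(v)=Mc'$ exactly shares the overflow bag with keys $u$ of arbitrarily larger value, e.g.\ $\text{val}(u)=(M+1)c'+1$, so that $r(\text{val}(u),c')>r(\text{val}(v),c')$. If the overflow bag is the partially taken one, you may return $v$ but not $u$. Then $\text{val}(v)>Mc'$ is false. The boundary clause does not rescue you either, because it requires $\text{val}(u)$ to be an exact multiple of $c'$ one rounding step above $r(\text{val}(v),c')$. So your sentence ``that index can only be the overflow bucket, so $\text{val}(v)>Mc'$'' fails for this $v$, and the boundary clause is not what absorbs the off-by-one. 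The fix is to let the overflow bag hold only values strictly above $Mc'$. One way is to give the value $Mc'$ its own bag. Another is the paper's layout: bags $i\in[1,M]$ hold $((i-1)c',ic']$, bag $0$ holds value $0$, and bag $M+1$ holds values $>Mc'$. With your floor-aligned bags below the overflow, keys of distinct rounded value always lie in distinct bags, so you never need the boundary clause at all. The paper needs it only because its left-open ranges place the value $ic'$ in the same bag as values in $((i-1)c',ic')$.
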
 

\begin{proof}
A roughly sorted list can be implemented with $M+2$ bags. Bags $i \in [1,M]$ hold the vertices with values in range $((i-1)c',ic']$. Bag $0$ holds the vertices with value 0, and bag $M+1$ holds all of the vertices with value greater than $Mc'$. We call bag $M+1$ the high bag or special bag. Inserts/deletes to the RSL are first semisorted by the target bag then applied to the appropriate bag. To update the values of vertices, they are removed, the value changed, then reinserted. 

We keep a machine word of length $M+2$, which is feasible because $M \le O(\log n)$. We mark a bit with a one if the corresponding bag is nonempty, and with a zero otherwise. Using standard bit tricks, we can determine the nonempty bag with largest id less than $i$, for any $i \in [0,M+1]$. Because at most $w$ bags change empty status, we can set the corresponding bits in this machine word in $O(w)$ work and $O(\log w)$ \depth{}.

For a prefix call, a running prefix sum will be computed from highest to lowest bag until the requisite number of vertices is found. This prefix sum, importantly, can skip empty bags, by leveraging bit instructions on the machine word.

Then, all vertices will be peeked from the highest bags, and an arbitrary number of vertices (enough to fill the remainder) will be taken from the lowest bag still selected in the prefix sum. The prefix sum will take $O(\min(w,M))=O(\log n)$ \depth{}. The work is bounded by the output times on the bags, which is constant per item examined. Note that no semisort is needed for taking a prefix from an RSL.
\end{proof}

Note that a roughly sorted list saves a log factor compared to a fully sorted list implemented with balanced binary search trees, at the cost of being ``roughly sorted" instead of completely sorted. The imprecision in roughly sorting will eventually lead to a slightly worse constant in our worst-case algorithms' max out-degree bound (in particular, in the bounds in Lemma \ref{lem:skyline-thresholds-ij}, which is used in Lemma \ref{lem:update-call-bounded-reinsertion}, which then plugs into machinery we generate to give the out-degree bound.

\subsection{Edge orientations}

Consider an undirected graph $G=(V,E)$. 
The vertices are stored in an array. The edges are stored in two places. There is a bag containing all edges, and each vertex stores a bag of its incident edges (both its out and in edges). Additionally, if the graph is oriented, each vertex keeps a bag (or a queue, depending on the algorithm) of its out edges. Each edge knows its endpoint vertices and, if oriented, the direction it points (tail points to head). Each vertex knows its degree and out-degree. 

\begin{definition}[Arboricity] 
The arboricity $c$ of a graph $G=(V,E)$ is  
$$\max_{S \subseteq V,|S| \ge 2} \left\lceil \frac{|E(S)|}{|S|-1} \right\rceil.$$ 
By the Nash-Williams theorem (\cite{NashWilliams1961EdgeDisjoint}), the arboricity is also the minimum number of edge-disjoint spanning forests needed to cover a graph.
\end{definition}

Arboricity is the notion of graph density that we use in this work. We let $c$ be an upper bound for the arboricity of $G$ over the sequence of batch updates.


\begin{table}[h]
\centering
\begin{tabular}{c|c}
Symbol & Meaning \\
\hline
$n$ & number of vertices, $|V|$ \\
$m$ & number of edges, $|E|$ \\
$c$ & arboricity \\
$\Delta$ & max out degree \\
$B$ & batch of edge updates \\
$b$ & $|B|$ \\
$\delta^+(x)$ & out edges of vertex $x$ \\
$\delta^+(X)$ & out edges of all vertices in set $X$ \\
$\delta_A^+(x)$ & $\delta^+(x) \cap A$ \\
$d^+(x)$ & $|\delta^+(x)|$ \\
$d^+_A(x)$ & $|\delta_A^+(x)|$  \\

$A \triangle B$ & symmetric difference of sets $A$ and $B$ \\
$[x]$ & $\{1,2,\ldots,x\}$ \\
$r(x,c')$ & rounding function: $\lfloor x/c' \rfloor c'$ \\
$M$ & budget of the roughly sorted list \\
$c'$ & rounding of the roughly sorted list \\
\end{tabular}
\caption{General notation}
\label{tab:notation}
\end{table}

\subsection{Static orientation algorithm \label{sec:staticOrientation}}

For computing a static orientation, we use the algorithm of Barenboim and Elkin~\cite{barenboim2008sublogarithmic}, which was designed for the distributed setting, adapted to our parallel setting; see Algorithm~\ref{alg:staticorient}. Given a bag of directed edges, the algorithm produces a $(2+\epsilon)\cdot c$ orientation in low work and \depth{}, as shown in the following lemma. We present the pseudocode and a proof of runtime bounds in the appendix. 

\begin{restatable}{lem}{staticLem}[Adapted from \cite{barenboim2008sublogarithmic}] Given a bag of undirected edges $E$, and $\epsilon \in (0,2]$, Algorithm~\ref{alg:staticorient} outputs a bag of directed edges such that no vertex has out-degree more than $(2+\epsilon)\cdot c$, where $c$ is the arboricity of the graph induced on $E$. The algorithm runs in $O(\epsilon^{-1} |E|)$ expected work and $O(\epsilon^{-1} \log^2 |E|)$ \depth{}. The algorithm also runs in $O(\epsilon^{-1} |E| \log |E|)$ deterministic work.
\label{lem:static-orient}
\end{restatable}

%% file: 4_amortized_algo.tex
\section{Optimal Amortized Parallel Batch-Dynamic Orientation}\label{sec:amortized}

In this section, we present an amortized parallel batch-dynamic algorithm (Algorithm~\ref{alg:amortized}) for maintaining a low out-degree orientation of a dynamic graph under batch updates. In particular, we prove the following theorems.

\amortizedmain*

\amortizedoptimal*

\subsection{Algorithm Description}
Our approach adapts the classic sequential algorithm of Brodal and Fagerberg~\cite{BF99} to the parallel batch-dynamic setting. For a batch of edge insertions, the algorithm first adds the new edges, assigning them an arbitrary orientation. A ``repair'' step is then performed to correct the out-degrees of vertices violating the degree bound. However, instead of iteratively flipping all outgoing edges of violating vertices as in \cite{BF99}, we observe that performing a single static orientation on the local subgraph induced by the violating vertices and all their out-edges is sufficient and yields equivalent asymptotic bounds.

More specifically, we define two degree bounds: a \emph{cutoff threshold} $\tau$ and a \emph{static threshold} $\tau'$. After inserting the batch of edges, the algorithm performs a parallel static $\tau'$-orientation on the subgraph induced by vertices with out-degree greater than $\tau$ and all their out-edges. This guarantees that each of these violating vertices has its out-degree reduced to at most $\tau'$, while ensuring the out-degree of any other vertex is bounded by $\tau + \tau'$.
Batch deletions are handled straightforwardly: edges in the batch are simply removed from the graph as this cannot increase any out-degrees.

The full batch-dynamic update algorithms are detailed in Algorithm~\ref{alg:amortized}. 
Implementation-wise, in addition to storing vertices in $V$, we maintain two bags, $\vlow$ and $\vhigh$, that partition the vertices based on their out-degree: vertices with out-degree greater than $\tau$ are placed in $\vhigh$, and those with out-degree at most $\tau$ are placed in $\vlow$. Whenever a vertex's out-degree changes, we update its bag membership accordingly.

\begin{algorithm}[ht]
\caption{Optimal Amortized Algorithm}\label{alg:amortized}

\function{\textsc{Update}$(\overline{G},B,\epsilon)$}{
  \If {$B$ is insertion batch} {
    Arbitrarily orient the edges in $B$ and add them to the graph \;
    Update $\vlow$ and $\vhigh$\label{line:amor_high}\;
    \tcp{Statically orient out-edges of $\vhigh$}
    Let $E' = \delta^+(\vhigh)$\label{line:amor_high_edges}\;
    Run \textsc{StaticOrientation}$(E',c,\epsilon)$ and update $E$, $\vlow$, $\vhigh$.\label{line:amor_ins_upd}

  }
  \Else {
    Remove $B$ from graph \;
    Update $\vlow$ and $\vhigh$\label{line:amor_del_upd}\;
  }

}

\end{algorithm}

\subsection{Analysis}

\begin{table}[]
    \centering
    \begin{tabular}{c|c}
    Symbol & Meaning \\ \hline
         $\tau$ & cutoff threshold (high vs. low vertices) \\
         $\tau'$ & static orientation quality \\
         $\tau^*$ & offline orientation quality \\
    \end{tabular}
    \caption{Selected notation used in Section \ref{sec:amortized}.}
    \label{tab:notationAmortized}
\end{table}
Our analysis largely follows that of Brodal and Fagerberg, though with minor modifications for our setting \cite{BF99}. We analyze our algorithm's performance against an optimal offline strategy using a potential function.
Consider a sequence of $t$ arboricity $c$ preserving batch updates $S=(B_1,B_2,\ldots,B_t)$. Let $G_i = (V, E_i)$ be the graph after Algorithm~\ref{alg:amortized} has processed the first $i$ batches, with $G_0$ as the empty graph. 
For comparison, we consider an offline strategy that maintains a $\tau^*$-orientation, producing the graph $G_i^* = (V, E_i^*)$ after processing the first $i$ batches. Let $r^*$ be the total number of reorientations performed by this offline strategy over the entire sequence $S$. 

Similar to Brodal and Fagerberg \cite{BF99}, we define an edge $e=(u\to v) \in E_i$ to be \emph{bad} if its orientation differs from its orientation in $G_i^*$ (i.e., $e \notin E_i^*$). Otherwise, the edge is \emph{good}. Define the following potential function:
\begin{align*}
\Phi_i := |\{e \in E_i : e \notin E_i^*\}|,
\end{align*}
i.e., $\Phi_i$ counts the number of bad edges in $E_i$. Initially, $\Phi_0 = 0$ because the initial graph is empty.

During the repair step for batch $B_i$, let $\Phi_i^\downarrow(v)$ denote the decrease in potential resulting from the reorientation of the out-edges of vertex $v$ by the static orientation step. 
The total decrease in potential for the batch during this step is given by $\Phi_i^\downarrow = \sum_{v\in V}\Phi_i^\downarrow(v)$. 
We now bound the size of the subgraph sent to static orientation by Algorithm~\ref{alg:amortized} in terms of the potentials.

\begin{lemma}[Related to Lemma 1 in \cite{BF99}] \label{lem:amor_subgraph_size}
Assuming $B_i$ is a \bIns, let $E_i'$ denote the set of edges computed in Line~\ref{line:amor_high_edges} of Algorithm~\ref{alg:amortized}. Then,
\begin{align*}
    |E_i'| \le \frac{\tau+1}{\tau+1-2\tau^*-\tau'}\Phi_i^\downarrow.
\end{align*}
\end{lemma}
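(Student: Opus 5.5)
The plan is a per-vertex charging argument in the style of Brodal and Fagerberg, summed over the high vertices. By construction, $E_i' = \delta^+(\vhigh)$, and these edges are partitioned by tail, so
\[
|E_i'| = \sum_{v\in \vhigh} d^+(v).
\]
It therefore suffices to show that every $v \in \vhigh$ satisfies
\[
d^+(v) \le \frac{\tau+1}{\tau+1-2\tau^*-\tau'}\,\Phi_i^\downarrow(v),
\]
and then sum over $v$, using $\Phi_i^\downarrow = \sum_v \Phi_i^\downarrow(v) \ge \sum_{v\in\vhigh}\Phi_i^\downarrow(v)$. This last inequality needs $\Phi_i^\downarrow(v)\ge 0$ for vertices outside $\vhigh$, which holds because only out-edges of high vertices are touched. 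To avoid double counting, I would attribute the change of each edge's status to its tail before the static step.

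Fix $v\in\vhigh$ with $d := d^+(v) \ge \tau+1$, measured just before the static orientation. Of its $d$ out-edges, at most $\tau^*$ are good, since the offline orientation $G_i^*$ is a $\tau^*$-orientation. Hence at least $d-\tau^*$ are bad. After \textsc{StaticOrientation}, Lemma~\ref{lem:static-orient} (with the paper's choice of $\tau'$ as the static threshold) guarantees that $v$ retains at most $\tau'$ out-edges of $E'$ in the output. So at least $d-\tau'$ of the edges in $\delta^+(v)$ are flipped.

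Among the flipped edges, at least $d-\tau'-\tau^*$ were bad, and each becomes good, decreasing $\Phi$ by one. At most $\tau^*$ were good, and each becomes bad, increasing $\Phi$ by one. Edges that $v$ keeps do not change status. Thus
\[
\Phi_i^\downarrow(v) \ge (d-\tau'-\tau^*) - \tau^* = d - 2\tau^* - \tau'.
\]
Since $d\mapsto d/(d-2\tau^*-\tau')$ is decreasing in $d$ for $d > 2\tau^*+\tau'$, the bound $d\ge \tau+1$ gives
\[
d \le \frac{\tau+1}{\tau+1-2\tau^*-\tau'}\,(d-2\tau^*-\tau') \le \frac{\tau+1}{\tau+1-2\tau^*-\tau'}\,\Phi_i^\downarrow(v).
\]
This step implicitly assumes $\tau+1 > 2\tau^*+\tau'$, which the parameter choice must ensure.

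The main obstacle is attribution. An edge $(v\to w)$ with both endpoints in $\vhigh$ lies in $E'$, and its flip is counted once, at its tail $v$. Edges that enter $w$ from outside $E'$ are never flipped. So the accounting is consistent as long as $\Phi_i^\downarrow(v)$ is defined as the net change in status over edges whose tail was $v$ just before the static step. I would state this definition explicitly at the start of the proof and check that these per-vertex quantities sum to the total decrease $\Phi_i^\downarrow$. One more point to verify is that the static routine only reorients edges within $E'$ and adds no new edges, so the out-degree of $v$ in the output equals its out-degree within $E'$ after the step.
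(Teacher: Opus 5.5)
Your proposal is correct and follows the paper's argument. Both derive the per-vertex bound $\Phi_i^\downarrow(v)\ge d^+(v)-2\tau^*-\tau'$ from two facts: $v$ has at most $\tau^*$ good out-edges, and it keeps at most $\tau'$ out-edges after static orientation. Both then sum over $\vhigh$ using $d^+(v)\ge\tau+1$. The only difference is cosmetic: you convert to the ratio $\frac{\tau+1}{\tau+1-2\tau^*-\tau'}$ vertex by vertex, using that $d/(d-2\tau^*-\tau')$ is decreasing in $d$, whereas the paper sums first, bounds $|\vhigh|$, and substitutes back.
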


\begin{proof}[Proof of Lemma \ref{lem:amor_subgraph_size}]
Let $\vhigh_i$ denote the set $\vhigh$ after adding the input batch of edges $B_i$ at Line~\ref{line:amor_high} of Algorithm~\ref{alg:amortized}.
    Consider a vertex $v\in \vhigh_i$. By definition, $d_i^+(v) \ge \tau+1$. Among the $d_i^+(v)$ out-edges, at most $\tau^*$ of them are good before the repair step, implying that at most $\tau^*$ edges become bad after the repair step. Of the at least $d_i^+(v)-\tau^*$ bad edges, at least $d_i^+(v)-\tau^*-\tau'$ edges are flipped by the repair step, implying that at least $d_i^+(v)-\tau^*-\tau'$ edges become good. Therefore, the decrease in potential corresponding to vertex $v$ is,
    \begin{align*}
        \Phi_i^\downarrow(v) &\ge d_i^+(v)-\tau^*-\tau'-\tau^* \\
        &\implies d_i^+(v) \le \Phi_i^\downarrow(v) + 2\tau^* + \tau'.
    \end{align*}
    Summing over all vertices in $V_i^\text{high}$, we get
    \begin{align*}
        (\tau+1)|V_i^\text{high}| &\le \sum_{v\in V_i^\text{high}}d_i^+(v) \le \Phi_i^\downarrow + (2\tau^*+\tau')|V_i^\text{high}|\\
        &\implies |V_i^\text{high}| \le \frac{\Phi_i^\downarrow}{\tau+1-2\tau^*-\tau'}
    \end{align*}
    Finally, we have
    \begin{align*}
        |E_i'| = \sum_{v\in V_i^\text{high}} d_i^+(v) &\le \Phi_i^\downarrow + (2\tau^*+\tau')|V_i^\text{high}|\\
        &\le \frac{\tau+1}{\tau+1-2\tau^*-\tau'}\Phi_i^\downarrow.
    \end{align*}
\end{proof}

Given this, we now bound the total number of flips performed by Algorithm~\ref{alg:amortized} over the sequence $S$. The following lemma is also similar to Lemma~1 in \cite{BF99}.

\begin{lemma}\label{lem:amor_work_bound}
    Suppose we have an arboricity $c$ preserving sequence of $t$ batch updates $\mathcal{B} = (B_1,B_2,\ldots,B_t)$, where the total number of edge insertions and deletions are $t_\text{ins}$ and $t_\text{del}$, respectively. Then, for $\tau^*, \tau,\tau' > c$ with $\tau>2\tau^*+\tau'-1$, Algorithm~\ref{alg:amortized} sends at most 
    \begin{align*}
        (t_\text{ins}+r^*)\frac{\tau+1}{\tau+1-2\tau^*-\tau'}
    \end{align*}
    edges to be statically oriented, assuming there exists an offline strategy that maintains a $\tau^*$-orientation and performs at most $r^*$ reorientations when processing the batches in sequence $S$.
\end{lemma}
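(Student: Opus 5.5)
Lemma~\ref{lem:amor_subgraph_size} gives, for every insertion batch $B_i$,
$|E_i'| \le \frac{\tau+1}{\tau+1-2\tau^*-\tau'}\Phi_i^\downarrow$. The constant is positive because $\tau>2\tau^*+\tau'-1$. Summing over all insertion batches, it therefore suffices to show $\sum_i \Phi_i^\downarrow \le t_\text{ins} + r^*$. I will prove this with a potential argument on $\Phi$: track every event that changes $\Phi$, and use $\Phi_0=0$ together with $\Phi_t\ge 0$.

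I split the processing of each batch $B_i$ into the events that can change the number of bad edges.
\begin{itemize}
\item \emph{Offline flips.} The offline strategy moves from $G_{i-1}^*$ to $G_i^*$. Each offline reorientation of an edge that exists in both graphs changes that edge's status by one, so all such flips raise $\Phi$ by at most $r^*$ in total over the whole sequence.
\item \emph{Insertions.} Inserting an arbitrarily oriented edge adds at most one bad edge, so insertions contribute at most $t_\text{ins}$ in total.
\item \emph{Deletions.} Deleting edges only removes edges, so it never increases $\Phi$. Edges inserted or deleted by the batch are counted by the update itself, not as offline flips.
\item \emph{The repair step.} Static orientation on $E_i'$ decreases $\Phi$ by exactly $\Phi_i^\downarrow$. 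Here $\Phi_i^\downarrow=\sum_v\Phi_i^\downarrow(v)$ is the net change, since the static routine only reorients edges in $E_i'$, and each such edge belongs to exactly one $\delta^+(v)$ with $v\in\vhigh_i$ before repair.
\end{itemize}

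Telescoping over the sequence gives
$0 \le \Phi_t \le \Phi_0 + t_\text{ins} + r^* - \sum_i \Phi_i^\downarrow$, hence $\sum_i\Phi_i^\downarrow \le t_\text{ins}+r^*$. Multiplying by the constant from Lemma~\ref{lem:amor_subgraph_size} gives the claim.

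The main obstacle is the bookkeeping of when $\Phi_i^\downarrow$ is measured versus when the offline strategy acts. Goodness must be defined relative to $G_i^*$ both before and after the repair step; that is, the offline strategy's flips for batch $i$ are applied before the repair, which is what Lemma~\ref{lem:amor_subgraph_size} implicitly uses. I must also check that $\Phi_i^\downarrow(v)$ can be negative for individual vertices without breaking the sum. The sum is still the exact net potential change, so the telescoping is valid. Lemma~\ref{lem:amor_subgraph_size} is applied only to the total.

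The hypothesis $\tau^*,\tau,\tau'>c$ ensures that a valid offline $\tau^*$-orientation exists and that the static routine (Lemma~\ref{lem:static-orient}) achieves $\tau'$ on a subgraph of arboricity at most $c$.
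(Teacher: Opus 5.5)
Your proposal is correct and is essentially the paper's proof. You show $\sum_i \Phi_i^\downarrow \le t_\text{ins}+r^*$ because only insertions and offline reorientations can raise the number of bad edges, while $\Phi_0=0$ and $\Phi_t\ge 0$, and then sum Lemma~\ref{lem:amor_subgraph_size} over the insertion batches. Your extra bookkeeping (offline flips applied before the repair, $\Phi_i^\downarrow$ read as a net change) only makes explicit what the paper leaves implicit.

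One minor aside on your last paragraph: $\tau'>c$ does not by itself make the Barenboim--Elkin routine produce a $\tau'$-orientation, since that routine guarantees only $(2+\epsilon)c$. That property is an implicit assumption on $\tau'$ carried over from Lemma~\ref{lem:amor_subgraph_size}, not a consequence of the hypothesis.
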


\begin{proof}[Proof of Lemma \ref{lem:amor_work_bound}]
    Observe that edge insertions and edge reorientations performed by the offline strategy (for both insertions and deletions) are the only sources of increase in the potential. Edge deletions and the repair step in \bIns do not increase the potential. Thus, the overall decrease in potential is bounded by the total increase in potential across all batches, which is at most $t_\text{ins}+r^*$. Therefore, we have
    \begin{align*}
    \sum_{i=1}^t\Phi_i^\downarrow \le t_\text{ins}+r^*.
    \end{align*}

    The number of edges sent by Algorithm~\ref{alg:amortized} to be statically oriented is at most $\sum_{i=1}^t |E_i'|$.
    Therefore, by Lemma~\ref{lem:amor_subgraph_size}, we have
    \begin{align*}
        \sum_{i=1}^t|E_i'| &\le \sum_{i=1}^t \frac{\tau+1}{\tau+1-2\tau^*-\tau'}\Phi_i^\downarrow\\
        &= \frac{\tau+1}{\tau+1-2\tau^*-\tau'}\sum_{i=1}^t\Phi_i^\downarrow\\
        &\le \frac{\tau+1}{\tau+1-2\tau^*-\tau'}(t_\text{ins}+r^*).
    \end{align*}
\end{proof}

Note that we have an extra $-\tau'$ term in the denominator above compared to \cite{BF99}. This is because we do not flip all edges being repaired; instead, we use static orientation. Due to this term, we end with a slightly worse degree bound.

Brodal and Fagerberg present an offline strategy that performs few flips: 

\begin{lemma}[Lemma~3, \cite{BF99}] \label{lem:offline}
Let $\tau^* > c$. There exists an offline strategy that maintains a $\tau^*$-orientation and performs $\lceil\log_{\tau^*/c}(n)\rceil$ reorientations on deletions and no reorientations on insertions. There also exists an offline algorithm that performs no reorientations on deletes and $\lceil\log_{\tau^*/c}(n)\rceil$ reorientations on insertions.
\end{lemma}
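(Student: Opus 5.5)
The plan is to prove the second statement first: a strategy with no flips on deletions and few flips on insertions. The first statement then follows by reversing time. Throughout, a batch is processed one edge at a time. Every intermediate graph is a subgraph of $G_{i-1}$ or of $G_i$, so its arboricity is at most $c$, and it suffices to bound the flips per single edge update.

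\emph{Insertion strategy (augmenting paths).} Maintain a $\tau^*$-orientation. On deletion, just remove the edge; this never raises an out-degree. On insertion of $\{u,v\}$, orient it as $u\to v$. If now $d^+(u)\le\tau^*$ we are done. Otherwise $d^+(u)=\tau^*+1$, and we find a directed path from $u$ to some vertex $z$ with $d^+(z)<\tau^*$, then flip every edge on it. This lowers $d^+(u)$ by one, raises $d^+(z)$ by one to at most $\tau^*$, and leaves interior vertices unchanged. The number of flips is the path length, so the key step is to show a shortest such path has length at most $\lceil\log_{\tau^*/c} n\rceil$.

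\emph{Expansion argument (main step).} Let $R_d$ be the set of vertices reachable from $u$ by directed paths of length at most $d$. Suppose every vertex of $R_d$ has out-degree at least $\tau^*$ (with $u$ having $\tau^*+1$). Every out-edge of a vertex in $R_d$ has its head in $R_{d+1}$, so $E(R_{d+1})$ contains at least $\tau^*|R_d|$ edges. Since the arboricity is at most $c$, also $|E(R_{d+1})|\le c(|R_{d+1}|-1)<c|R_{d+1}|$. Hence $|R_{d+1}|>(\tau^*/c)|R_d|$. Because $|R_0|=1$ and $|R_d|\le n$, this growth can continue for fewer than $\lceil\log_{\tau^*/c}n\rceil$ levels. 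So within that distance some reachable vertex $z$ has $d^+(z)<\tau^*$, giving the required short path. The care needed is in the off-by-one bookkeeping: $u$'s extra edge only helps, and $\tau^*>c$ makes the base of the logarithm exceed $1$.

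\emph{Reversal for the deletion strategy.} Given the forward sequence ending at $G_t$, fix a $c$-orientation of $G_t$, which exists by Nash--Williams: root each of the $c$ forests and orient every edge toward the parent. Run the insertion strategy above on the reversed update sequence, in which each forward deletion becomes an insertion and vice versa, starting from this orientation. This yields a $\tau^*$-orientation of every $G_i$, with flips only at reversed insertions. The forward offline strategy uses these same orientations in forward order. A forward insertion is a reversed deletion, where no other edge changed orientation, so it costs zero flips. A forward deletion is a reversed insertion and costs at most $\lceil\log_{\tau^*/c}n\rceil$ flips. The one remaining point is that the reversed run must start from an orientation of $G_t$ rather than the empty graph. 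The Nash--Williams orientation supplies this, and the forward run ends in an arbitrary state of $G_0=\emptyset$, which costs nothing.
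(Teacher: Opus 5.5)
Your proof is correct and is essentially the Brodal--Fagerberg argument that the paper imports by citation; the paper gives no proof of its own. As in that source, the layered reachability bound $|R_{d+1}|>(\tau^*/c)|R_d|$ yields the insertion-charged strategy, and running that strategy backwards from a Nash--Williams $c$-orientation of the final graph yields the deletion-charged one.

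Two implicit assumptions should be stated explicitly. First, within a mixed batch you must apply deletions before insertions, in whichever time direction you are running; otherwise an intermediate graph need not be a subgraph of $G_{i-1}$ or $G_i$. Second, the step $d^+(u)=\tau^*+1$ assumes $\tau^*$ is an integer. For real $\tau^*$ the same argument goes through with base $\lfloor\tau^*\rfloor/c$, so it requires $\lfloor\tau^*\rfloor>c$.
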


Putting everything together, we can prove Theorems~\ref{thm:amortized_main} and \ref{thm:amortized_optimal}.

\amortizedmain*

\begin{proof}[Proof of Theorem \ref{thm:amortized_main}]
Let $\epsilon'=\epsilon/6$, and set $\tau^*=(1+\epsilon')c$, $\tau'=(2+\epsilon')c$ and $\tau=2\tau^*+\tau'+2\epsilon'c$. 
Then, the orientation quality will be 
\begin{align*}
    \tau + \tau' = 2\tau^*+2\tau'+2\epsilon'c = (6+\epsilon) c.
\end{align*}

By Lemma \ref{lem:amor_work_bound} and \cite[Lemma~3]{BF99}, the total number of edges that participate in static orientation, across all batches, is no more than 
\begin{align*}
    O\left(\left(t_\text{ins}+t_\text{del} \log_{1+\epsilon'} n\right) \cdot\frac{\tau+1}{\tau+1-2\tau^*-\tau'}\right) &= O\left(\left(t_\text{ins}+t_\text{del}\log_{1+\epsilon'} n\right)\cdot \frac{(4+5\epsilon')c+1}{2\epsilon'c + 1}\right) \\
    &= O\left(\epsilon^{-1}t_\text{ins}+\epsilon^{-2}t_\text{del} \log n\right). 
\end{align*}

Note that even though no flips are performed on batch deletions, in our amortized charging scheme we assign flips and work to delete steps. Since edges are flipped only during static orientation, the amortized number of flips performed for a batch of size $b$ is $O(\epsilon^{-1}b)$ for batch insertions, and $O(\epsilon^{-2}b\log n)$ for batch deletions. 

With randomized primitives, static orientation takes expected work $O(\epsilon^{-1}|E|)$ on a set of edges $E$. Therefore, the total expected work associated with static orientations is $O(\epsilon^{-2}t_\text{ins}+\epsilon^{-3}t_\text{del} \log n)$. We account for this work by charging each batch insertion $O(\epsilon^{-2})$ work per edge in the batch, and each batch deletion $O(\epsilon^{-3}\log n)$ work per edge in the batch. The work for updating the vertex bags $\vlow$ and $\vhigh$ after static orientations (i.e., line~\ref{line:amor_ins_upd}) is also proportional to the size of subgraph sent for static orientation. Therefore, we charge batch insertions $O(\epsilon^{-1}b)$ and batch deletions $O(\epsilon^{-2}b \log n)$ to account for this work. 

We charge the work of inserting or deleting edges, as well as the work of updating the bags before static orientation, directly to the update step in which this work occurs. Inserting or deleting edges from the graph takes expected $O(b)$ work. Updating the bags $\vlow$ and $\vhigh$ (i.e., lines~\ref{line:amor_high} and \ref{line:amor_del_upd}) also takes expected $O(b)$ work for both batch insertions and deletions. 

Thus, the overall amortized expected work bounds are $O(\epsilon^{-2}b)$ and $O(\epsilon^{-3}b\log n)$ for batch insertions and deletions, respectively. The \depth{} for batch insertions is dominated by the \depth{} of the static orientation algorithm, which is worst-case $O(\epsilon^{-1}\log^2n)$ \whp{}, whereas for batch deletions it is worst-case $O(\log n)$ \whp{} to delete the edges and update the bags.

The bounds using a deterministic sort are obtained analogously. In particular, we incur an additional factor of $O(\log n)$ in the work.



\end{proof}

\amortizedoptimal*

\begin{proof}[Proof of Theorem \ref{thm:amortized_optimal}]
The optimality in terms of the number of flips follows from Lemma~\ref{lem:amor_work_bound}, where we show that the amortized number of flips performed by Algorithm~\ref{alg:amortized} is proportional to the 
number of flips performed by any sequence of flips that maintains a bounded out-degree (i.e., $\tau^*$). Thus, it is $O(1)$-competitive to any algorithm maintaining an $O(c)$ orientation. Furthermore, Algorithm~\ref{alg:amortized}, when implemented with randomized primitives, runs in work linear in the number of flips performed, and is therefore optimal, because at least constant work must be performed per flip. 
\end{proof}


%% file: 5_algorithm.tex
\section{Worst-Case Algorithm}

In this section, we present our work-efficient worst-case algorithm and prove its work and depth bounds. We will analyze the max out-degree of the algorithm in later sections.

\subsection{Skylines \label{sec:skyline}}

First, we must introduce skylines, a core primitive we will use. Skylines will allow us to flip multiple edges from high out-degree vertices in parallel.

Informally, a skyline is a set of out-edges, where the number of out-edges taken from a vertex $v$ increases with $v$'s out-degree, see Figure \ref{fig:approx_skyline} for an example. 
In Section \ref{sec:appendix_skyline}, we show how to extract a skyline of size $x$ in $O(x)$ work whp and polylog depth. 
We defer a more complete discussion of skylines in order to more promptly present our orientation algorithms.


\begin{definition}\label{def:skyline-by-threshold}[Skyline by threshold] A subset $S \subseteq E'$ is called a \emph{skyline} above threshold $T$ with rounding $c'$ if $c' | T$ and 
\begin{equation*}
    \forall v \in V : \quad S \cap \delta^+(v) = Q_v[1 : d^+(v)-T],\footnotemark%
\end{equation*}
\end{definition} 

\footnotetext{Recall that by our inclusive 1-indexing,  $Q_v[1:1]$ is the first out-edge in $v$'s out-edge container.}

\begin{definition}\label{def:approx-skyline-by-size}[Skyline by size] Fix $x$ and choose the (unique) integer threshold $T$, where $c' | T$, such that $C_T \ge x$ but $C_{T+c'} < x$. A subset $S \subseteq E'$ is called a skyline of rounding $c'$ and size $x$ if
\[ \forall v \in V : \quad S \cap \delta^+(v) = Q_v[1 : \max(d^+(v)-T-c',0)+\rho_v] \] 
where $\rho_v \in \mathbb{Z}[0,c']$, $\rho_v \le d^+(v)-T$ and $\sum_{v \in V} \rho_v = x - C_{T+c'}$.  We say that the threshold $T$ was induced by the choice of size $x$. 
\end{definition}

\begin{definition}\label{def:skyline-demand}[Demand]
In a skyline $S$, the \emph{demand} on vertex $v$ is the number of out-edges of $v$ present in the skyline, $S \cap \delta^+(v)$. 
\end{definition}

\begin{definition}\label{def:sufficient-height}[Sufficient Height] 
Given a skyline $S$ with threshold $T$ (induced or natural), we say $S$ has \textit{sufficient height} with respect to $\delta$ if $T \ge 4\delta$. 

\end{definition}

\begin{figure}
    \centering
    \begin{minipage}{.35\textwidth}
    \includegraphics[width=\textwidth]{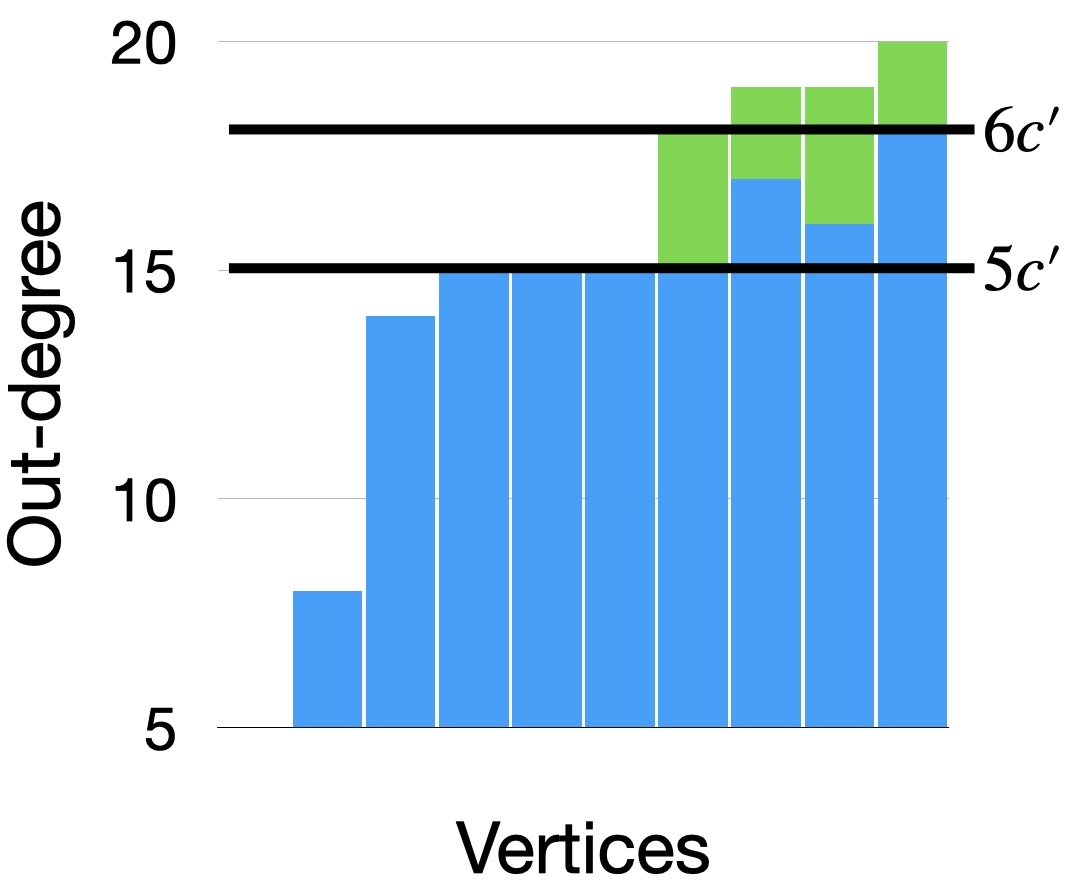}
    \end{minipage}
    \hspace{.1in}
    \begin{minipage}{.5\textwidth}\caption{Example of finding a skyline of size $10$, where $c'=3$. Note that we take all edges above $6c'$ but that we made an arbitrary choice for the edges between $5c'$ and $6c'$.}
    
    \label{fig:approx_skyline}
    \end{minipage}
\end{figure}

When we say that we ``flip'' a skyline of size $x$, we usually refer to the combined actions of finding a skyline of size $x$, removing it from the graph, flipping the edges, and reinserting the skyline back into the graph. Flipping a skyline is the core action our worst-case algorithms take to reduce the maximum out-degree of the graph.
For our worst-case algorithms, it will be important that when we take several skylines of the same size in a row, the overall change in the threshold induced by these skylines is small, as expressed by the following lemma (proof in Section \ref{sec:appendix_skyline}).

\begin{lemma}
  Let $S_1,S_2,\dots,S_k$ be skylines each with rounding $c'$ each of size $X$ where $S_i$ is collected after flipping $S_{i-1}$. Additionally, denote the corresponding thresholds $T_1,T_2,\dots,T_k$. 

  Then for each $1 \le i \leq k$, we have that 
    \[T_i \leq \min_{j < i} T_j + c'.\]

    \label{lem:skyline-thresholds-min}
    
\end{lemma}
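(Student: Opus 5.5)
We define $C_L := \sum_{v} \max(d^+(v)-L,0)$, the number of out-edges sitting strictly above level $L$; this is the quantity used in Definition~\ref{def:approx-skyline-by-size}. The plan is to prove, by induction over the flips, an invariant on $C_L$ at levels $L$ that are multiples of $c'$. The invariant reads the defining inequality of the induced threshold backwards. Since $T_i$ is the unique multiple of $c'$ with $C_{T_i} \ge X > C_{T_i+c'}$, and $C_L$ is non-increasing in $L$, we have $T_i \le L$ whenever $C_{L+c'} < X$ holds at the moment $S_i$ is collected. So it suffices to show the following. For every $j$, and every $k$ with $j < k \le i$, the inequality $C_{T_j+2c'} < X$ holds just before $S_k$ is collected. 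This gives $T_k \le T_j + c'$, and minimizing over $j$ gives the lemma.

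The first step is the base case: just after flipping $S_j$, we have $C_{T_j+2c'} < X$. By Definition~\ref{def:approx-skyline-by-size}, $S_j$ takes $\max(d^+(v)-T_j-c',0)+\rho_v$ edges from each $v$. So after $S_j$ is removed, every vertex has out-degree at most $T_j + c'$. The flip then reinserts the $X$ edges of $S_j$ reversed, each adding one to the out-degree of its head. A head that receives $a$ of these edges ends at out-degree at most $T_j + c' + a$. Its contribution to $C_{T_j+2c'}$ is therefore at most $\max(a - c',0) \le a$, with strict inequality as soon as $a \ge 1$ (using $c'\ge 1$). Summing over heads gives $C_{T_j+2c'} < X$ whenever $X \ge 1$.

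The second step propagates the invariant through the flip of an intermediate skyline $S_k$, $j<k<i$, at the fixed level $L := T_j + c'$. By the induction hypothesis $C_{L+c'}<X$ before $S_k$, so $T_k \le L$. The removal of $S_k$ brings every vertex to out-degree at most $T_k + c' \le L + c'$, which makes $C_{L+c'}=0$. If $T_k \le L - c'$, the base-case argument with $T_k$ in place of $T_j$ applies verbatim: vertices are at most $L$ after removal, and the $X$ reinserted edges create strictly fewer than $X$ units of excess above $L + c'$.

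The case I expect to be the main obstacle is the boundary case $T_k = L$. Here a vertex may sit exactly at $L + c'$ after removal, and then each edge added to it counts fully toward the excess above $L+c'$. For this case I would first try a sharper accounting that compares the demand removed with the excess created. Only vertices with $\rho_v = 0$ and $d^+(v) > L + c'$ end at exactly $L+c'$. The total excess they had above $L+c'$ before the flip was at most $C_{L+c'} < X$. Meanwhile $\sum_v \rho_v = X - C_{T_k+c'} > 0$ edges come from vertices that end strictly below $L+c'$. The aim is a conservation argument: after the flip, the excess above $L+c'$ is at most its value before the flip, and so remains $< X$. If this fails, the fallback is to strengthen the invariant to $C_{L+c'} + (\text{something accounting for vertices at exactly } L+c') < X$, and check that this stronger quantity is preserved by both removal and reinsertion.
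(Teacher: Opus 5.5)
Your reduction (that $C_{L+c'}<X$ at the moment $S_k$ is collected forces $T_k\le L$), your base case, and the subcase $T_k\le L-c'$ are all correct. The boundary case $T_k=L=T_j+c'$, however, is a genuine gap. Sharper bookkeeping of the same invariant cannot close it, because the hypothesis $C_{L+c'}<X$ alone is not preserved there.

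Your conservation claim fails because $L+c'$ lies strictly above the threshold $T_k$ of the skyline being flipped, and weight above such a level can grow. For instance, take $X=1$, and let $a$ and $b$ both have out-degree $L+c'$, with $a\to b$ first in $Q_a$ and no vertex above $L+c'$. Then $C_{L+c'}=0<X$ and $T_k=L$. The skyline may be $\{a\to b\}$ (taking $\rho_a=1$), and after the flip $d^+(b)=L+c'+1$, so $C_{L+c'}=1=X$. This configuration cannot occur in the actual sequence, but only for a reason your hypothesis does not record: here $C_{L}\ge 2c'>X$.

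The missing idea is to carry the invariant at the level $L=T_j+c'$ itself, in the non-strict form $C_{T_j+c'}\le X$. This is the paper's route (Lemmas~\ref{lem:skylineWeightsDown}--\ref{lem:skylineSizeStable}).

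\begin{itemize}
\item \emph{Initialization.} The invariant holds right after $S_j$ is flipped. Removal leaves every vertex at out-degree at most $T_j+c'$, and reinsertion adds at most $X$.
\item \emph{Case $T_k\ge T_j+c'$.} A skyline leaves each vertex with at least $T_k$ out-edges. So for any level $\ell\le T_k$, removal lowers $C_\ell$ by exactly $X$ and reinsertion raises it by at most $X$.
\item \emph{Case $T_k\le T_j$.} Removal leaves every vertex at most $T_k+c'\le T_j+c'$. So $C_{T_j+c'}$ drops to $0$, and reinsertion brings it to at most $X$.
\item \emph{Strictness.} $C_{T_j+c'}\le X$ implies the strict condition your reduction needs. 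If some vertex exceeds $T_j+2c'$, it contributes a full $c'$ to $C_{T_j+c'}-C_{T_j+2c'}$, so $C_{T_j+2c'}\le X-c'<X$. Otherwise $C_{T_j+2c'}=0<X$.
\end{itemize}

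Your fallback of ``accounting for vertices at exactly $L+c'$'' points the right way, and $C_L$ is precisely the quantity that does that accounting.
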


\subsection{Algorithm}

Like Berglin and Brodal, our algorithm has parameters $\delta,\sigma,$ and $\epsilon$ \cite{BB20}.
We set $\delta, \sigma, \epsilon$ to be values satisfying the following conditions: 
\begin{enumerate*}[label=(\arabic*)]
    \item there exists an offline $\delta$-orientation strategy $\kappa$ of $\mathcal{B}$ making at most $\sigma$ flips per update per batch in the worst case, and
    \item $0 < \epsilon < 1$.
\end{enumerate*}
Additionally, define $\eta = 1 + 1/\epsilon + 2\sigma$. Note that in all cases, we have $1 \le \eta \le O(\log n)$. 
Our algorithm will achieve the following bounds. 

\begin{thm}
    Algorithm \ref{alg:wealg} maintains an $O(\delta + (c' + \delta\epsilon + 1)\log n)$-orientation of $G$ under a sequence of batch updates $\mathcal B$ given by an adaptive adversary, where $c'$ is the rounding of the skylines used (and of the roughly sorted list of vertex out-degrees). For any batch of edge updates $B$, the algorithm makes $O(\eta)$ flips per edge update deterministically worst case and uses:
    \begin{itemize}
        \item $O(b\eta)$ expected work and $O(b \eta \log b)$ deterministic work.
        \item $O(\eta \log n \log^2 b)$ \depth{}.\footnote{Note that the \depth{} upper bound is loose when $|B|$ is small; this can be traced to the proof of Lemma \ref{lem:demands}. The focus of this paper is on work-efficiency rather than on optimizing the logarithmic power in the depth.}
       
    \end{itemize}

    \label{thm:wealg}
\end{thm}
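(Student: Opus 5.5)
The proof splits into two independent parts: the out-degree bound, and the flip/work/depth bounds. For the out-degree I would follow the domination framework from the technical overview. First I would show, as in Lemma~\ref{lem:wcHT}, that every call of Algorithm~\ref{alg:wealg} on a batch $B$ is $(H,T)$-bounded with $H=O(\delta+c')$. The three conditions would be checked as follows.
\begin{itemize}
\item (J1) holds because the algorithm only removes out-edges of a vertex down to the current skyline threshold. Lemma~\ref{lem:skyline-thresholds-min} then keeps later thresholds within $c'$ of the minimum, and (P1) converts out-degree to potential with loss $\delta\epsilon$.
\item (J2) holds because either all $2\eta$ skylines of size $|B|/2$ have sufficient height, or one of them does not. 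In the first case Lemma~\ref{lem:skylineFlipRelease} gives a release of at least $2\eta\cdot(|B|/2)\epsilon=|B|\eta\epsilon$. In the second case the maximum out-degree is already below $4\delta+O(c')$, so the bound is trivial at that level.
\item (J3) holds because the final skyline of each recursive level is removed entirely and reinserted via a static $3c$-orientation, which adds at most $3c\le 3\delta$ plus rounding.
\end{itemize}
Combining Lemma~\ref{lem:potential-increase-per-update} ($r$ exceeds $p$ by at most $|B|\eta\epsilon$) with Lemma~\ref{lem:domination-transfer}, the move $u\to_{T-\delta\epsilon}w$ is legal with cap $H+\beta$. Theorem~\ref{thm:counterLimit}, applied with starting weight $Y=O(\delta)$ and cap $H+\beta=O(\delta+c'+\delta\epsilon)$, bounds all counters and hence all potentials by $O(\delta+(c'+\delta\epsilon+1)\log n)$. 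By (P1), and as packaged in Theorem~\ref{thm:generic-outdegree}, the same bound holds for out-degrees.

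For the resource bounds, each level of recursion on a batch of size $b'$ flips $O(\eta)$ skylines of size $b'/2$, plus one static orientation of size $O(b')$. The recursive batch has size $b'/2$, so the total flips form a geometric series $O(\eta b)$, which is deterministic and worst case.

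For work, extracting and flipping a skyline of size $x$ costs $O(x)$ expected and $O(x\log x)$ deterministic (Section~\ref{sec:appendix_skyline}). This relies on the pannier (Lemma~\ref{lem:tsb}), the bag (Lemma~\ref{lem:bag}), RSL updates (Lemma~\ref{lem:rsl}), and the semisort/sort fallback (Lemma~\ref{lem:sort}). Static orientation is linear expected and $O(x\log x)$ deterministic by Lemma~\ref{lem:static-orient}. Summing over the geometric levels gives $O(b\eta)$ expected and $O(b\eta\log b)$ deterministic work.

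For depth, there are $O(\log b)$ recursion levels, each with $O(\eta)$ sequential skyline iterations. Each iteration has $O(\log n\log b)$ depth, coming from the demand computation in Lemma~\ref{lem:demands} and the RSL prefix, with static orientation being dominated. This gives $O(\eta\log n\log^2 b)$.

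The main obstacle is (J1)/(J3) across the recursion. The inserted skyline at each level can raise vertices by up to $3c$ above thresholds set earlier, so the argument must show that the minimum threshold over the call controls both the lower bound on $q$ and the final maximum out-degree, using Lemma~\ref{lem:skyline-thresholds-min} and Lemma~\ref{lem:update-call-bounded-reinsertion}. I would first try to fix $T$ as the minimum threshold seen over the entire call and verify each condition level by level.
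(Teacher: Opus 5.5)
Your flip, work and depth accounting matches the paper's Lemma~\ref{lem:wcWorkDepth}. The out-degree half, however, has a genuine gap in the choice of cap.

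\textbf{The cap is too large.} You take $H=O(\delta+c')$ because your (J3) counts the $3c$-oriented reinsertion of the removed skyline as part of the same call's output. With that cap, Theorem~\ref{thm:counterLimit} gives
$O(Y+(H+\beta)\log n)=O(\delta\log n+(c'+\delta\epsilon+1)\log n)$.
It does not give the claimed $O(\delta+(c'+\delta\epsilon+1)\log n)$, so the step where you assert the latter does not follow. The lost $\delta\log n$ is exactly what matters. For $\delta=O(c)$, $\epsilon=1/\log n$ and $c'=\lceil c/\log n\rceil$, your argument yields only $O(c\log n)$, not $O(c+\log n)$.

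\textbf{The missing idea.} Each \textsc{process} invocation must be its own counter-game move. Its output $\overline G^{(t+1)}$ is the graph \emph{after} the final skyline $S_{t+1}$ is removed, not after it is reinserted. The reinsertion is the insertion batch $B^{(t+1)}$ of the next invocation. Its potential increase belongs to the step $p^{(t+1)}\to r^{(t+1)}$, is at most $|B^{(t+1)}|\eta\epsilon$, and is paid for through $\operatorname{Gap}^+$ by that invocation's release in Lemma~\ref{lem:domination-transfer}. It never enters (R3). (J3) then follows from the skyline definition:
\begin{itemize}
\item After removing $S_{t+1}$ (or, when $|B|<4$, leaving at most $2$ of its edges), every vertex has at most $T(S_{t+1})+c'+2$ out-edges.
\item Lemma~\ref{lem:skyline-thresholds-ij} gives $T(S_{t+1})\le T+c'$, where $T$ is the minimum threshold within this invocation.
\item Hence $H=2c'+2$, with no $\delta$ or $c$ term.
\item The lower half of (J3), $T\le\max_v d^+$, which you do not address, holds because skyline flips never remove edges below the threshold.
\end{itemize}

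\textbf{Why one $T$ for the whole \textsc{Update} fails.} Your stated plan fixes $T$ as the minimum threshold over the whole recursive \textsc{Update}. Lemma~\ref{lem:skyline-thresholds-min} only relates consecutive skylines of the same size. Between recursion levels the skyline size halves and the $3c$-oriented reinsertion can lift vertices by up to $3c$. Nothing then bounds later-level thresholds by the first level's minimum, so $H$ would again pick up $c$-dependent terms.

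\textbf{Unjustified skyline correctness.} You take correct skyline extraction for granted in both the out-degree and the work arguments. \textsc{getDemands} returns a genuine skyline (Lemma~\ref{lem:demands}) only when the RSL's high bag carries at most $x$ excess weight. That condition itself depends on the out-degree bound $D\le Mc'-3c$. The paper breaks this circularity by induction over the update sequence in Lemma~\ref{lem:skylinesCorrectImpliesCorrect}, and your proof needs the same step.
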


Our worst-case algorithm, Algorithm \ref{alg:wealg}, proceeds as follows. First, $B$ is applied (inserted/deleted) into/from the graph (the inserts are statically oriented).\footnote{Note that the static orientation of the inserts is technically necessary because of the vertex roughly sorted list, see Section \ref{sec:appendix_skyline} for details.} Next, we will flip $2\eta$ skylines of size $\frac{|B|}{2}$ (rounding up to the next integer size if $|B| \mod 2 = 1$). If $|B| \ge 4$, we then take an additional skyline $B'$ of size $\frac{|B|}{2}$ and remove it from the graph, and then recurse with $B'$ as the new batch (insert) update.


\begin{algorithm}[ht!]\caption{Worst-Case Algorithm}\label{alg:wealg}

\function{\textsc{reducePotential}$(\overline G, s, t)$} {

\RepTimes{$t$}{
        $S \gets \text{skyline of size } s$  \\
        \If{$S$ has sufficient height}{
            Flip $S$.
        } \Else{
            $3c$-orient $S$ and \Return \textsc{true}     }
    }
    \Return \textsc{false}
}
\function{\textsc{process}$(\overline{G},B,s,t)$} {
   Apply the deletes in $B$ to $\overline{G}$ \\
   $3c$-orient the inserts in $B$ \\
   Apply the inserts in $B$ to $\overline{G}$ \\
   $a \gets \textsc{reducePotential}(\overline{G},s,t)$ \\
   \If {$a$ or $|B| < 4$} {\Return $\overline{G},\emptyset $\\ }
   $S \gets $ skyline of size $s$ \\
   Remove $S$ from $\overline{G}$ \\
   \Return $\overline{G},S$
}

\function{\textsc{Update$(\overline G,B)$}}{

    $\overline{G},B' \gets \textsc{process}(\overline{G},B,\frac{|B|}{2},2\eta)$ \\
    
   \If{$|B'| > 0$} {
   Update$(\overline{G},B')$ \\}

}
\end{algorithm}

\subsection{Work and Depth}

\begin{lemma} \label{lem:wcWorkDepth}
      For any batch of edge updates $B$, Algorithm \ref{alg:wealg} makes $O(\eta)$ flips per edge update deterministically worst case. It uses $O(b\eta)$ expected work and $O(b \eta \log b)$ deterministic work, and $O(\eta \log n \log^2 b)$ \depth{}. 
       \end{lemma}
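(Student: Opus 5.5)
The plan is to analyze one top-level call of \textsc{Update} as a recursion on batch sizes. \textsc{process} is invoked with batch $B$ and $s=\lceil |B|/2\rceil$. The recursive batch $B'$ is a skyline of size $s$, so $|B'| \le \lceil |B|/2\rceil$. Recursion only happens when $|B|\ge 4$, so $\lceil |B|/2\rceil \le 3|B|/4$ there. Hence the batch sizes along the recursion are $b_0=b,\ b_1,\dots$ with $b_{j+1}\le \tfrac34 b_j$. There are therefore $O(\log b)$ levels, and $\sum_j b_j = O(b)$. It thus suffices to bound the cost of a single \textsc{process} call on a batch of size $b_j$ by $O(b_j\eta)$ flips, $O(b_j\eta)$ expected work, $O(b_j\eta\log b_j)$ deterministic work and $O(\eta\log n\log b)$ depth. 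Summing the work over the geometric sequence gives the work bounds; multiplying the depth by the $O(\log b)$ levels gives $O(\eta\log n\log^2 b)$.

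Within one \textsc{process} call, I would account for the steps in order.

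\begin{itemize}
\item \emph{Applying the batch.} Deleting and inserting edges costs $O(b_j)$ work via bags (Lemma~\ref{lem:bag}), panniers (Lemma~\ref{lem:tsb}), and RSL updates (Lemma~\ref{lem:rsl}): expected $O(b_j)$, deterministic $O(b_j\log b_j)$ because of the semisort of Lemma~\ref{lem:sort}. Statically $3c$-orienting the inserts uses Lemma~\ref{lem:static-orient} with a constant $\epsilon$, giving $O(b_j)$ expected work, $O(b_j\log b_j)$ deterministic work, and $O(\log^2 b_j)$ depth. Any flips made here are at most $b_j$.
\item \emph{\textsc{reducePotential}.} It runs at most $t=2\eta$ iterations. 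Each iteration extracts a skyline of size $s=O(b_j)$, tests its height by comparing the induced threshold to $4\delta$, and then either flips it (exactly $s$ flips) or statically orients it (at most $s$ flips) and stops. This gives $O(\eta b_j)$ flips deterministically.
\item \emph{The final extraction.} Taking out the last skyline $B'$ costs one more extraction and $O(b_j)$ removals.
\end{itemize}

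Adding these up, the total number of flips over all levels is $O(\eta\sum_j b_j)=O(\eta b)$, which is $O(\eta)$ per edge update, as claimed.

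The main obstacle is the cost of extracting and flipping a skyline of size $x$. I will need that this takes $O(x)$ expected work, $O(x\log x)$ deterministic work, and $O(\log n\log x)$ depth. The plan is:
\begin{itemize}
\item Query the RSL prefix in $O(w)$ work and $O(\log n)$ depth (Lemma~\ref{lem:rsl}), doubling $w$ until the accumulated out-degree excess above the candidate threshold reaches $x$.
\item Compute per-vertex demands with a prefix sum.
\item Pop the required edges from each vertex's pannier with batch pops.
\item Reverse the edges, then reinsert them into the head vertices' panniers and update the RSL. This needs a semisort by vertex, which is the source of the randomness and of the $\log$ factor in the deterministic bound.
\end{itemize}
The delicate part is showing that the number of RSL vertices examined is $O(x)$ rather than larger. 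Each returned vertex above the threshold contributes at least one demanded edge, except possibly for vertices in the boundary bucket. I would bound the boundary bucket by taking only enough vertices from it to fill the remainder, as in the proof of Lemma~\ref{lem:rsl}. If this argument falls short I would cite the appendix lemma on demands (Lemma~\ref{lem:demands}) for exactly this bound. Its $O(\log n\log x)$ depth is what yields the stated depth per iteration. With this primitive, each \textsc{process} call does $O(\eta)$ skyline operations of size $O(b_j)$, which gives the per-level bounds above.
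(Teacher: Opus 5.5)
Your proposal is correct and follows the paper's proof. Both arguments bound one \textsc{process} call by $O(\eta)$ skyline operations of size $O(|B|)$ via Lemma~\ref{lem:skylineFlipCost} and Lemma~\ref{lem:skylineFlipStaticCost}, sum the work over the geometrically shrinking recursive batches, and multiply the per-call depth $O(\eta\log n\log b)$ by the $O(\log b)$ recursion levels. Your treatment of ceilings ($b_{j+1}\le \tfrac34 b_j$) and your explicit flip count are slightly more careful than the paper's, and the doubling sketch for skyline extraction is unnecessary because the paper simply cites Lemma~\ref{lem:skylineFlipCost}, which rests on Lemma~\ref{lem:demands}, for exactly that cost.
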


Statically orienting the insertions in the batch update has $O(b)$ expected work and $O(\log^2 b)$ depth by Lemma \ref{lem:static-orient} \cite{barenboim2008sublogarithmic}. 
By Lemma~\ref{lem:skylineFlipCost}, flipping a skyline of size $s$ has $O(s)$ expected work and $O(\log n \log b)$ depth. In the $i^{th}$ call to \textsc{Update}, we flip $2\eta$ skylines of size $\frac{|B|}{2^i}$, for $O(\frac{|B|}{2^i} \eta)$ expected work. Therefore, the total expected work is $\sum_i O(\frac{|B|}{2^i} \eta) = O(|B| \eta)$. Since the only randomized component is the semisort, and all semisorts occur on arrays of size at most $b$, by Lemma \ref{lem:sort}, we also have $O(|B| \eta \log b )$ deterministic work.  

Now consider the depth. A $\textsc{ReducePotential}(\overline{G},s,t)$ call will have $O(t \log n \log b)$ depth by Lemma~\ref{lem:skylineFlipCost} and Lemma~\ref{lem:skylineFlipStaticCost}. Since static orientation is $O(\log^2 b)$ depth on a set of size $b$ (Lemma \ref{lem:static-orient}), the depth of \textsc{process} is dominated by the depth of \textsc{ReducePotential}. Since the batch size halves each recursive call and $t=2\eta$, there will be at most $O(\log b)$ recursive calls, for $O(\eta \log n \log^2 b)$ depth overall.

%% file: potentialsMain.tex
\section{Analysis of potentials}\label{sec:potential-setup-main}

In this section, we introduce potentials, as constructed by Berglin and Brodal \cite{BB20}. Potentials will be a core tool used in analyzing the maximum out-degree of our orientation algorithms.

\myparagraph{Motivation for Potentials} Consider viewing the orientation problem as a load balancing task where each edge is a task that must be assigned to one of its two endpoints. From this point of view, flipping an edge is reassigning a task to the other suitable vertex, and flipping out-edges away from vertices with high out-degree corresponds to reducing the workload of overwhelmed vertices. Intuitively, flipping out-edges away from high vertices is ``good" for load balancing the graph. But how do we actually measure this progress, especially since our max out-degree may temporarily increase during the process of flipping edges? 


Berglin and Brodal are able to measure progress by giving vertices potential values. Informally, the potential of a vertex is its out-degree, plus or minus some small additive error. The potentials are defined in a careful way so that flipping an edge from a high out-degree vertex reduces slightly the sum of all vertex potentials (signifying progress).

\begin{table}[h]
\begin{center}
\begin{tabular}{c|c}
Symbol & Meaning \\
\hline
$\sigma$ & worst case number of flips in offline strategy chosen \\
$\epsilon$ & penalty for front vs back queue placement, $0 < \epsilon < 1$ \\
$F_v$ & front edges of $v$ \\
$B_v$ & back edges of $v$ \\
$p(v)$ & potential of vertex $v$ \\
\end{tabular} 
\end{center}
\caption{Potential-related notation} 
\end{table}

Formally, for a given sequence of update batches $\mathcal{B} = \{B_i\}$, let $\delta, \sigma, \epsilon$ be values satisfying the following conditions: 
\begin{enumerate*}[label=(\arabic*)]
    \item there exists an offline $\delta$-orientation strategy $\kappa$ of $\mathcal{G}$ making at most $\sigma$ flips per update per batch in the worst case, and
    \item $0 < \epsilon < 1$.
\end{enumerate*}
We consider an edge to be ``good" if it agrees with the offline orientation and ``bad" otherwise. 
Front refers to the front bag in the vertex's pannier (that holds the out-edges), and back refers to the back bag. 
Then the potential value of an edge is given by Table \ref{tab:BBpotentials}, and the potential value $p(v)$ of a vertex $v$ is the sum of the potentials of its out-edges. 
We say an edge is flipped \textit{from} $v$ and \textit{to} $u$ if it was removed from $F_v$ and inserted into $B_u$. 



\myparagraph{Queues vs Panniers} In Berglin and Brodal's work, a vertex's out-edges were stored in a strict FIFO queue, and the front and back queues were analytical tools (not present in the algorithm). 
This does not function well in the parallel batch-dynamic setting for two reasons:

\begin{enumerate}
    \item Using a naive parallel batch-dynamic queue (for example, a balanced binary search tree) would cost $O(\log n)$ per update, compared to constant for the sequential queue. 
    \item  Since statically orienting a skyline will flip some edges and not others, the edges that must be flipped are not a prefix of the queue. A queue structure that only removes from the front and inserts to the back cannot support static orientation. 
\end{enumerate}

Instead, we use a pannier $Q_v$ to store the out-edges of vertex $v$. Let $F_v$ denote the front bag in the pannier for vertex $v$, and $B_v$ the back bag. In our analysis, the front will correspond to the front bag and the back to the back bag. Thus, in our work, front and back are built into the algorithm, and not just analytical concepts. Note that a pannier is weakly ordered (it does not have a strict FIFO order). This makes the pannier not only more efficient to maintain, but allows us to cleanly handle cases like static orientation where only some edges need to be flipped. In particular, note that we can remove an edge from $F_v$ then reinsert the edge to $F_v$ without changing the potential.

\myparagraph{Potential Lemmas} Because the underlying data structure (panniers) and access patterns (skyline) of our algorithms are different, we must reprove various useful properties about potentials. We state them here and provide intuition as to what function they will serve, but defer the proofs to the appendix, since the proofs largely follow from Berglin and Brodal's analysis \cite{BB20}. 

Lemma \ref{lem:potential-degree-relation} formalizes the notion that a vertex's max out-degree is close (within additive error) to its potential value. 
Let $\beta = 6\delta\epsilon$ be the \textit{resolution} of the system, which is the range in the potential value: we will use $\beta$ extensively in later sections. 
Lemma \ref{lem:potential-increase-per-update} shows that the amount of potential increase that can come from an edge update is bounded by $1 + \epsilon + 2\sigma \epsilon$.
Lemma \ref{lem:skylineFlipRelease} shows that when we flip a skyline (of sufficient height), we do get at least $\epsilon$ potential decrease per edge in the skyline. We thus will need to do $\eta := \frac{1 + \epsilon + 2\sigma \epsilon}{\epsilon}$ flips to counteract the $1 + \epsilon + 2\sigma \epsilon$ potential increase.
Finally, Lemma \ref{lem:skylineStaticRelease} formalizes that since all edges that enter a skyline static orientation were eligible to be flipped, running static orientation does not worsen our potential total.

\begin{restatable}{lem}{lemBeta}[Lemma 1 in \cite{BB20}]\label{lem:potential-degree-relation}
    For any vertex $v$, $d^+(v) + 5\delta\epsilon \geq p(v) \geq d^+(v) - \delta\epsilon$.
\end{restatable}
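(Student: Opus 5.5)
The plan is to bound $p(v)$ term by term, using the edge potentials in Table~\ref{tab:BBpotentials}. Write the out-edges of $v$ as front edges $F_v$ and back edges $B_v$ of its pannier, so that $d^+(v)=|F_v|+|B_v|$. Each edge's potential is $1$ plus a correction that depends only on its category. Hence $p(v)-d^+(v)$ is the sum of these corrections over the out-edges, and both inequalities reduce to bounding that sum using two facts: every correction has absolute value at most $2\epsilon$, and only boundedly many edges carry a nonzero correction.

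For the lower bound, the only negative correction is $-\epsilon$, attached to good back edges. Since $\kappa$ is a $\delta$-orientation, $v$ has at most $\delta$ good out-edges in total, and so at most $\delta$ good back edges. Every other correction ($+2\epsilon$, $+\epsilon$, or $0$) is nonnegative. Therefore $p(v)\ge d^+(v)-\delta\epsilon$.

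For the upper bound, the positive corrections are of two kinds.
\begin{itemize}
\item Good front edges contribute $+2\epsilon$ each. There are at most $\delta$ of them, for a total of at most $2\delta\epsilon$.
\item Bad back edges contribute $+\epsilon$ each, but only for the first $3\delta$ of them, for a total of at most $3\delta\epsilon$.
\end{itemize}
Bad front edges and the remaining bad back edges contribute $0$. Summing gives $p(v)\le d^+(v)+5\delta\epsilon$.

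The only delicate point is the phrase ``first $3\delta$'' bad back edges. In the queue setting of \cite{BB20} it refers to position in the queue. In a pannier the back bag is unordered, so I would fix the convention that any $3\delta$ bad edges of $B_v$ receive the $1+\epsilon$ value, for example those designated first by a canonical order. Whatever convention is chosen, at most $3\delta$ edges carry the $+\epsilon$ surcharge, and that count is all the bound uses. So I expect no real obstacle: the argument is a direct counting argument that depends only on the good-edge bound $\delta$ and the cap of $3\delta$ on surcharged bad back edges.
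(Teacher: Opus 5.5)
Your proof is correct and is the standard per-edge counting argument behind Lemma 1 of \cite{BB20}; the paper does not reproduce a proof of its own but simply cites that lemma, and despite the promise in Section~\ref{sec:potential-setup-main} the appendix omits it. Your observation that, with an unordered back bag, only the cap of $3\delta$ surcharged bad back edges matters (not which edges receive the surcharge) is the right way to carry the argument over to panniers.
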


\begin{restatable}{lem}{lemPotentialIncrease}[From analysis in \cite{BB20}]\label{lem:potential-increase-per-update} For a single edge update, the sum of potential inserted is at most $1+\epsilon+2\sigma\epsilon$. \label{lem:edge-potential} \end{restatable}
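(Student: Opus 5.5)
We need to prove Lemma \ref{lem:potential-increase-per-update}: a single edge update, together with the offline strategy $\kappa$'s flips on that update, raises the total potential by at most $1+\epsilon+2\sigma\epsilon$. The approach is a direct accounting. We split the effect of one update into three separate events and bound the potential change of each using Table \ref{tab:BBpotentials}: (i) the algorithm inserting or deleting the edge itself, (ii) the change in good/bad status of that edge caused by the update, and (iii) the at most $\sigma$ reorientations performed by $\kappa$. The algorithm's own flips are not counted here; they are handled separately by Lemma \ref{lem:skylineFlipRelease}.

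\textbf{Step 1: the updated edge.} On an insertion, the new edge is placed into some pannier bag of its tail. By the table, a single edge carries potential at most $\max(1+2\epsilon,1,1-\epsilon,1+\epsilon)=1+2\epsilon$. To reach the sharper $1+\epsilon$ I would use the convention that inserted edges go into the back bag $B_v$. There the potential is at most $1+\epsilon$ if the edge is bad and $1-\epsilon$ if good.

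There is one subtlety: adding a bad edge to $B_v$ can shift which bad edges count among the "first $3\delta$" of the back. This could change another edge's value from $1+\epsilon$ to $1$, which only decreases potential. So I would check that the $3\delta$ bonus is assigned to a fixed set of at most $3\delta$ bad back edges, and then the total increase is at most $1+\epsilon$. On a deletion, the potential only drops, since all edge potentials are nonnegative. Any offline flips triggered by the deletion are covered by Step 2.

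\textbf{Step 2: offline flips.} Each reorientation by $\kappa$ toggles one edge between good and bad without moving it in any pannier, so we compare its value before and after. In the front bag the change is at most $(1+2\epsilon)-1=2\epsilon$. In the back bag, a good edge ($1-\epsilon$) becoming bad can gain at most $(1+\epsilon)-(1-\epsilon)=2\epsilon$, including the possibility that it enters the first-$3\delta$ bad set. A bad edge becoming good loses potential. Hence each offline flip adds at most $2\epsilon$, and $\sigma$ flips add at most $2\sigma\epsilon$. Summing with Step 1 gives $1+\epsilon+2\sigma\epsilon$.

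\textbf{Main obstacle.} The delicate point is the "first $3\delta$ bad back edges get $1+\epsilon$" rule. An offline flip that makes a back edge bad could push a bad edge into or out of the privileged set. The cleanest way to handle this is to define the privileged set as a fixed, order-based prefix, as in \cite{BB20}. Then one status change alters at most one edge's membership, and any displacement lowers potential, keeping the per-flip increase at $2\epsilon$.

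I would mirror Berglin and Brodal's argument here. The pannier, unlike their queue, has no internal order within a bag, so the "first $3\delta$" must be defined as an arbitrary but fixed designation maintained under insertions and deletions. I expect verifying that this designation behaves monotonically to be the only nontrivial part.
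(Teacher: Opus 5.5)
Your proposal is correct and follows the paper's proof: the inserted edge goes into a back bag and contributes at most $1+\epsilon$, and each of $\kappa$'s at most $\sigma$ reclassifications adds at most $2\epsilon$. Your attention to the first-$3\delta$ bad back edges goes beyond the paper, which leaves this point implicit. One small correction: the same promotion effect also occurs on deletions, since removing a privileged bad edge can promote another by $\epsilon$. So deletions are harmless because the net change stays negative, not merely because edge potentials are nonnegative.
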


\begin{restatable}{lem}{lemSkylinePotential}[Skyline version of Lemma 4 in \cite{BB20}]
    Let $S$ be a skyline of size $x$, rounding $c'$, and induced threshold $T$ of sufficient height. Then flipping $S$ releases at least $x\epsilon$ potential.  
    \label{lem:skylineFlipRelease}
\end{restatable}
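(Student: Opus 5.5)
The plan is a per-edge accounting, mirroring Lemma 4 of \cite{BB20}. Flipping an edge $e$ from $v$ to $u$ removes it from $F_v$ and inserts it into $B_u$. The change in total potential is the new value of $e$ (as a back edge of $u$) minus its old value (as a front edge of $v$), plus any change induced on other back edges of $u$ through the ``first $3\delta$ bad back edges'' rule. The orientation of $e$ reverses, so good and bad swap. Table~\ref{tab:BBpotentials} then gives two cases:
\begin{itemize}
\item If $e$ was bad in $F_v$ (value $1$), it becomes good in $B_u$ (value $1-\epsilon$), a net drop of $\epsilon$.
\item If $e$ was good in $F_v$ (value $1+2\epsilon$), it becomes bad in $B_u$ (value at most $1+\epsilon$), a drop of at least $\epsilon$.
\end{itemize}
Appending a bad edge to $B_u$ does not raise the value of any existing back edge, since the first $3\delta$ bad back edges are already determined. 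So each flipped edge releases at least $\epsilon$, and summing over the $x$ edges of $S$ gives $x\epsilon$.

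The first thing to check is that every edge of $S$ is taken from the front bag $F_v$, or else that the front/back bookkeeping still yields a drop of at least $\epsilon$. Edges in $S\cap\delta^+(v)$ are $Q_v[1:\cdot]$, the pannier prefix, and the front is exhausted before the back. When the demand on $v$ exceeds $|F_v|$, I would follow the convention of \cite{BB20}: when the front empties, the back is relabeled as the new front. This relabeling changes potentials, since good back edges go from $1-\epsilon$ to $1+2\epsilon$ and the first $3\delta$ bad back edges go from $1+\epsilon$ to $1$.

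The main obstacle is bounding the increase caused by this relabeling. Here sufficient height is used: $T\ge 4\delta$, and every vertex contributing to $S$ retains out-degree at least $T\ge 4\delta$ after the skyline is removed. Because the skyline never draws $v$ below $T$, and $v$ has at most $\delta$ good edges in total, the front contains at least $4\delta-\delta=3\delta$ bad edges when any relabeling is needed. On relabeling, the at most $\delta$ good edges gain at most $3\epsilon\delta$ in total, while the $3\delta$ bad edges that lose their $+\epsilon$ bonus give back $3\delta\epsilon$. The relabeling is therefore non-increasing in potential, which is exactly why the $3\delta$ cutoff and the $4\delta$ height threshold appear.

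There is one more subtlety. Flipping several edges of $S$ simultaneously into the same $B_u$ may disturb which bad back edges of $u$ count among the first $3\delta$. I would argue that the final potential does not depend on the order of insertion: appended edges only fill the remaining bonus slots, never displace existing ones, and each slot holds a bad edge with value at most $1+\epsilon$. So the per-edge bound holds even for a simultaneous flip.
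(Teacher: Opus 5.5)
Your proposal is correct and follows essentially the same route as the paper's proof. Each skyline edge leaves a front bag and releases at least $\epsilon$ on entering the other endpoint's back bag (Lemma~\ref{lem:flip-front}), and any back-to-front relabeling is non-increasing: the skyline leaves at least $T \ge 4\delta$ out-edges at the vertex, so the relabeled back bag holds at least $3\delta$ bad edges, whose lost bonuses offset the at most $3\epsilon\delta$ gained by good edges (Lemma~\ref{lem:queue-swap}); your remark about simultaneous insertions into the same $B_u$ covers a detail the paper leaves implicit by treating a flip as a full removal followed by reinsertion.
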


\begin{restatable}{lem}{lemStaticPotential}
    Let $S$ be a skyline of size $x$, rounding $c'$, and induced threshold $T$ of sufficient height. Then statically orienting $S$ does not increase net potential.
        \label{lem:skylineStaticRelease}

\end{restatable}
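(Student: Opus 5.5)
We need to prove Lemma \ref{lem:skylineStaticRelease}: statically orienting a skyline $S$ of sufficient height does not increase net potential. The plan is to follow the edge-by-edge accounting used for Lemma \ref{lem:skylineFlipRelease}, but to exploit the fact that each edge of $S$ either stays with its current tail or moves to its other endpoint.

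\textbf{Edges that are reoriented by the static $3c$-orientation.} Such an edge leaves $v$, where it sat (by the way skylines are drawn from the pannier) in $F_v$, or in $B_v$ only once $F_v$ is exhausted. It then enters $B_u$ of the head $u$, exactly as in an ordinary flip. Since $S$ has sufficient height, every vertex contributing edges to $S$ has out-degree at least $T \ge 4\delta$. So the argument behind Lemma \ref{lem:skylineFlipRelease} (the skyline version of Lemma 4 in \cite{BB20}) applies to each reoriented edge: it releases at least $\epsilon \ge 0$ potential. That argument compares good/bad and front/back values in Table \ref{tab:BBpotentials}, using the at most $\delta$ good out-edges at $v$ and that the first $3\delta$ bad back edges carry $1+\epsilon$.

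\textbf{Edges that keep their orientation.} These are removed from $Q_v$ and reinserted into the same bag they came from, $F_v$ back into $F_v$ and likewise for $B_v$. Their good/bad status is unchanged, since $\kappa$ does not move during the algorithm's call. As the paper notes, removing an edge from $F_v$ and reinserting it into $F_v$ leaves potential unchanged. So the implementation must reinsert non-flipped edges into their original bag rather than into $B_v$, and I will state this explicitly.

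\textbf{Main obstacle.} The difficulty is the position-dependent back-bad values: ``$1+\epsilon$ for the first $3\delta$, $1$ for the rest''. When several edges are flipped into the same $B_u$, or removed from the same $B_v$, the membership in the first $3\delta$ may shift. My plan is to handle this exactly as in the proof of Lemma \ref{lem:skylineFlipRelease}, by viewing the static orientation's reorientations as a sequence of individual flips. Each such flip removes an edge from a vertex whose out-degree is still at least $4\delta$ at that moment. This holds because the skyline leaves at least $T \ge 4\delta$ edges at each contributing vertex, and the static orientation flips only edges of $S$.

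By property (P2), each such flip lowers total potential by at least $\epsilon$, and the non-flips contribute zero. Summing over all edges of $S$ gives a net change of at most $0$.
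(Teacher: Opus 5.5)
There is a genuine gap in how you handle skyline edges that come out of a back bag. When the demand $d^+_S(v)$ exceeds $|F_v|$, the pannier pop exhausts $F_v$ and then reassigns $B_v$ to be the front before taking the remaining edges. The paper pays for this step with Lemma~\ref{lem:queue-swap}, using $|B_v| = d^+(v)-|F_v| > d^+(v)-d^+_S(v) \ge T \ge 4\delta$. Only after that is every edge of $S$ in a front bag. Then Lemma~\ref{lem:flip-front} gives a release of at least $\epsilon$ for each reoriented edge, and returning an unreoriented edge to the front it came from costs nothing.

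Your proposal never performs or charges this reassignment. Reoriented edges go straight from $B_v$ to $B_u$, and unreoriented ones are explicitly returned to $B_v$. In that model your per-edge claim is false. By Table~\ref{tab:BBpotentials}, a good edge in $B_v$ is worth $1-\epsilon$; after reorientation it is a bad edge in $B_u$ worth $1$ or $1+\epsilon$, a net increase of $\epsilon$ or $2\epsilon$. If the static orientation reorients only such an edge and keeps the rest, total potential goes up, so the lemma fails in the model you describe.

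Appealing to (P2) does not close this. (P2) is the overview's summary of Berglin and Brodal's queue analysis, where the flipped edge is always taken from the front and the back-to-front copy is charged when the front runs out. It is not a statement about an arbitrary out-edge of a vertex of out-degree at least $4\delta$, and the good-back-edge example violates it. So the condition you check in your sequential view (the tail still has out-degree at least $4\delta$) is not the one that matters. What matters is that the back bag has at least $4\delta$ edges at the moment it is moved to the front.

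Relatedly, the position dependence of the $(1+\epsilon)$ values is not the real obstacle. For a front-to-back flip, the new value is at most $1+\epsilon$ wherever the edge lands, so Lemma~\ref{lem:flip-front} needs no ordering. The $3\delta$ cutoff only enters the reassignment, where the entire back bag moves at once.

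The repair is to insert that reassignment step, justified by Lemma~\ref{lem:queue-swap}, and to send unreoriented edges back to the (new) front bag rather than to $B_v$. That is exactly the paper's proof.
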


%% file: appendix_batch_counter.tex
\section{Batch-Counter Game \label{sec:counter-game}}

In this section, we define the batch counter game and show its bounds. The batch counter game is the second core tool we will need for the max out-degree analysis.

Berglin and Brodal use a counter game to bound the max out-degree of their orientation algorithm \cite{BB20}. 
However, their counter game was designed with the sequential dynamic setting in mind. Based on their game, we design a batch-counter game which more naturally applies to the parallel batch-dynamic setting while maintaining the same bound. Note that our counter game could be directly applied to their sequential orientation algorithm to give the same bounds; our counter game dominates their counter game. 

Consider a function $u$ that maps vertices (of which there are $n$) to positive, real counter weights. The weights of these counters will eventually correspond to the potentials of vertices in the graph. This game will consist of a single, adversarial player. On each turn, the player may adjust counter weights, with some restrictions. In this section, we show that in the game, the max counter weight is bounded. In Section \ref{sec:dom}, we will show that the game is powerful enough to simulate the changes in potential in our algorithm.

\subsection{Game}

Let $u : V \rightarrow \mathbb{R}$ give the counter weights (which are nonnegative). 
Fix $H > 0$. 
We define a legal $T$-move between counter weights $u$ and $u'$, denoted $u \rightarrow_T u'$, where $T > 0$, $u$ gives the old counter states, and $u'$ gives the new counter states, if the following holds:

\begin{enumerate}
\item[(R1)] $\forall v \in V, u'(v) - u(v) \ge \textrm{min}(0,T-u(v))$ (If you are below the threshold, you cannot lose weight, and if you are above the threshold, you cannot be reduced below $T$)
\item[(R2)] $\sum_{v \in V} u(v) \ge \sum_{v \in V} u'(v)$ (total counter weight does not increase).
\item[(R3)] $\forall i \in V, u'(i) \le T + H$ (All counters cannot go above threshold + cap).\footnote{Note that a single counter can increase its weight by more than $H$ during a $T$-move, so long as its new value is less than $T+H$.}
\end{enumerate}
Informally, a legal action on $u$ is taking all of the counter weight above $T$ and rearranging (or removing) it however we want, except making sure the new max is no more than $T + H$.

We say that $u_1,u_2,\ldots,u_f$ is a valid game sequence if for all $i \in [f-1]$, there exists $T_i$ such that $u_i \rightarrow_T u_{i+1}$. When we refer to time $t$, we mean counter state $u_t$, and that threshold $T_t$ is used to transition $u_t$ to $u_{t+1}$. 


\subsection{Limited counter growth}
We claim the following (that the counter game does not build up too much weight on any counter).

\counterLimit*

First, note that choosing a threshold below $Y-H$ would reduce all counters to weight below the starting weight, and so would be a strictly worse position than the starting position. We therefore assume that all thresholds are at least $Y-H$.
Next, we show that we can assume that the chosen thresholds are strictly increasing.

\begin{lemma} Consider a game sequence $(u_i)_{i=1}^f$ with associated thresholds $(T_i)_{i=1}^f$. There exists a game sequence $(u'(i))_{i=1}^x$ with associated thresholds $(T_i')_{i=1}^x$ where $T_t' < T_{t+1}'$ for all $t \in [x-1]$ ($u_f$ can be reached by
a sequence of $T_t$'s, where each $T_t$ is strictly increasing). \label{lem:increasingThresholds} \end{lemma}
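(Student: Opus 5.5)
The plan is a merging argument: whenever two consecutive thresholds fail to increase, I replace the two moves by a single legal move, and I repeat until the thresholds are strictly increasing. The central claim is a \emph{merge lemma}. If $u_t \rightarrow_{T_t} u_{t+1}$ and $u_{t+1} \rightarrow_{T_{t+1}} u_{t+2}$ are legal moves with $T_{t+1} \le T_t$, then $u_t \rightarrow_{T_{t+1}} u_{t+2}$ is also a legal move. That is, the pair collapses into one move whose threshold is the later (smaller or equal) one.

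To prove the merge lemma I will check the three rules for the pair $(u_t, u_{t+2})$ under threshold $T_{t+1}$.
\begin{itemize}
\item (R2) follows by chaining: $\sum_v u_t(v) \ge \sum_v u_{t+1}(v) \ge \sum_v u_{t+2}(v)$.
\item (R3) is immediate, since the second move already guarantees $u_{t+2}(v) \le T_{t+1} + H$ for every $v$.
\item (R1) is the only rule needing care. I use the equivalent form $u'(v) \ge \min(u(v), T)$. From the first move and $T_t \ge T_{t+1}$,
\[
u_{t+1}(v) \ge \min(u_t(v), T_t) \ge \min(u_t(v), T_{t+1}).
\]
From the second move,
\[
u_{t+2}(v) \ge \min(u_{t+1}(v), T_{t+1}) \ge \min\bigl(\min(u_t(v), T_{t+1}),\, T_{t+1}\bigr) = \min(u_t(v), T_{t+1}).
\]
This is exactly (R1) for the merged move.
\end{itemize}

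With the merge lemma in hand, the procedure is as follows. Starting from $(u_i)_{i=1}^f$, if some index $t$ has $T_t \ge T_{t+1}$, I delete the intermediate state $u_{t+1}$ and give the merged step the threshold $T_{t+1}$. The result is again a valid game sequence with the same initial state $u_1$ and the same final state $u_f$, and it has one fewer step. Since the length strictly decreases, the procedure terminates after at most $f-1$ merges. At termination no adjacent pair has $T_t \ge T_{t+1}$, so the surviving thresholds $T'_1 < T'_2 < \cdots$ are strictly increasing and still reach $u_f$, as the lemma requires.

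I expect no serious obstacle. The one point to get right is the direction of the merge in the (R1) check: the merged move must use the \emph{smaller} threshold $T_{t+1}$. Using $T_t$ would break (R3), because $u_{t+2}$ is only guaranteed to lie below $T_{t+1} + H$, and (R3) is fine under $T_{t+1}$ precisely because $T_{t+1}$ is the threshold of the second move.
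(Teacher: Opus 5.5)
Your proof is correct and matches the paper's argument: you merge a non-increasing adjacent pair into one move that carries the later threshold. Your chain $u_{t+2}(v) \ge \min(u_{t+1}(v),T_{t+1}) \ge \min(u_t(v),T_{t+1})$ checks (R1) more cleanly than the paper's case analysis, and your length-decreasing termination argument makes explicit what the paper leaves implicit.

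One slip in your closing remark: with the larger threshold $T_t$, (R3) would still hold, since $u_{t+2}(v) \le T_{t+1}+H \le T_t+H$. What actually fails is (R1): the second move may lower a counter with $u_t(v) > T_{t+1}$ down to $T_{t+1}$, which lies below $\min(u_t(v),T_t)$ when $T_t > T_{t+1}$.
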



\begin{proof}
Consider the first time that the threshold dips: $T_1, T_2, T_3$, where $T_1 < T_2$ but $T_2 \ge T_3$.\footnote{Note that by $T_1$ we mean $T_{s+1}$ for some time $s$. We shorten the notation for clarity.}
Associated are counter states $u_1,u_2,u_3,u_4$. We will claim that $u_2 \rightarrow_{T_3} u_4$.

Formally: we have 3 requirements to show.
\begin{enumerate}
\item[(R1)] We must show that for all $i \in V$, that $u_4(i) - u_2(i) \ge \textrm{min}(0,T_3-u_2(i))$.
\begin{enumerate}[leftmargin=*]
\item[Case 1:] $u_2(i) \le T_3$ (under the bar). We must show that $u_4(i) \ge u_2(i)$. Observe that
$u_2(i) \le T_3 < T_2$, so $u_3(i) \ge u_2(i)$ by R1 on $u_2 \rightarrow_{T_2} u_3$.
\begin{enumerate}[leftmargin=*]
\item[Case 1a:] $u_3(i) \le T_3$. Then $u_3(i) \le u_4(i)$, so $u_2(i) \le u_4(i)$.
\item[Case 1b:] $u_3(i) > T_3$. Then $u_4(i) \ge T_3 \ge u_2(i)$.
\end{enumerate}
\item[Case 2:] $u_2(i) > T_3$ (over the bar). We must show that $u_4(i) \ge T_3$.
\begin{enumerate}[leftmargin=*]
\item[Case 2a:] $u_2(i) \le T_2$. Then $u_3(i) \ge u_2(i)$. Since $u_3(i) \ge u_2(i) > T_3$, we have
that $u_4(i) \ge T_3$ by R1 on $u_3 \rightarrow_{T_3} u_4$.
\item[Case 2b:] $u_2(i) > T_2$. Then $u_3(i) \ge T_2 \ge T_3$. If $u_3(i)=T_3$, then $u_4(i) \ge u_3(i) = T_3$. 
Otherwise, $u_3(i) > T_3$, so we have that $u_4(i) \ge T_3$.
\end{enumerate}
\end{enumerate}
\item[(R2)] Since $\sum u_2(i) \ge \sum u_3(i)$ and $\sum u_3(i) \ge \sum u_4(i)$, we have
$\sum u_2(i) \ge \sum u_4(i)$, as desired.
Therefore $u_2 \rightarrow_{T_3} u_4$.
\item[(R3)] Unaffected by skipping an intermediate state (only involves $T_3$ and $u_4$), so is fine.
\end{enumerate}
\end{proof}

\begin{definition} Let $M_t := \max(u_t)$ be the max counter value at time $t$.\footnote{Before skyline flip i with threshold $T_i$.} \end{definition}

\begin{lemma} Suppose that $(T_t)$'s are (strictly) increasing. Then $M_i - T_i \le H$.\footnote{Remark: it is possible that $M_{i+1} < M_i$, this is fine, though usually we do expect the max to
go up over time.}$^,$\footnote{Remark: From R2, we have that $M_{i+1} \le T_i + H$. This lemma is saying something different,
that $M_i \le T_i + H$ (that the threshold you pick must be rather close to the current max). This is
forced by the thresholds being strictly increasing.} \label{lem:maxThresholdCloseness}
\end{lemma}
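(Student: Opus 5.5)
The plan is a direct induction on the time index $i$, using rule (R3) applied to the \emph{previous} move together with strict monotonicity of the thresholds. No potential or averaging argument is needed: strict monotonicity means the current threshold $T_i$ can only sit higher than the threshold $T_{i-1}$ that capped the current state.

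\textbf{Base case $i=1$.} Every counter has weight $Y$, so $M_1 = Y$. Earlier in this section we argued that a threshold below $Y-H$ leaves the game in a position strictly worse than the start, so we may assume every threshold satisfies $T_t \ge Y-H$. In particular $T_1 \ge Y - H$, which gives $M_1 = Y \le T_1 + H$.

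\textbf{Inductive step, $i \ge 2$.} The state $u_i$ was produced by the legal move $u_{i-1} \rightarrow_{T_{i-1}} u_i$. Rule (R3) for this move bounds every counter: $u_i(v) \le T_{i-1} + H$ for all $v$, so $M_i \le T_{i-1}+H$. Since the thresholds are strictly increasing, $T_{i-1} < T_i$, and therefore $M_i < T_i + H$. This yields $M_i - T_i \le H$.

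\textbf{Main point of care.} The only step that needs attention is the base case, because the bound there is not supplied by any move. It is also where the hypothesis on thresholds is used. I would state explicitly that the reduction $T_t \ge Y-H$ is assumed, alongside Lemma~\ref{lem:increasingThresholds}. If one prefers not to rely on that assumption, an alternative is to prepend a dummy threshold $T_0 = Y - H$ with the identity move $u_1 \rightarrow_{T_0} u_1$. This move is legal: (R1) and (R2) hold trivially for the identity, and (R3) holds since every counter equals $Y = T_0 + H$. With this dummy move in place, the base case becomes an instance of the inductive step, provided $T_0 < T_1$; when the thresholds were already at least $Y-H$ and strictly increasing, this is automatic except in the degenerate case $T_1 = Y - H$, which the direct base-case argument above covers.
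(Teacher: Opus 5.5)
Your proof is correct and is essentially the paper's argument: the base case uses $M_1 = Y$ with the standing assumption $T_1 \ge Y-H$, and for $i \ge 2$ you apply (R3) to the move $u_{i-1}\rightarrow_{T_{i-1}} u_i$ and then use $T_{i-1} < T_i$. One minor caveat: the dummy-move variant does not remove the need for the assumption $T_t \ge Y-H$, since it requires $T_0 < T_1$, i.e.\ $T_1 > Y-H$. Indeed no argument could remove it, because the lemma fails at $i=1$ whenever $T_1 < Y-H$; also, your ``induction'' never uses its hypothesis, so it is really just a case split.
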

\begin{proof}
Since all thresholds are at least $Y-H$, for time $i=1$, $M_1 = Y$, and $T_1 \ge Y-H$, so $M_1 - T_1 \le Y - (Y-H) = H$.
Consider time $i > 1$. By R3, we have $M_{i}-T_{i-1} \le H$. Since $T_{i-1} < T_{i}$ by assumption,
this means that $M_{i} - T_{i} < M_{i} - T_{i-1} \le H$. \end{proof}

\begin{definition} Let $w(t,l)$ denote the weight above $l$ at time $t$, and let $w_t$ denote the weight above
$T_t$ at time t. 
That is, $w(t,l) = \sum_{i \in [n]} max(0,u_t(i)-l)$. Abusing notation, also write $w(u,l)$
for the weight above $l$ in the counter state $u$. \end{definition}
The weight is the counter mass that we have to play with. 
We will note that the weight goes down over time, and so the game has increasingly limited ability to do anything bad. 

Note that our Lemmas \ref{lem:weightsDown} and \ref{weightsDownLong} loosely correspond to Berglin and Brodal's Lemma 11, and our Lemma \ref{lem:weightsHalve} to their Lemma 12 \cite{BB20}. However, whereas Berglin and Brodal have fixed levels to measure the accumulation of weight, we must use the thresholds (given by the adversary) as milestones. Because the thresholds cannot be too far apart (by Lemma \ref{lem:closeThresholds}), using the thresholds in this way is not harmful to our bound. 

\begin{lemma} Let $u \rightarrow_T u'$. Let $L \le T$. Observe that $w(u,L) \ge w(u',L)$. \label{lem:weightsDown}\end{lemma}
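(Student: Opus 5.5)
The plan is to prove the inequality counter by counter: for each $v$ we will show $\max(0,u'(v)-L) \le \max(0,u(v)-L)$ or compensate, and then sum. Write $w(u,L)=\sum_v \max(0,u(v)-L)$. A pointwise inequality cannot hold in general, since counters below $T$ may gain weight. So instead we split $\max(0,x-L) = (x-L) + \max(0,L-x)$ and rewrite
\[
w(u,L) = \sum_v \bigl(u(v)-L\bigr) + \sum_v \max\bigl(0,L-u(v)\bigr),
\]
and similarly for $u'$. The first sums compare correctly by (R2): $\sum_v u'(v) \le \sum_v u(v)$. It then remains to show that the deficit term $\sum_v \max(0,L-u'(v))$ is at most $\sum_v \max(0,L-u(v))$. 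We will prove this pointwise, i.e.\ $\max(0,L-u'(v)) \le \max(0,L-u(v))$, which is equivalent to $\min(u'(v),L) \ge \min(u(v),L)$.

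That pointwise claim follows from (R1), which gives $u'(v) \ge \min(u(v),T)$. We split into two cases.
\begin{itemize}
\item If $u(v) \le T$, then $u'(v) \ge u(v)$, so $\min(u'(v),L) \ge \min(u(v),L)$ by monotonicity of $\min(\cdot,L)$.
\item If $u(v) > T$, then $u'(v) \ge T \ge L$, so $\min(u'(v),L) = L \ge \min(u(v),L)$.
\end{itemize}
The hypothesis $L \le T$ is used exactly in the second case. Summing the pointwise inequality over $v$ and adding the (R2) comparison gives $w(u',L) \le w(u,L)$.

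The only delicate point is the decomposition step: one must notice that weight can move from above $T$ to counters below $L$ and still be fine. The identity $\max(0,x-L) = x-L+\max(0,L-x)$ handles this cleanly, since such transfers only shrink deficits. (R3) is not needed here.
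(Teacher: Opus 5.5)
Your proof is correct and is essentially the paper's argument. The paper partitions counters into three classes: those that start and stay at or above $L$, those that start and stay below $L$, and those that start below and end at or above $L$. It uses (R1) with $L \le T$ to rule out high-to-low moves and to show low counters never lose weight, and then applies (R2). Your identity $\max(0,x-L) = (x-L) + \max(0,L-x)$ packages those same facts into the single pointwise inequality $\min(u'(v),L) \ge \min(u(v),L)$, which is a slightly cleaner presentation of the same proof.
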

\begin{proof}
Let $I_{h,h}$ be the counters who start high, stay high ($u \ge L$, $u' \ge L$). $I_{l,l}$ be counters who
start low, stay low ($u < L$, $u' < L$). $I_{l,h}$ be counters who start low, end high ($u < L, u' \ge L$).
Because $L \le T$, by R1, there are no counters who start high and end low.
Observe by R2 that $0 \ge \sum_i u'(i) - \sum_i u(i)$, which is equal to
$\sum_{i \in I_{h,h}} u'(i) - u(i) + \sum_{i \in I_{l,l}} u'(i) - u(i) + \sum_{i \in I_{l,h}} u'(i) - u(i)$. Observe
that$ w(u',L)-w(u,L) = \sum_{i \in I_{h,h}} u'(i) - u(i) + \sum_{i \in I_{l,h}} u'(i) - L$. Since $u < L$ for all
low $u$, we have that this is no more than $\sum_{i \in I_{h,h}} u'(i) - u(i) + \sum_{i \in I_{l,h}} u'(i) -
u(i) \le -\sum_{i \in I_{l,l}} (u'(i) - u(i)) \le 0$. Therefore $w(u,L) \ge w(u',L)$. \end{proof}

\begin{lemma} Suppose $(T_t)$'s increasing. Suppose $t \le t'$. Then $w(u_t,T_t) \ge w(u_{t'},T_t)$. (weights go
down over time) \label{weightsDownLong} \end{lemma}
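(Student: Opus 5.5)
The plan is induction on $t' - t$, using Lemma~\ref{lem:weightsDown} at each step. The base case $t' = t$ is trivial. For the inductive step, assume $w(u_t,T_t) \ge w(u_s,T_t)$ for some $s$ with $t \le s < t'$; it then suffices to show $w(u_s,T_t) \ge w(u_{s+1},T_t)$.

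The transition $u_s \rightarrow_{T_s} u_{s+1}$ is a legal $T_s$-move. Since the thresholds are strictly increasing and $s \ge t$, we have $T_t \le T_s$. Lemma~\ref{lem:weightsDown} with $L = T_t \le T_s = T$ therefore applies directly and gives $w(u_s,T_t) \ge w(u_{s+1},T_t)$. Chaining this with the inductive hypothesis yields $w(u_t,T_t) \ge w(u_{s+1},T_t)$, which closes the induction and gives the claim for every $t' \ge t$.

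The only point requiring care is checking that the level $L$ stays at or below the threshold of every later move. This is exactly where the monotonicity of the thresholds is used; we may assume it without loss of generality by Lemma~\ref{lem:increasingThresholds}. In fact only non-strict monotonicity ($T_t \le T_s$ for $s \ge t$) is needed, so no further obstacle is expected; the argument is a telescoping application of the single-step lemma.
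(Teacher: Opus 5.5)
Your proof is correct and is essentially the paper's argument: the paper also telescopes Lemma~\ref{lem:weightsDown} over each later move $u_s \rightarrow_{T_s} u_{s+1}$ with $L = T_t \le T_s$, which holds because the thresholds are increasing. Your induction simply writes that chaining out explicitly, and you are right that non-strict monotonicity is all that is needed.
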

\begin{proof} This is because skylines can't bring up weight. All of the skylines past $T_t$ are either
rearranging the weight above $T_t$, or throwing some down, never bringing up.\footnote{Remark: Note that it is not true that $w(t,l) \ge w(t+1,l)$ for any $l$.} Thus, repeatedly applying Lemma \ref{lem:weightsDown} (which we can since the thresholds are
increasing) yields the result. \end{proof}

\begin{lemma} Suppose that the thresholds are strictly increasing over time, $j > i$, and $T_j - T_i \ge 2H$. Then $w_j \le \frac{1}{2} w_i$.\footnote{Where $w_j = w(u,T_j)$.} (weight goes down by half with jump of $\Theta(H)$ in threshold value) \label{lem:weightsHalve} \end{lemma}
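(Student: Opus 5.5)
The plan is to compare, at the single time $j$, the weight above $T_j$ with the weight above the lower level $T_i$. We then transport the latter back to time $i$ using the monotonicity lemma already proved. Concretely, the chain I aim for is
\[
w_j \;=\; w(u_j,T_j) \;\le\; \tfrac12\, w(u_j,T_i) \;\le\; \tfrac12\, w(u_i,T_i) \;=\; \tfrac12\, w_i .
\]

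The second inequality is exactly Lemma~\ref{weightsDownLong} applied with $t=i$ and $t'=j$. That lemma needs the thresholds to be increasing, which is the hypothesis here. It holds because every threshold used between times $i$ and $j$ is at least $T_i$, so no move brings weight from below $T_i$ to above it.

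The first inequality is a pointwise comparison over counters. Fix a counter $x$ with $u_j(x) > T_j$; counters at or below $T_j$ contribute $0$ to $w_j$ and a nonnegative amount to $w(u_j,T_i)$, so they are harmless. By Lemma~\ref{lem:maxThresholdCloseness}, which also uses strictly increasing thresholds, we have $u_j(x) \le M_j \le T_j + H$. Hence $0 < u_j(x) - T_j \le H$. Then
\[
u_j(x) - T_i \;=\; (T_j - T_i) + (u_j(x) - T_j) \;\ge\; 2H + (u_j(x)-T_j) \;\ge\; 2\,(u_j(x)-T_j),
\]
where the last step uses $u_j(x)-T_j \le H \le 2H$. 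In particular $u_j(x) > T_i$, so $x$ contributes $u_j(x)-T_i$ to $w(u_j,T_i)$. Summing over all counters gives $w(u_j,T_i) \ge 2\,w(u_j,T_j)$, which is the first inequality.

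The only step requiring care is the cap $u_j(x) \le T_j + H$: it must hold at time $j$ relative to the current threshold $T_j$. Rule (R3) by itself only bounds the state after a move in terms of the previous threshold, so the bound must come from Lemma~\ref{lem:maxThresholdCloseness}. That lemma relies on the strict increase of thresholds, whose use is justified by Lemma~\ref{lem:increasingThresholds}. Everything else is bookkeeping.
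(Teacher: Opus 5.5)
Your proof is correct and follows the paper's argument. It uses the cap $M_j \le T_j + H$ from Lemma~\ref{lem:maxThresholdCloseness} to compare the weight above $T_j$ with the weight above $T_i$ at time $j$, then transports back to time $i$ via Lemma~\ref{weightsDownLong}. The only difference is cosmetic: the paper counts the $v$ counters at or above $T_j$ and shows $w_j \le Hv \le \tfrac12 (T_j - T_i)v \le \tfrac12 w(u_j,T_i)$, whereas you make the same comparison counter by counter.
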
 

\begin{proof} 
Suppose there are $v$ counters with height at least $T_j$ at time $j$. Since $M_j \le T_j + H$ at time $j$ (by
Lemma \ref{lem:maxThresholdCloseness}), we have that $w_j \le Hv$.
At time $j$, observe that $w(u_j,T_i) \ge (T_j-T_i) v > 2Hv$. By Lemma \ref{weightsDownLong}, we have $w(u_j,T_i) \le w(u_i,T_i)$. Putting it together, we have $w_j \le Hv \le \frac{1}{2} 2Hv \le \frac{1}{2} w(u_j,T_i) \le \frac{1}{2} w(u_i,T_i)=w_i$.\end{proof}

\begin{lemma} Let $(u_i,T_i)_{i=1}^f$ be a valid game sequence. If there exists $x \in [f-2]$ such that $T_{x+1}-T_x
> H$, then $u_{x+1}=u_f$. \label{lem:closeThresholds} \end{lemma}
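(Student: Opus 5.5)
The plan is to show that once the threshold jumps by more than $H$, every counter sits strictly below every later threshold, so no later move can change anything. I will work under the standing assumption of this subsection that the thresholds are strictly increasing. This is justified by Lemma~\ref{lem:increasingThresholds}, which lets us replace any game sequence ending in $u_f$ by one with strictly increasing thresholds ending in the same state. Fix $x\in[f-2]$ with $T_{x+1}-T_x>H$.

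\textbf{Step 1: all counters of $u_{x+1}$ lie strictly below $T_{x+1}$.} The move $u_x\rightarrow_{T_x}u_{x+1}$ satisfies (R3). Hence for every $v$,
\[
u_{x+1}(v)\le T_x+H<T_{x+1}.
\]
Since the thresholds are increasing, this also gives $u_{x+1}(v)<T_t$ for every $t\ge x+1$.

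\textbf{Step 2: induction on $t=x+1,\dots,f-1$.} The hypothesis is that $u_t=u_{x+1}$. The base case $t=x+1$ is trivial. For the inductive step, consider the move $u_t\rightarrow_{T_t}u_{t+1}$.

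By Step 1, every counter satisfies $u_t(v)<T_t$, so $\min(0,T_t-u_t(v))=0$. Rule (R1) then gives $u_{t+1}(v)\ge u_t(v)$ for all $v$. Rule (R2) gives $\sum_v u_{t+1}(v)\le\sum_v u_t(v)$.

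A family of nonnegative differences $u_{t+1}(v)-u_t(v)$ with nonpositive sum must be identically zero. Therefore $u_{t+1}=u_t=u_{x+1}$. Concluding at $t=f-1$ yields $u_f=u_{x+1}$.

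\textbf{Main subtlety.} The only delicate point is the dependence on monotone thresholds. Without it, a later threshold could dip below $T_x+H$ and let the weight move again. I will therefore state explicitly that the lemma is applied to sequences normalized by Lemma~\ref{lem:increasingThresholds}, which is how it is used in the proof of Theorem~\ref{thm:counterLimit}. Beyond that, the argument uses nothing more than the rules (R1)--(R3).
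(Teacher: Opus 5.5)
Your proof is correct and is the paper's argument: (R3) puts every counter of $u_{x+1}$ at or below $T_x+H<T_{x+1}$, after which (R1) forbids any counter from losing weight and (R2) forbids any net gain, so no later move changes the state. The paper also relies on increasing thresholds (``Further thresholds are above $T_{x+1}$''), so you were right to make that hypothesis explicit; without it the literal statement fails, because a later threshold below $T_x+H$ could redistribute weight again.
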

\begin{proof}
Suppose that $T_{x+1}-T_x > H$. 
Then $T_{x+1} > T_x + H \ge M_{x+1}$ (by R3).  Consider the move $u_{x+1} \rightarrow_{T_{x+1}} u_{x+2}$. By R1, no counter below the threshold loses weight. Since all counters are below the threshold, no counter loses weight. Therefore, no counter can gain
weight, so the counter state is unchanged ($u_{x+2}=u_{x+1}$). Further thresholds are above
$T_{x+1}$, and so also cannot change the counter states.  \end{proof}

\begin{proof}[Proof of Theorem \ref{thm:counterLimit}] Consider the game $(u_i,T_i)_{i=1}^f$. We can assume that the $(T_i)$'s are strictly
increasing and that $T_{i+1} - T_i \le H$ by Lemmas \ref{lem:increasingThresholds} and \ref{lem:closeThresholds}. Let $y_1 = 1$, and inductively define $y_j$ be the smallest value of $z$ for which $T_z-T_{y_{j-1}} \ge 2H$. Because each $T_i$ differs by no more than $H$, note that $T_{y_j}-T_{y_{j-1}} \le 3H$. We are taking a subsequence of threshold choices with
sufficiently different height. Applying Lemma \ref{lem:weightsHalve} to each jump, we have that the weight above
the threshold goes down by half each time, so there are at most $O(\log n)$ jumps of thresholds.
Each threshold jump gives an $O(H)$ increase, so the max counter has value no more than $O(Y+H \log n)$. \end{proof}

%% file: actualAnalysis.tex
\section{Domination Framework}\label{sec:dom}

The goal of this section is to establish a general framework for bounding the maximum out-degree of our dynamic orientation algorithms (Step 2 of the roadmap). We proceed as follows. First, in Section~\ref{subsec:analysis-prelim}, we formally define the notion of an $(H,T)$-bounded call. Then in Section~\ref{subsec:domination-step}, we demonstrate that any $(H,T)$-bounded call corresponds to a legal move in the counter game while preserving a domination invariant. Building on this correspondence, Section~\ref{subsec:global-outdegree} proves our main structural result: for any update procedure whose calls are either $(H,T)$-bounded or trivially produce orientations of maximum degree $O(\delta)$, the global out-degree is bounded by
\[
\max_{x} d^+_{\overline G}(x)
=
O\left(\delta + \bigl(H+\delta\epsilon+1\bigr)\log n\right). \tag{Theorem \ref{thm:generic-outdegree}}
\]
Finally, we apply this theorem to our worst-case algorithm.

Throughout this section, we fix an online update algorithm $A$ and a sequence of update batches $\{B_i\}_{i\ge 1}$. Let $\overline G_0$ be the empty orientation on $n$ vertices and define $\overline G_i := A(\overline G_{i-1}, B_i)$.  Let $p_i$ denote the vertex-potential function associated with $\overline G_i$. Recall $\beta := 6\delta\epsilon$ and
$\eta := 1 + 1/\epsilon + 2\sigma$.

\subsection{Preliminaries}\label{subsec:analysis-prelim}
We associate one counter with each vertex. For an orientation $\overline G$ with potential function $p$, we say a counter state $u$ \emph{dominates} $p$ if $u(v)\ge p(v)$ for every vertex $v$. For a counter state $u$ and a potential function $p$, define the vertexwise gaps
\[
g^+(u,p)(v) := \max\{p(v)-u(v),0\},
\qquad
g^-(u,p)(v) := \max\{u(v)-p(v),0\},
\]
and their totals
\[
\operatorname{Gap}^+(u,p) := \sum_v g^+(u,p)(v),
\qquad
\operatorname{Gap}^-(u,p) := \sum_v g^-(u,p)(v).
\]
For two potential functions $p,q$, define the vertexwise change and increase
\[
\delta(p,q)(v) := q(v)-p(v),
\qquad
\delta^+(p,q)(v) := \max\{\delta(p,q)(v),0\},
\]
and their totals
\[
\Delta(p,q) := \sum_v \delta(p,q)(v),
\qquad
\Delta^+(p,q) := \sum_v \delta^+(p,q)(v).
\]

\begin{lem}\label{lem:gap-after-applying-batch}
Let $u$ dominate $p$. Then for any potential function $r$,
\[
\operatorname{Gap}^+(u,r)\le \Delta^+(p,r).
\]
\end{lem}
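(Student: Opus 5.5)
The plan is to prove the inequality vertex by vertex and then sum over $v$. Since both sides are sums of nonnegative per-vertex terms, it suffices to show that for every vertex $v$,
\[
g^+(u,r)(v) \le \delta^+(p,r)(v),
\qquad\text{that is,}\qquad
\max\{r(v)-u(v),0\} \le \max\{r(v)-p(v),0\}.
\]

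The single ingredient is the domination hypothesis $u(v)\ge p(v)$. From it, $r(v)-u(v)\le r(v)-p(v)$ for every $v$. The map $x\mapsto\max\{x,0\}$ is monotone nondecreasing, so applying it to both sides preserves the inequality and gives exactly the per-vertex bound above. If one prefers cases: when $r(v)\le u(v)$ the left side is $0$ and the right side is nonnegative; when $r(v)>u(v)\ge p(v)$ both maxima are attained by their first arguments, and the bound is again $r(v)-u(v)\le r(v)-p(v)$.

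Summing over all vertices then yields $\operatorname{Gap}^+(u,r)\le \Delta^+(p,r)$. No earlier lemma about potentials or the counter game is needed: the statement holds for arbitrary real-valued functions $u,p,r$ with $u\ge p$ pointwise. I do not expect any step to be an obstacle. The only point to check is that $\operatorname{Gap}^+$ and $\Delta^+$ use the positive parts of the correctly oriented differences, namely $r-u$ and $r-p$, and by the definitions above they do.
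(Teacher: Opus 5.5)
Your proof is correct and takes essentially the same approach as the paper: the paper restricts to $P=\{v : r(v)>u(v)\}$, writes $r-u=(p-u)+(r-p)$ there, drops the nonpositive $p-u$ term by domination, and bounds the rest by $\Delta^+(p,r)$. Your pointwise monotonicity of $x\mapsto\max\{x,0\}$ is the same argument done vertex by vertex.
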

\begin{proof}
Let $P:=\{v \in V:\ r(v)>u(v)\}$. Then
\[
\operatorname{Gap}^+(u,r)=\sum_{v\in P}(r(v)-u(v))
=\sum_{v\in P}\bigl(p(v)-u(v)\bigr)+\sum_{v\in P}\bigl(r(v)-p(v)\bigr).
\]
Since $u$ dominates $p$, the first sum is nonpositive. Therefore,
\[
\operatorname{Gap}^+(u,r)\le \sum_{v\in P}\delta^+(p,r)(v)\le \Delta^+(p,r).
\]
\end{proof}

Consider an invocation of $A$ on an input orientation $\overline G$ and batch $B$. 
If $A$ is recursive, then during this invocation it may generate a sequence of
recursive calls on progressively smaller batches. We index these calls as follows.
Let $\overline G^{(0)} :=\overline G$ and $B^{(0)}:=B$. For $t=0,1,\dots,m-1$,
the $t$-th call runs $A(\overline G^{(t)}, B^{(t)})$, produces an output
orientation $\overline G^{(t+1)}$. The invocation may delete a batch of edges $B^{(t+1)}$ (which may be empty). The invocation may then terminate, or recurse on $B^{(t+1)}$ as an insertion batch. Let $\mathcal{G}$ be the update stream including these recursive calls.

For each $t\in\{0,\dots,m-1\}$, write:
\begin{itemize}
    \item $p^{(t)}$ for the potential function of $\overline G^{(t)}$;
    \item $r^{(t)}$ for the potential function after applying $B^{(t)}$ to
    $\overline G^{(t)}$. If $t=0$, we also let $\kappa$ perform flips.
    \item $q^{(t)}$ for the potential function of the output orientation
    $\overline G^{(t+1)}$.
\end{itemize}

By Lemma~\ref{lem:potential-increase-per-update}, applying $B^{(t)}$ increases total
potential by at most $|B^{(t)}|\eta\epsilon$, i.e.,
\begin{equation}\label{eqn:inj-bound}
\Delta^+(p^{(t)}, r^{(t)}) \le |B^{(t)}|\eta\epsilon.
\end{equation}

We classify calls into two types.
\begin{definition} \label{def:trivial}
A call $A(\overline G^{(t)}, B^{(t)})$ is \emph{trivial} if 
\[ \max_v d^+_{\overline G^{(t+1)}}(v) \le 8 \delta. \]
\end{definition}

\begin{definition}\label{def:HT-bounded}
Fix parameters $H>0$ and $T\ge 0$.
A call $A(\overline G^{(t)}, B^{(t)})$ is \emph{$(H,T)$-bounded} if:
\begin{enumerate}
    \item for every vertex $v$,
    \[
    q^{(t)}(v)\ \ge\ \min\{r^{(t)}(v), T-\delta\epsilon\};
    \]
    \item $\Delta(r^{(t)}, q^{(t)}) \le -|B^{(t)}|\eta\epsilon$;

    \item the output orientation satisfies
    \[
    T \ \le\ \max_v d^+_{\overline G^{(t+1)}}(v) \ \le\ T+H.
    \]
\end{enumerate}
\end{definition}

\subsection{Connecting bounded move with counter game}\label{subsec:domination-step}

The next lemma shows that each $(H,T)$-bounded call corresponds to a legal move in
the counter game (with a slightly larger cap), while maintaining domination.

\begin{lemma}\label{lem:domination-transfer}
Suppose $u$ dominates $p^{(t)}$ and $A(\overline G^{(t)},B^{(t)})$
is $(H,T)$-bounded. Define
\[
T^* := T-\delta\epsilon
\qquad\text{and}\qquad
H^* := H+\beta.
\]
Then there exists a counter state $w$ such that:
\begin{enumerate}
    \item $w$ dominates $q^{(t)}$; and
    \item $u \to_{T^*} w$ is a legal $T^*$-move in a counter game with cap $H^*$.
\end{enumerate}
\end{lemma}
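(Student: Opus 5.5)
We take $w$ to be the smallest counter state that dominates $q^{(t)}$ and can obviously satisfy (R1):
\[
w(v) := \max\bigl\{\,q^{(t)}(v),\ \min\{u(v),T^*\}\,\bigr\}.
\]
Write $p=p^{(t)}$, $r=r^{(t)}$, $q=q^{(t)}$. Domination of $q$ holds by construction. We then check (R1), (R3), (R2) in that order; (R2) is the only step that needs the $(H,T)$-bounded conditions beyond the output degree.

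\textbf{(R1) and (R3).} Rule (R1) is immediate: $w(v)\ge\min\{u(v),T^*\}$, which is equivalent to $w(v)-u(v)\ge\min\{0,T^*-u(v)\}$.

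For (R3), first take the term $q(v)$. Lemma~\ref{lem:potential-degree-relation} gives $q(v)\le d^+(v)+5\delta\epsilon$. Condition~3 of Definition~\ref{def:HT-bounded} gives $d^+(v)\le T+H$. Hence
\[
q(v)\le T+H+5\delta\epsilon = T^*+H+6\delta\epsilon = T^*+H^*.
\]
The other term satisfies $\min\{u(v),T^*\}\le T^*$. So $w(v)\le T^*+H^*$.

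\textbf{(R2): total weight does not increase.} Let $D:=\{v : \min\{u(v),T^*\}>q(v)\}$ be the vertices where $w(v)\neq q(v)$. On $D$ we have $w(v)-u(v)=\min\{u(v),T^*\}-u(v)\le 0$. Off $D$ we have $w(v)=q(v)$. Therefore
\[
\sum_v \bigl(w(v)-u(v)\bigr)\ \le\ \sum_{v\notin D}\bigl(q(v)-u(v)\bigr)
=\sum_{v\notin D}\bigl(q(v)-r(v)\bigr)+\sum_{v\notin D}\bigl(r(v)-u(v)\bigr).
\]

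The key observation concerns the vertices in $D$. For $v\in D$ we have $q(v)<T^*$. Condition~1 of Definition~\ref{def:HT-bounded} says $q(v)\ge\min\{r(v),T^*\}$, so it forces $q(v)\ge r(v)$. Thus $q-r\ge 0$ on $D$, and dropping $D$ from the first sum only increases it:
\[
\sum_{v\notin D}(q-r)\le \Delta(r,q)\le -|B^{(t)}|\eta\epsilon,
\]
using condition~2.

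The second sum is at most $\operatorname{Gap}^+(u,r)$. Since $u$ dominates $p$, Lemma~\ref{lem:gap-after-applying-batch} bounds this by $\Delta^+(p,r)$. By \eqref{eqn:inj-bound} this is at most $|B^{(t)}|\eta\epsilon$. Adding the two bounds gives $\sum_v w(v)\le\sum_v u(v)$.

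The main obstacle is exactly this (R2) accounting. The vertices in $D$ are the ones where $w$ is pushed above $q$, and the argument works only because condition~1 guarantees they did not lose potential between $r$ and $q$. That lets their contribution be discarded from both sums simultaneously.
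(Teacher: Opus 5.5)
Your proof is correct, and it takes a different, somewhat cleaner route than the paper's.

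The paper first transports the net change onto the counters, setting $v(x) := u(x) + q(x) - r(x)$. It then takes $w := \min\{\max\{q,v\},\hat q\}$ with $\hat q := \max_x q(x)$. In that construction, (R2) follows from the identity $\operatorname{Gap}^+(v,q)=\operatorname{Gap}^+(u,r)$ together with condition~2, and (R3) reduces to bounding the single number $\hat q$. Condition~1 is spent on a multi-case verification of (R1). That verification also needs $\hat q \ge T^*$, which comes from the \emph{lower} half $T \le \max_v d^+(v)$ of condition~3.

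You instead take the pointwise-smallest state $w=\max\{q,\min\{u,T^*\}\}$ that is compatible with domination and (R1). This makes (R1) free, and you use condition~1 inside the (R2) accounting: it shows that the vertices where $w$ exceeds $q$ satisfy $q\ge r$, so they can be dropped from the $\Delta(r,q)$ sum.

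What your route buys:
\begin{itemize}
\item There is no case analysis.
\item You never need $\hat q$ or the lower bound in condition~3, so the lemma holds without it.
\item Your $w$ is canonical. Any $w'$ that dominates $q$ and satisfies (R1) is pointwise at least your $w$, and (R2) and (R3) are preserved under pointwise decrease. So a legal dominating successor exists if and only if yours is one.
\end{itemize}
What the paper's construction buys in exchange: the weight bookkeeping becomes a mechanical gap identity, and it gives $\max_x w(x)=\hat q$ exactly.
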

\begin{proof} 

Write $p:=p^{(t)}$, $r:=r^{(t)}$, $q:=q^{(t)}$, and $B:=B^{(t)}$ for brevity. By Lemma~\ref{lem:gap-after-applying-batch} and \eqref{eqn:inj-bound},
\begin{equation}\label{eqn:gap-u-r}
\operatorname{Gap}^+(u,r)\le \Delta^+(p,r)\le |B|\eta\epsilon.
\end{equation}

Define $v$ by
\[
v(x) := u(x) + \delta(r,q)(x) = u(x) + (q(x)-r(x)).
\]
Then for every vertex $x$,
\[
q(x)-v(x)=r(x)-u(x),
\]
and therefore
\begin{equation}\label{eqn:gap-v-q}
\operatorname{Gap}^+(v,q)=\operatorname{Gap}^+(u,r).
\end{equation}
Moreover,
\[
\sum_x v(x)
=
\sum_x u(x) + \Delta(r,q)
\le
\sum_x u(x) - |B|\eta\epsilon,
\]
where the inequality uses item~(2) of Definition~\ref{def:HT-bounded}.

Let $\hat q:=\max_x q(x)$ and define $w$ by
\[
w(x) :=
\begin{cases}
q(x) & \text{if } q(x)>v(x),\\
\min\{v(x),\hat q\} & \text{if } q(x)\le v(x).
\end{cases}
\]
Then $w(x)\ge q(x)$ for all $x$, hence $w$ dominates $q$, and $\max_x w(x)=\hat q$.

It remains to verify that $(u,w)$ is a legal $T^*$-move with cap $H^*$.

\paragraph{(R1).}
We show $w(x)\ge \min\{u(x),T^*\}$ for every $x$.
Since the subcall is $(H,T)$-bounded, item~(1) gives
\begin{equation}\label{eqn:q-floor}
q(x)\ge \min\{r(x),T^*\}.
\end{equation} 
Let $\hat x$ be a maximum out-degree vertex in the output $\overline G^{(t+1)}$. By item~(3), $d^+(\hat x)\ge T$, and by Lemma~\ref{lem:potential-degree-relation},
\begin{equation}
\hat q \ge q(\hat x) \ge d^+(\hat x)-\delta\epsilon \ge T-\delta\epsilon = T^*.
\label{eq:qhatGreaterT}
\end{equation}
Now fix $x$. We consider two cases.

\begin{enumerate}
    \item $u(x)\le T^*$. We show $w(x)\ge u(x)$.
        \begin{enumerate}
        \item If $r(x)\ge u(x)$, then by \eqref{eqn:q-floor},
        $w(x)\ge q(x)\ge \min\{r(x),T^*\}\ge u(x)$.
        \item If $r(x)<u(x)\le T^*$, then $r(x)<T^*$ and \eqref{eqn:q-floor} implies
        $q(x)\ge r(x)$, hence $v(x)=u(x)+q(x)-r(x)\ge u(x)$. We case on comparing $q(x)$ to $v(x)$. 
        \begin{enumerate}
            \item If $q(x) > v(x)$, then $w(x) = q(x) \ge v(x) \ge u(x)$.
            \item If $q(x) \le v(x)$, then $w(x) = \min\{v(x),\hat q\}$. Since $\hat q\ge T^*\ge u(x)$,
        we have $w(x)=\min\{v(x),\hat q\}\ge u(x)$. 
        \end{enumerate}

        \end{enumerate}
    \item $u(x) > T^*$. We show $w(x)\ge T^*$.
        \begin{enumerate}
        \item If $r(x)\ge T^*$, then $w(x)\ge q(x)\ge \min\{r(x),T^*\}=T^*$ by
        \eqref{eqn:q-floor}.
        \item If $r(x)<T^*$, then $q(x)\ge r(x)$ by \eqref{eqn:q-floor}, so
        $v(x)=u(x)+q(x)-r(x)\ge u(x)>T^*$, and $\hat q\ge T^*$ by (\ref{eq:qhatGreaterT}). Thus
        $w(x)=\min\{v(x),\hat q\}\ge T^*$.
        \end{enumerate}
\end{enumerate}

This proves (R1).

\paragraph{(R2).} First note that
\[
\sum_x q(x) - \sum_x v(x)
    = \sum_{x : q(x) > v(x)} (q(x) - v(x))
       -
       \sum_{x : q(x) \le v(x)} (v(x) - q(x)) \\
    = \operatorname{Gap}^{+}(v, q) - \operatorname{Gap}^{-}(v, q).
\]
Hence,
\begin{equation}\label{eqn:gap-sum}
\sum_x q(x)
    = \sum_x v(x)
      + \mathrm{Gap}^{+}(v, q)
      - \mathrm{Gap}^{-}(v, q)
\end{equation}
Note that we can equivalently express $w(x)$ as $ \min(q(x) + g^-(v,q)(x),\hat{q}).$ Then, by \ref{eqn:gap-sum} 
\begin{equation*}\sum_x w(x) = \sum_x \min(q(x) + \operatorname{gap}^-(v,q)(x), \hat{q}) \leq \operatorname{Gap}^-(v,q) +  \sum_x q(x) = \sum_x v(x) + \operatorname{Gap}^+(v,q).\end{equation*}
Finally, using \eqref{eqn:gap-v-q}, \eqref{eqn:gap-u-r}, and the definition of $w$, we have
\[\sum_x w(x) \ \le\  \sum_x v(x) + \operatorname{Gap}^+(v,q)
\ =\ \sum_x v(x) + \operatorname{Gap}^+(u,r)
\ \le\ \sum_x u(x)-|B|\eta\epsilon + |B|\eta\epsilon
\ =\ \sum_x u(x).
\]
Thus (R2) holds.

\paragraph{(R3).}
Since $\max_x w(x)=\hat q$, it suffices to upper bound $\hat q$.
By item~(3) of Definition~\ref{def:HT-bounded},
$\max_x d^+_{\overline G^{(t+1)}}(x)\le T+H$. Applying
Lemma~\ref{lem:potential-degree-relation},
\[
\hat q = \max_x q(x)
\le
\max_x d^+_{\overline G^{(t+1)}}(x) + 5\delta\epsilon
\le
(T+H)+5\delta\epsilon.
\]
Using $T=T^*+\delta\epsilon$ and $\beta=6\delta\epsilon$,
\[
(T+H)+5\delta\epsilon
=
(T^*+\delta\epsilon+H)+5\delta\epsilon
=
T^* + (H+6\delta\epsilon)
=
T^* + H^*,
\]
which is (R3).

Therefore $u\to_{T^*} w$ is a legal move with cap $H^*$, and $w$ dominates $q$.

\end{proof}

\subsection{Global out-degree bound}\label{subsec:global-outdegree}

We now bound the maximum out-degree over the entire execution of $A$.

\begin{thm}\label{thm:generic-outdegree}
Assume there exists a parameter $H>0$ such that every call to $A$ that occurs while processing the stream $\mathcal G$ satisfies one of the following:
\begin{enumerate}
    \item the call is trivial;
    \item the call is $(H,T)$-bounded for some threshold $T\ge 0$.
\end{enumerate}

Then for every update index $i$,
\[
\max_{v} d^+_{\overline G_i}(v)
=
O\left(\delta + \bigl(H+\delta\epsilon+1\bigr)\log n\right).
\]
\end{thm}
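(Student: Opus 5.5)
We argue by induction over the stream $\mathcal G$ of calls (including recursive subcalls), maintaining a counter state that dominates the current potential and evolves as a valid game sequence. The natural starting weight is $Y := 8\delta + 5\delta\epsilon + H^*$ with $H^* := H+\beta$. Because $\overline G_0$ is empty, every vertex has potential $0 \le Y$, so the initial all-$Y$ state $u_0$ dominates $p_0$. We take $Y > H^*$ so that Theorem~\ref{thm:counterLimit} applies with cap $H^*$.

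\textbf{Inductive step.} Consider the $t$-th call, with current counter state $u$ dominating $p^{(t)}$. The case split follows the hypothesis of the theorem.

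\emph{Case 1: the call is $(H,T)$-bounded.} Lemma~\ref{lem:domination-transfer} gives $w$ dominating $q^{(t)}$ with $u\to_{T^*}w$ legal under cap $H^*$. We set the new state to $w$. If the call ends by deleting a batch $B^{(t+1)}$ before recursing, that deletion is applied at the start of the next call. So $q^{(t)}$ really is the potential of $\overline G^{(t+1)}$, i.e.\ $p^{(t+1)} = q^{(t)}$, and domination carries over.

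\emph{Case 2: the call is trivial.} This is the main obstacle, since no budget-respecting move is guaranteed and the total potential may even rise. The plan is to show the current state already dominates without moving. By Lemma~\ref{lem:potential-degree-relation}, every vertex has $q^{(t)}(v) \le 8\delta + 5\delta\epsilon \le Y$. So it suffices that the counter state keeps every counter at least $Y$.

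To secure this, we would restrict to moves whose threshold satisfies $T^* \ge Y$. Rule (R1) then keeps counters at or above $\min(u(v),T^*) \ge Y$ whenever they start at least $Y$. Thresholds with $T < Y + \delta\epsilon$ must be handled separately. There the output max degree is at most $T+H < Y+\delta\epsilon+H$, and we treat such a call like a trivial one: we again keep $u$ unchanged and check domination directly via $q \le T+H+5\delta\epsilon$. This needs $Y$ enlarged to about $8\delta + H^* + 6\delta\epsilon$, which is still $O(\delta+H+\delta\epsilon)$.

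If instead $T^* \ge Y$, Lemma~\ref{lem:domination-transfer} applies with all counters staying at least $Y$. This is essentially the paper's observation that thresholds below $Y-H$ are useless.

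\textbf{Conclusion.} The counter states form a valid game sequence, in which trivial or low-threshold calls contribute identity steps (legal as $T$-moves for, e.g., $T=\max u$). Theorem~\ref{thm:counterLimit} bounds every counter by $O(Y + H^*\log n) = O(\delta + (H+\delta\epsilon)\log n)$. Domination plus Lemma~\ref{lem:potential-degree-relation} then gives $d^+(v) \le p(v) + \delta\epsilon$ for every graph along the stream, in particular each $\overline G_i$. This yields the stated bound, with the $+1$ absorbing the additive constants.
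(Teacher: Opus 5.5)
\textbf{There is a genuine gap in your Case~2, in the treatment of $(H,T)$-bounded calls with $T^* < Y$.} Your handling of trivial calls, and of calls with $T^*\ge Y$, is fine, but this case does not close, and without it the single global game fails. For such a call the output max out-degree is at most $T+H$ with $T < Y+\delta\epsilon$. Lemma~\ref{lem:potential-degree-relation} then only gives $q^{(t)}(v) < Y + H + 6\delta\epsilon = Y + H^*$. Your invariant only guarantees $u(v)\ge Y$, and counters can equal $Y$ exactly (e.g.\ initially, or after a move with $T^*=Y$). So leaving $u$ unchanged need not dominate $q^{(t)}$.

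Enlarging $Y$ does not help. The cutoff $Y+\delta\epsilon$ that defines the low-threshold case moves up with $Y$, so the overshoot of $H^*$ persists; the proposed value $8\delta+H^*+6\delta\epsilon$ is circular. Decoupling the cutoff $L$ from the floor fails too. Moves with $T^*\ge L$ can push counters down to exactly $L$, and then a call with $T^*$ just below $L$ again produces potentials up to nearly $L+H^*$. You also cannot raise counters after such a call without violating (R2). Whatever floor you try to maintain, each counter may be short by about $H^*$, and a never-restarted game has no budget to cover it.

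\textbf{The missing idea is that you do not need a single game: the paper restarts it.} Partition the calls into maximal epochs, cutting after each trivial call. A trivial call leaves max out-degree at most $8\delta$, so all potentials are $O(\delta)+5\delta\epsilon$. A fresh uniform state with $Y=\max\{O(\delta)+5\delta\epsilon,\ H^*+1\}$ therefore dominates at the start of the next bounded epoch.

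Inside that epoch every call is $(H,T)$-bounded, so Lemma~\ref{lem:domination-transfer} applies to every call, for any $T$ however small. No floor invariant or identity steps are needed, since no trivial call occurs inside the epoch. Low thresholds are just legal moves, which Theorem~\ref{thm:counterLimit} already covers. That theorem bounds each epoch by $O(Y+H^*\log n)$ independently, and trivial epochs have out-degree $O(\delta)$ directly, so the number of restarts does not matter. With this restructuring, your Case~1, applied without the $T^*\ge Y$ restriction, is exactly the paper's argument.
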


\begin{proof}
Index all calls to $A$ (top-level and recursive) as
$C_1,C_2,\dots,C_M$. Let $\overline H_0:=\overline G_0$, and for each $j\in[M]$
let $\overline H_j$ denote the orientation output by call $C_j$. Let $p_j$ be the potential function of $\overline H_j$. Note that each update output $\overline G_i$ is the output of some call in the list, so it suffices to bound $\max_v d^+_{\overline H_j}(v)$ for all $j$.

Partition $\{1,2,\dots,M\}$ into \emph{maximal epochs}
$\mathcal E=\{[a_\ell,b_\ell]\}_\ell$ of two types:
\begin{itemize}
  \item \textbf{Type I (trivial epoch):} every call $C_j$ with $j\in[a_\ell,b_\ell]$
  is trivial;
  \item \textbf{Type II (bounded epoch):} $C_{a_\ell-1}$ is trivial (or $a_\ell=1$),
  and every call $C_j$ with $j\in[a_\ell,b_\ell]$ is $(H,T_j)$-bounded for some
  threshold $T_j$.
\end{itemize}

\paragraph{Type I epochs.}
If $C_j$ is trivial, then by definition $\max_v d^+_{\overline H_j}(v)=O(\delta)$.

\paragraph{Type II epochs.}
Fix a Type II epoch $[a,b]$. Let $H^*:=H+\beta$.
Since $C_{a-1}$ is trivial (or $a=1$ and $\overline H_0$ is empty), we have
\[
\max_v d^+_{\overline H_{a-1}}(v) = O(\delta).
\]
By Lemma~\ref{lem:potential-degree-relation},
\begin{equation}\label{eqn:epoch-start-potential-new}
\max_v p_{a-1}(v) = O(\delta) + 5\delta\epsilon.
\end{equation}
Define
\[
Y := \max\{O(\delta) + 5\delta\epsilon,\ H^*+1\},
\]
and let $u_{a-1}$ be the uniform counter state $u_{a-1}(v)=Y$ for all vertices $v$.
Then $u_{a-1}$ dominates $p_{a-1}$ by \eqref{eqn:epoch-start-potential-new}.

For each $j\in\{a,a+1,\dots,b\}$, call $C_j$ is $(H,T_j)$-bounded for some $T_j$.
Applying Lemma~\ref{lem:domination-transfer} inductively yields counter states
$u_a,u_{a+1},\dots,u_b$ such that for every $j\in[a,b]$:
\begin{enumerate}
    \item $u_j$ dominates $p_j$, and
    \item $u_{j-1}\to_{T_j-\delta\epsilon} u_j$ is a legal move in a counter game
    with cap $H^*$.
\end{enumerate}
Thus $(u_{a-1},u_a,\dots,u_b)$ is a valid counter game with cap $H^*$ and starting
weight $Y$. By Theorem~\ref{thm:counterLimit},
\[
\max_{a-1\le j\le b}\ \max_v u_j(v)
=
O\bigl(Y + H^*\log n\bigr).
\]
Since $u_j$ dominates $p_j$ for all $j\in[a,b]$, the same bound holds for the
maximum potential:
\[
\max_{a\le j\le b}\ \max_v p_j(v)
=
O\bigl(Y + H^*\log n\bigr).
\]
Finally, Lemma~\ref{lem:potential-degree-relation} implies
$d^+(v)\le p(v)+\delta\epsilon$, hence throughout the epoch,
\[
\max_{a\le j\le b}\ \max_v d^+_{\overline H_j}(v)
=
O\bigl(Y + H^*\log n + \delta\epsilon\bigr).
\]
Using $\beta=6\delta\epsilon$, $H^*=H+\beta$, and $Y=O(\delta + \delta\epsilon + H)$, this simplifies to
\[
\max_{a\le j\le b}\ \max_v d^+_{\overline H_j}(v)
=
O\left(\delta + \bigl(H+\delta\epsilon+1\bigr)\log n\right)
\]
as desired.
\end{proof}

%% file: we_worstcase.tex
\subsection{Applying the Domination Framework}

To prove the max out-degree bound, we must connect this algorithm with the $(H,T)$ bounded framework. A single counter game move will correspond to a single call to process (if the skylines therein all had sufficient height).

\begin{lemma}\label{lem:update-call-bounded-reinsertion}
Consider a call $\textsc{process}(\overline{G},B,s,t)$ that returns $(\overline{G'},B')$. Let:
\begin{itemize}
    \item $b:=|B| > 0$.
    \item $r$ denote the potential after applying $B$ (and allowing $\kappa$'s flips if this is the first call)
    \item  $q$ denote the potential of $\overline G'$
    \item $\cp \le c' < \delta$ be the rounding of the RSL (and the skylines)
    \item $T := \min(\{T(S_j) : S_j \text{ is any skyline collected in this invocation}\})$ 
    \item  $H := 2c' + 2$.
\end{itemize}  

Suppose that
\begin{enumerate}
    \item The skyline data structure correctly returns skylines.
    \item $|B| \eta \le st$.
\end{enumerate}

Then the call is either trivial or $(H,T)$ bounded.\label{lem:wcHT}
\end{lemma}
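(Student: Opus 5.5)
We split on how \textsc{reducePotential} terminates. If some skyline $S$ fails the sufficient-height test, then its induced threshold satisfies $T(S)<4\delta$. We argue that the output is then trivial. Every vertex has at most $T(S)+c'$ out-edges outside $S$, since all edges above $T(S)+c'$ are taken. After $S$ is $3c$-oriented, each vertex receives at most $3c\le 3\delta$ further out-edges. This gives $d^+(v)\le 4\delta+c'+3\delta<8\delta$, using $c'<\delta$, so the call is trivial by Definition~\ref{def:trivial}. The same conclusion handles the case $|B|<4$ when all skylines had sufficient height, provided that check yields a trivial output. Otherwise we must verify (J3) in that case with $H=2c'+2$; the odd-size rounding of $s$ contributes the $+2$ slack.

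The main case is when all $t$ skylines $S_1,\dots,S_t$ have sufficient height, and then the final skyline $S_{t+1}$ is removed. For (J2), apply Lemma~\ref{lem:skylineFlipRelease} to each flip. Each flip releases at least $s\epsilon$ potential, so the total release is at least $st\epsilon\ge |B|\eta\epsilon$ by hypothesis~2. Removing $S_{t+1}$ deletes edges and only lowers potential. Hence $\Delta(r,q)\le -|B|\eta\epsilon$.

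For (J1), fix a vertex $v$. Its potential can decrease only by losing edges through skyline flips or through the final removal. Each time $v$ loses an edge to a skyline with threshold $T_j$, it retains at least $T_j\ge T$ out-edges; the rounding only affects edges in the band $(T_j,T_j+c']$. Every later change to $v$ is either a gain of back edges, which do not decrease potential below the relevant level, or a further loss down to some $T_{j'}\ge T$. So either $v$ never lost an edge, giving $q(v)\ge r(v)$, because flipped-in edges are inserted into $B_v$ with positive potential and front edges are untouched. Or $v$ finishes with $d^+(v)\ge T$, and Lemma~\ref{lem:potential-degree-relation} gives $q(v)\ge T-\delta\epsilon$. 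One subtlety needs care here. Removing from $F_v$ while adding to $B_v$ could change edge potentials within $v$ by up to $\epsilon$ per edge. We plan to absorb this effect into the $-\delta\epsilon$ slack of Lemma~\ref{lem:potential-degree-relation}, rather than tracking it edge by edge.

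For (J3), we need $T\le\max_v d^+(v)\le T+H$ in the output. The lower bound follows from the last skyline $S_{t+1}$: its vertices retain at least $T(S_{t+1})\ge T$ out-edges. For the upper bound, Lemma~\ref{lem:skyline-thresholds-min} gives $T(S_{t+1})\le T+c'$. Removing $S_{t+1}$ leaves every vertex with at most $T(S_{t+1})+c'\le T+2c'$ out-edges, and the $+2$ accommodates rounding of $s$ and the unit gaps in the RSL guarantee. The main obstacle is that Lemma~\ref{lem:skyline-thresholds-min} is stated with $T$ the minimum over earlier thresholds. We must ensure that the minimum defining $T$ in the lemma statement, which includes $S_{t+1}$ itself, still makes $T(S_{t+1})\le T+c'$ hold. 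This is immediate when the minimum is attained at an earlier index. When the minimum is $S_{t+1}$ itself, we check that no vertex exceeds $T(S_{t+1})+c'$ after removal, which is the skyline definition directly.
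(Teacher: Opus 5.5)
Your Case 1, (J2), (J1), and the (J3) argument for the branch where the final skyline $S_{t+1}$ is removed all follow the paper's proof. Your treatment of the minimum being attained at $S_{t+1}$ is also fine; in fact it is trivial, since $T(S_{t+1})\le T(S_{t+1})+c'$.

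\textbf{The gap: the branch where \textsc{reducePotential} returns \textsc{false} but $|B|<4$.} You never prove anything in this branch. The claim that it is handled ``provided that check yields a trivial output'' has no basis: all $t$ skylines had sufficient height, so output out-degrees can be far above $8\delta$. The fallback remark that ``the odd-size rounding of $s$ contributes the $+2$ slack'' names the wrong mechanism. Your (J3) paragraph only treats the case where $S_{t+1}$ is removed. Even its lower bound uses $S_{t+1}$, which is never collected in this branch. The case cannot be skipped, because every update that does not terminate early bottoms out in such a call.

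\textbf{The missing argument, which is exactly where the $+2$ in $H=2c'+2$ comes from.} In this branch no final skyline is collected, so $T=\min_{j\le t}T(S_j)$. Consider the skyline $S_{t+1}$ of size $s$ that \emph{would} be collected from the output orientation. Lemma~\ref{lem:skyline-thresholds-min} applies to $S_1,\dots,S_{t+1}$, all of size $s$, and gives $T(S_{t+1})\le T+c'$. Every vertex has at most $T(S_{t+1})+c'$ out-edges outside $S_{t+1}$. Since $s=\lceil |B|/2\rceil\le 2$, it has at most $2$ out-edges inside $S_{t+1}$. Hence $d^+(v)\le T+2c'+2$. The lower bound $\max_v d^+(v)\ge T$ comes from your (J1) reasoning instead: any vertex that lost an edge to the skyline attaining the minimum still has at least $T$ out-edges at the end.

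\textbf{A smaller point.} The ``subtlety'' you raise in (J1) is not an issue. Lemma~\ref{lem:potential-degree-relation} holds pointwise for every pannier configuration, so $d^+(v)\ge T$ gives $q(v)\ge T-\delta\epsilon$ directly, with nothing to absorb.
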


\begin{proof}

We case on the return value of \textsc{reducePotential}. 

\myparagraph{Case 1} Suppose that \textsc{reducePotential} returned true. This means that a skyline $S$ was of insufficient height $(T(S) < 4\delta)$. Before returning note that \textsc{reducePotential} $3c$-orients $S$. 

For any vertex $v$, at
most $T(S)+c' < 4\delta+c'$ out-edges of $v$ remain outside $S$, while the $3c$-orientation
ensures that $v$ has at most $3c$ out-edges inside $S$. Hence
$d^+(v)\le (4\delta+c')+3c \le 8\delta$ (using $\delta\ge c$), and the call is trivial.

\myparagraph{Case 2} Now suppose that \textsc{ReducePotential} returned false. Let $S_1,\dots,S_t$ be
the skylines flipped in the loop, and let $S_{t+1}$ be the final skyline that is removed (if a skyline is removed). We will verify the three conditions in Definition~\ref{def:HT-bounded}.

\noindent\textbf{(1)}  Note that only out-edges in skylines are removed from a vertex. 

If $v$ never has an edge enter a skyline, then $q(v)\ge r(v)$ and therefore $q(v)\ge \min\{r(v),T-\delta\epsilon\}$.

If $v$ does have an edge enter a skyline $S_j$, by the definition of a skyline of threshold $T(S_j)$, $v$ has at least $T(S_j)$ out-edges not in the skyline. Therefore, after flipping (or removing) $S_j$, $v$ has at least $T(S_j)$ out-edges. Since $T\le T(S_j)$ for all $j$, $v$ remains with at least $T$ out-edges.

\noindent\textbf{(2)}
Each flipped skyline has sufficient height, so by Lemma~\ref{lem:skylineFlipRelease}
flipping $S_j$ releases at least $s\epsilon$ potential. Thus the $t$ flips release
at least $st\epsilon$. Removing the final skyline can only decrease potential. By assumption 2 (in the lemma statement) we have $st \epsilon \ge b\eta \epsilon$, so $\Delta(r,q) \le -b\eta\epsilon$.

\noindent\textbf{(3)}
For the lower bound, since skylines do not remove edges below the threshold, and all thresholds used have positive out-degree above the threshold (before the skyline flip), we have that $\max_v d^+_{\overline G'}(v) \ge T$. 

If the final skyline is not removed, then $|B| < 4$, so $v$ has at most $2$ edges in (what would have been) $S_{t+1}$. If the final skyline is removed, then $v$ has no (remaining) edges in this skyline. Therefore $v$ has at most $T(S_{t+1})+c'+2$ out-edges.
By Lemma~\ref{lem:skyline-thresholds-ij} we have that 
$T(S_{t+1})\le \min_j T_j +c' = T + c'$, so $d^+(v) \le T(S_{t+1})+c'+2 \le T + 2c' + 2$.

All three items of Definition~\ref{def:HT-bounded} hold, so the invocation is $(H,T)$-bounded.
\end{proof}

\begin{thm}[Orientation quality]\label{thm:wc-maxout}
Algorithm~\ref{alg:wealg} maintains an
\[
O\left(\delta + \left(c' + \delta\epsilon+1\right)\log n\right)
\text{-orientation}
\]
throughout the edge updates.
\end{thm}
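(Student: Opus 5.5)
The plan is to invoke Theorem~\ref{thm:generic-outdegree} with $H := 2c'+2$. It then suffices to show that every call to \textsc{process} made while handling the stream $\mathcal G$ (top-level and recursive) is either trivial or $(H,T)$-bounded for some $T \ge 0$. This is exactly what Lemma~\ref{lem:wcHT} delivers once its two hypotheses are checked. Substituting $H = 2c'+2$ into the bound $O(\delta + (H+\delta\epsilon+1)\log n)$ of Theorem~\ref{thm:generic-outdegree} gives $O(\delta + (c'+\delta\epsilon+1)\log n)$, the claimed bound.

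I will check the hypotheses of Lemma~\ref{lem:wcHT} in order.

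First, the rounding condition $\cp \le c' < \delta$ is a parameter choice for the RSL. I will fix $c'$ in that range, which is possible whenever $\delta > \lceil c/\log n\rceil$; this holds since $\delta \ge c$ and we may take $\delta$ at least a constant times $c$.

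Second, hypothesis (1), that the skyline data structure correctly returns skylines, I take from the correctness of the skyline extraction of Section~\ref{sec:appendix_skyline}, which the RSL-based implementation provides. This is also why the inserts are $3c$-oriented before being applied.

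Third, hypothesis (2), $|B|\eta \le st$. In \textsc{Update}, \textsc{process} is called with $s = \lceil |B|/2\rceil$ and $t = 2\eta$, so $st \ge |B|\eta$ directly. For recursive calls the batch is $B'$, a skyline of size $s$, and the recursive call again uses $s' = \lceil |B'|/2\rceil$ and $t = 2\eta$, so the inequality holds for every call.

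One point needs care: the domination framework charges each call's batch an injection of at most $|B^{(t)}|\eta\epsilon$ via \eqref{eqn:inj-bound}. For recursive calls the batch consists of reinserted skyline edges and $\kappa$ makes no flips, so Lemma~\ref{lem:potential-increase-per-update} still bounds the injection per edge by $1+\epsilon+2\sigma\epsilon \le \eta\epsilon$.

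I also need every call in $\mathcal G$ to be covered, including the last call of a recursion, where $|B|<4$ or \textsc{reducePotential} returned true. Lemma~\ref{lem:wcHT} already handles both cases: the early-return case is trivial, and the $|B|<4$ case is covered in its part (3).

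The threshold $T$ supplied by Lemma~\ref{lem:wcHT} is a minimum of skyline thresholds, hence nonnegative, as Theorem~\ref{thm:generic-outdegree} requires.

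The main obstacle I expect is making the intermediate states consistent. After \textsc{process} removes $S$ and before the recursive call reinserts it, the graph is missing edges of the actual graph $G_i$. Theorem~\ref{thm:generic-outdegree} bounds degrees only at outputs of calls, and every final $\overline G_i$ is the output of the last call, which has no removed batch. So the bound applies to all graphs seen by the user.

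The subtle part is that the offline strategy $\kappa$ and the good/bad labels must be defined on the augmented stream $\mathcal G$, in which removed skylines are deletions followed by reinsertions. I plan to argue that $\kappa$ may simply keep its orientation for those edges and make no flips on these pseudo-updates. That keeps it a $\delta$-orientation, since the intermediate graphs are subgraphs of the real ones, and the $\sigma$ flip budget is used only at the top-level call, matching how Lemma~\ref{lem:wcHT} defines $r$.
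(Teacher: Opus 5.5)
Your overall structure matches the paper's proof: set $H=2c'+2$, note $st=|B|\eta$ for every call, apply Lemma~\ref{lem:wcHT}, then Theorem~\ref{thm:generic-outdegree}. However, the way you discharge hypothesis (1) of Lemma~\ref{lem:wcHT} leaves a genuine gap. Skyline correctness is \emph{not} an unconditional property of the RSL-based implementation. Lemma~\ref{lem:demands} only guarantees that \textsc{getDemands} returns a valid skyline of size $x$ when the mass in the RSL's high bag, $\sum_v (d^+(v)-Mc')$ over high-bag vertices, is at most $x$; high-bag vertices are unsorted, so no rounding guarantee holds there. That precondition holds only if out-degrees stay essentially below $Mc'$, which is a consequence of the very out-degree bound you are proving. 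As written, the argument is circular: correct skylines give the degree bound, and correct skylines themselves need a degree bound.

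The paper closes this loop with Lemma~\ref{lem:skylinesCorrectImpliesCorrect}, and you need that argument or an equivalent induction over calls. First, choose the budget $M$ so that the target bound $D=O(\delta+(c'+\delta\epsilon+1)\log n)$ satisfies $D\le Mc'-3c$; with $c'=\cp$, take $M=\Theta(\log n)$ with a large enough constant.

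Then induct on calls, assuming every output so far has max out-degree at most $D$:
\begin{enumerate}
\item The high bag is empty when an invocation starts.
\item It stays empty after the $3c$-oriented inserts are applied. This is where the $3c$ buffer is actually used, rather than just mentioned.
\item Each flip of a size-$y$ skyline deposits at most $y$ mass into the high bag, and the next skyline of the same size $y$ absorbs all of it. So every skyline requested in that invocation is correct.
\end{enumerate}
Only then does Lemma~\ref{lem:wcHT} apply to the call. Theorem~\ref{thm:generic-outdegree}, whose bound on the output of call $C_j$ depends only on $C_1,\dots,C_j$ and so can be applied to the prefix, then gives max out-degree at most $D$ after it, completing the induction.

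Your other points are fine and in fact more careful than the paper: the injection bound for recursive reinsertion batches, and extending $\kappa$ to the augmented stream with no flips on the pseudo-updates.
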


\begin{proof}

Each call to \textsc{process} has $s=\frac{|B|}{2}$ and $t=2\eta$ (for some initial or recursive batch $B$), so $st = |B|\eta$. Therefore, assuming that skylines are correct, by Lemma \ref{lem:wcHT}, each call to process is trivial or $(H,T)$ bounded for $H=2c'+2$. 

Then applying Theorem~\ref{thm:generic-outdegree} gives that $\max_x d^+(x) \le O(\delta + (2c' + 2 + \delta \epsilon + 1) \log n)$. 
Thus by Lemma \ref{lem:skylinesCorrectImpliesCorrect} we conclude that skylines are indeed correct and that this is the max out-degree bound.\end{proof}

\begin{proof}[Proof of Theorem \ref{thm:wealg}]
Using Lemma \ref{lem:wcWorkDepth} and Theorem \ref{thm:wc-maxout}, we immediately conclude Theorem \ref{thm:wealg}.
\end{proof}

\subsection{Tradeoffs}

We plug in various tradeoffs to Algorithm \ref{alg:wealg}. By choosing the offline strategy of Brodal and Fagerberg (Lemma \ref{lem:offline}) \cite{BF99}, we can set $\delta=O(c)$, $\sigma=O(\log n)$, and $\epsilon=\frac{1}{\log n}$, yielding $\eta=O(\log n)$. For the RSL, set $c' = \cp$ and $M = O(\log n)$. Then by Theorem \ref{thm:wealg} we have the following bounds.

\mainCpluslogn*

This can be boosted to an $O(c)$-orientation using a technique of Chekuri et al. (given in Lemma \ref{lem:boostOrientation}) \cite{chekuri2024adaptive}.

\begin{restatable}{thm}{mainthm}\label{thm:boundsOc}
      There exists a parallel batch-dynamic algorithm that maintains an $O(c)$-orientation. For a batch update of size $b$, the algorithm does $O(b\log^2 n)$ reorientations. The work is $O(b \log^2 n)$ in expectation and $O(b \log^3 n)$ deterministically. The depth is $O(\log^4 n)$. Here, $c$ is a fixed upper bound on the arboricity of the graph across the update sequence, and the edge updates are given by an adaptive adversary.
\end{restatable}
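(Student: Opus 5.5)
The plan is to compose the $O(c+\log n)$-orientation algorithm of Theorem~\ref{thm:boundsOclogn} with the boosting reduction of Chekuri et al.\ (Lemma~\ref{lem:boostOrientation}). The reduction runs the base algorithm on a collection of auxiliary instances and assembles the final orientation from their outputs.

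First, I instantiate Theorem~\ref{thm:wealg} with the Brodal--Fagerberg offline strategy (Lemma~\ref{lem:offline}). This gives $\delta=O(c)$, $\sigma=O(\log n)$, $\epsilon=1/\log n$, $c'=\cp$, and hence $\eta=O(\log n)$. The base routine then maintains an $O(c+\log n)$-orientation with $O(b\log n)$ flips, $O(b\log n)$ expected work, $O(b\log^2 n)$ deterministic work and $O(\log^4 n)$ depth, against an adaptive adversary.

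Second, I invoke Lemma~\ref{lem:boostOrientation}. As I understand the standard technique, the edges are split into $O(\log n)$ groups, or the graph is partitioned so that each piece has arboricity $O(c/\log n)$ or similar. The base algorithm is then run independently on each piece, with the $\log n$ additive term absorbed by a scaling of $c$ or $\delta$. Each original edge update translates into $O(\log n)$ updates to the base instances, or one update to each of $O(\log n)$ instances. I would run all instances in parallel on the induced sub-batches. The total batch size fed to base instances is then $O(b\log n)$, so the work and flip bounds each multiply by $\log n$: $O(b\log^2 n)$ flips, $O(b\log^2 n)$ expected work and $O(b\log^3 n)$ deterministic work.

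Since the instances run in parallel, depth stays $O(\log^4 n)$, plus $O(\log n)$ to route and semisort updates into instances (Lemma~\ref{lem:sort}). The final orientation is the union of per-instance orientations, so the out-degree bound comes directly from the lemma.

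The main obstacle is checking that the hypotheses of Lemma~\ref{lem:boostOrientation} are met in the batch setting. Specifically:
\begin{itemize}
\item each sub-instance must receive an arboricity-preserving sequence with a fixed known bound;
\item the translation of one original batch into sub-batches must itself be computable within the claimed work and depth;
\item the adaptive adversary must be harmless, since it only sees outputs while the randomness stays private to the semisorts.
\end{itemize}
If the reduction requires edge migrations between instances, I would bound those migrations by the base algorithm's flip count and treat them as extra insert/delete batches. This keeps the $\log n$ overhead.
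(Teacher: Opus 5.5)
Your top-level plan matches the paper's: instantiate Theorem~\ref{thm:boundsOclogn} and boost it with Lemma~\ref{lem:boostOrientation}. Used purely as a black box, with $f(n)=\log n$ (expected), $f(n)=\log^2 n$ (deterministic) and $d(n)=\log^4 n$, that lemma yields the work and depth bounds. The gap is in the mechanism you attribute to the lemma, which your out-degree step (``the final orientation is the union of per-instance orientations'') depends on.

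Splitting the edges into $\Theta(\log n)$ pieces of smaller arboricity, orienting each with the base algorithm, and taking the union cannot give an $O(c)$-orientation. Every instance pays the additive $\log n$ of the base guarantee, independently of its arboricity. So a vertex's out-degree in the union can reach $\sum_i O(c_i+\log n)=O(c+\log^2 n)$, which is worse than running the base algorithm alone. Moreover, when $c=o(\log n)$ no nonempty piece can have arboricity $O(c/\log n)$, because arboricity is at least $1$. Scaling the arboricity down can never absorb an additive term that does not depend on arboricity.

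The missing idea goes the other way: scale the arboricity \emph{up}. Replace every edge by $\lceil\log n\rceil$ parallel copies, giving a single multigraph $G'$ of arboricity at most $c\lceil\log n\rceil$. Run the base algorithm once on $G'$, so a batch of size $b$ becomes a batch of size $b\lceil\log n\rceil$. Its guarantee is then $O(c\log n+\log n)=O(c\log n)$, so the additive term is dominated.

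Orient each original edge $\{u,v\}$ by majority vote over its copies; a per-edge counter suffices. Every out-edge of $u$ in $G$ is then backed by at least $\tfrac12\log n$ out-copies of $u$ in $G'$, hence $d^+_G(u)\le O(c\log n)/(\tfrac12\log n)=O(c)$.

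With this construction your cost accounting goes through:
\begin{itemize}
\item $O(b\log^2 n)$ flips in $G'$;
\item $O(b\log^2 n)$ expected and $O(b\log^3 n)$ deterministic work;
\item depth $O(\eta\log n\log^2(b\log n))=O(\log^4 n)$, since $b\le n^2$.
\end{itemize}
The reorientation count, which the lemma does not state, follows because an original edge changes direction only when one of its copies flips.

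The hypothesis that actually needs checking is that the base algorithm and its data structures work unmodified on multigraphs, which the paper asserts. Arboricity preservation across several sub-instances is not the issue, and no edge migrations arise.
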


By choosing the offline strategy of He et al. \cite{he2014orienting}, deamortized with Theorem 2 of Berglin and Brodal \cite{BB20}, we can set $\delta = O(c \sqrt{\log n})$ and $\sigma=\sqrt{\log n}$. With $\epsilon=\frac{1}{\sqrt{\log n}}$, we get $\eta=O(\sqrt{\log n})$. We set the rounding of the skylines to be $c'=c$ and note that $M=O(\log n)$ suffices as a budget for the RSL. Thus by Theorem \ref{thm:wealg} we have $O(\delta + (c' + \delta \epsilon + 1) \log n) = O(c \sqrt{\log n} + (c+c \sqrt{\log n} \frac{1}{\sqrt{\log n}} + 1) \log n ) = O(c \log n)$ max out-degree, $\eta = 1 + \frac{1}{\epsilon} + 2 \sigma = O(\sqrt{\log n})$, and so the work is $O(b \sqrt{\log n})$ in expectation. 

\begin{theorem}
    There exists a parallel batch-dynamic algorithm that for a batch update of size $b$, does $O(b \sqrt{\log n})$ reorientations, $O(b \sqrt{\log n})$ expected work, $O(b \log^{1.5} n)$ deterministic work, $O(\log^{1.5} n \log^2 b)$ depth, and maintains a $O(c \log n)$-orientation, where $c$ is a fixed upper bound on the arboricity of the graph across the update sequence, against an adaptive adversary.  \label{thm:boundsOcsqrtlogn}
\end{theorem}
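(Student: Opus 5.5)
This theorem is an instantiation of Theorem~\ref{thm:wealg}, so the plan is to substitute the tradeoff parameters stated just above and simplify each bound. The parameters are the offline strategy of He et al.~\cite{he2014orienting}, deamortized by Theorem~2 of Berglin and Brodal~\cite{BB20}, which gives $\delta = O(c\sqrt{\log n})$ and $\sigma = O(\sqrt{\log n})$ worst-case flips per update. Together with $\epsilon = 1/\sqrt{\log n}$, rounding $c'=c$, and RSL budget $M=O(\log n)$, this gives
\[
\eta = 1 + 1/\epsilon + 2\sigma = O(\sqrt{\log n}).
\]
Before substituting, I would check the side conditions used in the analysis. First, $0<\epsilon<1$ holds for $n$ large enough. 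Second, the rounding must satisfy $\cp \le c' < \delta$, as required by Lemma~\ref{lem:wcHT}. Here $c' = c \ge \lceil c/\log n\rceil$ holds, and $c < \delta$ holds because $\delta = \Theta(c\sqrt{\log n})$; if needed, I would take the constant in $\delta$ at least $2$. Third, the budget $M = O(\log n)$ must cover the resulting out-degree range $O(c\log n)$ in units of $c'=c$, so $O(\log n)$ buckets suffice. This is where Lemma~\ref{lem:skylinesCorrectImpliesCorrect} is used to certify that the skylines are correct.

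Next I would substitute into the out-degree bound of Theorem~\ref{thm:wealg}:
\[
O\bigl(\delta + (c' + \delta\epsilon + 1)\log n\bigr) = O\bigl(c\sqrt{\log n} + (c + c + 1)\log n\bigr) = O(c\log n).
\]
Then I would read off the cost bounds with $\eta = O(\sqrt{\log n})$. The number of reorientations is $O(b\eta) = O(b\sqrt{\log n})$. The expected work is $O(b\sqrt{\log n})$. The deterministic work is $O(b\eta\log b) \le O(b\log^{1.5} n)$, using $b \le n^2$. The depth is $O(\eta\log n\log^2 b) = O(\log^{1.5} n\log^2 b)$. Finally, the adaptive-adversary claim is inherited directly from Theorem~\ref{thm:wealg}.

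The main obstacle is confirming that the deamortized He et al.\ strategy actually provides a worst-case $\sigma = O(\sqrt{\log n})$ flips per update for arbitrary (batched) update sequences, with $\delta = O(c\sqrt{\log n})$ and in the form that Lemma~\ref{lem:potential-increase-per-update} requires. My approach is to treat each batch as a sequence of single updates, so the per-update budget of $\sigma$ flips transfers directly. I would then cite~\cite{BB20} for the deamortization rather than reprove it.
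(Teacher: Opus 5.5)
Your proposal is correct and follows the paper's own argument. The paper likewise instantiates Theorem~\ref{thm:wealg} with the deamortized He et al.\ offline strategy ($\delta = O(c\sqrt{\log n})$, $\sigma = O(\sqrt{\log n})$), $\epsilon = 1/\sqrt{\log n}$, $c' = c$, and $M = O(\log n)$, then simplifies using $\eta = O(\sqrt{\log n})$ and $\log b = O(\log n)$. The paper leaves implicit the side conditions you check ($\lceil c/\log n\rceil \le c' < \delta$, and the RSL budget covering the $O(c\log n)$ out-degree range), and they hold as you describe.
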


By choosing the offline strategy of Kowalik \cite{kowalik2007adjacency} deamortized with Theorem 2 of Berglin and Brodal \cite{BB20}, we can set $\delta=O(c \log n)$ and $\sigma=O(1)$. Then setting $\epsilon=1$ yields $\eta=O(1)$. Set $c'=c \log n$ to be the rounding of the RSL, and note that $M=O(\log n)$ is large enough to hold all vertices in non-high bags (except in intermediate states). Plugging into Theorem \ref{thm:wealg} yields work $O(b)$ expected, $O(b \log n)$ deterministic work, $O(\log n \log^2 b)$ depth, and a max out-degree bound of $O(c \log n + (c \log n + c \log n (1) + 1)\log n) = O(c  \log^2 n)$.

\begin{theorem}
    There exists a parallel batch-dynamic algorithm that for a batch update of size $b$, does $O(b)$ reorientations,  $O(b)$ expected work, $O(b \log n)$ deterministic work, $O(\log n \log^2 b)$ depth, and maintains a $O(c \log^2 n)$-orientation, against an adaptive adversary given the edge updates, where $c$ is an upper bound on the arboricity of the graph over the update sequence. \label{thm:boundsclognsquared}
\end{theorem}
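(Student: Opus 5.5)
This is a direct instantiation of Theorem~\ref{thm:wealg} with the parameters that produce the $O(c\log^2 n)$ tradeoff. We take the offline strategy of Kowalik~\cite{kowalik2007adjacency} and deamortize it with Theorem 2 of Berglin and Brodal~\cite{BB20}. This yields a worst-case offline strategy $\kappa$ that maintains a $\delta$-orientation with $\delta=O(c\log n)$ and makes $\sigma=O(1)$ flips per update. Since Theorem~\ref{thm:wealg} requires $0<\epsilon<1$, we take $\epsilon$ to be a fixed constant such as $1/2$ rather than exactly $1$. Then $\eta=1+1/\epsilon+2\sigma=O(1)$ and $\delta\epsilon=O(c\log n)$.

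For the roughly sorted list and the skylines we set the rounding to $c'=\Theta(c\log n)$ and the budget to $M=O(\log n)$. Lemma~\ref{lem:wcHT} requires $\lceil c/\log n\rceil\le c'<\delta$. We therefore pick $c'$ as a constant fraction of $\delta$, say $c'=\lfloor \delta/2\rfloor$, which is still $\Theta(c\log n)$; this only changes constants.

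We must also check that $M=O(\log n)$ suffices for the RSL. The concern is that the ``high bag'' semantics in the RSL definition might break the skyline correctness argument. Here we invoke Lemma~\ref{lem:skylinesCorrectImpliesCorrect}, exactly as in the proof of Theorem~\ref{thm:wc-maxout}. The max out-degree bound of $O(c\log^2 n)=O(c'\log n)$ means all vertices fit within $O(\log n)$ rounded buckets of width $c'$, apart from transient intermediate states. Confirming that this circular argument, in which correct skylines give the degree bound and the degree bound makes $M$ large enough, goes through with this choice of $c'$ is the only step needing care. I expect it to be handled by the same induction over calls used for Theorem~\ref{thm:wc-maxout}.

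Plugging into Theorem~\ref{thm:wealg} gives the stated bounds. The orientation quality is $O(\delta+(c'+\delta\epsilon+1)\log n)=O(c\log n+c\log n\cdot\log n)=O(c\log^2 n)$. There are $O(\eta)=O(1)$ flips per update, hence $O(b)$ reorientations. The work is $O(b\eta)=O(b)$ in expectation and $O(b\eta\log b)=O(b\log n)$ deterministically. The depth is $O(\eta\log n\log^2 b)=O(\log n\log^2 b)$. Adaptivity of the adversary is inherited from Theorem~\ref{thm:wealg}, since the only randomness is the semisort (Lemma~\ref{lem:sort}).
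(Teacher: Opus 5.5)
Your proposal is correct and takes the same route as the paper. It instantiates Theorem~\ref{thm:wealg} with Kowalik's offline strategy deamortized by Berglin--Brodal ($\delta=O(c\log n)$, $\sigma=O(1)$), a constant $\epsilon$, rounding $c'=\Theta(c\log n)$ and budget $M=O(\log n)$, and relies on Lemma~\ref{lem:skylinesCorrectImpliesCorrect} (through Theorem~\ref{thm:wc-maxout}) to resolve the RSL circularity. Your choices $\epsilon=1/2$ and $c'=\lfloor\delta/2\rfloor$, instead of the paper's $\epsilon=1$ and $c'=c\log n$, are harmless and respect the stated hypotheses $0<\epsilon<1$ and $c'<\delta$ more carefully than the paper does.
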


%% file: appendix_deterministicBag.tex
\section{Deterministic Bag \label{sec:bag}}

With the main result proven, we now will prove that our data structures indeed are efficient. This begins with showing that a deterministic bag can be efficiently maintained under batch updates.

\bag*

\subsection{Skew numbers}

Consider skew binary numbers \cite{MYERS83, okasaki1999purely}. A number is represented as a linked list of nodes with weights. Each weight must be equal to $2^{i+1}-1$ for some $i$. The first two weights may be equal, but otherwise, all weights must be strictly increasing. All natural numbers have a unique skew representation \cite{MYERS83}. Okasaki describes how to increment and decrement skew binary numbers by one \cite{okasaki1999purely}. We extend this to support arbitrary increases and decreases. We give a function for initializing a skew number from a binary number in Algorithm \ref{alg:skew-init}, adding a binary number to a skew number in Algorithm \ref{alg:skew-add}, and subtracting in Algorithm \ref{alg:skew-subtract}.

\begin{lemma} The function \textsc{skewInit} is correct.
\end{lemma}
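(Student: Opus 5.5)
Since Algorithm~\ref{alg:skew-init} converts a binary number $N$ into a linked list of skew weights, I will prove correctness by showing two things about the output list. First, its weights sum to $N$. Second, it is a legal skew representation: every weight has the form $2^{i+1}-1$, the weights are strictly increasing, and the only exception allowed is that the two smallest weights may be equal. By the uniqueness of skew representations \cite{MYERS83}, these two facts identify the output as \emph{the} skew representation of $N$. I will also check that the list is produced in the orientation the other skew routines expect, namely smallest weights first.

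The core of the argument is a loop invariant. I expect \textsc{skewInit} to work greedily, repeatedly extracting the largest weight $2^{k+1}-1$ that is at most the remaining value $R$, reading the needed exponent off the binary digits of $N$. The invariant I will maintain is:
\begin{itemize}
\item[(a)] the sum of the weights emitted so far plus $R$ equals $N$;
\item[(b)] the emitted weights, read from the largest downwards, are non-increasing and have the skew form;
\item[(c)] every future weight is at most the last emitted weight, with equality possible only if that equal weight is the final one.
\end{itemize}
Part (a) is immediate from the update step, and the invariant gives termination because $R$ strictly decreases.

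The key arithmetic step, which I expect to be the main obstacle, is part (c). Greedy choice gives $2^{k+1}-1 \le R < 2^{k+2}-1$. Subtracting the chosen weight leaves $R' = R-(2^{k+1}-1) \le 2^{k+1}-1$. So the next weight chosen is at most the current one. If it is equal, then $R' = 2^{k+1}-1$ exactly, and after emitting it the remainder is $0$. This shows that a repeated weight can occur only at the very end, i.e.\ as the two smallest weights, so no further duplicates can arise. If the algorithm instead processes the binary digits bottom-up, carrying a $+1$ correction per set bit via $2^j = (2^{j}-1)+1$, I will restate the invariant accordingly: the pending carry is at most one unit of the current smallest weight. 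The verification then has the same flavour, and the duplicate-weight case is again where care is needed.

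Finally, I will note that each iteration does constant work on a machine word, so the invariant argument also certifies the number of nodes created. That count is what the later lemmas for \textsc{skewAdd} and \textsc{skewSubtract} will rely on.
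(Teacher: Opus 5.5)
Your proposal is correct and follows the paper's proof. Your invariant (a) is the paper's conservation invariant $x = x_i + l_i$, and your key inequality ($R < 2^{k+2}-1$ forcing $R' \le 2^{k+1}-1$, with equality only in the terminal doubled case) is exactly the paper's induction $x_i \le 2(2^i-1)$, whose third case is your computation. The actual \textsc{skewInit} scans $i$ downward from $\lceil \log_2 x\rceil+2$ and emits both copies at once when $x = 2(2^i-1)$, so your bottom-up contingency is not needed; the paper also spells out the trivial non-emitting case ($2^i-1 > x_i$ implies $x_i \le 2(2^{i-1}-1)$) and the base case at the starting level, which is what certifies that this level-by-level scan coincides with your greedy choice.
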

\begin{proof}

Let $x$ be the (starting) desired value of the skew number, $x_i$ be the value of $x$ at the iteration of the while loop with value $i$, and $l_i$ be the value of the skew number at the beginning of the iteration with value $i$.

Note that we increase the list $l$ (the skew number) in only two places: when adding two copies of $2^i-1$, and when adding a single copy of $2^i-1$. In both cases, we immediately subtract this amount from $x$. Thus, after each loop iteration we maintain the invariant that $x=x_i+l_i$. Also note that $x_i$ and $l_i$ are nonnegative integers at all times. 

Next, we claim that $2(2^i-1) \ge x_i$, for all $i$. We proceed with induction. After the first iteration, $i$ is $\lceil \log_2 x \rceil+1$, so this is true. Now suppose $2(2^i-1) \ge x_i$. We must show that after the next iteration, $2(2^{i-2}-1) \ge x_{i-1}$. We have three cases.

\begin{enumerate}
    \item If $2(2^i-1)=x_i$, then $x_{i-1}$ will be zero, so this is true. 
    \item If $2^i - 1 > x_i$, then $2^i-2 \ge x_i$, so it follows that $2(2^{i-1}-1) \ge x_i = x_{i-1}$. 
    \item If $2^i - 1 \le x_i$ and $2(2^i-1)\ne x_i$, then $x_{i-1}=x_i-2^i+1$. Since $2(2^i-1) \ne x_i$ by assumption and $2(2^i-1) \ge x_i$ by the inductive hypothesis, we have $x_i < 2(2^i-1)$ and so $x_i \le 2(2^i-1)-1$, so $x_i-2^i+1 \le 2(2^i-1)-1-2^i+1= 2^{i}-2=2(2^{i-1}-1)$, as desired.  
\end{enumerate}
    
\end{proof}

\begin{lemma}
    The function \textsc{skewAdd} is correct.
\end{lemma}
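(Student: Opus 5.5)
The plan is to mirror the proof of \textsc{skewInit}: isolate one value-preservation invariant and one structural (canonical-form) invariant, show each iteration of \textsc{skewAdd} maintains both, and show the loop terminates with nothing left to add. Let $l$ be the input skew number with value $L$, and $y$ the binary number being added. Let $y_i$ and $l_i$ denote the residual amount still to be added and the current list at the start of the iteration indexed by $i$, mirroring the notation of the \textsc{skewInit} proof.

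\textbf{Value invariant.} The invariant is $l_i + y_i = L + y$ at every iteration. Every place where \textsc{skewAdd} inserts a node of weight $2^{i}-1$ into the list also subtracts $2^i-1$ from the residual, so this is a line-by-line check. The same holds wherever two equal nodes of weight $2^{i}-1$ are merged, together with one extra unit, into a node of weight $2^{i+1}-1$, since $2(2^i-1)+1 = 2^{i+1}-1$. I would also verify that the residual stays a nonnegative integer. Then termination with $y_{\text{final}}=0$ gives value correctness.

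\textbf{Structural invariant.} The list must remain a valid skew representation: every weight has the form $2^{j+1}-1$, and weights are strictly increasing except that the first two may be equal. To set this up, I would view the addition as \textsc{skewInit} applied to the residual $y$, with the existing nodes of $l$ interleaved. The key claim is an analogue of the bound $2(2^i-1)\ge x_i$ from the \textsc{skewInit} proof. At the iteration with index $i$, the portion of the list with weights at least $2^i-1$ should already be in canonical form, and the residual should be small enough, at most $2(2^i-1)$, that what remains can be placed in the lower part of the list without violating the ordering.

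I would prove this by induction on $i$ descending, mirroring the three cases of the \textsc{skewInit} proof: the residual equals $2(2^i-1)$; it is below $2^i-1$; or it lies in between. There is one additional set of cases, according to whether $l$ already contains zero, one or two nodes of weight $2^i-1$. Any collision must be resolved by a carry, either merging into weight $2^{i+1}-1$ or keeping the duplicate only if it will be among the first two nodes. Since all natural numbers have a unique skew representation \cite{MYERS83}, once both invariants hold at termination the output is exactly the skew representation of $L+y$.

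\textbf{Main obstacle.} I expect the carry cases to be the hard part. A carry can create a new node of weight $2^{i+1}-1$ that collides with an existing node of the same weight higher up, possibly cascading. I would first try to show that at most one pending duplicate exists at any time and that it always sits at the front of the processed suffix, the same invariant Okasaki uses for single increments \cite{okasaki1999purely}. Failing that, I would strengthen the inductive hypothesis to bound the number of nodes of each weight by two and argue that a cascade terminates after at most $O(\log(L+y))$ levels.
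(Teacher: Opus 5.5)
Your value invariant is correct and is the same one the paper uses. Every change to $x$ is matched by an equal change in the list's value: prepending a copy of $w_1$ subtracts $w_1$, each merge into $w_1+w_2+1$ subtracts $1$, and the final prepend of \textsc{skewInit}$(x)$ is followed by $x\gets 0$. So $x+l$ stays constant, and $x_{\text{final}}=0$ gives the right value.

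The gap is in the structural half, which is built for a different algorithm than \textsc{skewAdd}. \textsc{skewAdd} has no iteration index $i$ running downward. It does not interleave \textsc{skewInit}$(y)$ top-down with the nodes of $l$. Nor is there any bound like $x\le 2(2^i-1)$ during the main loop: the residual can be far larger than every weight in $l$ (e.g.\ $l=[1]$, $x=1000$). The first loop works bottom-up at the front of the list. While $x\ge w_1=\mathrm{head}(l)$, it either merges a leading equal pair into $2w_1+1$ (subtracting $1$) or prepends a second copy of $w_1$ (subtracting $w_1$). Your descending induction over the three \textsc{skewInit} cases plus ``zero, one or two nodes of weight $2^i-1$'' therefore cannot be run on the actual code. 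It also never reaches the fact that drives correctness: the loop exits with $x<\mathrm{head}(l)$, so the leftover fits entirely below the head.

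What you need is short.
\begin{enumerate}
\item[(i)] Each first-loop step preserves skew form. Merging a leading pair $w_1=w_2=2^k-1$ gives $2^{k+1}-1\le w_3$, and equality only creates a new leading pair, which is allowed. Prepending a copy of $w_1$ when $w_1<w_2$ (or $|l|=1$) creates the only equal pair, at the front. So carries never cascade; the obstacle you anticipate does not arise. The ``single pending duplicate at the front'' invariant you mention as a fallback is just skew form itself. Since $x$ drops by at least $1$ per iteration, the loop terminates, and it exits with $x<\mathrm{head}(l)$.
\item[(ii)] The second loop applies the same merge, so it preserves skew form. The head only grows while $x$ only shrinks, so $x<\mathrm{head}(l)$ persists. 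At exit, either $x=0$ or $l$ has no leading pair.
\item[(iii)] \textsc{skewInit}$(x)$ is a skew number by the previous lemma. All its weights are at most $x<\mathrm{head}(l)$, hence strictly below $\mathrm{head}(l)$. Prepending it to a strictly increasing $l$ gives a valid skew number whose only possible equal pair is its own first two entries. If $x=0$, nothing is prepended and $l$ is already valid.
\end{enumerate}
This is the paper's structural argument (``when the loop terminates $x<w_1$; merge until no leading two; append a skew version of $x$''), and your proposal does not contain it.
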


\begin{proof}
Let $x_0$ be the starting value of $x$ and $l_0$ the initial skew number.
Note that $x$ is changed in four locations (decrementing, subtracting $w_1$, another decrement, and setting to 0). In each location, the skew number is increased by the same amount. Therefore, the invariant is maintained that (immediately after) $x$ is changed, $x + l = x_0+l_0$. After the function run, $x_{\text{final}}=0$, so $l_{\text{final}}=l_0+x_0$ as desired. 


Next, note that in every iteration, $x$ is decremented by at least 1. Therefore, the while loop will terminate.

When the while loop terminates, we have that $x < w_1$< where $w_1$ is the first entry in the linked list representation of the skew number. 

Next, we repeatedly decrement $x$ and merge the first two elements of the list until the skew number does not have its first nonzero digit be a two.

Finally, we can append a skew version of $x$ to the list, and the result is a skew number, and the correct skew number.
\end{proof}
\begin{lemma}
    The function \textsc{skewSubtract} is correct.
\end{lemma}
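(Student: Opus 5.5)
We have not been shown Algorithm \ref{alg:skew-subtract}, so the plan is modeled on the proof of \textsc{skewAdd}. We expect \textsc{skewSubtract} to remove $x$ from a skew number $l$ (with $x\le l$) in two phases. First, it repeatedly deletes the leading node of weight $w_1$ while $x\ge w_1$, subtracting $w_1$ from $x$. Then, with $x<w_1$, it performs a bounded number of Okasaki-style decrements on the leading node: a weight $2^{i+1}-1$ splits into two copies of $2^{i}-1$ plus a unit that is consumed. Finally it may rebuild a small remainder using \textsc{skewInit}.

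The proof has three parts: a conservation invariant, termination, and well-formedness of the output. For conservation, let $x_0$ and $l_0$ be the initial values. We check each place where $x$ is modified and verify that the list changes by exactly the same amount in the opposite direction. This gives the invariant $l - x = l_0 - x_0$ immediately after every modification. At termination $x=0$, so $l_{\text{final}} = l_0 - x_0$. The precondition $x_0 \le l_0$ guarantees that we never try to remove a node from an empty list while $x>0$, because $l\ge x$ is preserved.

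For termination, we argue that each loop iteration strictly decreases $x$, or strictly decreases the pair (first weight, list length) in lexicographic order when a split happens. Since both quantities are nonnegative integers, the loop terminates.

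Well-formedness is the main obstacle. We must show the result is a valid skew number: every weight is of the form $2^{i+1}-1$, only the first two weights may be equal, and the rest strictly increase. Deleting leading nodes preserves this, since a suffix of a valid skew list is valid, except possibly for the first-two-equal allowance, which suffixes still satisfy. The delicate step is the split. If the leading weight $2^{i+1}-1$ is replaced by two copies of $2^{i}-1$ and the list was previously strictly increasing from that point, the new list has exactly the first two equal and the rest strictly increasing. If instead the leading two weights were already equal, the split would create three equal weights. The algorithm must avoid this, presumably by first merging or consuming one of the duplicates. We would verify the algorithm handles that case, mirroring the merge step in the \textsc{skewAdd} proof. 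Finally, when a remainder $x' < w_1$ is re-expressed with \textsc{skewInit}, we use correctness of \textsc{skewInit} together with the bound $x' < w_1$. Every weight produced is then at most the next list weight, and we must confirm that any resulting equality occurs only among the first two entries.
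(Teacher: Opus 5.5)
Your outline (preservation of $l-x$, termination, and closure of the skew form) matches the paper's proof. Your remark that $x_0\le l_0$ keeps the list nonempty while $x>0$ makes explicit something the paper leaves implicit.

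The gap is in the step you call the main obstacle. You never actually show that a split preserves well-formedness, and the failure mode you worry about does not exist. A split removes the head $w_1$ and puts two copies of $\frac{w_1-1}{2}$ in its place, so $w_1,w_2,w_3,\dots$ becomes $\frac{w_1-1}{2},\frac{w_1-1}{2},w_2,w_3,\dots$. Because $\frac{w_1-1}{2}<w_1\le w_2$, and a valid skew list has $w_2<w_3<\cdots$, only the first two entries are equal, whether or not $w_1=w_2$. Three equal weights never arise, and no preliminary merge is needed. For instance, $7,7$ minus $2$ yields $3,3,7$ and then $1,1,3,7$. You also need $w_1\ge 3$ so that $\frac{w_1-1}{2}=2^i-1$ with $i\ge 1$ is a legal weight. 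This holds because a split happens only when $0<x<w_1$, which rules out $w_1=1$.

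Separately, the procedure is not two-phase and has no merge or final \textsc{skewInit}. It is a single loop that does one of two things each iteration:
\begin{itemize}
\item if $x\ge w$, it removes the head and subtracts $w$ from $x$;
\item otherwise, it splits the head and subtracts $1$ from $x$.
\end{itemize}
So $x$ drops by at least one per iteration, which gives termination directly. Well-formedness then reduces to two closure facts: a suffix of a valid skew list is valid, and the split above is valid. With the split argument filled in as above, your proof coincides with the paper's.
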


\begin{proof}
Note that in both places where we decrement $x$, we reduce the skew number by the same amount, preserving the constant $l-x$. Since $x$ decrements every iteration we eventually finish. Note that splitting the front of a skew number gives a skew number, and so does removing the front number, so the resulting $l$ is a skew number. 

\end{proof}

\begin{lemma}
    The functions skewInit, skewAdd, and skewSubtract are all $O(\log x)$ work and \depth{}.  
\end{lemma}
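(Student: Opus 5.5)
We have to bound the cost of \textsc{skewInit}, \textsc{skewAdd}, and \textsc{skewSubtract}. The plan is to count loop iterations for each and show that every iteration costs $O(1)$ sequential work. Since the procedures run sequentially, the \depth{} then equals the work. Throughout, $x$ is the binary argument. Where it matters, we treat the skew number's length as $O(\log$ of its value$)$, since every weight is at least doubling after the first two.

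\textbf{\textsc{skewInit}.} The loop index $i$ starts at $\lceil \log_2 x\rceil+1$ and decreases by one per iteration until it reaches the bottom. That gives $O(\log x)$ iterations. Each iteration does a constant number of comparisons of $x_i$ against $2^i-1$ or $2(2^i-1)$, at most two list appends, and one subtraction. We charge these as $O(1)$ word operations, as usual in the model, because all quantities fit in a machine word. Hence $O(\log x)$ work and \depth{}.

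\textbf{\textsc{skewAdd}.} This is the main obstacle, because the correctness proof only guarantees that $x$ decreases by at least $1$ per iteration. That would give $O(x)$ iterations, not $O(\log x)$. I would argue instead that each iteration of the main while loop either
\begin{itemize}
\item absorbs the front weight $w_1\le x$, which merges the two smallest equal weights into the next size, or
\item strictly increases the smallest weight of the list.
\end{itemize}
In the first case the front weights only grow while they stay below $x$. Weights have the form $2^{j+1}-1$, so there are $O(\log x)$ possible sizes below $x$. Each size can be consumed at most a constant number of times, since the skew invariant allows at most two copies of the smallest weight. So the loop runs $O(\log x)$ times.

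The cleanup phase, which merges the first two equal elements and decrements, also runs $O(\log x)$ times, because each merge increases the leading weight's exponent. The final step appends a fresh skew representation of the remaining $x<w_1$, which costs $O(\log x)$ by the \textsc{skewInit} bound.

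\textbf{\textsc{skewSubtract}.} This is symmetric. Each split of the front weight $2^{i+1}-1$ into two copies of $2^i-1$ (plus the decrement) lowers the front exponent by one. Each removal of a front weight $\le x$ consumes a distinct size class. Both events are bounded by the number of exponents up to $\log x$, plus the bounded number of nodes that are larger than $x$ yet split down to the size of $x$. That is again $O(\log x)$ iterations of $O(1)$ work each.

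The delicate point in both arithmetic routines is this potential argument: showing that the number of iterations is logarithmic even though $x$ may drop by only a small amount per step. I would formalize it with the potential "exponent of the current front weight", which changes monotonically within each phase and lies in $[0,\log x+O(1)]$.
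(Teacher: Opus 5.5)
\textbf{There is a genuine gap, and the statement itself is false as written.} Your bounds for \textsc{skewInit} and for the main loop of \textsc{skewAdd} are fine and match the paper's argument. In that loop the body only runs while the front weight is at most $x$, every prepend of $w_1$ is followed by a merge, and every merge raises the front exponent. Your first bullet slightly garbles this: absorbing $w_1$ duplicates the front, and it is the $x\gets x-1$ step that merges. The gap is in the cleanup loop of \textsc{skewAdd} and in the splitting phase of \textsc{skewSubtract}, where you again appeal to the front exponent lying in $[0,\log x+O(1)]$. In both places the front weight is \emph{larger} than the remaining $x$: the cleanup starts precisely because $x<w_1$, and a split happens precisely because $x<w$. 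So the exponent ranges between $\log x$ and $\log N$, where $N$ is the value of the skew number. The only other brake is that each merge or split spends one unit of $x$. Hence ``each merge increases the leading weight's exponent'' and ``the bounded number of nodes that are larger than $x$ yet split down to the size of $x$'' yield only $O(\min(x,\log N))$ steps, not $O(\log x)$.

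This cannot be patched, as the following two examples show.
\begin{itemize}
\item For $x\le K$, \textsc{skewSubtract}$([2^{K+1}-1],x)$ performs $x$ consecutive splits, since after $i$ splits the front is $2^{K+1-i}-1>x-i$.
\item Adding $x=2^j-2$ to $[2^j-1,2^j-1,2^{j+1}-1,\dots,2^{j+x}-1]$ skips the main loop and then performs $x$ cleanup merges.
\end{itemize}
The paper's proof has the same hole. It never treats the cleanup loop. For \textsc{skewSubtract} it correctly shows that no weight is split twice, but then concludes there are $O(\log x)$ splits, although the split weights start at whatever the front weight happens to be.

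What both arguments actually support is $O(\log x+\min(x,\log N))$ work and span. That is still $O(x)$ work, so the $O(b)$-work bag survives. With $N\le\mathrm{poly}(n)$ the span is $O(\log x+\log n)$. This is harmless for the paper's polylogarithmic span bounds, but it is weaker than the claimed $O(\log x)$, and weaker than the $O(\log b)$ span stated in the bag lemma.
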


\begin{proof}
We prove each separately. 
\begin{description}
\item[skewInit:] Since we decrement $i$ in each round of the while loop and $i = O(\log x)$, and appending to a linked list is a constant time operation, the bounds follow. 

\item[skewAdd:] Note that we alternate adding $1$ and $w_1$, and that $w_1$ doubles every two while loop iterations, because we are repeatedly adding on $w_1$ then merging two copies of $w_1$. Thus the while loop will finish in $O(\log x)$ rounds. The final skewInit call is also $O(\log x)$ \depth{}, for $O(\log x)$ work and \depth{} total. 

\item[skewSub:] 

Since we remove one element per iteration (whose value approximately doubles every other iteration), we iterate at most $O(\log x)$ times. The final skewInit call is also $O(\log x)$ time.

We can view the iterations of \textsc{skewSubtract} in stages. In the first stage, we remove the front of the list until we find an element larger than the remaining value in $x$. This will take at most $O(\log x)$ subtracts because the front of the list doubles in size after the first remove (the first remove may not double because of the 2 digit at the beginning of some skew numbers).

From this point onward, whenever we decrement $x$ by 1, write down split$(w)$, where $w$ was the front of the skew list at the time. Whenever we decrement by $w$ for some $w$, write down remove$(w)$. Observe that we will never write down split$(w)$ twice (for the same $w$): because if $x$ was large enough to split both instances of $w$ in a list, we would have removed the first instance of $w$ wholesale instead of splitting it. Similarly, we will never write down remove$(w)$ twice for the same $w$, because if we had the budget to split (into two copies of the same value) and remove both, we would have removed wholesale. Therefore, there are at most $O(\log x)$ copies of split and $O(\log x)$ copies of remove. Since every loop iteration is either a split or remove, there are $O(\log x)$ iterations overall. Thus the runtime is $O(\log x)$. 
\end{description}
\end{proof}

\begin{algorithm}
\label{alg:skew-init}


\tcp{Given a natural number $x$, initialize a skew number with value $x$}
\function{\textsc{skewInit}$(x)$}{
 $l \gets $ empty linked list \;
  $i \gets \lceil \log_2 x \rceil+2$\;
  \while{$x > 0$}{
   
    \If{$2(2^i - 1) == x$} {
    add two copies of $2^i - 1$ to front of list $l$  \;  $x \gets 0$ \;
    }

    \Elif{$(2^i - 1) \le x$}{
    add $2^i - 1$ to front of list $l$ \;
    $x \gets x - 2^i + 1$ \;
    }

     $i \gets i - 1$ \;
   
  }
  \Return $l$

}

\caption{Skew numbers: Initialization}

\end{algorithm}

\begin{algorithm}
\label{alg:skew-add}

\tcp{I/O}
\tcp{Note that :: means appended}
\function{\textsc{skewAdd}$(l,x)$}{
  \If{isEmpty$(l)$} {
    $l \gets \textsc{skewInit}(x)$ \;
  }
  $w_1 :: rest \gets l$ \;
  \while{$x \ge w_1$}{
    
    flag $\gets $ True \;
   
    \If{ $|l| > 1$}{
      $w_1 :: w_2 :: rest \gets l$\;
      \If {$w_1 == w_2$}{
        $l \gets (1+w_1+w_2) :: rest$ \;
        $x \gets x-1$ \;
        flag $\gets $ False \;

      }

    }

     \If{flag} {
      $l \gets w_1 :: l$ \;
      $x \gets x - w_1$ \;
      
      }

      $w_1 \gets \textrm{head}(l)$ \;

  }
  \If{$|l| > 1$} {
  $w_1 :: w_2 :: rest \gets l$ \;
  \while {$w_1==w_2$ and $x > 0$}{
  $l \gets (1+ w_1 + w_2) :: rest$ \;
  $x \gets x - 1$ \;
  $w_1 :: w_2 :: rest \gets l$ \;
  }

  }
  
  $l_2 \gets \textsc{skewInit}(x)$ \;
  $l \gets l_2 :: l$ \;
  $x \gets 0$ \; 
  \Return $l$
    
}

\caption{Skew numbers: Skew add}

\end{algorithm}

\begin{algorithm}
\label{alg:skew-subtract}

\tcp{I/O}
\tcp{Note that :: means appended}
\function{\textsc{skewSubtract}$(l,x)$}{

  \while{$x > 0$}{
   $w :: rest \gets l$ \;
   \If{$x \ge w$} {
   $x \gets x - w$\;
   $l \gets rest$ \;
   }
   \Else {
     $x \gets x - 1$ \;
     $l \gets \frac{w-1}{2} :: \frac{w-1}{2} : :rest $ \;

   }
   }
  \Return $l$ \;
    
}

\caption{Skew numbers: Skew subtract}

\end{algorithm}

\subsection{Using skew numbers for bag}

Okasaki describes how to make a random-access list using skew numbers with increments and decrements allowed. We create a bag structure (no random-access) with skew numbers with arbitrary sized appends and deletes. Note that we can convert an array of size $x$ to a complete binary tree and back in $O(x)$ work and $O(\log x)$ \depth{} \cite{jaja1992parallel}.

To each node of the linked list representation of the skew number, we add a pointer to a complete binary tree and to the leftmost child of this tree. 

\myparagraph{Peek} To peek from a skew bag, we sequentially prefix sum to see the appropriate number of elements. For all of the numbers except the largest one, we write out the elements into an array with the appropriate offset. For the largest number, it is possible that the associated tree is much larger than the peek value. Thus, we walk up the left spine of this final tree until finding a subtree sufficiently large. Then, we write out this subtree to an array. 

Note that walking up the spine will take at most $O(\log x)$ time for a peek of size $x$ because we stop once we exceed the peek value, so this step is within our time bounds.

\myparagraph{Append, Initial} To insert to a skew bag (or during skewInit), to increment, one grabs an element from the batch and makes it the parent of two existing binary trees in the linked list. The leftmost pointer is set to be the leftmost pointer of the left child of the root. To insert a value $2^{i}-1$, this many elements are taken from the input array and formed into an independent complete binary tree, then inserted with the new bit. A flag is kept during the recursion to determine which node is the leftmost leaf. 

\myparagraph{Append, Optimized} The above version of append adds a $O(\log x)$ factor to the \depth{}, because in each of $O(\log x)$ while loop iterations, $O(\log x)$ depth of creating a binary tree is spent. To reduce the \depth{} down to $O(\log x)$, the creation of the binary tree is deferred until the end, where all binary trees are created at once. This leads to $O(\log x) + O(\log x) = O(\log x)$ \depth{}.


\myparagraph{Delete} We keep a list of trees that we remove from the bag. Whenever we remove a single element (a root), we store this as a tree with one vertex. Thus, after running delete, we have a list of length $O(\log x)$ of removed elements. Some of the elements we wish to delete (and some elements we did not want to delete) will have been removed from the bag, and some are still within the bag. We want to place the elements we accidentally removed into the holes created by the deleted elements still in the bag. 

To do this, each deleted element marks that it is being deleted, and each removed element marks itself as having been removed. Then, filter the deleted pointers list to only include those not removed, and filter the removed elements list by those which should not have been deleted. Now we can map the elements we want to return into pointers of deleted elements still in the bag. 

Our total work is $O(x)$, because we are removing and replacing $x$ elements. 

Note that when we split a tree, we do not know the leftmost child of the right subtree. Thus, we mark trees where we do not know the leftmost child. After the function run, we (in parallel) traverse the left spine (from the root, downward) to find the leftmost child. Since split calls occur on trees of halving size, the total traversal distance is $O(\log x + \log \frac{x}{2} + \ldots + 1) = O(\sum_{i=1}^{\log{x}} \log{\frac{x}{2^i}})\le O(\log^2 x)$, which is much less than our base cost of $O(x)$. The depth of each tree is $O(\log x)$ so the traversals can occur in $O(\log x)$ \depth{}. 

Thus we conclude Lemma \ref{lem:bag}.
\subsection{Pannier}

We briefly give the pseudocode for extracting elements (\textsc{batchPop}) from a pannier.

\begin{algorithm}
\tcp{Pop $x$ elements from the pannier $Q$}
\function{\textsc{batchPop}$(Q,x)$}{
\If{$x \le |Q_F|$}{
\Return \textsc{batchPop}$(Q_F,x)$ \;

}
\Else {
    $S_1 \gets \textsc{batchPop}(Q_F,|Q_F|)$ \;
    $Q_F \gets Q_B$ \;
    $S_2 \gets \textsc{batchPop}(Q_F,x-|S_1|)$ \;
    \Return $S_1 \cup S_2$ \;

}

}
\end{algorithm}

%% file: appendix_skyline.tex
\section{Maintaining Skylines\label{sec:appendix_skyline}}

In this section, we show that we can maintain skylines efficiently. Skylines were introduced in Section \ref{sec:skyline}.
Tables~\ref{tab:notation} and \ref{tab:skyline_notation} summarize the relevant notation used throughout this section. 

In order to know which vertices have high out-degree, throughout our algorithms, we will store the vertices in a roughly sorted list (RSL) of rounding $c'$, where the value of a vertex is its out-degree. The choice of $c'$ will be determined by the orientation quality we seek.  

\myparagraph{Finding the demands} Recall that the demands for a skyline are the number of edges we need to take from each vertex. To find a demands for a skyline, note that the skyline contribution in each strata of the RSL can be split into the contribution from old (higher) vertices (which will always be $c'$ per vertex) and the new (current bag) vertices, as demonstrated by Figure \ref{fig:skylineNewOld}. We sequentially prefix sum ($O(M)$ \depth{}) on the old and new strata sizes to find the cutoff. Because we can choose an arbitrary number edges from the last strata, we opt to take $c'$ edges from as many vertices as possible, starting with the old vertices. If we need to take any edges from new vertices of this strata, we use a prefix sum on the edge counts in this bucket to determine which vertices to take edges from. See Algorithm \ref{alg:approx_skyline} for the pseudocode for finding the demands for a skyline. 

\begin{table}[h]
\begin{center}
\begin{tabular}{c|c}
Symbol & Meaning \\
\hline
$\Delta$ & maximum vertex degree in the graph \\
$s_T(v)$ & $d^+(v) - T$ (distance from threshold $T$) \\ 
$n_T$ & $|\{v \in V : d^+(v) \geq T\}|$ (number of vertices above threshold $T$)  \\ 
$C_k$ & $\sum_{j=T+1}^\Delta n_j$ (number of edges strictly above threshold $T$) \\ 
$C_k'$ & $C_k$ after a skyline flip, when ambiguous \\ 
$Q_v$ & set of out edges of vertex $v$, in the pannier order \\ 
$c'$ & the rounding of the skyline \\
$T$ & typically, the threshold of a skyline  \\
\end{tabular} 
\end{center} 
\caption{Skyline-related notation. See Table \ref{tab:notation} for additional general notation.}\label{tab:skyline_notation} 
\end{table}

\begin{algorithm}
\label{alg:skyline-approx-threshold}
\tcp{$P$: the vertex RSL}
\tcp{$c'$: rounding of $P$}
\tcp{$M$: budget of $P$}
\tcp{$x$: desired skyline size}
\tcp{Output: the skyline threshold $T$}
\function{\textsc{getThreshold}$(P,c',M,x)$}{
//$l$ is number of vertices seen, $s$ is the running skyline sum \;
     $s \gets 0, l \gets 0, j \gets 0$ \;
    \For{$ \{i \in [M+1,M,\ldots,0 : |P[i]| > 0 ]$ } {
        \If{$s=0$} {$j \gets i$} 
        $s \gets s + l c' (j-i)$ \;
        $j \gets i$ \;
        $l \gets l + |P[i]|$ \;
        
        $Y \gets peek(P[i],\min(|P[i]|,x))$ \;
        $s \gets s + \sum_{v \in Y} (d^+(v) - \min(Mc',r(d^+(v)-1,c')))$ \;
        \If{$s \ge x$}{\Return $(i-1)c'$} }  } \end{algorithm}

\begin{algorithm}
\label{alg:skyline-approx-oldDemands}
\tcp{$P,c',M,x$: same as above} 
\tcp{$T$: threshold for desired skyline}
\tcp{\textsc{initial}: number of edges taken conservatively so far from high vertices}
\tcp{\textsc{highVertices}: the vertices from which we will take our demanded edges}
\tcp{Output: the demands for the skyline}
\function{\textsc{oldDemands}$(P,c',M,x,T,\textsc{initial},\textsc{highVertices})$}{
     left $\gets x-\textsc{initial}$ //number of fill in needed in final level \;
      \textsc{numTake} $\gets \lfloor \frac{left}{c'} \rfloor$ //number of vertices needed to give this fill in \;
      \textsc{lastRem} $\gets \textrm{left} \mod c'$ //remainder to take from last vertex \;
      \parfor{$v \in \textsc{highVertices}$}{
        //demand strictly above last bucket \;
        \textsc{demand}$(v) \gets d^+(v)-T - c'$ \;

      }

      \parfor{$v \in \textsc{highVertices}[1:\textsc{numTake}] $} 
      { 
        //demand in range of last bucket \;
        
        \textsc{demand}$(v) \gets \textsc{demand}(v) + c'$ \;
      
      }
      \If{$\textsc{lastRem} > 0$} {
      \textsc{demand}$(\textsc{highVertices}[\textsc{numTake}+1]) \gets \textsc{demand}$(\textsc{highVertices}[\textsc{numTake}+1]) + \textsc{lastRem} \; }
      \Return \textsc{demand} \;
    
}
\end{algorithm}

\begin{algorithm}
    \label{alg:skyline-approx-newDemands}
\tcp{$x,P,c',M,T,\textsc{highVertices}$: same as above} 
\tcp{\textsc{collect}: number of out-edges of vertices in \textsc{highVertices} counted so far}
\tcp{Output: the demands for the skyline}
\function{\textsc{newDemands}$(P,c',M,x,T,\textsc{collect},\textsc{highVertices})$}{
  $\textsc{left} \gets x-\textsc{collect}$ \;

  \parfor{$v \in \textsc{highVertices}$}{
        //demands for vertices with out-degree strictly above last bucket \;
        \textsc{demand}$(v) \gets d^+(v)-T$ \;
      }
      //$A$ is vertices with out-degree in range $(T,T+c']$ \;
      $A \gets P[\frac{T}{c'}+1]$ \;
      $S \gets \textsc{prefixSum}(\{d^+(x) - T |  x \in A\})$ \;
      $j \gets \textsc{binarySearch}(S,\textsc{left})$ //first index where prefix sum greater or equal to left \;

       \parfor{$v \in A[1:j-1]$}{
        \textsc{demand}$(v) \gets d^+(v)-T$ \;
      }
      \textsc{demand}$(A[j]) \gets \textsc{left}-S[j-1]$ \;
      \Return \textsc{demand} \;

}
\end{algorithm}

\begin{algorithm}
\label{alg:skyline-approx-demands}

\caption{Algorithm for finding an approximate skyline of rounding $c'$ and max bucket index $M$ and size $x$ using roughly sorted list}
\label{alg:approx_skyline}
\tcp{$x,P,c',M$: same as above} 
\tcp{Output: the demands for the skyline}
\function{\textsc{getDemands}$(P,c',M,x)$}{

     $T \gets \textsc{getThreshold}(P,c',M,x)$ \;
     $\textsc{highBag} \gets \frac{T}{c'}+2$ \;
     $\textsc{highVertices} \gets \{v : v \in P[i], i \in [\textsc{highBag},M+1] \}$ \;
     $ \textsc{initial} \gets \textsc{sum}( \{ d^+(v)-T-c' : v \in \textsc{highVertices} \} )$ \;
     $\textsc{collect} \gets \textsc{sum}( \{ d^+(v)-T : v \in \textsc{highVertices} \} )$ \;
     \If{$\textsc{collect} \ge x$} { 
       \Return \textsc{oldDemands}$(P,c',M,x,T,\textsc{initial},\textsc{highVertices})$

     }

     \Else {
       \Return \textsc{newDemands}$(P,c',M,x,T,\textsc{collect},\textsc{highVertices})$
     }
}

\end{algorithm}

\begin{restatable}{lem}{skylineDemands}
Given an out-degree orientation and a desired skyline size $x$, if $\sum_{v \in \text{highBag}} (d^+(v)-Mc') \le x$ (the total weight of the high RSL bag is upper bounded by $x$), we can find demands that give a skyline of size $x$ in $O(x)$ work and $O(\log n\log x)$ \depth{}.
\label{lem:demands}
\end{restatable}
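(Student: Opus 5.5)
We prove the lemma by analyzing \textsc{getDemands} (Algorithm~\ref{alg:approx_skyline}) in three stages: \textsc{getThreshold}, the computation of \textsc{highVertices}, \textsc{initial}, and \textsc{collect}, and then \textsc{oldDemands} or \textsc{newDemands}. In each stage we argue two things. First, the output satisfies Definition~\ref{def:approx-skyline-by-size}: the returned $T$ satisfies $c' \mid T$, $C_T \ge x$, and $C_{T+c'} < x$, and every demand has the form $\max(d^+(v)-T-c',0)+\rho_v$ with $0\le\rho_v\le\min(c',d^+(v)-T)$ and $\sum_v \rho_v = x - C_{T+c'}$. Second, the work is $O(x)$ and the depth is $O(\log n \log x)$.

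\textbf{Correctness.} The loop in \textsc{getThreshold} scans the nonempty RSL bags from $M+1$ down to $0$, skipping empty bags via the bit-vector of Lemma~\ref{lem:rsl}. The invariant we maintain is that, after processing bag $i$, $s$ equals the number of out-edges strictly above level $(i-1)c'$. Every vertex seen earlier contributes $c'$ per skipped level, which is the $l c'(j-i)$ term. A vertex newly seen in bag $i$ contributes $d^+(v)$ minus its rounded-down level. For the high bag, the $\min(Mc',\cdot)$ term caps the contribution at $d^+(v)-Mc'$. Since the loop returns at the first $i$ with $s\ge x$, and the previous iterate had $s<x$, we get $C_{T}\ge x > C_{T+c'}$ for $T=(i-1)c'$. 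The RSL guarantee lets rounding ties be broken arbitrarily, and the hypothesis bounds the high bag. In \textsc{oldDemands} (the case $\textsc{collect}\ge x$), every high vertex gets $d^+(v)-T-c'$ plus at most $c'$, and the extras sum to $x-\textsc{initial} = x - C_{T+c'}$. We still need $\textsc{initial}\le x$, which should follow from $C_{T+c'}<x$. In \textsc{newDemands}, every high vertex gets its full $d^+(v)-T$ (so $\rho_v=c'$), and the bag-$(T/c'+1)$ vertices are filled by prefix sum and binary search. Each of these has $d^+(v)-T\le c'$, so again $\rho_v\le\min(c',d^+(v)-T)$.

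\textbf{Work.} This is the main obstacle. The loop may visit many bags and vertices, but each vertex visited contributes at least one edge to $s$, except those in the final bag. The high bag's total weight is at most $x$ by hypothesis. The returned threshold level ensures $C_{T+c'}<x$, so all vertices in bags strictly above the final one together contribute fewer than $x$ edges and therefore number fewer than $x$. The $\textsc{peek}(P[i],\min(|P[i]|,x))$ caps the work in the final bag at $x$. \textsc{newDemands} must not scan all of $A=P[T/c'+1]$. I would first prefix-sum only a peek of $\min(|A|,x)$ vertices, which suffices because each contributes $\ge1$ edge. Computing \textsc{highVertices} costs $O(x)$ by the same counting. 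Summing the peeks, prefix sums, and parallel-for loops gives $O(x)$.

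\textbf{Depth.} The loop in \textsc{getThreshold} has at most $O(\min(M,x))=O(\log n)$ iterations over nonempty bags, since $M\in O(\log n)$. Each iteration performs a peek and a parallel sum of depth $O(\log x)$ (Lemma~\ref{lem:bag}), giving $O(\log n\log x)$. The remaining prefix sums, binary search, and parallel-for loops add only $O(\log x)$ depth.
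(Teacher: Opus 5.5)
Your decomposition and your work and depth accounting match the paper's proof. Every peeked vertex adds at least one edge to $s$, so the bags before the last one hold fewer than $x$ vertices, and the last peek is capped at $x$. There are $O(M)$ sequential rounds of $O(\log x)$-depth peeks and sums. Your remark that \textsc{newDemands} should prefix-sum only a peek of $\min(|A|,x)$ vertices of $A$ fixes a cost the paper's proof passes over. The genuine gap is in the correctness of the returned threshold.

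You infer $C_{T+c'}<x$ for $T=(i-1)c'$ from ``the previous iterate had $s<x$.'' That inference holds only when no empty bags were skipped, yet you explicitly skip them with the bit-vector. Let $j$ be the previously processed nonempty bag, with bags $i+1,\dots,j-1$ empty. After bag $j$ you have $s=C_{(j-1)c'}<x$, but $C_{T+c'}=C_{ic'}=s+l c'(j-1-i)$, which can already be at least $x$.

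A concrete failure: take $c'=1$, one vertex of out-degree $100$, all other vertices of out-degree $1$, and $x=50$. The loop returns $T=0$, but the correct threshold is $50$. Then $\textsc{initial}=99>x$, and \textsc{oldDemands} assigns demand $99$ for a skyline of size $50$. So your later claim that $\textsc{initial}\le x$ ``should follow'' does not hold either. The pseudocode of \textsc{getThreshold} has this defect as written, and the paper's appeal to the RSL's additive-error property does not address it.

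The repair costs $O(1)$ per iteration. Before incorporating bag $i$, compute $k:=j-1-\lceil (x-s)/(lc')\rceil$, and if $k\ge i$ return $T=kc'$. This choice of $k$ gives $C_{kc'}=s+lc'\lceil (x-s)/(lc')\rceil\ge x$ and $C_{(k+1)c'}<x$. Bag $T/c'+1=k+1$ is empty, so \textsc{highVertices} is exactly the set of previously seen vertices. Hence $\textsc{initial}=C_{T+c'}<x\le C_T=\textsc{collect}$, and \textsc{oldDemands} yields a valid skyline. With this check added, the rest of your argument, including the work and depth bounds, goes through unchanged.
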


\begin{proof}[Proof of Lemma \ref{lem:demands}]
Consider finding the demands of a skyline of size $x$, according to Algorithm \ref{alg:skyline-approx-demands}. 

First, consider \textsc{getThreshold}. 
Suppose that in the $i^{th}$ round, we peek $a_i$ vertices. Note that we can assume $a_i > 0$, because we can skip nonempty bags in the RSL with standard bit tricks, as described in Section \ref{sec:prelims}. 
Note that $\sum_i a_i = x$, so we have total work $O(x)$. Because the
  span of peeking $a_i$ vertices is $\log a_i$, the total \depth{} is $\sum_{i=1}^r \log a_i$, where $r$ is the number of nonzero rounds. Note that $r \le \min(x,M)$, where $M$ is the budget of the underlying RSL. By Jensen's inequality, we have that $\sum_{i=1}^r \log a_i \le r \log(2+\frac{x}{r}) \le O(M \log x)$.

Since \textsc{highVertices} is of size at most $x$, we have $O(\log x)$ \depth{} and $O(x)$ work for the remaining functions. 
Thus, the overall work is $O(x)$ and the \depth{} is $O(M \log x) \le O(\log n \log x)$.

By the additive error property of the RSL, the demands collected actually correspond to a skyline, so long as all weight is taken from the high bag (because the vertices in the high bag can have vastly different out-degree and are unsorted). Since  $\sum_{v \in \text{highBag}} (d^+(v)-Mc') \le x$, we do take all weight from the high bag. Therefore, the demands we have do correspond to a skyline.
\end{proof}

We will need the following lemma in order to complete our analysis. Because our algorithms will behave well if skyline flips are correct, and skyline flips are correct if the algorithm behaves well, informally it follows that our algorithms work as intended. The following lemma formalizes this intuition.

\begin{lemma} \label{lem:skylinesCorrectImpliesCorrect}
    Consider a low out-degree orientation algorithm $A$ that uses \textsc{getDemands} to request skylines. 
    
    We call a skyline extraction correct if the demands returned give a skyline (per the skyline definition). Suppose that if \textsc{getDemands} always correctly extracts skylines, then $A$ maintains a maximum out-degree of $D \le Mc' - 3c$. Further suppose that a single invocation $I_A$ of $A$ only will request skylines of size exactly $y := y(I_A)$.
    
    Then $A$ does maintain a maximum out-degree of $D$.
\end{lemma}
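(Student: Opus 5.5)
We argue by induction over the sequence of skyline requests. The claim to carry along is: every skyline extraction made so far was correct, and every orientation produced so far has maximum out-degree at most $D$. The point is to break the apparent circularity. Correctness of skylines is used to bound the degree, and the degree bound is used to certify correctness of skylines. We do this by always applying the hypothesis ``if \textsc{getDemands} is always correct then the degree stays at most $D$'' to a prefix of the execution.

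Concretely, suppose all skyline requests before the current one were correct. Consider the alternative execution that agrees with the real one up to this point and afterwards uses a perfectly correct skyline oracle. By the hypothesis, every state of that alternative execution has maximum out-degree at most $D$. In particular this holds for the real state up to the current request, since the two executions coincide there. This includes the state at the top-level call boundary immediately before the invocation $I_A$ that issues the request.

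The central step is to show that, at the moment of the current request, the precondition of Lemma \ref{lem:demands} holds:
\[
\sum_{v\in \text{highBag}} (d^+(v)-Mc') \le y .
\]
Once this is established, Lemma \ref{lem:demands} gives a correct skyline, and the induction advances.

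To prove the precondition, bound the out-degrees inside the invocation. At the start of $I_A$ every vertex has $d^+(v)\le D\le Mc'-3c$. Within $I_A$, a vertex's out-degree can grow only through two events. The first is insertion of the batch, which is $3c$-oriented; the recursive batch $B'$ is likewise re-oriented by \textsc{process}. This contributes at most $3c$ per vertex. The second is flips of earlier skylines of size $y$, which were correct by induction. A vertex $v$ gains at most one edge per flipped skyline edge pointing into it. We therefore split the high-bag excess as follows: edges contributed by the current batch's $3c$-orientation cannot push anyone past $Mc'$, since $D+3c\le Mc'$. Any excess above $Mc'$ must then come from edges flipped in from a single previous skyline, whose size is exactly $y$. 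This bounds the total excess by $y$.

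Intermediate states need care here, because the hypothesis only certifies the end states of calls. I would therefore apply the hypothesis to truncated executions. Alternatively, I would use the fact that, after each correct flip, the threshold stability of Lemma \ref{lem:skyline-thresholds-min} keeps degrees at most $T+2c'+2$, which remains within $D$.

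The main obstacle I expect is exactly this accounting: showing that the excess weight in the high bag never exceeds $y$ in the states strictly inside an invocation, which is where the requirement that all requests in $I_A$ have the common size $y$ is used. If the direct argument fails, the fallback is to argue that a skyline of size $y$ flipped into vertices of degree at most $D$ can only create total excess $\sum_v \max(0,d^+(v)-Mc')\le y$, because each flipped edge adds one unit to one vertex.
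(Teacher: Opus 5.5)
Your outer structure matches the paper's. You induct over calls and apply the hypothesis only at call boundaries; your alternative-execution device for this is fine. That gives $d^+\le D$ at the start of $I_A$. The $3c$-oriented insertion then leaves the high bag empty because $D+3c\le Mc'$, and you certify each request via Lemma~\ref{lem:demands}.

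The gap is the one step that carries the lemma: why the high-bag mass $C_{Mc'}=\sum_v \max(0,d^+(v)-Mc')$ stays at most $y$ across the several flips inside $I_A$. You assert that any excess ``must come from edges flipped in from a single previous skyline,'' but never say why earlier flips' contributions do not accumulate, and none of your repairs works:
\begin{itemize}
\item The hypothesis cannot be applied to truncated executions. It bounds only outputs of calls, and intermediate degrees really can exceed $D$: right after insertion they can reach $D+3c$, and a flip can route many of its $y$ edges into one vertex. So your claim that ``every state'' of the alternative execution has degree at most $D$ holds only at call boundaries.
\item The bound $T+2c'+2$ from Lemma~\ref{lem:wcHT} holds only after the final skyline has been \emph{removed}. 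After a flip, a vertex below the threshold can gain up to $y$ edges, and Lemma~\ref{lem:skyline-thresholds-min} controls thresholds, not degrees.
\item Your fallback presumes the receiving vertices have degree at most $D$ at flip time, which fails from the second flip on.
\end{itemize}

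The missing idea is that a correct size-$y$ skyline requested while $C_{Mc'}\le y$ contains all of the excess. There are two cases.
\begin{itemize}
\item If $C_{Mc'}<y$, the induced threshold satisfies $C_T\ge y>C_{Mc'}$, so $T\le Mc'-c'$. The skyline then takes at least $d^+(v)-T-c'\ge d^+(v)-Mc'$ edges from every $v$.
\item If $C_{Mc'}=y$, then $T=Mc'$. Each vertex contributes at most $\max(0,d^+(v)-T)$, while the total is $y=C_{Mc'}$, so each contributes exactly $d^+(v)-Mc'$.
\end{itemize}
Either way, removing the skyline leaves every vertex at degree at most $Mc'$. Reinserting its $y$ edges (flipped, or $3c$-oriented in the insufficient-height branch) raises $C_{Mc'}$ to at most $y$, since each edge adds one unit to one vertex.

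Induct over the requests inside $I_A$, starting from $C_{Mc'}=0$ after the batch insertion. This gives the precondition of Lemma~\ref{lem:demands} for every request, including the final removed skyline; it is what the paper means by ``the next skyline flip will include all of this mass.'' Only then do you invoke the hypothesis to get $d^+\le D$ at the end of $I_A$ and close the outer induction. The common size $y$ is used exactly here: a later request of smaller size would not be guaranteed to absorb the up to $y$ units left by the previous flip.
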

\begin{proof}

Before the start of $I_A$, there is no mass in the high bag, because $D \le Mc' - 3c$. After statically orienting and inserting the insert batch, there is still no mass in the high bag, because of the $3c$ buffer between $D$ and $Mc'$. A single skyline flip can place at most $y$ mass inside the high bag. Since all skyline flips are of size $y$, the next skyline flip will include all of this mass, preventing mass from building up in the high bag. Therefore, all skyline flips during $I_A$ will be correct, and so at the end of $I_A$, the high bag returns to empty. Therefore by induction (on the update sequence), $A$ does maintain a maximum out-degree of $D$.
    
\end{proof}

\begin{figure}
\begin{minipage}{.4\textwidth}
\centering
\includegraphics[width=\textwidth]{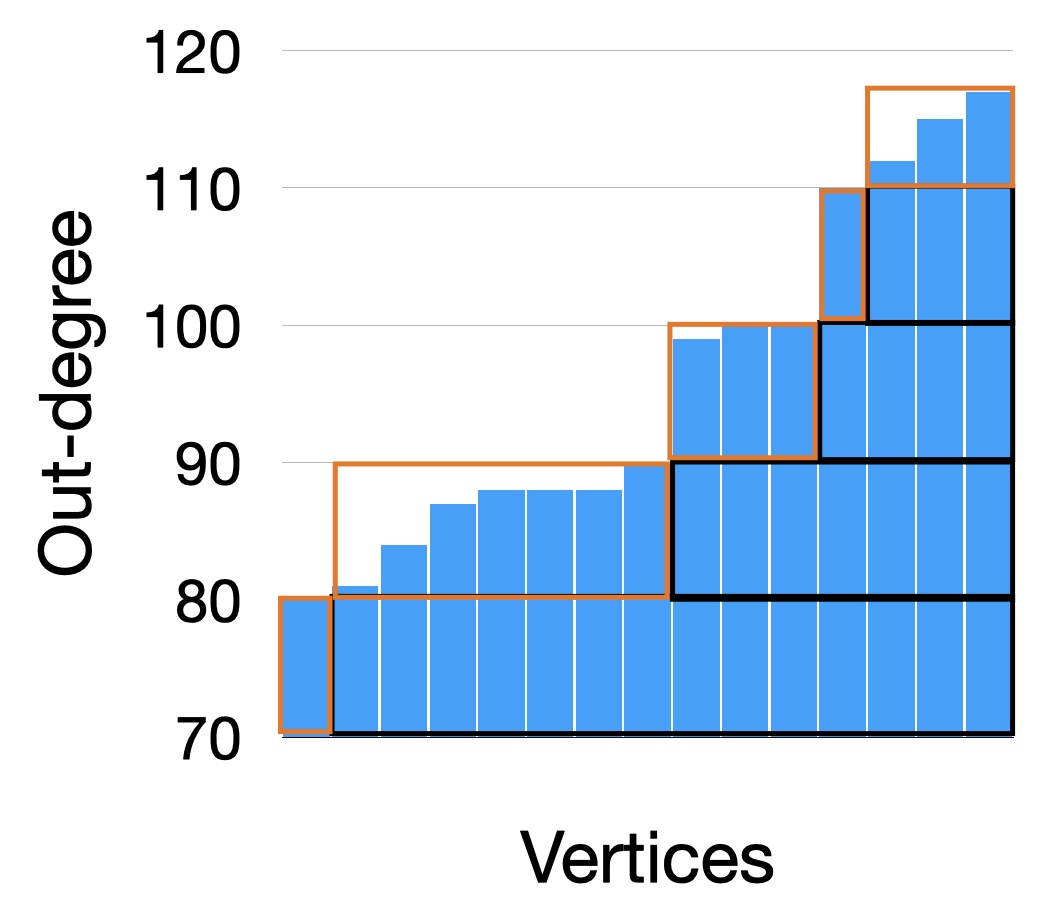}    
\end{minipage}
\begin{minipage}{.55\textwidth}
\caption{New and old vertices from strata perspective of a skyline with rounding 10. Note that the complete skyline for this graph is not shown, only a partial view. The black boxes denote out-degree being added in this strata from vertices existing at higher strata. The orange boxes denote out-degree being added from vertices who first exist in this strata. }
\end{minipage}
    \label{fig:skylineNewOld}
    
\end{figure}

\myparagraph{Flipping a Skyline} To actually flip a skyline, we wrap finding the vertex demands with deletes and inserts from the underlying panniers holding the out-edges. Note that using panniers for the out-edges and a roughly sorted list for the vertex out-degrees is critical in saving a $O(\log n)$ factor in efficiency. We want to support two ways to flip vertices from a size parameter $x$. In the first way, we find a skyline of size $x$ and flip all edges in the skyline. In the second way, we find a skyline of size $x$, statically orient the skyline, and flip any edges which changed orientation during the static orientation procedure. Note that an edge can keep a field giving its old orientation, and so we can tell if an edge has been flipped or not. Upon moving the edge to its proper bag, we reset this field for future use.

\begin{lemma}
Finding and flipping a skyline of size $x$ takes $O(x)$ expected work and $O(\log n \log x)$ \depth{}. The work bound is also $O(x \log x)$ work deterministically.
\label{lem:skylineFlipCost}
\end{lemma}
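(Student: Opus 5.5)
The plan is to decompose a skyline flip into three phases and bound each separately: (i) computing the demands, (ii) extracting the demanded out-edges from the vertices' panniers, and (iii) reinserting the flipped edges and updating the roughly sorted list. Phase (i) is exactly Lemma~\ref{lem:demands}, which gives a demand vector summing to $x$ in $O(x)$ work and $O(\log n \log x)$ \depth{}. Its hypothesis is that the high bag carries total excess weight at most $x$. I will assume this here, as the algorithm does, since Lemma~\ref{lem:skylinesCorrectImpliesCorrect} is what discharges it. The output is a list of at most $x$ pairs $(v,\textsc{demand}(v))$ with $\sum_v \textsc{demand}(v)=x$.

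For phase (ii), each vertex with positive demand calls \textsc{batchPop} on its pannier $Q_v$ with its demand, and all vertices run in parallel. By Lemma~\ref{lem:tsb} (via Lemma~\ref{lem:bag}), popping $d_v$ edges costs $O(d_v)$ work and $O(\log d_v)$ \depth{}. Summing over vertices gives $O(x)$ work, and the parallel-for adds only $O(\log x)$ \depth{}. The pop takes edges from the front bag before the back bag, matching the prefix $Q_v[1:\cdot]$ in Definitions~\ref{def:skyline-by-threshold} and~\ref{def:approx-skyline-by-size}. The popped edges are then concatenated into a single array of size $x$ by a prefix sum over the demands.

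For phase (iii), I reverse each edge's orientation and group the edges by their new tail using a semisort. By Lemma~\ref{lem:sort}, this costs $O(x)$ expected work, $O(x\log x)$ deterministic work, and $O(\log x)$ \depth{}. Each new tail $u$ then does a batch insert of its group into the back bag of $Q_u$, at $O(1)$ work per edge by Lemma~\ref{lem:bag}. At most $2x$ vertices change out-degree: the old tails and the new tails. Their out-degree counters are updated, and a batch update is issued to the RSL. By Lemma~\ref{lem:rsl}, this costs $O(x)$ expected work, $O(x\log x)$ deterministic work, and $O(\log x)$ \depth{}. The per-vertex lists of incident edges are unchanged, since only orientations change.

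Summing the three phases gives $O(x)$ expected work, $O(x\log x)$ deterministic work (the only randomness is in the semisorts, per Lemma~\ref{lem:sort}), and $O(\log n\log x)$ \depth{}, which is dominated by phase (i). I expect the main subtlety to be bookkeeping rather than analysis: I must check that each step touches only $O(x)$ objects. In particular, the RSL's bitmask update must be charged to the at most $2x$ changed vertices, and the threshold search in phase (i) must skip empty bags. Both are already guaranteed by Lemmas~\ref{lem:rsl} and~\ref{lem:demands}.
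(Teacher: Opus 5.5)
Your proposal is correct and follows the paper's proof essentially step for step. Both compute the demands via Lemma~\ref{lem:demands}, apply \textsc{batchPop} to the panniers, semisort the flipped edges by new tail, batch-insert them into the new owners' bags, and update the RSL on the $O(x)$ changed vertices; in both, the deterministic $O(x\log x)$ bound comes solely from the semisort fallback of Lemma~\ref{lem:sort}. Your explicit note that the high-bag hypothesis of Lemma~\ref{lem:demands} is supplied by Lemma~\ref{lem:skylinesCorrectImpliesCorrect} makes precise a point the paper leaves implicit, but it does not change the argument.
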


\begin{proof}[Proof of Lemma \ref{lem:skylineFlipCost}]

We must find the vertex demands, remove the edges, flip the edges, then reinsert to the proper panniers. Additionally, the roughly sorted list of vertex out-degree must be updated. 

\begin{itemize}
    \item Getting the vertex demands is $O(x)$ work and $O(\log n \log x )$ \depth{} by Lemma \ref{lem:demands}. 
    \item Doing \textsc{batchPop} from a pannier, to extract the edges, is $O(x)$ work and $O(\log x)$ \depth{}, as there are $x$ total edges, by Lemma \ref{lem:bag}. 
    \item The edges must then be flipped and semisorted, for $O(x)$ expected work and $O(\log x)$ \depth{} \whp{}.
    \item Giving the edges to their new owners takes $O(x)$ work and $O(\log x)$ \depth{}  by Lemma \ref{lem:bag}. 
    \item There are $O(x)$ vertices who may change position in the roughly sorted list. Updating these values takes $O(x)$ expected work and $O(\log x)$ \depth{}, by Lemma \ref{lem:rsl}. 
\end{itemize}

Overall, we had $O(x)$ expected work and $O(\log n \log x)$ \depth{} \whp{}. By \ref{lem:sort}, the cost for semisorting is deterministically $O(x \log x)$, resulting in $O(x \log x)$ deterministic work overall.
\end{proof}

\begin{lemma} 
Finding a skyline of size $x$, submitting the skyline to static orientation, and flipping the edges that change orientation after static orientation, takes $O(x)$ expected work, $O(x \log x)$ deterministic work, and $O(\log n \log x)$ \depth{}. 
 
\label{lem:skylineFlipStaticCost}
\end{lemma}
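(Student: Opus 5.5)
The plan is to mirror the proof of Lemma~\ref{lem:skylineFlipCost} step by step, with one static orientation call inserted in the middle. We decompose the procedure into five phases: (i) computing the demands of a skyline $S$ of size $x$; (ii) extracting the demanded edges from the panniers via \textsc{batchPop}; (iii) running Algorithm~\ref{alg:staticorient} on the bag $S$; (iv) identifying the edges whose orientation changed, and moving only those into the back bags of their new tails; (v) updating the roughly sorted list for all vertices whose out-degree changed. Each phase is bounded separately in work and depth, and the bounds are summed.

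\textbf{Phases (i) and (ii).} These are exactly as in Lemma~\ref{lem:skylineFlipCost}. Lemma~\ref{lem:demands} gives $O(x)$ work and $O(\log n \log x)$ depth. The pannier pops cost $O(x)$ work and $O(\log x)$ depth by Lemma~\ref{lem:tsb}.

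\textbf{Phase (iii).} This is handled by Lemma~\ref{lem:static-orient} applied to $|E| = x$ edges with constant $\epsilon$, so that the output is a $3c$-orientation. Two facts are needed here.
\begin{itemize}
\item The arboricity of the subgraph induced by $S$ is at most $c$, since it is a subgraph of the current graph.
\item Its cost is $O(x)$ expected work, $O(x\log x)$ deterministic work, and $O(\log^2 x)$ depth.
\end{itemize}
Since $\log x \le O(\log n)$, the depth $O(\log^2 x)$ is within $O(\log n \log x)$.

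\textbf{Phase (iv).} Each edge records its old orientation, as noted before Lemma~\ref{lem:skylineFlipCost}. A single map and filter then separate the unchanged edges from the flipped ones, in $O(x)$ work and $O(\log x)$ depth.
\begin{itemize}
\item Unchanged edges are reinserted into their original owner's pannier. Removing an edge from $F_v$ and reinserting it into $F_v$ does not change potential, although that is irrelevant for the cost.
\item Flipped edges are semisorted by their new tail and batch-inserted into the back bags.
\end{itemize}
By Lemma~\ref{lem:sort} the semisort costs $O(x)$ expected work, $O(x\log x)$ deterministic work, and $O(\log x)$ depth. The bag insertions cost $O(x)$ work and $O(\log x)$ depth by Lemma~\ref{lem:bag}.

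\textbf{Phase (v).} At most $2x$ vertices change out-degree, so Lemma~\ref{lem:rsl} gives $O(x)$ expected work, $O(x\log x)$ deterministic work, and $O(\log x)$ depth.

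The only point needing care is that every randomized component is a semisort on at most $O(x)$ items, including the internal semisorts of the static orientation. This is what ensures that the deterministic bound loses only a single $\log x$ factor; otherwise the argument is routine bookkeeping. Summing the five phases gives the claimed $O(x)$ expected work, $O(x\log x)$ deterministic work, and $O(\log n\log x)$ depth.
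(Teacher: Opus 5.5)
Your proposal is correct and follows the paper's own argument. The paper simply adds the $O(x)$ expected work, $O(x\log x)$ deterministic work and $O(\log^2 x)$ depth of static orientation (Lemma~\ref{lem:static-orient}) to the costs already established for a plain skyline flip in Lemma~\ref{lem:skylineFlipCost}; your phase-by-phase breakdown just makes explicit what the paper leaves implicit, namely $\log^2 x = O(\log n \log x)$, constant $\epsilon$ giving a $3c$-orientation, and all randomness coming from semisorts on $O(x)$ items.
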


\begin{proof}[Proof of Lemma \ref{lem:skylineFlipStaticCost}]
   The cost static orientation itself is $O(x)$ expected work and $O(\log^2 x)$ \depth{}. The other costs incurred are the same as for a standard skyline flip,  so we achieve the same bounds.
\end{proof}

\subsection{Thresholds of skylines}

In this subsection, we show that when flipping many skylines of (crucially) the same size, the induced threshold of later skylines is at most slightly larger than that of earlier skylines. 

\begin{lemma}
    Let $S$ be a skyline with size $x$, rounding $c'$, and induced threshold $T$. Observe that the amount of weight above $L$ for $L \le T$ does not increase after flipping $S$. \label{lem:skylineWeightsDown}
\end{lemma}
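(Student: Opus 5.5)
Let $w(L) := \sum_v \max(0, d^+(v) - L)$ denote the weight above $L$ in the current orientation, and write $d^+$ and $d'^+$ for out-degrees before and after flipping $S$. The plan is to compare $w(L)$ before and after the flip by splitting vertices according to whether they lose edges (tails of edges in $S$) or gain edges (heads of edges in $S$). The argument will mirror Lemma~\ref{lem:weightsDown} for the counter game: the skyline flip behaves like a legal $T$-move with respect to out-degrees, and then $L \le T$ gives monotonicity.

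\emph{Step 1: losing vertices end at or above $T$.} By Definition~\ref{def:approx-skyline-by-size}, if $v$ contributes any edge to $S$ then its demand is at most $d^+(v) - T$. This is because $\max(d^+(v)-T-c',0)+\rho_v \le d^+(v)-T$, using $\rho_v \le d^+(v)-T$ and $\rho_v\le c'$. Hence $d'^+(v) \ge d^+(v) - |S\cap\delta^+(v)| \ge T \ge L$, even after accounting for any edges $v$ might also gain. Thus no vertex goes from above $L$ to below $L$ through losses. Each edge of $S$ removed from such a $v$ was counted in $w(L)$, since $d^+(v) \ge T \ge L$ and the removed edges sit above level $T$. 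Therefore the net decrease in $v$'s contribution equals exactly the number of its edges in $S$, minus whatever it gains.

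\emph{Step 2: charge gains against losses.} Each edge of $S$ reduces its tail's contribution to $w(L)$ by exactly one, and increases its head's contribution by at most one. The increase is one if the head ends above $L$ and zero otherwise, since $\max(0,\cdot - L)$ is 1-Lipschitz and nondecreasing. Summing over the $x$ edges, the total change is at most $-x + x = 0$. The one subtlety is a vertex that both loses and gains. Here I would argue per vertex that the change in $\max(0,d^+(v)-L)$ is at most (gains) $-$ (losses). This holds because $d^+(v)\ge L$ and $d'^+(v)\ge L$ for such a vertex, so the function is linear over the relevant range.

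\emph{Main obstacle.} The only delicate point is Step 1: confirming that the demand never exceeds $d^+(v)-T$, including for vertices in the final stratum $(T,T+c']$ and for the RSL high bag. I would cite the definition directly, assuming the skyline is correct as in Lemma~\ref{lem:skylinesCorrectImpliesCorrect}. With that in hand, the rest is a per-vertex telescoping argument identical to the proof of Lemma~\ref{lem:weightsDown}.
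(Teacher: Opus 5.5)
Your proof is correct and uses the same per-edge accounting as the paper. Each flipped edge removes exactly one unit of weight above $L$ from its tail, which stays at or above $T \ge L$, and adds at most one unit to its head, so the net change is nonpositive; your per-vertex handling of vertices that both lose and gain, and your check that the demand never exceeds $d^+(v)-T$, make explicit what the paper's one-line argument leaves implicit.
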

\begin{proof}
    When we flip an edge, we either decrease the weight above $L$ by one (if the edge now points from a low degree vertex) or leave the weight unchanged (if the edge now points from a high degree vertex). Thus, flipping a skyline cannot increase the amount of weight above its own threshold. 
\end{proof}

Note that the above lemma corresponds to the counter game Lemma \ref{lem:weightsDown}. Note that this is not necessarily true for $L > T$. For $L > T$, we need a weaker lemma. 

\begin{lemma}
    Suppose we have a skyline $S$ with size $x$, induced threshold $T$, and rounding $c'$. Let $C_L$ denote the weight above $L$ (for $c'|L$, $L > T$) before the flip and $C_L'$ the weight above $L$ after the flip. Then $C_{L}' \le x$. 
 
    \label{lem:skylineWeightsUp}
\end{lemma}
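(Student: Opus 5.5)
The plan is to bound $C_L'$ by charging every unit of post-flip weight above $L$ to an edge of $S$. Write $C_L' = \sum_v \max(0, d'^+(v) - L)$, where $d'^+$ denotes out-degrees after the flip. I would split the vertices into two classes, those that had out-degree above $L$ before the flip and those that did not, and show that each unit of weight above $L$ after the flip corresponds to a distinct edge of $S$.

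\emph{Vertices originally above $L$.} Since $L > T$ and $c' \mid L$, we have $L \ge T + c'$. By Definition~\ref{def:approx-skyline-by-size}, such a vertex $v$ contributes at least $\max(d^+(v) - T - c', 0) \ge d^+(v) - L$ out-edges to $S$. Flipping therefore removes at least $d^+(v) - L$ out-edges from $v$. The vertex may also gain edges, since an edge of $S$ may point into $v$ from another skyline vertex. Its final excess above $L$ is thus at most the number of $S$-edges flipped \emph{into} $v$ after its own out-edges were taken, because its original surplus above $L$ has been entirely removed.

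\emph{Vertices originally at or below $L$.} Such a vertex $v$ can only rise above $L$ by receiving flipped edges. Hence $\max(0, d'^+(v) - L)$ is at most the number of $S$-edges whose new tail is $v$.

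Summing over all vertices, $C_L'$ is at most the total number of edges of $S$ that, after the flip, point out of some vertex, and this number is exactly $|S| = x$. The key inequality is the per-vertex bound
\[
\max(0, d'^+(v) - L) \le \#\{e \in S : e \text{ now points out of } v\}.
\]
In both cases this follows from $d'^+(v) = d^+(v) - d^+_S(v) + (\text{incoming flipped edges})$ together with $d^+(v) - d^+_S(v) \le L$. For the first class, $d^+(v) - d^+_S(v) \le T + c' \le L$. For the second class, $d^+(v) - d^+_S(v) \le d^+(v) \le L$. Since each edge of $S$ gets exactly one new tail, summing the per-vertex bound gives $C_L' \le x$.

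The main subtle step is verifying that $d^+(v) - d^+_S(v) \le T + c'$ holds for every vertex $v$ with $d^+(v) > T + c'$. This needs the exact form of the demand in Definition~\ref{def:approx-skyline-by-size}: the demand is at least $d^+(v) - T - c'$ whenever that quantity is positive. It also needs $L \ge T + c'$, which comes from the divisibility condition $c' \mid L$ together with $c' \mid T$.
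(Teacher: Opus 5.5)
Your proof is correct and follows the paper's argument. Treat the flip as deleting $S$, after which no vertex has out-degree above $T+c' \le L$ because each demand is at least $\max(d^+(v)-T-c',0)$, and then reinserting the $x$ reversed edges, each of which adds at most one unit of weight above the level. The only difference is cosmetic: the paper proves $C'_{T+c'} \le x$ and then uses monotonicity $C'_L \le C'_{T+c'}$, whereas you charge per vertex directly at level $L$.
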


\begin{proof}
View the skyline flip as two steps, deletion and reinsertion. Because $T$ was the induced threshold, we have that $C_{T+c'} < x$. Therefore, all edges above $T+c'$ are in the skyline. Therefore, after deletion, the current weight above $T+c'$ is zero. Inserting $x$ edges back (in the flipped direction) increases the weight above $T+c'$ by at most $x$. Thus $C_{T+c'}' \le x$. Finally, note that $C_{L}' \le C_{T+c'}' \le x$ since higher thresholds have less weight above them.\footnote{Since $L > T$, we have $L \ge T+c'$.} 
    
\end{proof}
\begin{lemma}
    Suppose we flip skylines $S_1, S_2,\ldots,S_r$ each with rounding $c'$ and size $x$. Denote their induced thresholds by $T_1,T_2,\ldots,T_r$. Then  we have that $C_{T_1 + c'} \le x$ before the first skyline flip, and for each $1 \le i \le r$ after flipping skyline $S_i$.
       \label{lem:skylineSizeStable}
\end{lemma}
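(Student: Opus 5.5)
The plan is to handle the two parts of the statement separately, with the second part proved by induction on $i$. At each step I would apply Lemma~\ref{lem:skylineWeightsUp} or Lemma~\ref{lem:skylineWeightsDown}, depending on how $T_1+c'$ compares with the threshold $T_i$ of the skyline being flipped.

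\emph{Before the first flip.} Since $T_1$ is the threshold induced by size $x$, Definition~\ref{def:approx-skyline-by-size} gives $C_{T_1+c'} < x$ directly.

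\emph{After flipping $S_i$.} The claim is that $C_{T_1+c'}\le x$ holds after the flip. There are two cases.

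If $T_1+c' > T_i$, then since $c' \mid T_1$ and $c' \mid T_i$, the level $L := T_1+c'$ is a multiple of $c'$ strictly above $T_i$. Lemma~\ref{lem:skylineWeightsUp}, applied to $S_i$ with this $L$, gives $C'_{L}\le x$ immediately. This case does not even need the inductive hypothesis.

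If $T_1+c' \le T_i$, then $L = T_1+c'$ lies at or below the induced threshold of $S_i$. By Lemma~\ref{lem:skylineWeightsDown}, flipping $S_i$ does not increase the weight above $L$. So $C_L$ after the flip is at most $C_L$ before the flip. The weight before flipping $S_i$ is the weight after flipping $S_{i-1}$, which is at most $x$ by induction; for $i=1$ it is the initial bound $C_{T_1+c'}<x$.

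These two cases exhaust all possibilities, so the induction closes.

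The main point needing care is the second case. Lemma~\ref{lem:skylineWeightsDown} is stated for $L \le T$, and I have to make sure it applies at the boundary $L = T_i$ as well as below it. Its proof is edge-local: each flipped edge leaves a vertex above $L$ (its tail had out-degree above $T_i \ge L$ at removal time) and lands either below or above $L$. So it covers $L=T_i$. I would also verify that "weight above $L$" means $\sum_v \max(0,d^+(v)-L)$, consistent with $C_L$ in Table~\ref{tab:skyline_notation}, so that the two lemmas measure the same quantity.

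A subtle issue is the case where the tail is a vertex whose out-degree exceeds $T_i$ only by the partial $\rho_v$ amount. Even here, every removed edge is counted above $T_i$, since $\rho_v \le d^+(v)-T_i$. Hence the monotonicity still holds.
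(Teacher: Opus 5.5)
Your proof is correct and matches the paper's argument essentially step for step. The base case comes straight from the definition of the induced threshold ($C_{T_1+c'}<x$), and the inductive step splits on whether the current skyline's threshold is at least $T_1+c'$ (apply Lemma~\ref{lem:skylineWeightsDown} together with the inductive hypothesis) or below it (apply Lemma~\ref{lem:skylineWeightsUp}, which needs no hypothesis); your extra check that each vertex loses at most $d^+(v)-T_i$ edges, so the boundary case $L=T_i$ is covered, is sound but not strictly needed, since that lemma is already stated for all $L\le T$.
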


\begin{proof}

    We proceed with induction. 
    
    Base Case: Before flipping $S_1$, note that $C_{T_1+c'} < x$, otherwise $T_1+c'$ (or higher) would have been the induced threshold.

    Inductive Step: Suppose we have flipped $i$ skylines, and that currently $C_{T_1+c'} \le x$. We are about to flip $S_{i+1}$. We must show that after flipping $S_{i+1}$, that $C_{T_1+c'} \le x$ still. 
    
    Recall that $T_{i+1}$ is the induced threshold by this flip. We case on the value of $T_{i+1}$. 

    When $T_{i+1} \ge T_1 + c'$, by Lemma \ref{lem:skylineWeightsDown} we have that $C_{T_1+c'}' \le C_{T_1+c'} \le x$. When $T_{i+1} < T_1+c'$, by Lemma \ref{lem:skylineWeightsUp}, we have $C_{T_1+c'}' \le x$. 
    
\end{proof}

\begin{lemma}
    Let $S_1,S_2,\dots,S_k$ be skylines each with rounding $c'$ each of size $X$ where $S_i$ is collected after flipping $S_{i-1}$. Additionally, denote the corresponding thresholds $T_1,T_2,\dots,T_k$. Then for each $1 \le i \leq k$, we have that 
    \[T_i \leq T_1 + c'.\]
    \label{lem:skyline-thresholds}
\end{lemma}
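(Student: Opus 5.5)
The plan is to derive this directly from Lemma~\ref{lem:skylineSizeStable} together with the definition of the induced threshold (Definition~\ref{def:approx-skyline-by-size}). Fix $i$ with $1 \le i \le k$. For $i=1$ the claim $T_1 \le T_1 + c'$ is trivial, so assume $i \ge 2$. The key observation is that the induced threshold of a skyline of size $X$ is the unique multiple $T$ of $c'$ with $C_T \ge X$ and $C_{T+c'} < X$. Moreover, $C_L$ is nonincreasing in $L$, since the weight above a higher level is smaller. So it suffices to exhibit a level $L$, a multiple of $c'$ with $L \le T_1 + c'$, such that $C_{L+c'} < X$ at the moment $S_i$ is collected. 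Any candidate threshold $T > L$ then has $C_T \le C_{L+c'} < X$ and cannot be induced, which forces $T_i \le L$.

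The first step is to apply Lemma~\ref{lem:skylineSizeStable} to the flips of $S_1,\dots,S_{i-1}$, all of size $X$ and rounding $c'$. It gives $C_{T_1+c'} \le X$ in the graph just before $S_i$ is collected. The second step is to shift up one level: take $L = T_1 + c'$. Then $C_{T_1+2c'} \le C_{T_1+c'} \le X$, but a strict inequality is needed. Note that $C_{T_1+c'} = \sum_v \max(0, d^+(v) - T_1 - c')$. If some vertex has out-degree above $T_1 + c'$, then passing from level $T_1+c'$ to $T_1+2c'$ removes at least one unit of weight, so $C_{T_1+2c'} < X$. If no vertex has out-degree above $T_1+c'$, then $C_{T_1+2c'} = 0 < X$ as long as $X \ge 1$. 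In either case the monotonicity argument above gives $T_i \le T_1 + c'$.

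The main obstacle is the gap between the non-strict bound $C_{T_1+c'} \le X$ supplied by Lemma~\ref{lem:skylineSizeStable} and the strict inequality used in the definition of the induced threshold. This is exactly why the conclusion carries the slack $+c'$ instead of reading $T_i \le T_1$. The shift-by-one-level argument must be checked carefully, including that thresholds are restricted to multiples of $c'$, so that the unique induced threshold really lies at or below $T_1 + c'$. The degenerate case where $X$ exceeds the total weight in the graph also needs a check: there the induced threshold is the smallest admissible value, and the bound holds trivially.
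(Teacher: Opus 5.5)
Your proof is correct and follows the same route as the paper. The paper also combines Lemma~\ref{lem:skylineSizeStable} (applied to the flips preceding $S_i$) with the definition of the induced threshold, inside an induction whose hypothesis it never uses; your shift to level $T_1+2c'$, showing $C_{T_1+2c'}<X$ even when $C_{T_1+c'}=X$, makes rigorous the step the paper compresses into ``because $x$ weight is needed for this skyline.''
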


\begin{proof}
    We proceed with induction. For $i=1$ is trivially true. Now suppose it is true for values up to $i$. Consider flipping skyline $S_{i+1}$ of size $x$ with induced threshold $T_{i+1}$ and rounding $c'$. We must show that $T_{i+1} \le T_1 + c'$. 
    
    By Lemma \ref{lem:skylineSizeStable}, we have that $C_{T_1+c'} \le x$ before flipping $S_{i+1}$. Therefore, $T_{i+1} \le T_1 + c'$ (because $x$ weight is needed for this skyline). 
\end{proof}

\begin{lemma}
    Let $S_1,S_2,\dots,S_k$ be skylines each with rounding $c'$ each of size $X$ where $S_i$ is collected after flipping $S_{i-1}$. Additionally, denote the corresponding thresholds $T_1,T_2,\dots,T_k$. Then for each $1 \le j \le i \leq k$, we have that 
    \[T_i \leq T_j + c'.\]
    \label{lem:skyline-thresholds-ij}
\end{lemma}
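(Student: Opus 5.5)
The plan is to reduce this statement to Lemma~\ref{lem:skyline-thresholds} by re-indexing. Fix $j \le i$ and consider the subsequence $S_j, S_{j+1}, \dots, S_i$. It is again a sequence of skylines, each with rounding $c'$ and common size $X$, and each $S_{\ell}$ is collected after $S_{\ell-1}$ is flipped. Lemma~\ref{lem:skyline-thresholds} never uses anything about the state before $S_1$ is collected. Its base case, via Lemma~\ref{lem:skylineSizeStable}, uses only that $C_{T_1+c'} < X$ at the moment $S_1$ is collected, and this follows from $T_1$ being the induced threshold of $S_1$ (Definition~\ref{def:approx-skyline-by-size}). So we may relabel $S_j$ as the first skyline and apply the lemma verbatim, which gives $T_i \le T_j + c'$.

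To be careful, I would re-run the induction of Lemma~\ref{lem:skylineSizeStable} with the base point $T_j$. Before $S_j$ is flipped, $C_{T_j+c'} < X$ by the choice of the induced threshold. Now suppose $C_{T_j+c'} \le X$ holds before flipping $S_{\ell}$, for some $\ell \ge j$. There are two cases.
\begin{itemize}
\item If $T_{\ell} \ge T_j + c'$, then Lemma~\ref{lem:skylineWeightsDown}, applied with $L = T_j + c' \le T_{\ell}$, shows that the weight above $T_j + c'$ does not increase.
\item If $T_{\ell} < T_j + c'$, then, since $c'$ divides both thresholds, $T_j + c' > T_{\ell}$ is a multiple of $c'$, and Lemma~\ref{lem:skylineWeightsUp} gives $C'_{T_j+c'} \le X$.
\end{itemize}
Hence $C_{T_j+c'} \le X$ holds just before $S_i$ is collected.

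It remains to conclude $T_i \le T_j + c'$. Suppose instead that $T_i \ge T_j + 2c'$, so that $T_i - c' \ge T_j + c'$. By the definition of the induced threshold, $C_{T_i} \ge X$. Weight above a level is monotone nonincreasing in the level, so $C_{T_j+c'} \ge C_{T_i} \ge X$. This does not yet contradict $C_{T_j+c'} \le X$, because the two quantities could be equal. The fix is to use a strict inequality: $C_{T_i - c'} \ge C_{T_i} + n_{T_i}$, and the threshold $T_i$ requires some vertex strictly above it, so this sum exceeds $X$. That contradicts $C_{T_j+c'} \le X$ with $T_j + c' \le T_i - c'$.

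The only delicate point is this final off-by-one comparison between $\le X$ and $\ge X$. It is the same step Lemma~\ref{lem:skyline-thresholds} glosses as ``$x$ weight is needed'', and I would handle it in exactly the same way as that lemma does. Everything else is direct reuse of Lemmas~\ref{lem:skylineWeightsDown}, \ref{lem:skylineWeightsUp}, and~\ref{lem:skylineSizeStable}. Finally, taking the minimum over $j < i$ yields Lemma~\ref{lem:skyline-thresholds-min}.
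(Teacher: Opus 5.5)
Your proposal is correct and matches the paper's proof, which simply applies Lemma~\ref{lem:skyline-thresholds} to the suffix $S_j,\dots,S_k$ of the sequence. Your re-run of the induction of Lemma~\ref{lem:skylineSizeStable} with base point $T_j+c'$, and your explicit strictness argument, only spell out the step the paper compresses into ``$x$ weight is needed''; the strictness argument uses that $C_{T_i}\ge X\ge 1$ forces $n_{T_i}\ge 1$, so $C_{T_i-c'}>X$.
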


\begin{proof}
Let $j \in [k]$. Then the result follows from applying Lemma \ref{lem:skyline-thresholds} on the subsequence of skyline flips from $j$ to $k$. 
\end{proof}

This lemma immediately gives Lemma \ref{lem:skyline-thresholds-min}.

%% file: 10_conclusion.tex
\section{Conclusion}

In this paper we have presented two work-efficient algorithms for parallel batch-dynamic low out-degree orientation: one in the amortized setting, and one in the worst-case setting. We highlight three potential directions for future work:

\begin{enumerate}
    \item Can more efficient low out-degree orientations be maintained in the case where the graph is very sparse (i.e., constant arboricity)? For example, Bender et al. gave an algorithm for incremental orientation on forests \cite{bender2021incremental}. 
    \item Kaplan and Solomon gave an amortized algorithm for dynamic orientation in the distributed setting \cite{kaplan18dynamic}; one potential line of work is to see if their approach can be modified for the parallel batch-dynamic setting, and conversely, if the techniques in this paper would apply well to the distributed setting. 
    \item We note that the counter game does not take into account the density constraint given by the arboricity of the graph. Would it be possible to improve the counter game by incorporating this constraint, so that the counter game achieves an $O(c)$ bound instead of $O(c + \log n)$? 
\end{enumerate}

%% file: appendix_prelims.tex
\section{Appendix: Orientation Primitives}

\subsection{Static orientation \label{sec:staticOrientationAppendix} }

\begin{algorithm}[h]
\caption{Parallel Algorithm for Static Low Out-Degree Orientation \cite{barenboim2008sublogarithmic}}\label{alg:staticorient}
\tcp{Input: $E$, a directed bag of edges}
\tcp{Input: $c$, the arboricity}
\tcp{Input: $\epsilon$, affects the quality of the orientation}
\tcp{Output: A bag of directed edges $F$ with maximum out-degree $(2+\epsilon)c$}
\function{\textsc{StaticOrientation}$(E,c, \epsilon)$}{
    $F \gets \{\}$ \;
    $E' \gets \{(u,v): (u, v)\in E \lor (v,u) \in E \}$ \; 
    \While{$|E'| > 0$}{
        $E' \gets \textsf{semisort}(E')$, induces $d(v)$\;
        $V' \gets \textsf{removeDuplicates}(\{u:(u,v) \in E'\})$\;
        \parfor{$v$ in $V'$}{
            \If{$d(v) \le (2+\epsilon)\cdot c$}{
                $\textsf{vmark}(v)$\;
            }
        }
        \parfor{$(u,v)$ in $E'$}{
           
            \If{$\textsf{vmarked}(u)$ and ( $\lnot \textsf{vmarked}(v)$ or ($\textsf{vmarked}(v)$ and  $u < v$))}{
                $\textsf{emark}((u,v))$\;
            }
        }
        
        \textsf{batchInsert}$(F, \textsf{filter}(E',\textsf{emarked}))$\;
        \textsf{batchDelete}$(E',(u,v) : \textsf{vmarked}(u) \textrm{ or } \textsf{vmarked}(v))$\;
    }
    \Return $F$
}

\end{algorithm}

\staticLem*

\begin{proof}[Proof of Lemma \ref{lem:static-orient}]
Consider a round of the while loop. Let $|V'|$ be the number of remaining vertices (i.e., vertices with positive degree), and $|E'|$ be the number of remaining edges. Let $|V_{\text{high}}|$ be the number of vertices that have more than $(2+\epsilon)\cdot c$ degree, and thus survive to the next round, and let $E_{\text{high}}$ be the corresponding edges that will survive. 

Observe that an edge can only survive if both of its endpoints are in $V_{\text{high}}$. 
By the arboricity constraint, at most $c |V_{\text{high}}|$ edges survive. At the same time, there are at least $\frac{2+\epsilon}{2} c |V_{\text{high}}|$ edges in the graph (before peeling), because each vertex in $V_{\text{high}}$ carries at least $(2+\epsilon)c$ degree. 
Therefore, $|E'| \ge \frac{2+\epsilon}{2} c |V_{\text{high}}| \ge \frac{2+\epsilon}{2} |E_{\text{high}}|$, so $|E_{\text{high}}| \le \frac{2|E'|}{2+\epsilon}$. 
Therefore, the number of participating edges decreases by a $2/(2+\epsilon)$ fraction in a single round. 

Therefore, by a geometric series sum, the number of participating edges across all rounds is bounded by $O(\epsilon^{-1} |E|)$. Solving $\left(2/(2+\epsilon)\right)^r=1/(2|E|)$ yields that the number of rounds is $O(\epsilon^{-1} \log |E|)$.  

The work is proportional to the total number of participating edges, so by Lemma \ref{lem:sort} the total work is $O(\epsilon^{-1} |E|)$ in expectation and $O(\epsilon^{-1} |E| \log |E|)$ deterministically. We have $O(\epsilon^{-1} \log |E|)$ rounds, with a semisort within each round, leading to overall $O(\epsilon^{-1} \log^2 |E|)$ \depth{}. 
\end{proof}

\subsection{Boosting Orientation Quality}


The following technique is found, for example, in Chekuri et al.  \cite{chekuri2024adaptive}, and shows that an algorithm that maintains an $O(c + \log n)$-orientation can be improved to give an $O(c)$-orientation.

\begin{lemma}[See for example \cite{chekuri2024adaptive}] \label{lem:boostOrientation}
    Given an algorithm $A$ that maintains a $O(c+\log n)$ orientation in $O(b f(n))$ work and $O(d(n))$ depth on a batch update of size $b$, there exists an algorithm $A'$ that maintains a $O(c)$ orientation in $O(b f(n) \log n)$ work and $O(d(n))$ depth.
\end{lemma}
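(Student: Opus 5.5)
The plan is the standard edge-partitioning boost. Split the edge set into $k = \Theta(\log n)$ groups and run an independent copy of $A$ on each group. The union of the orientations is an orientation of $G$.

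\textbf{Step 1: the partition.} Arrange the $k$ copies $A_1,\dots,A_k$ so that each subgraph $G_j$ has arboricity at most $c_j = \lceil c/k\rceil$, or at worst $O(c/\log n + 1)$. Then each copy maintains an $O(c_j + \log n)$-orientation, and summing over copies gives out-degree $\sum_j O(c_j + \log n) = O(c + \log^2 n)$. This is too weak as it stands, so the plan must be refined. The refinement is that $A$'s guarantee really has the form $O(\delta + (c' + \delta\epsilon + 1)\log n)$ (Theorem~\ref{thm:wealg}), and the additive $\log n$ term is what we must beat.

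\textbf{Step 2: use a layered partition instead of an even split.} Copy $j$ holds the edges that are ``overflow'' from copy $j-1$. Copy $1$ receives all edges; any vertex whose out-degree in copy $1$ exceeds $C\cdot c$ (for a suitable constant $C$) has its excess out-edges moved to copy $2$, and so on. Since each copy is an $O(c_j+\log n)$-orientation algorithm run on a subgraph of arboricity at most $c$, choose constants so that the excess passed from copy $j$ to copy $j+1$ shrinks geometrically in the relevant potential. Then after $O(\log n)$ levels the residual graph is empty or has $O(c)$ out-degree.

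\textbf{Step 3: cost accounting.} Each edge update triggers at most one update in each of the $O(\log n)$ copies. Moves between levels are charged to flips, and each copy does $O(f(n))$ work per update it receives. The copies receive their batches in parallel, or at least in $O(1)$ rounds of parallel calls, so the total work is $O(b f(n)\log n)$. The depth stays $O(d(n))$ because the copies run in parallel and the inter-level handoffs are semisort-based.

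\textbf{Main obstacle.} The hard part is Step 2: obtaining $O(c)$ rather than $O(c+\log^2 n)$, and keeping the depth at $O(d(n))$ rather than multiplying it by $\log n$ through sequential cascading between levels. My first attempt would follow Chekuri et al.~\cite{chekuri2024adaptive}: run the $\Theta(\log n)$ copies on a fixed (not adaptive) partition into groups, each of arboricity $O(c/\log n + 1)$. I would then apply $A$'s sharper bound with $\delta = O(c/\log n+1)$, so that $O(\delta + \log n)$ per copy is wasteful only in its additive term. That term I would handle by routing each edge to a fixed copy determined at insertion, which avoids cascading entirely and gives depth $O(d(n))$. If the additive term cannot be absorbed, I would fall back to the layered scheme.
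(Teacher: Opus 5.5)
There is a genuine gap: none of your variants removes the additive $\log n$, because splitting the edge set is the wrong direction. If $k=\Theta(\log n)$ copies of $A$ each run on a part of the edges, each copy still pays its own additive $\Theta(\log n)$. The lemma gives you $A$ only as a black box with an $O(c+\log n)$ guarantee, and even the internal form $O(\delta+(c'+\delta\epsilon+1)\log n)$ you invoke keeps a $+\log n$ term. So the only bound you can sum is $O(c+\log^2 n)$, however the parts are chosen. Routing each edge to a fixed copy at insertion helps depth, not this term. Moreover, a fixed routing does not keep each part at arboricity $O(c/\log n+1)$; maintaining such a partition is essentially a dynamic forest decomposition, which is no easier than the original problem.

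The layered scheme of Step~2 has the same kind of problem. Nothing in $A$'s black-box guarantee forces the overflow to shrink. Even if it stopped after $\Theta(\log n)$ levels, a vertex may keep up to $Cc$ out-edges at every level, since edges that overflowed from its neighbors may be oriented out of it in the next copy. That gives $\Theta(c\log n)$, not $O(c)$. Its cost accounting also fails. The edges entering copy $j+1$ are determined by the flips copy $j$ performs, so their number is governed by copy $j$'s $O(bf(n))$ work rather than by $b$, and the cascade is sequential across levels.

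The missing idea is to \emph{replicate} rather than split, scaling the arboricity up so that the additive $\log n$ becomes lower order. Let $G'$ be the multigraph with $\lceil\log n\rceil$ parallel copies of every edge of $G$. A batch of $b$ updates to $G$ then becomes a batch of $b\lceil\log n\rceil$ updates to $G'$ on the same $n$ vertices. Since $|E'(S)|=\lceil\log n\rceil\,|E(S)|$ for every $S\subseteq V$, $G'$ has arboricity at most $c\lceil\log n\rceil$. A single invocation of $A$ on $G'$ therefore maintains an $O(c\log n+\log n)=O(c\log n)$-orientation in $O(bf(n)\log n)$ work and $O(d(n))$ depth.

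Orient each edge $\{u,v\}$ of $G$ by majority vote over its copies, using a per-edge counter updated in $O(1)$ per flip in $G'$. Each out-edge of $u$ in $G$ then accounts for at least $\frac{1}{2}\lceil\log n\rceil$ out-edges of $u$ in $G'$. Hence $d^+_G(u)\le \frac{2}{\lceil\log n\rceil}\,d^+_{G'}(u)=O(c)$. This is the paper's argument; it needs no partition, no cascade, and no knowledge of $A$'s internals.
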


\begin{proof}
Let $G$ be our original graph. Consider the multigraph $G'$, which has $\log n$ copies of each edge in $G$. Each edge in $G$ points to the $\log n$ copies of the edge in $G'$. A batch insert or delete on $G$ will trigger a batch insert or delete on all $\log n$ copies of that edge from $G'$. Although the main body of the paper is written in the language of simple graphs, all of our algorithms and data structures function with a multigraph unmodified. Note that the arboricity $c'$ of $G'$ is upper bounded by $c \log n$. Use algorithm $A$ to maintain a $O(c' + \log n)$ low out-degree orientation on $G'$: this will cost $O(b \log n f(n))$ work and the same depth, because $G'$ has an $\log n$ factor more edges than $G$. 

Each edge $(u,v)$ in $G$ will keep a counter denoting how many edges in $G'$ are pointing $u \rightarrow v$. After $G'$ updates, if more than half of the edges in $G'$ point $u \rightarrow v$, $(u,v)$ will point $u \rightarrow v$, and otherwise, the edge will point $v \rightarrow u$. Since each out-edge of $u$ in $G$ corresponds to at least $\frac{\log n}{2}$ out-edges in $G'$, the out-degree in $G$ is bounded by $\frac{O(c' + \log n)}{\log n} = O(c)$.

\end{proof}

%% file: appendix_potential.tex
\section{Appendix: Potentials}\label{sec:potential-setup}

In this section, we prove the lemmas given in Section \ref{sec:potential-setup-main}. Our setup is slightly different, but these proofs largely follow from the techniques of Berglin and Brodal \cite{BB20}.

\begin{lem}[Similar to Lemma 3 in \cite{BB20}] 
\label{lem:queue-swap}
For a vertex $v$, moving $R \ge 4 \delta$ edges from $B_v$ to $F_v$ does not increase $p(v)$. 
\end{lem}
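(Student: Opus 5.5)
The plan is to compare the potential of $v$ before and after the move edge by edge, using only the potential table (Table~\ref{tab:BBpotentials}) and the fact that $v$ has at most $\delta$ good out-edges, since the offline strategy $\kappa$ is a $\delta$-orientation. Each moved edge changes its potential from its back value to its front value. The per-edge changes are as follows.
\begin{itemize}
\item A good edge moves from $1-\epsilon$ to $1+2\epsilon$, a change of $+3\epsilon$.
\item A bad edge moves from $1+\epsilon$ or $1$ to $1$, a change of $-\epsilon$ or $0$. It is $-\epsilon$ exactly when the edge was among the first $3\delta$ bad back edges.
\end{itemize}

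Let $g\le\delta$ be the number of good edges among the $R$ moved edges. The good edges contribute at most $3\epsilon g \le 3\delta\epsilon$ of increase. For the decrease, I would argue that the moved bad edges include the ``first $3\delta$'' bad back edges. There are at least $R-g \ge 3\delta$ moved bad edges, so all $3\delta$ premium slots can be covered, giving a decrease of at least $3\delta\epsilon$. If the moved set covers only $k<3\delta$ of the premium slots, I would show it does not matter: the ``first $3\delta$'' designation is a labeling of the back bag, and after the move the remaining back edges are relabeled. So the net change in $p(v)$ is the change in the front contribution plus the change in the back contribution, where the back contribution is recomputed as $(1+\epsilon)\min(3\delta,\#\text{bad back})+1\cdot(\text{rest})$.

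With that recomputation, I would argue directly. Removing $R-g$ bad edges from the back reduces the premium count $\min(3\delta,\#\text{bad back})$ by $\min(3\delta,\text{bad}_{\text{before}})-\min(3\delta,\text{bad}_{\text{after}})$. The term I need is $\ge \min(3\delta, R-g)$ whenever bad-before $\le 3\delta$, which is the easy case. The case bad-after $\ge 3\delta$ is the genuine obstacle, because then the premium count does not drop at all.

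The main obstacle is therefore pinning down the definition of ``first $3\delta$'' for a bag, which has no order. I expect the paper's convention to be either that the premium edges are those removed first (a pannier pops the back wholesale when the front empties), or that the move in question is the whole back bag. I would handle it by assuming the move transfers the entire back bag, which is how \textsc{batchPop} uses it. In that case every bad back edge, including all premium ones, is moved. The premium decrease is then $\epsilon\min(3\delta,\text{bad})$, with bad $\ge R-g \ge 3\delta$, so it equals $3\delta\epsilon \ge 3\epsilon g$, and $p(v)$ does not increase.

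If instead only a prefix is moved, I would fix the convention that the premium edges are those moved first and run the same count. In both versions the inequality needed is $3\epsilon g \le \epsilon\min(3\delta, R-g)$, which holds because $g\le\delta$ and $R\ge 4\delta$.
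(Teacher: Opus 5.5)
Your whole-back-bag argument is correct. For a bag, the natural count-based reading is that bad back edges contribute $\#\mathrm{bad}+\epsilon\min(3\delta,\#\mathrm{bad})$. Moving all of $B_v$ therefore removes exactly $\epsilon\min(3\delta,\#\mathrm{bad})=3\delta\epsilon$, since $\#\mathrm{bad}\ge R-g\ge 3\delta$. This covers the at most $3\epsilon g\le 3\delta\epsilon$ added by the good edges. The count is the paper's own: at most $\delta$ good edges at $+3\epsilon$ against at least $3\delta$ bad edges at $-\epsilon$. Yours is more careful, because you notice that only premium bad edges actually drop by $\epsilon$; a non-premium bad back edge has potential $1$ in either bag.

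The gap is your final paragraph, the prefix case. The inequality $3\epsilon g\le\epsilon\min(3\delta,R-g)$ omits the relabeling term you flagged earlier. If $B_v$ held $b$ bad edges before the move, the true net change is
$3\epsilon g-\epsilon\bigl(\min(3\delta,b)-\min(3\delta,b-(R-g))\bigr)$.
Declaring the premium edges to be ``those moved first'' does not help. Once they leave, the next $3\delta$ bad edges remaining in $B_v$ become the first $3\delta$, and the premium is restored.

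In fact the partial-move statement is false. Take $B_v$ with $\delta$ good and $10\delta$ bad edges, and move $R=4\delta$ edges: all $\delta$ good ones and $3\delta$ bad ones. The back still holds $7\delta\ge 3\delta$ bad edges, so the bad edges contribute net $0$, while $p(v)$ rises by $3\delta\epsilon$.

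The paper's proof glosses over the same point by asserting that every moved bad edge loses at least $\epsilon$. By the potential table, that holds only for the premium ones. So do not try to rescue the prefix case. Instead, state and prove the lemma for transferring all of $B_v$ with $|B_v|\ge 4\delta$, which is the only way it is used. The pannier's \textsc{batchPop} sets $Q_F\gets Q_B$, and Lemmas~\ref{lem:skylineFlipRelease} and~\ref{lem:skylineStaticRelease} invoke it as ``reassigning the back to be the front.''
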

\begin{proof}
    Note that $F_v$ need not be empty, and that we are not necessarily moving the entire back bag to the front, just $R$ vertices. Note that any good edges we move to the front, of which there are at most $\delta$, increase the potential by $3 \epsilon$. However, among $R$ edges, there are at least $R - \delta \ge 3 \delta$ bad edges, and moving these to the front decreases potential by at least $\epsilon$ each. Therefore there is a net potential decrease.
\end{proof}

\begin{lem}[From proof of Lemma 4 in \cite{BB20}]
Flipping an edge in a front bag releases at least $\epsilon$ potential. \label{lem:flip-front}
\end{lem}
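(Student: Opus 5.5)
The plan is to compare the potential of the flipped edge before and after the flip, using the per-edge values in Table~\ref{tab:BBpotentials}. Consider an edge $e=(v\to u)$ that sits in $F_v$ and is flipped, so it is removed from $F_v$ and inserted into $B_u$. Flipping changes its good/bad status relative to the offline strategy $\kappa$. A good edge becomes bad, and a bad edge becomes good.

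I would split into two cases according to the status of $e$ before the flip.

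\emph{Case 1: $e$ is bad in $F_v$.} It contributes $1$ to $p(v)$. After the flip it is a good edge in $B_u$ and contributes $1-\epsilon$ to $p(u)$. The net change is therefore $-\epsilon$.

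\emph{Case 2: $e$ is good in $F_v$.} It contributes $1+2\epsilon$ to $p(v)$. After the flip it is a bad edge in $B_u$ and contributes at most $1+\epsilon$ to $p(u)$. The value is $1+\epsilon$ if it lands among the first $3\delta$ bad back edges, and $1$ otherwise. The net change is at most $-\epsilon$.

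The step I expect to need care is Case 2, because of the rule ``$(1+\epsilon)$ for the first $3\delta$ bad back edges, $1$ for the rest''. Inserting a bad edge into $B_u$ could in principle reshuffle which bad back edges receive the $1+\epsilon$ bonus. I would fix a convention that the bonus is assigned to at most $3\delta$ bad edges of $B_u$ in some order, for example the order of arrival. With that convention, adding one bad edge raises the total bonus in $B_u$ by at most $\epsilon$, so $p(u)$ grows by at most $1+\epsilon$. Since all other edges keep their front/back placement and their good/bad status, no other vertex's potential changes.

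Combining the two cases, the total potential drops by at least $\epsilon$, which is the claim.
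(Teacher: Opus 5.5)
Your proof is correct and follows the same approach as the paper: a two-case comparison of the flipped edge's value in $F_v$ against its value in $B_u$ using the table of edge potentials, which gives a drop of exactly $\epsilon$ for a bad edge and a drop of $\epsilon$ or $2\epsilon$ for a good edge. The paper does not address the first-$3\delta$ bonus issue you raise, and you do not actually need a convention for it: the total contribution of $u$'s bad back edges depends only on how many there are, so adding one bad edge raises $p(u)$ by at most $1+\epsilon$ under any ordering.
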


\begin{proof}
Observe that if the edge was good, the flipped edge is now bad, and we lose potential (either $\epsilon$ or $2\epsilon$) and if the edge was bad, the flipped edge is now good, so we lose ($\epsilon$ or $2\epsilon$) potential. \end{proof}

\lemSkylinePotential*

\begin{proof}


Let $u$ be a vertex that has at least one out-edge in $S$. If $(d^+(u) - T - c' + \rho_u) \le |F_u|$, then we do not reassign the back bag to the front bag in the pannier. If $(d^+(u) - T - c' + \rho_u) > |F_u|$, then $|B_u| = d^+(u)-|F_u| > T + c' - \rho_u \ge 4\delta$. Therefore, by Lemma \ref{lem:queue-swap}, reassigning the back to be the front does not increase the potential. 

From the point of view of each edge, the edge starts in a front bag when it is flipped. Therefore by Lemma \ref{lem:flip-front} flipping this edge releases at least $\epsilon$ potential. 

Flipping a skyline has two types of movements: moving edges from a back bag to a front bag of the same vertex, and moving from the front bag of one endpoint to the back bag of the other endpoint. The first type of movement gives net negative potential, and the second type of movement releases at least $\epsilon$ potential per edge moved in this way. Therefore, the total net potential release is at least $x \epsilon$.

\end{proof}

\lemStaticPotential*

\begin{proof}

Let $u$ be a vertex that has at least one out-edge in $S$. If $(d^+(u) - T - c' + \rho_u) \le |F_u|$, then we do not reassign the back bag to the front bag in the pannier. If $(d^+(u) - T - c' + \rho_u) > |F_u|$, then $|B_u| = d^+(u)-|F_u| > T + c' - \rho_u \ge 4\delta$. Therefore, by Lemma \ref{lem:queue-swap}, reassigning the back to be the front does not increase the potential. 

From the point of view of each edge, the edge starts in a front bag when it is flipped. Therefore by Lemma \ref{lem:flip-front} flipping this edge releases at least $\epsilon$ potential. 

Because we are statically orienting (rather than doing a wholesale flip), some edges will be removed from a front bag and return to the same front bag (net zero potential change), and others will be removed from a front bag and placed in the back bag of the other endpoint (releasing $\epsilon$ potential). Therefore, the net potential change is negative.

\end{proof}

Note that statically orienting (or flipping) a skyline of insufficient height can increase potential, both from the flips and from moving vertices from the back to the front queue. However, this is fine as we do not need to track potential increases while our counter game is not running, see Theorem \ref{thm:generic-outdegree} for details.

\lemPotentialIncrease*

\begin{proof}
    If the update is an insertion the new edge is added to the back queue of some vertex and contributes at most $1 + \epsilon$ potential in the case that it is bad and among the first $3\delta$ bad edges in the back queue. Per update, the offline strategy $\kappa$ makes at most $\sigma$ flips each of which causes an edge to swap classification from good to bad or vice versa. Each reclassification can increase potential by at most $2\epsilon$. Thus in total, the sum of inserted potential is at most $1 + \epsilon  + 2\sigma\epsilon$.
\end{proof}


    


%% file: appendix_apps.tex
\section{Appendix: Applications \label{sec:apps}}

In the applications, we will need the hash table of Gil et al., which yields the following bounds in the binary forking model. 

\begin{lemma}
    A parallel batch-dynamic hash table can be maintained in $O(b)$ expected work, where $b$ is the size of an update or query. Queries take $O(\log b)$ depth deterministically and updates take $O(\log n \log^* n)$ depth whp in $n$ \cite{gil1991towards}.
\end{lemma}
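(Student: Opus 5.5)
The statement to prove is the hash-table lemma of Gil et al.: a parallel batch-dynamic hash table supports a batch of $b$ operations in $O(b)$ expected work, with queries in $O(\log b)$ depth deterministically and updates in $O(\log n \log^* n)$ depth whp. My plan is to treat this as an adaptation rather than a new construction. I would take the CRCW PRAM hashing scheme of Gil, Matias and Vishkin~\cite{gil1991towards} and translate it into the binary fork-join model with atomics described in Section~\ref{sec:prelims}. The paper itself notes that these models cross-simulate with the same asymptotic work, and that span changes by at most a logarithmic factor. The proof therefore splits into a work argument and a depth argument, done separately for queries and for updates.

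\textbf{Queries.} I would use an open-addressing or two-level (FKS-style) table with a fixed hash function drawn when the table was last built. A query batch of size $b$ is one parallel for over the $b$ keys. Each key evaluates its hash and probes a constant number of slots, which is constant expected work for a random key, so the batch costs $O(b)$ expected work. Forking $b$ threads and joining them costs $O(\log b)$ depth deterministically. The per-key probe must also have constant depth; this holds for a two-level scheme (a first-level bucket, then a perfect-hash second level) because each lookup is exactly two deterministic probes. That gives the deterministic $O(\log b)$ query depth, while only the work bound is in expectation.

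\textbf{Updates.} A batch insertion must place $b$ keys concurrently without collisions corrupting the table. I would follow the Gil--Matias--Vishkin approach: hash the keys into buckets, then resolve collisions in rounds. In each round, every unplaced key attempts a CAS into a free slot among a group of candidate slots, and the keys that fail retry with a fresh choice. The standard analysis shows the number of unresolved keys decays so fast that $O(\log^* n)$ rounds suffice whp. Each round is a parallel for followed by a filter, costing $O(\log n)$ depth, for $O(\log n \log^* n)$ total. The work is geometric across rounds, giving $O(b)$ expected. Deletions mark slots as deleted, using tombstones, with a parallel for. To keep the load factor constant, I would rebuild the table when its size doubles or halves, charging the rebuild to the operations performed since the previous rebuild. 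If the lemma must hold without amortization, I would instead use incremental rebuilding: migrate a constant number of elements per update into a table of twice the size.

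\textbf{Main obstacle.} The delicate step is the $\log^*$ round analysis under binary forking. In the PRAM setting, processor allocation for the surviving keys is implicit; here I must compact the survivors with a filter each round, costing $O(\log n)$ depth and work linear in the survivors. The work stays $O(b)$ only because the survivor counts decay super-geometrically. My first attempt would be to import the Gil--Matias--Vishkin tail bound that the number of survivors after round $i$ is at most $b/\mathrm{tower}(i)$ whp. If that bound is not directly available in a usable form, I would fall back to the weaker claim of halving per round with constant probability, which still yields $O(b)$ expected work. The depth would then degrade to $O(\log n \log b)$, so the $\log^* n$ factor depends on getting the stronger tail bound.
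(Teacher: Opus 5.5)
Your top-level plan matches the paper's. The paper gives no argument beyond citing Gil, Matias and Vishkin and relying on the cross-simulation in the preliminaries. Under that simulation, each CRCW step on $p$ processors becomes a parallel-for with $O(\log p)$ depth and $O(p)$ work. An $O(\log^* n)$-time whp, optimal-work PRAM hashing algorithm therefore gives $O(\log n\log^* n)$ update depth with $O(b)$ work. Because the structure it builds answers a membership query in worst-case $O(1)$ time, a batch of $b$ queries is a parallel-for of depth $O(\log b)$.

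The gap is in the concrete scheme you substitute for theirs. In your update rule, each unplaced key CASes into one of a group of candidate slots and losers retry with a fresh choice. Where a key ends up then depends on per-key random choices and on which concurrent keys won earlier races. So its slot is not a fixed function of the key and the table's stored hash functions. Your query argument (``exactly two deterministic probes'') therefore does not apply to the table your updates produce. A lookup would have to search the candidate groups of every round, which loses the deterministic $O(1)$ per-key cost and with it the deterministic $O(\log b)$ query depth. In the Gil--Matias--Vishkin line of work, the randomized rounds build a hash function that can afterwards be evaluated in constant time, for example by choosing functions for buckets and allocating their space. They are not trial-and-error placements of individual keys that a later query would have to retrace.

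Importing their algorithm as a black box also removes what you call the main obstacle. Its processor-allocation steps are themselves PRAM steps, each simulated with $O(\log n)$ depth and proportional work. So you need neither your own per-round filter nor a separately proved tower-type bound on survivors, and the PRAM algorithm's work bound transfers directly. As you observe, the fallback of halving per round gives only $O(\log n\log b)$ update depth, so it is not a route to the lemma and should not be presented as one. A smaller point: in your open-addressing variant, the expectation must be over the choice of hash function, not over ``a random key,'' since keys are chosen adversarially. With a genuine two-probe lookup, the query work is in fact deterministic.
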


\subsection{Maximal Matching}
Here we apply our batch-dynamic orientation algorithms to the problem of
maintaining a maximal matching. A \emph{maximal matching} $M$ of a graph
$G=(V,E)$ is a set of edges such that no two edges in $M$ share an endpoint,
and every edge in $E \setminus M$ shares an endpoint with some edge in $M$.
The connection between low out-degree orientations and dynamic maximal
matching appears in many papers in the low-outdegree orientation
literature~\cite{Kopelowitz2013Orientation, neiman2015simple}. The key
idea is that a $c$-orientation allows each unmatched vertex to
efficiently propose matchings to a bounded number of neighbors,
enabling fast searches for ``replacement'' edges for the matching when
a matched edge is deleted.

\myparagraph{Amortized Batch-Dynamic Maximal Matching}
Liu et al.~\cite{liu2022parallel} gave a parallel batch-dynamic
maximal matching algorithm based on their $(4+\epsilon)c$-orientation
algorithm. Their approach maintains, for each vertex $v$, a hash table
$I_v$ of unmatched in-neighbors (vertices $u$ with an edge oriented
from $u \to v$ where $u$ is unmatched). On edge insertions, they run a
parallel static maximal matching~\cite{BFS12, fischer2018tight} on the
subgraph induced by inserted edges between unmatched vertices. On edge
deletions that remove matched edges, they first attempt to rematch the
newly unmatched vertices with their out-neighbors, and then use a
doubling scheme to progressively query more in-neighbors until all
vertices are either matched or have no unmatched neighbors. The work
analysis requires a simple charging argument to amortize the cost of
unsuccessfully checking an in-neighbor that gets matched to another
vertex in the same batch.

Our improved orientation algorithm (Algorithm~\ref{alg:amortized}) can
be directly plugged into this framework. The only change is replacing
their $(4+\epsilon)c$-orientation with our $7c$-orientation, which has
better work bounds. The matching algorithm itself and the analysis
remains unchanged.

\begin{theorem}\label{thm:matching-amortized}
Given a graph $G$ with arboricity at most $c$, there exists a parallel
  batch-dynamic algorithm that maintains a maximal matching $M$ of
  $G$. For a batch $B$ of $b$ edge updates, the algorithm runs in
  amortized expected $O(b(c + \log n))$ work and $O(\log^2 n)$
  \depth{}  \whp{}.
\end{theorem}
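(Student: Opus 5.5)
The plan is to reduce the statement to the amortized orientation result, Theorem~\ref{thm:amortized_main}, combined with the analysis of Liu et al.~\cite{liu2022parallel}. Their matching layer is treated as a black box whose cost is a function of the orientation quality and of the orientation work. First, fix the interface. The framework keeps, for each vertex $v$, the hash table $I_v$ of unmatched in-neighbors. It requires an $O(c)$-orientation and must be told of every edge flip, because a flip of $(u,v)$ moves $u$ between in-neighbor tables. Running Algorithm~\ref{alg:amortized} with a constant $\epsilon$ (say $\epsilon=1$, giving a $7c$-orientation) then gives amortized expected $O(b\log n)$ work and amortized $O(b\log n)$ flips per batch by Theorem~\ref{thm:amortized_main}. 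Each flip causes $O(1)$ hash table updates, so by the hash table lemma the cost of keeping every $I_v$ consistent with the orientation is amortized expected $O(b\log n)$ work. The depth is $O(\log n\log^* n)$ per batch of table updates, and this happens in $O(1)$ phases.

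Second, account for the matching maintenance itself, following Liu et al.\ verbatim. For insertions, run a parallel static maximal matching on the inserted edges between currently unmatched vertices. This costs $O(b)$ expected work and $O(\log^2 n)$ span whp, plus $O(b)$ table updates for the newly matched vertices. Each newly matched vertex must be removed from the $I_w$ of its $O(c)$ out-neighbors, for $O(bc)$ work.

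For deletions, at most $2b$ vertices become unmatched. Each of them first scans its $O(c)$ out-neighbors, giving $O(bc)$ work. It then runs the doubling search over $I_v$. Liu et al.'s charging argument bounds the wasted probes by the successful matches plus $O(c)$ per freed vertex. That same argument pays for propagating each matched or unmatched status change to the $O(c)$ out-neighbors' tables. The number of doubling rounds is $O(\log n)$, and each round uses a parallel static matching, which determines the span.

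Summing the two parts gives $O(b(c+\log n))$ amortized expected work. The depth is dominated by the $O(\log^2 n)$ static orientation and matching steps whp.

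The main obstacle is checking that Liu et al.'s charging argument is insensitive to which orientation algorithm is used. It must depend only on the out-degree bound, $7c$ instead of $(4+\epsilon)c$, and on the flip count, now $O(\log n)$ amortized instead of their $O(\log^2 n)$. In particular, I would verify that their work bound has the form $O(b\cdot c + \text{orientation work})$ with no hidden factor tied to their specific data structure. I would also check that the amortization schemes compose: the orientation potential and the matching charging are independent, so their amortized costs add.
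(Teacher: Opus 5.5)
Your reduction is the same as the paper's. You run Algorithm~\ref{alg:amortized} as a $7c$-orientation inside Liu et al.'s framework~\cite{liu2022parallel} and charge $O(b\log n)$ amortized expected work to the orientation via Theorem~\ref{thm:amortized_main}. You then bound the matching maintenance by $O(bc)$, using the $O(c)$ out-degree and charging wasted in-neighbor probes to vertices that become matched. Your explicit accounting for pushing each flip into the $I_v$ tables is a detail the paper leaves implicit, and it is correct.

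The one step that does not go through as written is the span. You say there are $O(\log n)$ doubling rounds and that each round runs a parallel static maximal matching. In the insertion case you yourself assign a static maximal matching $O(\log^2 n)$ span whp, which is also the bound the paper quotes from~\cite{BFS12,fischer2018tight}. Multiplied out, this gives $O(\log^3 n)$, not the $O(\log^2 n)$ you conclude from "dominated by the static orientation and matching steps." The paper's proof has the same omission: it attributes the span to the static subroutines and never accounts for the doubling loop. So to claim $O(\log^2 n)$ you would need an extra argument, for instance that the per-round matching between still-unmatched freed vertices and their probed in-neighbors can be done in $O(\log n)$ span. Otherwise your argument supports only $O(\log^3 n)$ span whp, which is still polylogarithmic.
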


\begin{proof}
The proof follows the analysis of Liu et al.~\cite{liu2022parallel},
substituting our orientation bounds. By Theorem~\ref{thm:amortized_main}, our
orientation algorithm performs amortized $O(b \log n)$ expected work per
batch. The matching maintenance costs $O(bc)$ work per batch: processing
out-neighbors costs $O(c)$ per newly unmatched vertex, and the cost of
scanning in-neighbors amortizes to $O(c)$ per vertex that becomes matched
(since each matched in-neighbor has out-degree $O(c)$ and is queried at most
$O(c)$ times before becoming matched). The static maximal matching subroutine
runs in $O(\log^2 n)$ \depth{} \whp{}~\cite{BFS12, fischer2018tight}. Summing the
orientation and matching costs gives $O(b(c + \log n))$ amortized expected
work. The \depth{} is $O(\log^2 n)$ \whp{}, dominated by the static matching and
orientation subroutines.
\end{proof}

\myparagraph{Worst-Case Batch-Dynamic Maximal Matching}
We can similarly obtain a worst-case result by using our worst-case orientation
algorithm (Algorithm~\ref{alg:wealg}).

\begin{theorem}\label{thm:matching-worstcase}
Given a graph $G$ with arboricity at most $c$, there exists a parallel
batch-dynamic algorithm that maintains a maximal matching $M$ of $G$. For a
batch $B$ of $b$ edge updates, the algorithm runs in worst-case expected
$O(b(c + \log n))$ work and $O(\log^4 n)$ \depth{} \whp{}.
\end{theorem}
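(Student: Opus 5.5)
We follow the proof of Theorem~\ref{thm:matching-amortized}, swapping the amortized orientation for the worst-case one. Run Algorithm~\ref{alg:wealg} with the parameters of Theorem~\ref{thm:boundsOclogn}. This maintains an $O(c+\log n)$-orientation with $O(b\log n)$ worst-case expected work, $O(b\log n)$ deterministic flips and $O(\log^4 n)$ depth. On top of it we keep the Liu et al.\ structure: each vertex $v$ has a hash table $I_v$ of its unmatched in-neighbors, stored in the Gil et al.\ batch-dynamic hash table. Every flip the orientation algorithm performs becomes an $O(1)$ update to at most two such tables, and so does every change in the matched status of a vertex along each of its out-edges. The number of flips per batch is $O(b\log n)$ in the worst case, so maintaining the $I_v$ under flips costs $O(b\log n)$ expected work. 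A vertex whose matched status changes must update the tables of its $O(c+\log n)$ out-neighbors, costing $O(c+\log n)$ per status change.

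For insertions, run a static parallel maximal matching on the inserted edges whose endpoints are both unmatched. This takes $O(b)$ expected work, changes the status of $O(b)$ vertices, and has polylog depth. For deletions, at most $2b$ vertices become unmatched. Each such vertex first scans its $O(c+\log n)$ out-neighbors. Then it queries its table $I_v$ for an unmatched in-neighbor, with ties between proposals resolved by a static maximal matching round. Because a worst-case bound rules out the doubling-and-amortization charging used by Liu et al., we rely on $I_v$ containing \emph{only unmatched} in-neighbors at all times. Any element peeked from $I_v$ is therefore a valid partner unless it is claimed within the same round. We run $O(\log n)$ rounds of parallel static maximal matching on the bipartite proposal graph, in which each newly freed vertex proposes to a bounded number of entries of its $I_v$ and its out-neighbors. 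In each round, every vertex that fails still has its candidates available or has certified that no unmatched neighbor exists.

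The main obstacle is bounding the work of these rematching rounds in the worst case. A vertex can be matched into one newly freed vertex's proposal while appearing in many others' proposals, which wastes work. We bound this as follows. Every proposal either succeeds, and there are $O(b)$ matched pairs per batch, or targets a vertex that became matched in that round. Each vertex matched in that round is listed in the candidate sets of only its $O(c+\log n)$ out-neighbors' tables plus its in-neighbors' out-lists among the freed vertices. We charge the failed proposals to the newly matched vertex, which gives $O(c+\log n)$ per matched vertex. Newly matched vertices are $O(b)$ per batch, each round's matched set is disjoint from earlier rounds, and each freed vertex triggers a newly matched vertex or terminates. Together these give $O(b(c+\log n))$ total. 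The hash table updates triggered by status changes add the same order. Summing the orientation cost $O(b\log n)$ with the matching cost $O(b(c+\log n))$ gives the claimed expected work. The depth is dominated by the $O(\log^4 n)$ depth of the orientation together with polylog depth for the static matching and hash table operations, all holding \whp{}.
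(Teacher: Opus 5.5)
\textbf{Gap: the round bound in your deletion phase.} You drop Liu et al.'s doubling and run ``$O(\log n)$ rounds'' in which each freed vertex proposes to a bounded number of entries of $I_v$. Nothing in your argument bounds the number of rounds, and the claim is false.

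Take freed vertices $v_1,\dots,v_D$ and previously unmatched vertices $w_1,\dots,w_D$ forming a $K_{D,D}$, with every edge oriented $w_j\to v_i$. Its arboricity is about $D/2$, so $D=\Theta(c)$ is a legal out-degree for the $w_j$. Every $I_{v_i}$ equals $\{w_1,\dots,w_D\}$. Nothing controls which entries a table returns, so all $v_i$ may propose to $w_1$; once $w_1$ is matched and removed, all may propose to $w_2$; and so on. Each round then matches a single pair, and $\Theta(c)$ rounds are needed.

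The best general bound your scheme gives is weaker than claimed. Each matched target absorbs at most $O(c+\log n)$ failed proposals, so the active set shrinks by a factor $1-1/O(c+\log n)$ per round. That yields $O((c+\log n)\log b)$ rounds, which ruins the $O(\log^4 n)$ depth when $c$ is large. Truncating after $O(\log n)$ rounds instead ruins maximality.

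\textbf{Your reason for dropping doubling is mistaken, and doubling is the fix.} This is what the paper's proof inherits from Liu et al.\ via the Ghaffari--Koo observation. Their charging never crosses batch boundaries. In round $i$, each still-free $v$ peeks $\min(2^i,|I_v|)$ entries, and a static maximal matching is run on the proposal graph. If $v$ stays free, every entry it peeked was matched in that round, so the $O(2^{i+1})$ cost of $v$'s next round is charged to those entries. During the repair the orientation is fixed, so each matched vertex lies in at most $O(c+\log n)$ tables (membership comes from its own out-edges). This gives $O(b(c+\log n))$ work per batch in the worst case, with expectation coming only from hashing and semisorting. It also gives $O(\log n)$ rounds, since after that many doublings every $I_v$ is exhausted. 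Your charging paragraph is essentially this argument; what it lacks is the geometric growth of proposals that makes the round count logarithmic.

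\textbf{A smaller flaw in the same paragraph.} Charging failed proposals to a vertex through ``its in-neighbors' out-lists among the freed vertices'' is not $O(c+\log n)$, because in-degrees are unbounded. Handle out-neighbors in one initial maximal-matching round over all freed vertices' out-lists, whose total size is $O(b(c+\log n))$. Afterwards no free vertex has an unmatched out-neighbor, and this persists because statuses only move from unmatched to matched during the repair.
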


\begin{proof}
We use the same matching framework as in the amortized case, but with our $O(c + \log n)$-orientation algorithm. As observed in Ghaffari and Koo~\cite{ghaffari2025}, the framework
from Liu et al.~\cite{liu2022parallel} also works in the worst-case setting so
long as the low out-degree orientation can be maintained with worst-case bounds.
The orientation algorithm performs worst-case $O(b \log n)$ expected work per
batch. The matching maintenance costs $O(b(c + \log n))$ worst-case work per
batch, since each vertex has out-degree at most $O(c + \log n)$. The  \depth{} is
$O(\log^4 n)$ \whp{}, dominated by the orientation algorithm.
\end{proof}

\myparagraph{Comparison with Recent Batch-Dynamic Maximal Matching Algorithms}
Ghaffari and Trygub~\cite{GT24} gave a parallel batch-dynamic maximal matching
algorithm with $O(\log^{9} n)$ amortized work per edge and $O(\log^4 n)$ \depth{}
per batch (both \whp{}). More recently, Blelloch and Brady obtain a parallel
batch-dynamic maximal matching algorithm that runs in $O(1)$ expected amortized
work per update, and $O(\log^3 n)$  \depth{} \whp{} \cite{BB25a}. Compared to these amortized results,
our amortized algorithm is arguably simpler and runs in $O(c + \log n)$
amortized expected work per update and $O(\log^2 n)$  \depth{} \whp{}, thus providing a slight
improvement in the \depth{}. 

Ghaffari and Koo~\cite{ghaffari2025} obtained a parallel batch-dynamic maximal matching algorithm
that runs in $O(c + \log^{8} n)$ worst-case work per update and $O(\log^{7} n)$ \depth{}, both
\whp{}. By contrast, our worst-case result is significantly less complex and achieves $O(c
+ \log n)$ expected work per edge in $O(\log^4 n)$ \depth{} \whp{}.

\subsection{Explicit Coloring}\label{sec:coloring}

\myparagraph{Explicit Coloring via Palettes}
Ghaffari and Koo~\cite{ghaffari2025} give a simple palette-based approach for
maintaining an explicit $O(\delta \log n)$-coloring of an input graph with arboricity $c \le \delta$.
The first time a vertex $v$ becomes incident to an edge, 
$v$ is assigned a random \emph{palette} $\mathcal{C}(v)$: each color from a pool
of $C = O(\delta \log n)$ colors is included in $\mathcal{C}(v)$ independently with
probability $\frac{1}{2\delta}$, giving each vertex $\Theta(\log n)$ colors \whp{}. Given a low out-degree orientation where each vertex has at most
$O(\delta)$ out-neighbors, each vertex $v$ selects any color from its palette that
does not appear in any out-neighbor's palette. Since each color appears in
$\mathcal{C}(v)$ with probability $\frac{1}{2\delta}$ and is absent from each
out-neighbor's palette with probability $1 - \frac{1}{2\delta}$, the probability that
a given color is ``good'' (present in $\mathcal{C}(v)$ and absent from all $O(\delta)$
out-neighbors' palettes) is $\Omega(1/\delta)$. With $\Theta(\delta \log n)$ colors in the
pool, at least one good color exists \whp{}. Because the palettes do not change, the algorithm only functions against an oblivious adversary.


When edges are inserted or deleted, only vertices whose out-neighbors change
must recompute their colors. To recolor a vertex $v$, we check each color in
$\mathcal{C}(v)$ against the palettes of $v$'s $O(\delta)$ out-neighbors; checking a
single color takes $O(\delta)$ work and $O(\log n \log^* n)$  \depth{} (each palette is stored in a hash table),
and checking all $O(\log n)$ colors in $\mathcal{C}(v)$ takes $O(\delta \log n)$ work
and $O(\log n \log^* n)$  \depth{} in parallel. 

\myparagraph{Worst-Case Batch-Dynamic Coloring}
We can similarly obtain worst-case results by using our worst-case orientation
algorithm (Algorithm~\ref{alg:wealg}) and various tradeoffs.

\begin{theorem}\label{thm:coloring-worstcase}
Consider a graph $G$ and a sequence of edge updates of total length at most $\text{poly}(n)$, where the arboricity of $G$ is upper bounded by $c$ at all times. There exists a parallel batch-dynamic algorithm that maintains an explicit $O(c \log n)$-coloring of $G$ over this update sequence \whp{}. For a batch $B$ of $b$ edge updates, the algorithm runs in worst-case
$O(b c \log^3 n)$ work \whp{} and $O(\log^4 n)$  \depth{} \whp{}.
\end{theorem}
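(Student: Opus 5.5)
We plan to compose three ingredients: the palette-based coloring scheme just described, the worst-case orientation algorithm instantiated with the $O(c\log n)$ tradeoff (Theorem~\ref{thm:boundsOcsqrtlogn}), and a hash-table representation of palettes. First we fix the orientation. Theorem~\ref{thm:boundsOcsqrtlogn} maintains an $O(c\log n)$-orientation, which is too coarse: with $\delta = \Theta(c\log n)$, the palette pool would have $O(c\log^2 n)$ colors. Instead we take $\delta = O(c)$ and use the $O(c)$-orientation of Theorem~\ref{thm:boundsOc}, which costs $O(b\log^2 n)$ reorientations and $O(b\log^2 n)$ expected work per batch in $O(\log^4 n)$ depth. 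The pool then has $C = O(c\log n)$ colors, which gives the claimed $O(c\log n)$-coloring. Correctness is the union-bound argument from the preceding paragraph. For a fixed vertex and a fixed orientation, some pool color is good with probability $1-n^{-\Omega(1)}$, and every palette has $\Theta(\log n)$ colors whp. The update sequence has length $\mathrm{poly}(n)$ and the adversary is oblivious, so the palettes are independent of the sequence. We can therefore union bound over all $\mathrm{poly}(n)$ (vertex, time) pairs and conclude that the coloring is proper whp at all times.

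Next we bound the recoloring work. A vertex must recompute its color only when its out-neighbor set changes, or when an out-neighbor's color changes in a way that conflicts with it. The palette rule avoids the second case. A vertex picks a color of its own palette that is absent from every out-neighbor's \emph{palette}, not from their current colors. Since palettes are static, only vertices whose out-neighborhoods change need recoloring. Each edge insertion, deletion, or reorientation changes the out-neighborhood of at most two vertices. By Theorem~\ref{thm:boundsOc} there are $O(b\log^2 n)$ such events, so at most $O(b\log^2 n)$ vertices are recolored, after deduplicating with a semisort.

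Recoloring one vertex costs $O(c\log n)$ work: it checks the $O(\log n)$ colors of its palette against the $O(c)$ out-neighbor palettes, each stored in a hash table. The total is therefore $O(b\log^2 n)\cdot O(c\log n) = O(bc\log^3 n)$. The hash-table lookups are only expected $O(1)$ each, whereas the theorem claims a whp work bound. Converting the expected orientation work to whp is the \textbf{main obstacle}. We would address it in two ways. First, we use the deterministic reorientation count, which is what drives the recoloring cost. Second, for the orientation work itself we use the deterministic bound $O(b\log^3 n)$, which is dominated by $O(bc\log^3 n)$. For the palette lookups, we would store each palette as a static perfect-hash or sorted array built once at creation, giving deterministic $O(\log\log n)$ or $O(1)$ membership tests. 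Alternatively, the $\Theta(\log n)$-size palettes can be stored as sorted arrays and compared by merging, at $O(\log n)$ cost per out-neighbor. That cost is $O(c\log n)$ per vertex and still fits the bound.

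The depth is the $O(\log^4 n)$ of the orientation algorithm, plus $O(\log n\log^* n)$ whp for the parallel palette checks and $O(\log n)$ for the semisort and deduplication. The total is $O(\log^4 n)$ whp.
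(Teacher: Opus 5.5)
Your proposal is correct and matches the paper's proof. Like the paper, you take the $O(c)$-orientation of Theorem~\ref{thm:boundsOc}, so the palette pool has $O(c\log n)$ colors; you use its $O(b\log^2 n)$ reorientations and $O(b\log^3 n)$ deterministic work, recolor only the vertices whose out-neighborhood changed at $O(c\log n)$ work each, and inherit the $O(\log^4 n)$ depth. Your extra care in making the per-vertex recoloring cost hold whp rather than only in expectation (for example, by merging sorted $\Theta(\log n)$-size palettes at $O(\log n)$ work per out-neighbor) validly tightens a step the paper asserts without justification.
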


\begin{proof}
Our worst-case $O(c)$-orientation algorithm (Algorithm~\ref{alg:wealg}) with Theorem \ref{thm:boundsOc} performs at most
$O(b \log^2 n)$ flips in $O(b \log^3 n)$ deterministic work, and each flip causes at most one vertex to have its
out-neighborhood increase. If the palettes are suitable, which is true \whp{}, each such vertex recolors itself in $O(c \log n)$ work and
$O(\log n \log^* n)$ \depth{}. Summing these costs gives $O(b(c \log n) \log^2 n)$ work \whp{}.  The \depth{} is dominated by the \depth{} of orientation, which is $O(\log^4 n)$  \depth{}.
\end{proof}

%% file: refs.bib
@article{MatulaBeck1983,
  author     = {David W. Matula and Leland L. Beck},
  title      = {Smallest-Last Ordering and Clustering and Graph Coloring Algorithms},
  journal    = {Journal of the ACM},
  volume     = {30},
  number     = {3},
  pages      = {417--427},
  month      = jul,
  year       = {1983},
  publisher  = {Association for Computing Machinery},
  address    = {New York, NY, USA},
  doi        = {10.1145/2402.322385},
  url        = {https://doi.org/10.1145/2402.322385}
}

@article{BB20,
author = {Berglin, Edvin and Brodal, Gerth St\o{}lting},
title = {A Simple Greedy Algorithm for Dynamic Graph Orientation},
year = {2020},
issue_date = {Feb 2020},
publisher = {Springer-Verlag},
address = {Berlin, Heidelberg},
volume = {82},
number = {2},
issn = {0178-4617},
url = {https://doi.org/10.1007/s00453-018-0528-0},
doi = {10.1007/s00453-018-0528-0},
journal = {Algorithmica},
month = feb,
pages = {245–259},
numpages = {15}
}

@article{MH81,
title = {Real-time queue operations in pure LISP},
journal = {Information Processing Letters},
volume = {13},
number = {2},
pages = {50-54},
year = {1981},
issn = {0020-0190},
doi = {https://doi.org/10.1016/0020-0190(81)90030-2},
url = {https://www.sciencedirect.com/science/article/pii/0020019081900302},
author = {Robert Hood and Robert Melville}
}

@inproceedings{chekuri2024adaptive,
  title={Adaptive out-orientations with applications},
  author={Chekuri, Chandra and Christiansen, Aleksander Bj{\o}rn and Holm, Jacob and van der Hoog, Ivor and Quanrud, Kent and Rotenberg, Eva and Schwiegelshohn, Chris},
  booktitle=soda,
  pages={3062--3088},
  year={2024},
  organization={SIAM}
}

@inproceedings{BF99,
	address = {Berlin, Heidelberg},
	author = {Brodal, Gerth St{\o}lting and Fagerberg, Rolf},
	booktitle = {Algorithms and Data Structures},
	title = {Dynamic Representations of Sparse Graphs},
	year = {1999}
}

@inproceedings{peleg2016dynamic,
  title={Dynamic {$(1+\epsilon)$}-approximate matchings: A density-sensitive approach},
  author={Peleg, David and Solomon, Shay},
  booktitle=soda,
  pages={712--729},
  year={2016},
  organization={SIAM}
}

@inproceedings{barenboim2008sublogarithmic,
  title={Sublogarithmic distributed {MIS} algorithm for sparse graphs using Nash-Williams decomposition},
  author={Barenboim, Leonid and Elkin, Michael},
  booktitle=podc,
  pages={25--34},
  year={2008}
}

@inproceedings{gu2022analysis,
  title        = {Analysis of Work-Stealing and Parallel Cache Complexity},
  author       = {Gu, Yan and Napier, Zachary and Sun, Yihan},
  booktitle    = apocs,
  pages        = {46--60},
  year         = {2022},
  organization = {SIAM}
}

@article{neiman2015simple,
  title={Simple deterministic algorithms for fully dynamic maximal matching},
  author={Neiman, Ofer and Solomon, Shay},
  journal={ACM Transactions on Algorithms (TALG)},
  volume={12},
  number={1},
  pages={1--15},
  year={2015},
  publisher={ACM New York, NY, USA}
}

@inproceedings{anderson2021parallel,
  title={Parallel Minimum Cuts in {$O (m \log^2 n)$} Work and Low Depth},
  author={Anderson, Daniel and Blelloch, Guy E},
  booktitle={Proceedings of the 33rd ACM Symposium on Parallelism in Algorithms and Architectures},
  pages={71--82},
  year={2021}
}

@inproceedings{liu2022parallel,
  title={Parallel batch-dynamic algorithms for k-core decomposition and related graph problems},
  author={Liu, Quanquan C and Shi, Jessica and Yu, Shangdi and Dhulipala, Laxman and Shun, Julian},
  booktitle=spaa,
  pages={191--204},
  year={2022}
}

@book{jaja1992parallel,
author = {J\'{a}J\'{a}, Joseph},
title = {An Introduction to Parallel Algorithms},
year = {1992},
isbn = {0201548569},
publisher = {Addison Wesley Longman Publishing Co., Inc.},
address = {USA}
}

@book{okasaki1999purely,
  title={Purely functional data structures},
  author={Okasaki, Chris},
  year={1999},
  publisher={Cambridge University Press}
}

@inproceedings{goodrich2023optimal,
  title={Optimal parallel sorting with comparison errors},
  author={Goodrich, Michael T and Jacob, Riko},
  booktitle=spaa,
  pages={355--365},
  year={2023}
}

@article{wang2020closest,
  author    = {Yiqiu Wang and
               Shangdi Yu and
               Yan Gu and
               Julian Shun},
  title     = {A Parallel Batch-Dynamic Data Structure for the Closest Pair Problem},
  journal   = {CoRR},
  year      = {2020},
  url       = {https://arxiv.org/abs/2010.02379},
  archivePrefix = {arXiv},
  eprint    = {2010.02379}
}

@inproceedings{acar2019batchconnect,
  author    = {Umut A. Acar and
               Daniel Anderson and
               Guy E. Blelloch and
               Laxman Dhulipala},
  title     = {Parallel Batch-Dynamic Graph Connectivity},
  booktitle = spaa,
  pagesx= {381--392},
  publisher = {ACM},
  year      = {2019}
}

@inproceedings{bender2021incremental,
  title={Incremental Edge Orientation in Forests},
  author={Bender, Michael A and Kopelowitz, Tsvi and Kuszmaul, William and Porat, Ely and Stein, Clifford},
  booktitle=esa,
  year={2021},
}

@inproceedings{fischer2018tight,
	title={Tight analysis of parallel randomized greedy MIS},
	author={Fischer, Manuela and Noever, Andreas},
	booktitle=soda,
	pagesx={2152--2160},
	year={2018},
}

@inproceedings{BFS12,
	title={Greedy sequential maximal independent set and matching are parallel on average},
	author={Blelloch, Guy E. and Fineman, Jeremy T and Shun, Julian},
	booktitle= spaa,
	year={2012},
}

@inproceedings{ghaffari2023nearly,
  title={Nearly work-efficient parallel DFS in undirected graphs},
  author={Ghaffari, Mohsen and Grunau, Christoph and Qu, Jiahao},
  booktitle=spaa,
  pages={273--283},
  year={2023}
}

@article{yesantharao2021parallel,
  title={Parallel Batch-Dynamic $k$-d Trees},
  author={Yesantharao, Rahul and Wang, Yiqiu and Dhulipala, Laxman and Shun, Julian},
  journal={arXiv preprint arXiv:2112.06188},
  year={2021}
}

@article{BL98,
	author    = {Robert D. Blumofe and
	Charles E. Leiserson},
	title     = {Space-Efficient Scheduling of Multithreaded Computations},
	journal   = siamjc,
	volume    = {27},
	number    = {1},
	year      = {1998},
}

@Article{ABP01,
	author="Arora, N. S.
	and Blumofe, R. D.
	and Plaxton, C. G.",
	title="Thread Scheduling for Multiprogrammed Multiprocessors ",
	journal=tocs,
	year="2001",
	month="Apr",
	day="01",
	volume="34",
	number="2",
}

@inproceedings{gu2015top,
	title={A top-down parallel semisort},
	author={Gu, Yan and Shun, Julian and Sun, Yihan and Blelloch, Guy E},
	booktitle=spaa,
	year={2015},
}

@incollection{KarpR90,
	author    = {Richard M. Karp and
	Vijaya Ramachandran},
	title     = {Parallel Algorithms for Shared-Memory Machines},
	booktitle = {Handbook of Theoretical Computer Science, Volume A: Algorithms
	and Complexity (A)},
	year      = {1990},
	publisher = {MIT Press},
}

@book{CLRS,
	author    = {Thomas H. Cormen and
	Charles E. Leiserson and
	Ronald L. Rivest and
	Clifford Stein},
	title     = {Introduction to Algorithms (3rd edition)},
	publisher = {MIT Press},
	year      = {2009},
}

@inproceedings{christiansen2022fully,
  title={Fully-Dynamic $\alpha$+ 2 Arboricity Decompositions and Implicit Colouring},
  author={Christiansen, Aleksander BG and Rotenberg, Eva},
  booktitle=icalp,
  pages={42--1},
  year={2022},
  organization={Schloss Dagstuhl--Leibniz-Zentrum f{\"u}r Informatik}
}

@article{kowalik2007adjacency,
  title={Adjacency queries in dynamic sparse graphs},
  author={Kowalik, {\L}ukasz},
  journal={Information Processing Letters},
  volume={102},
  number={5},
  pages={191--195},
  year={2007},
  publisher={Elsevier}
}

@inproceedings{acar2020changeprop,
  author    = {Umut A. Acar and
               Daniel Anderson and
               Guy E. Blelloch and
               Laxman Dhulipala and
               Sam Westrick},
  title     = {Parallel Batch-Dynamic Trees via Change Propagation},
  booktitle = esa,
  pagesx= {2:1--2:23},
  year      = {2020},
}

@inproceedings{gil1991towards,
	title={Towards a theory of nearly constant time parallel algorithms},
	author={Gil, Joseph and Matias, Yossi and Vishkin, Uzi},
	booktitle=focs,
	year={1991},
}

@inproceedings{BlellochF0020,
  author    = {Guy E. Blelloch and
               Jeremy T. Fineman and
               Yan Gu and
               Yihan Sun},
  title     = {Optimal Parallel Algorithms in the Binary-Forking Model},
  year      = {2020},
  booktitle = spaa,
}

@inproceedings{dhulipala2019parallel,
  title={Parallel Batch-Dynamic Graphs: Algorithms and Lower Bounds},
  author={Dhulipala, Laxman and Durfee, David and Kulkarni, Janardhan and Peng, Richard and Sawlani, Saurabh and Sun, Xiaorui},
  booktitle=soda,
  pagesx={1300--1319},
  year={2020}
}

@inproceedings{dhulipala2021hierarchical,
  title={Hierarchical Agglomerative Graph Clustering in Nearly-Linear Time},
  author={Dhulipala, Laxman and Eisenstat, David and {\L}{\k{a}}cki, Jakub and Mirrokni, Vahab and Shi, Jessica},
  booktitle = icml,
  pagesx = {2676--2686},
  year={2021}
}

@inproceedings{GT24,
title = {Parallel Dynamic Maximal Matching},
author = {Mohsen Ghaffari and Anton Trygub},
booktitle = spaa,
year = 2024}

@inproceedings{he2014orienting,
  title={Orienting dynamic graphs, with applications to maximal matchings and adjacency queries},
  author={He, Meng and Tang, Ganggui and Zeh, Norbert},
  booktitle={International Symposium on Algorithms and Computation},
  pages={128--140},
  year={2014},
  organization={Springer}
}

@inproceedings{bernstein2016faster,
  title={Faster fully dynamic matchings with small approximation ratios},
  author={Bernstein, Aaron and Stein, Cliff},
  booktitle=soda,
  pages={692--711},
  year={2016},
  organization={SIAM}
}

@inproceedings{bernstein2015fully,
  title={Fully dynamic matching in bipartite graphs},
  author={Bernstein, Aaron and Stein, Cliff},
  booktitle={International Colloquium on Automata, Languages, and Programming},
  pages={167--179},
  year={2015},
  organization={Springer}
}

@article{onak2020fully,
  title={Fully dynamic MIS in uniformly sparse graphs},
  author={Onak, Krzysztof and Schieber, Baruch and Solomon, Shay and Wein, Nicole},
  journal={ACM Transactions on Algorithms (TALG)},
  volume={16},
  number={2},
  pages={1--19},
  year={2020},
  publisher={ACM New York, NY, USA}
}

@article{henzinger2020explicit,
  title={Explicit and implicit dynamic coloring of graphs with bounded arboricity},
  author={Henzinger, Monika and Neumann, Stefan and Wiese, Andreas},
  journal={arXiv preprint arXiv:2002.10142},
  year={2020}
}

@inproceedings{ghaffari2025,
  title={Parallel Batch-Dynamic Coreness Decomposition with Worst-Case Guarantees},
  author={Ghaffari, Mohsen and Koo, Jaehyun},
  booktitle=spaa,
  year={2025}
}

@inproceedings{BB25a,
author = {Blelloch, Guy E. and Brady, Andrew C.},
title = {Parallel Batch-Dynamic Maximal Matching with Constant Work per Update},
year = {2025},
isbn = {9798400712586},
publisher = {Association for Computing Machinery},
address = {New York, NY, USA},
url = {https://doi.org/10.1145/3694906.3743313},
doi = {10.1145/3694906.3743313},
booktitle = spaa,
pages = {429–442},
numpages = {14},
location = {Portland, OR, USA},
series = {SPAA '25}
}

@inproceedings{ZKG+25,
author = {Zhao, Yiwei and Kang, Hongbo and Gu, Yan and Blelloch, Guy E and Dhulipala, Laxman and McGuffey, Charles and Gibbons, Phillip B},
title = {Optimal Batch-Dynamic kd-trees for Processing-in-Memory with Applications},
booktitle = spaa,
year = 2025
}

@article{NashWilliams1961EdgeDisjoint,
	author = {Nash-Williams, C. St.J. A.},
	doi = {https://doi.org/10.1112/jlms/s1-36.1.445},
	journal = {Journal of the London Mathematical Society},
	number = {1},
	pages = {445-450},
	title = {Edge-Disjoint Spanning Trees of Finite Graphs},
	volume = {s1-36},
	year = {1961},
}

@InProceedings{Kopelowitz2013Orientation,
author="Kopelowitz, Tsvi
and Krauthgamer, Robert
and Porat, Ely
and Solomon, Shay",
editor="Esparza, Javier
and Fraigniaud, Pierre
and Husfeldt, Thore
and Koutsoupias, Elias",
title="Orienting Fully Dynamic Graphs with Worst-Case Time Bounds",
booktitle="Automata, Languages, and Programming",
year="2014",
publisher="Springer Berlin Heidelberg",
address="Berlin, Heidelberg",
pages="532--543",
isbn="978-3-662-43951-7"
}

@inproceedings{SW20,
author = {Sawlani, Saurabh and Wang, Junxing},
title = {Near-optimal fully dynamic densest subgraph},
year = {2020},
isbn = {9781450369794},
publisher = {Association for Computing Machinery},
address = {New York, NY, USA},
url = {https://doi.org/10.1145/3357713.3384327},
doi = {10.1145/3357713.3384327},
booktitle = {Proceedings of the 52nd Annual ACM SIGACT Symposium on Theory of Computing},
pages = {181–193},
numpages = {13},
location = {Chicago, IL, USA},
series = {STOC 2020},
}

@inproceedings{GJS21,
  title={Atomic power in forks: A super-logarithmic lower bound for implementing butterfly networks in the nonatomic binary fork-join model},
  author={Goodrich, Michael T and Jacob, Riko and Sitchinava, Nodari},
  booktitle=soda,
  pages={2141--2153},
  year={2021},
  organization={SIAM}
}

@article{BURTON82,
title = {An efficient functional implementation of FIFO queues},
journal = {Information Processing Letters},
volume = {14},
number = {5},
pages = {205-206},
year = {1982},
issn = {0020-0190},
doi = {https://doi.org/10.1016/0020-0190(82)90015-1},
url = {https://www.sciencedirect.com/science/article/pii/0020019082900151},
author = {F.Warren Burton}
}

@article{MYERS83,
title = {An applicative random-access stack},
journal = {Information Processing Letters},
volume = {17},
number = {5},
pages = {241-248},
year = {1983},
issn = {0020-0190},
doi = {https://doi.org/10.1016/0020-0190(83)90106-0},
url = {https://www.sciencedirect.com/science/article/pii/0020019083901060},
author = {Eugene W. Myers}
}

@article{solomon20improved,
author = {Solomon, Shay and Wein, Nicole},
title = {Improved Dynamic Graph Coloring},
year = {2020},
issue_date = {July 2020},
publisher = {Association for Computing Machinery},
address = {New York, NY, USA},
volume = {16},
number = {3},
issn = {1549-6325},
url = {https://doi.org/10.1145/3392724},
doi = {10.1145/3392724},
journal = talg,
month = jun,
articleno = {41},
numpages = {24}
}

@inproceedings{kaplan18dynamic,
author = {Kaplan, Haim and Solomon, Shay},
title = {Dynamic Representations of Sparse Distributed Networks: A Locality-Sensitive Approach},
year = {2018},
isbn = {9781450357999},
publisher = {Association for Computing Machinery},
address = {New York, NY, USA},
url = {https://doi.org/10.1145/3210377.3210397},
doi = {10.1145/3210377.3210397},
booktitle = {Proceedings of the 30th on Symposium on Parallelism in Algorithms and Architectures},
pages = {33–42},
numpages = {10},
location = {Vienna, Austria},
series = {SPAA '18}
}


%% file: strings.bib
@string{talg = "ACM Transactions on Algorithms (TALG)"}

@string{siamjc = "{SIAM} J. on Computing"}

@string{tocs = "Theory of Computing Systems (TOCS)"}

@string{spaa = "{ACM} Symposium on Parallelism in Algorithms
	and Architectures (SPAA)"}

@string{apocs = "{ACM-SIAM} Symposium on Algorithmic Principles of Computer Systems (APOCS)"}

@string{focs = "{IEEE} Symposium on Foundations of Computer Science (FOCS)"}

@string{stoc = "{ACM} Symposium on Theory of Computing (STOC)"}

@string{soda = "{ACM-SIAM} Symposium on Discrete Algorithms (SODA)"}

@string{esa = "European Symposium on Algorithms (ESA)"}

@string{icalp = "Intl. Colloq. on Automata, Languages and Programming {(ICALP)}"}

@string{icml = "{ACM} International Conference on Machine Learning (ICML)"}

@string{podc = "{ACM} Symposium on Principles of Distributed Computing (PODC)"}
